\newtheorem{theorem}{Theorem}
\newtheorem*{theorem*}{Theorem}
\newtheorem{lemma}[theorem]{Lemma}
\newtheorem*{lemma*}{Lemma}
\newtheorem*{remark}{Remark}
\newtheorem{definition}[theorem]{Definition}
\newcommand{\SYKf}{$\text{SYK-}\,4\ $}
\newcommand{\SSSYKf}{$\text{SSYK-}\,4\ $}
\renewcommand{\exp}{\,\text{exp}}
\begin{document}

\title{Optimizing sparse fermionic Hamiltonians}

\author{Yaroslav Herasymenko}
\affiliation{QuSoft $\&$ CWI, Science Park 123
1098 XG  Amsterdam, The Netherlands}
\affiliation{QuTech, Delft University of Technology, Lorentzweg 1, 2628 CJ Delft, The Netherlands}
\author{Maarten Stroeks}
\affiliation{QuTech, Delft University of Technology, Lorentzweg 1, 2628 CJ Delft, The Netherlands}
\affiliation{EEMCS Faculty, Delft University of Technology, Van Mourik Broekmanweg 6, 2628 XE, Delft, The Netherlands}
\author{Jonas Helsen}
\affiliation{QuSoft $\&$ CWI, Science Park 123
1098 XG  Amsterdam, The Netherlands}
\author{Barbara Terhal}
\affiliation{QuTech, Delft University of Technology, Lorentzweg 1, 2628 CJ Delft, The Netherlands}
\affiliation{EEMCS Faculty, Delft University of Technology, Van Mourik Broekmanweg 6, 2628 XE, Delft, The Netherlands}
\maketitle

\begin{abstract}
\noindent We consider the problem of approximating the ground state energy of a fermionic Hamiltonian using a Gaussian state. 
In sharp contrast to the dense case \cite{SYKScaffidi, HO:approxferm}, we prove that strictly $q$-local {\em sparse} fermionic Hamiltonians have a constant Gaussian approximation ratio; the result holds for any connectivity and interaction strengths.
Sparsity means that each fermion participates in a bounded number of interactions, and strictly $q$-local means that each term involves exactly $q$ fermionic (Majorana) operators.  We extend our proof to give a constant Gaussian approximation ratio for sparse fermionic Hamiltonians with both quartic and quadratic terms.
With additional work, we also prove a constant Gaussian approximation ratio for the so-called sparse SYK model with strictly $4$-local interactions (sparse SYK-4 model).  
In each setting we show that the Gaussian state can be efficiently determined.
Finally, we prove that the $O(n^{-1/2})$ Gaussian approximation ratio for the normal (dense) SYK-$4$ model extends to SYK-$q$ for even $q>4$, with an approximation ratio of $O(n^{1/2 - q/4})$.  
Our results identify non-sparseness as the prime reason that the SYK-4 model can fail to have a constant approximation ratio \cite{SYKScaffidi, HO:approxferm}. 
\end{abstract}

\tableofcontents

\section{Introduction}

Approximating the ground state energy of a local Hamiltonian is a central problem in both physics and computer science. In computer science it plays a key role in complexity theory \cite{Gharibian2015}, while in physics ground states capture the behaviour of systems at low energy. Two common families of Hamiltonians of interest are those defined on collections of qubits and those acting on fermionic degrees of freedom. Fermionic Hamiltonians model various physical systems, such as electrons in condensed matter and quantum chemistry --- prime targets for quantum simulation. Fermions also define a model of quantum computation, equivalent to the one based on qubits \cite{BK:ferm}. Despite its practical and conceptual relevance, the general problem of approximating fermionic ground state energies is currently less well understood than its qubit counterpart.

Some rigorous progress in studying this problem – both for qubits and for fermions – was made from the perspective of optimization. In this subfield of computer science, one of the central tasks is efficiently finding problem solutions that are provably close to optimal \cite{khanna2001approximability}. The closeness is usually quantified by an approximation ratio, i.e. the ratio between the value attained by an algorithm and the optimal value for a given problem. For the classical equivalent of the ground state energy finding – Constraint Satisfaction Problems (CSPs) – such approximation ratios have been extensively studied \cite{GW:semidef}. 

For quantum Hamiltonians, an interesting question is how well the ground state energy can be approximated using ``classical" or ``mean-field'' states. For qubit Hamiltonians the natural choice of classical states are product states, while for fermionic Hamiltonians they are Gaussian states.
Gaussian states play a prominent role in fermionic optimization problems using the mean-field Hartree-Fock method, see e.g. \cite{KC:ferm}, or dynamical mean-field theory via solving impurity problems \cite{BG:impurity}, or the simulation of free fermionic computation \cite{BK:simul,MCT:gaussian}.

Formal guarantees on approximation ratios characterize numerical simulation methods using classical states and outline their limitations compared to quantum computing. For qubit Hamiltonians, it was first proved by Lieb \cite{lieb} (see \cite{BGKT:manybody} for a simplified proof) that there always exists a product state which approximates the ground state energy of a traceless $2$-local qubit Hamiltonian by a factor of $1/9$. Many more results on approximating ground state energies of many-body systems by product states can be found in \cite{BBT:approx, GK:dense, BH:dense-approx, HM:extremal,berga:improved, Parekh19}.
In \cite{BGKT:manybody} it was shown, through the Goemans-Williamson method, that for a 2-local traceless qubit Hamiltonian a product state can always be \emph{efficiently} found with approximation ratio $O(1/\log(n))$ where $n$ is the number of qubits. Ref.~\cite{BGKT:manybody} also considered fermionic Hamiltonians with quadratic ($q=2$) and quartic ($q=4$) fermionic terms.
They left as an open question whether all $4$-local fermionic Hamiltonians have a constant approximation ratio with respect to Gaussian states (a Gaussian approximation ratio).

A surprising counterexample to this conjecture was recently presented in Refs.~\cite{SYKScaffidi, HO:approxferm} --- the family of SYK-$4$ models (Sachdev-Ye-Kitaev models with quartic fermionic interactions, see Definition \ref{SYK_H}). It was shown that with high probability, SYK-$4$ Hamiltonians admits a Gaussian approximation ratio no better than $O(1/\sqrt{n})$ where $n$ is the number of fermionic modes. Contrasting this result to Refs.~\cite{lieb,GK:dense}, it means that
qubit and fermionic ground states strongly differ 
in their approximability by classical states. Moreover, this opens up the question of which fermionic Hamiltonians \emph{do} allow finite Gaussian approximation ratios.

This is the question that we aim to answer here. We do this by considering \emph{sparse} Hamiltonians, i.e. Hamiltonians where each fermionic mode participates in a bounded number of interactions. 
Sparsity holds for many physically relevant Hamiltonians, such as the Fermi-Hubbard model. It also holds for exotic Hamiltonians, such as those determined by constant-degree expander hypergraphs; notably, it does not hold for the SYK model. Sparsity of interactions has been considered in the classical CSP literature.  
It was shown in \cite{alon2006quadratic} that the MaxQP problem has an efficient constant approximation ratio algorithm on graphs of bounded chromatic number, in particular graphs with bounded degree. 
We show that a similar assumption of sparsity is enough to guarantee constant Gaussian approximation ratios for $4$-local and strictly $q$-local Hamiltonians. 
Moreover, we show that a constant Gaussian approximation ratio can be achieved for the \emph{sparse} SYK-$4$ model~\cite{sparseSYK} (which has a logarithmically growing interaction participation and is thus not sparse by our definition). Finally, we consider in more detail the optimal approximation ratio for the \emph{dense} SYK-$q$ model for $q>4$ (thus extending the work of \cite{HO:approxferm}). We show that the shortfall of Gaussian states is even more pronounced in this setting. 

To avoid confusion, we note that instead of the \textit{ground state} energy, existing works often consider approximating the \textit{maximal} eigenvalue of the Hamiltonian $\lambda_{\rm{max}}(H)$. These two optimization problems are equivalent if the family of Hamiltonians considered is invariant under a change of sign (e.g. traceless $q$-local Hamiltonians). For mathematical convenience and consistency with the literature, in the rest of the text, we will also be formulating our results in terms of approximating $\lambda_{\rm{max}}(H)$.

\begin{figure*}
    \centering
    \includegraphics[width=1.03\linewidth]{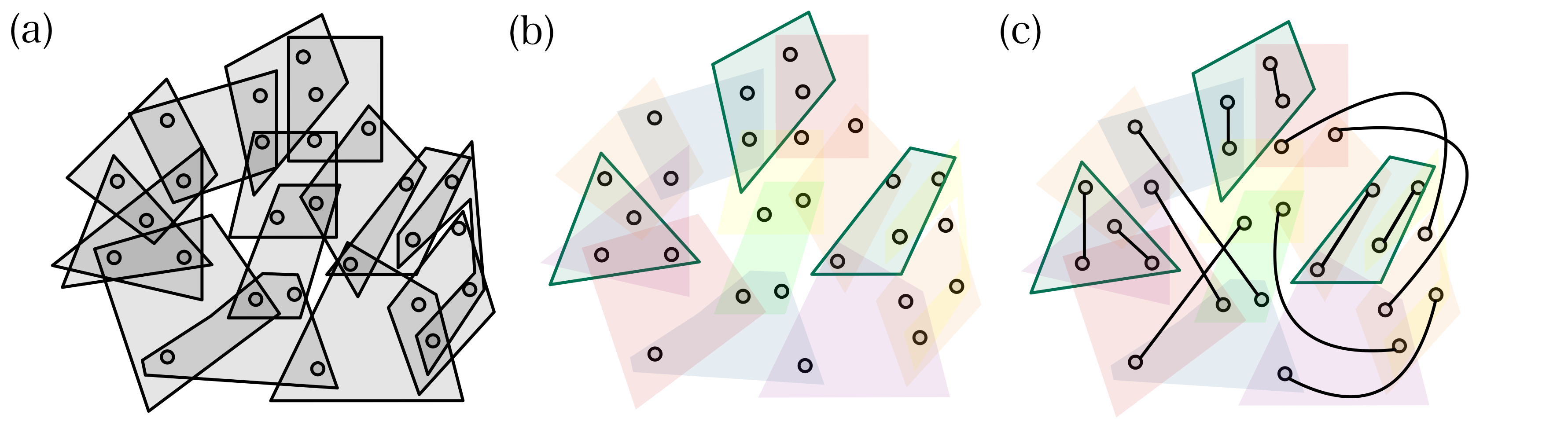}
    \caption{Illustrating the key idea of the proof of Theorem \ref{thm:sparse}. An example of a strictly $4$-local Hamiltonian is given in (a), vertices and faces representing Majorana operators and their interactions. The Hamiltonian is split into sets of terms -- different colors in (b) -- well separated from each other inside each set (so-called diffuse sets, see Definition~\ref{def:diffuse}).
    The next step is to match all Majorana operators, i.e., split the vertices into disjoint pairs, each connected by an edge (see panel (c)). We separately match the support of each term in one \textit{targeted} set of terms (the color highlighted in (b) and (c)). The remaining vertices are matched in such a way that no two vertices connected by an edge belong to the same term. The Gaussian state is then created from the resulting matching, with only terms from the \textit{targeted} set contributing to the energy. By optimizing the choice of the targeted set, a finite approximation ratio can be guaranteed.}
    \label{fig:thm_sparse}
\end{figure*}

\section{Statement of results}

\subsection{Preliminaries}

Before surveying our results, we introduce the basic setup of fermionic Hamiltonians and $q$-locality. This subsection also defines the SYK-$q$ model and spells out the previous result of a vanishing Gaussian approximation ratio for SYK-$4$. 

We consider a system of $2n$ traceless Majorana fermion operators $c_i$, $i=1,\ldots,2n$ with $c_i^2=\mathbb{I}, c_i^{\dagger}=c_i$, forming a Clifford algebra, i.e., $\{c_{j},c_{k}\} = 2\delta_{j,k}\mathbb{I}$ and representing $n$ fermionic modes. 
We denote as $I$ an ordered subset $I=\{i_1, i_2, .. i_q\} \subseteq [2n]\equiv \{1, \ldots, 2n\}$ where $i_1 < i_2 < \ldots i_q$ with $q$ even. 
We denote $C_I$ as the Hermitian Majorana monomial
\begin{align}
    C_I\equiv i^{q/2} c_{i_1}..c_{i_q},
    \label{eq:CIdef}
\end{align}
and one can verify that
\[
C_I^2=\mathbb{I}.
\]

We can think about a subset $I$ as corresponding to a term or interaction in a Hamiltonian. Indeed, it is natural to impose some form of locality:

\begin{definition}[$q$-local fermionic Hamiltonian]
Let $H$ be a fermionic Hamiltonian on $2n$ Majorana operators. We say that $H$ is $q$-local if $H$ is a sum of Hermitian traceless terms $C_I$ of weight at most $q$, i.e. each term is proportional to a product of at most $q$ operators $c_i$.
$H$ is said to be strictly $q$-local when all terms have exactly weight $q$.
\end{definition}

A local traceless fermionic Hamiltonian $H=\sum_{I \in {\mathcal{I}}} J_I C_I$ is thus characterized by an interaction set $\mathcal{I}$ and the coefficients $J_I \in \mathbb{R}$. The maximum eigenvalue of $H$ is denoted by $\lambda_{\max}(H) := \max_{\rho}{\rm Tr} (H \rho)$.
Sometimes we will refer to a collection of sets $I$ denoted as $\mathcal{I}=\{I_1, I_2, \ldots\}$. The support of $\mathcal{I}$ is defined as ${\rm Sup}(\mathcal{I})=\cup_i I_i$ and $\mathcal{I}' \subseteq \mathcal{I}$ implies that the sets in $\mathcal{I}'$ are also sets in $\mathcal{I}$.

\begin{definition}[SYK-$q$ Model]
A $q$-local (with $q$ even) SYK model on $2n$ Majoranas is defined as a family of Hamiltonians
\begin{equation}
    H = \binom{2n}{q}^{-1/2}  \sum_{I\subseteq [2n], |I|=q}J_{I}\: C_{I},
\label{SYK_H}
\end{equation}
where each $J_{I}$ is a Gaussian random variable (i.e., with zero mean and unit variance) and each $C_{I}$ is the product of the $q$ distinct Majorana operators as in Eq.~\eqref{eq:CIdef}. We normalize the model in expectation, i.e., $\mathbb{E}\left({\rm Tr}(H^2)\right)=\binom{2n}{q}^{-1}\sum_{I\subseteq [2n],|I|=q}\mathbb{E}(J_{I}^{2} ) \,{\rm Tr}(\mathbb{I}) = 2^n$.
\label{def:SYK}
\end{definition}

In \cite{SYKScaffidi} it was shown that with high probability (over the draw of $J_I$s) for the SYK-4
model, one has 
\[
\max_{\rho\;{\rm Gaussian}}{\rm Tr}(H \rho)= O(1).
\] 
In order to thus provide a counterexample to a constant Gaussian approximation ratio, one needs to prove a lower bound on $\lambda_{\rm max}(H)$ for the SYK-4 model, which holds with high probability, which was done in \cite{HO:approxferm}:

\begin{theorem}\cite{HO:approxferm}
There is a ${\rm poly}(n)$-time quantum algorithm that, given any SYK-4 Hamiltonian $H$, returns a quantum state $\rho$. With probability $1-\exp\big( -\Omega(n) \big)$ (over the draw of the $J_I$s), this state $\rho$ has ${\rm Tr}(H \rho)=\Omega(\sqrt{n})$.
\label{thm:HO}
\end{theorem}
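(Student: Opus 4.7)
The plan is to prove the claim in two steps: first establish $\lambda_{\max}(H) = \Omega(\sqrt{n})$ with probability $1 - \exp(-\Omega(n))$ via the moment method, then exhibit a $\mathrm{poly}(n)$-time quantum algorithm producing a state that attains this energy.

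For the lower bound, I would use $\lambda_{\max}(H)^{2k} \geq \tr(H^{2k})/2^n$ for an even power $2k$ chosen as a function of $n$. Expanding $\mathbb{E}[\tr(H^{2k})]$ over the Gaussian couplings $J_I$ by Wick's theorem produces a sum over perfect matchings $\pi$ of $2k$ positions, each matched pair carrying the same $4$-subset $I$, weighted by the Majorana trace $\tr\!\bigl(\prod_j C_{I_{\pi(j)}}\bigr)$. The key observation for $q=4$ is that $C_I$ and $C_J$ commute whenever $I,J$ are disjoint (both are even-weight Majorana products), so for any index assignment $(I_1,\ldots,I_k)$ with pairwise-disjoint $I_i$'s the product rearranges into $\prod_i C_{I_i}^2 = \mathbb{I}$, giving trace $2^n$ regardless of $\pi$. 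Summing over the $(2k-1)!!$ matchings and noting that the pairwise-disjoint fraction of $\binom{2n}{4}^k$ is $1-O(k^2/n)$, one obtains $\mathbb{E}[\tr(H^{2k})]/2^n \geq (1-o(1))(2k-1)!! \sim (2k/e)^k$, so $\lambda_{\max}(H) \geq (1-o(1))\sqrt{2k/e}$ in expectation. Choosing $k=\Theta(n)$ yields the $\Omega(\sqrt{n})$ bound. Hanson--Wright-type concentration for the degree-$2k$ Gaussian polynomial $\tr(H^{2k})$ then upgrades this to the desired high-probability statement.

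For the algorithm, I would construct a $\mathrm{poly}(n)$-depth block-encoding of $H/\|H\|_\infty$ via LCU from the $\binom{2n}{4}$ Majorana monomials (each realized by a Jordan--Wigner Pauli string) and apply Quantum Singular Value Transformation with a low-degree Chebyshev polynomial approximating a step function at the spectral top (equivalently, truncated imaginary-time evolution $e^{\beta H}$ with $\beta=\Theta(1)$) to the maximally mixed state, purified on an ancilla. Post-selecting the ancilla yields a state $\rho$ concentrated on the top eigenspace of $H$ with $\tr(H\rho) = (1-o(1))\lambda_{\max}(H) = \Omega(\sqrt{n})$; both the polynomial degree and the post-selection success probability remain $\mathrm{poly}(n)$ because $\|H\|_\infty = O(\sqrt{n})$, itself a consequence of a matching moment upper bound.

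The main technical obstacle, in my view, is the moment estimate at $k=\Theta(n)$: the disjointness argument above is clean only for $k = o(\sqrt{n})$, where it merely yields $\lambda_{\max} = \Omega(n^{1/4})$. Pushing to $\Omega(\sqrt{n})$ requires controlling non-disjoint index assignments, for which Majorana products need no longer collapse to $\pm\mathbb{I}$ and the crossing structure of $\pi$ starts to matter. A more robust route may be to establish the Gaussian moment bound on a bulk range of moderate $k$, conclude that the empirical spectral density of $H$ is close to a unit-variance Gaussian on an interval that grows with $n$, and then extract the $\sqrt{n}$-edge via a counting argument on the number of eigenvalues above a threshold; alternatively, the $q$-Hermite/Askey--Wilson combinatorics specific to SYK can be invoked to control high moments directly.
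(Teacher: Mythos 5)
There are two genuine gaps here, and the second is the more serious one. First, as you yourself note, the clean part of your moment computation --- restricting to pairwise-disjoint $4$-subsets so that the Majorana monomials commute and square to $\mathbb{I}$ --- only controls $\mathbb{E}[\tr(H^{2k})]$ for $k=o(\sqrt{n})$, which via $\lambda_{\max}^{2k}\geq \tr(H^{2k})/2^n$ yields only $\lambda_{\max}=\Omega(n^{1/4})$. To reach $\Omega(\sqrt{n})$ you need $k=\Theta(n)$, where the disjoint assignments are an $e^{-\Theta(n)}$ fraction of all assignments and the crossing/sign structure of the chord diagrams dominates; moreover the relevant moments must be controlled to within $e^{O(n)}$ multiplicative factors, since the states near the $\Theta(\sqrt{n})$ edge are an exponentially small fraction of the spectrum. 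Hypercontractive concentration for a degree-$\Theta(n)$ Gaussian polynomial also degrades with the degree. So the core quantitative claim is not established by the argument as written, and the fallback routes you mention ($q$-Hermite combinatorics, spectral-density arguments) are precisely the hard part, not a remark.

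Second, even granting $\lambda_{\max}(H)=\Theta(\sqrt{n})$ with high probability, the proposed algorithm does not run in polynomial time. Your post-selection step must land in the part of the spectrum with energy $\Omega(\sqrt{n})$, but for the SYK-$4$ model (normalized so that $\tr(H^2)/2^n\approx 1$) the fraction of eigenvalues above $c\sqrt{n}$ is $e^{-\Omega(n)}$; equivalently, imaginary-time evolution $e^{\beta H}$ with $\beta=\Theta(1)$ applied to the maximally mixed state produces a Gibbs state of energy only $\Theta(1)$, and pushing $\beta$ up to $\Theta(\sqrt{n})$ makes the acceptance probability exponentially small. The assertion that the success probability is $\mathrm{poly}(n)$ ``because $\|H\|_\infty=O(\sqrt{n})$'' does not follow; what matters is the density of states near the edge, not the spectral radius. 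This is exactly why the cited proof (reproduced and generalized in this paper's Lemma \ref{lemma:maxeiglowbound} and Section \ref{sec:maxbound}) avoids spectral projection altogether: it writes down an explicit, efficiently preparable variational state $\rho_\theta=e^{-\theta\zeta}\rho_0 e^{+\theta\zeta}$, with $\rho_0$ a Gaussian state and $\zeta$ built from the couplings, expands $\tr(H\rho_\theta)$ to second order in $\theta$ via Baker--Campbell--Hausdorff, shows the first-order term is a chi-squared variable concentrating at $\Theta(\sqrt{n})$, and bounds $\|[\zeta,[\zeta,H]]\|=O(\sqrt{n})$ by a trace-moment argument, so that a constant $\theta$ already certifies $\Omega(\sqrt{n})$ energy. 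You would need either that constructive device or a fundamentally different algorithmic idea to close the gap.
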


\subsection{Sparse fermionic Hamiltonians}
Key to our work is the notion of a sparse Hamiltonian.
\begin{definition}
Let $H$ be a local traceless fermionic Hamiltonian of $2n$ Majorana operators. We say that $H$ is $k$-sparse, for an integer $k$, if no Majorana operator $c_i$ occurs in more than $k$ terms of the Hamiltonian.
\end{definition}
Using graph theoretic terminology, one may say that interactions in a $k$-sparse Hamiltonian form a hypergraph of bounded degree $k$. This condition allows us to efficiently find Gaussian states with constant approximation ratio. We have the following theorem, which is the main result of our work:

\begin{theorem}\label{thm:sparse}
Let $H$ be a traceless fermionic Hamiltonian on $2n$ Majorana operators with maximal eigenvalue $\lambda_{\max} (H)$. If $H$ is $k$-sparse and strictly $q$-local and $n>(q^2-1)k$, a Gaussian state $\rho$ can be efficiently constructed such that
\begin{equation}
    \frac{{\rm Tr}(H\rho)}{\lambda_{\rm max} (H)} \geq \frac{1}{Q},
\end{equation}
for $Q=q(q-1)(k-1)^2+q(k-1)+2$.
\end{theorem}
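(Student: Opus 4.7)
The plan is to reduce the theorem to a combination of a colouring argument on a conflict graph over the terms of $H$ and an explicit Gaussian-state construction tailored to each colour class.

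I would first introduce a conflict graph $G$ on the interaction set $\mathcal{I}$, where two terms $I_1,I_2\in\mathcal{I}$ are adjacent if they lie at hypergraph distance at most $2$ --- i.e.\ either $I_1\cap I_2\neq\emptyset$, or there exists $I_3\in\mathcal{I}$ meeting both. Strict $q$-locality and $k$-sparsity bound the distance-$1$ neighbours of any term by $q(k-1)$ and the distance-$2$ neighbours by $q(q-1)(k-1)^2$, so greedy colouring partitions $\mathcal{I}$ into at most $Q$ colour classes $\mathcal{I}_1,\ldots,\mathcal{I}_Q$. Each such class is \emph{diffuse}: any two of its terms have disjoint supports and no common neighbour in the hypergraph (distance $\geq 3$), matching the notion illustrated in the figure.

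For a fixed diffuse class $\mathcal{I}_j$, I would build a perfect matching $M_j$ of the $2n$ Majorana indices such that a term $C_I$ is pair-aligned with $M_j$ (that is, $I$ decomposes into $M_j$-pairs) if and only if $I\in\mathcal{I}_j$. The construction proceeds in two stages. First, inside each $I\in\mathcal{I}_j$, partition its $q$ Majoranas into $q/2$ pairs, chosen so that for every conflicting $I'\notin\mathcal{I}_j$ with $I'\cap I\neq\emptyset$, the intersection $I'\cap I$ is \emph{not} a union of these pairs; since each $I$ has only $O(qk)$ such $I'$s while there are $(q-1)!!$ matchings of $I$, such a choice exists by a simple counting argument. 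Second, on the remaining Majoranas $\bar V_j := [2n]\setminus\bigcup_{I\in\mathcal{I}_j}I$, extend $M_j$ to a perfect matching whose edges never lie entirely inside a single term $I'$. The hypothesis $n>(q^2-1)k$ leaves enough free vertices for this extension to be built greedily. Diffuseness ensures that every $I'\notin\mathcal{I}_j$ either sits entirely in $\bar V_j$ (killed by the no-intra-term-edge property) or meets exactly one $I\in\mathcal{I}_j$ (killed by the choice of in-term pairing), so $\mathrm{Tr}(C_{I'}\rho_j)=0$ by Wick's theorem in both cases.

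Let $\rho_j$ be the pure Gaussian state associated with $M_j$, with the sign of each pair chosen so that $\mathrm{Tr}(C_I\rho_j)=\mathrm{sgn}(J_I)$ for all $I\in\mathcal{I}_j$. Since the pairs used to realise different $I\in\mathcal{I}_j$ are disjoint (terms within a diffuse class share no Majoranas), these signs can be set independently. Then $\mathrm{Tr}(H\rho_j)=\sum_{I\in\mathcal{I}_j}|J_I|$. Because $\{\mathcal{I}_j\}_{j=1}^{Q}$ partitions $\mathcal{I}$ and $\lambda_{\max}(H)\leq\sum_{I}|J_I|$ by the triangle inequality together with $\|C_I\|=1$, selecting $j^{\star}$ maximising $\sum_{I\in\mathcal{I}_j}|J_I|$ yields $\mathrm{Tr}(H\rho_{j^{\star}})\geq \lambda_{\max}(H)/Q$. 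All steps --- colouring, matching construction, and preparation of a Gaussian state with prescribed covariance --- run in polynomial time. The main obstacle is the matching step: simultaneously guaranteeing that in-term pairings avoid every conflicting fragment $I'\cap I$ and that the extension on $\bar V_j$ avoids all intra-term edges. The in-term part is essentially a constant-size pigeonhole, but the greedy extension on $\bar V_j$ is where the numerical hypothesis $n>(q^2-1)k$ becomes essential, keeping the number of forbidden pair positions strictly below the number of free vertices at every step so that the greedy procedure never gets stuck.
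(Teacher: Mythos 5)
Your overall strategy---conflict-graph colouring into well-separated classes, a perfect matching per class, signs chosen termwise, and a pigeonhole over classes---is the same as the paper's. However, the matching construction contains genuine gaps. The most important one: your colour classes are only guaranteed to satisfy the first two properties of the paper's notion of diffuseness (pairwise disjoint supports, no common neighbour), and you never control the \emph{size} of a class's support. The paper's Definition~\ref{def:diffuse} has a third condition, $|\mathrm{Sup}(\mathcal{I}')|<\frac{2qn}{q+1}$, and it is exactly what makes the extension step work: the ``permitted-edge'' graph on $\bar V_j=[2n]\setminus\mathrm{Sup}(\mathcal{I}_j)$ has minimum degree at least $|\bar V_j|-1-(q-1)k$, and one needs $|\bar V_j|\geq \frac{2n}{q+1}$ together with $n>(q^2-1)k$ to push this above $|\bar V_j|/2$, after which a perfect matching avoiding intra-term edges exists (the paper extracts it from a Hamiltonian cycle via Dirac's theorem). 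Without a support bound, a colour class can cover almost all of $[2n]$, leaving $\bar V_j$ tiny---for instance two vertices lying in a common term---in which case \emph{no} admissible extension exists; the hypothesis $n>(q^2-1)k$ alone does not prevent this, since it says nothing about $|\bar V_j|$. Relatedly, your ``greedy never gets stuck'' claim is unjustified even when $\bar V_j$ is large: a step-by-step degree count cannot rule out that the last two unmatched vertices share a term, which is precisely why the paper invokes Dirac's theorem rather than a greedy pairing. Repairing this forces you to add the support condition to your classes and insert a further splitting step for an oversized class; this is where the paper's $+2$ in $Q$ (rather than the $+1$ from greedy colouring) comes from, and where the argument that at most one class can violate the support bound is needed.

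The second gap is your in-term pairing step, which is both incorrect and unnecessary. The counting argument fails: for $q=4$ and $k\geq 4$, three conflicting terms meeting $I=\{a,b,c,d\}$ in $\{a,b\}$, $\{a,c\}$ and $\{a,d\}$ respectively kill \emph{all} $(q-1)!!=3$ pairings of $I$, so a pairing with your property need not exist at all; in general each conflicting term with intersection of size $2$ kills $(q-3)!!$ of the $(q-1)!!$ pairings, and $q(k-1)$ conflicting terms easily overwhelm the ratio $(q-1)!!/(q-3)!!=q-1$ once $k\geq 2$. Fortunately the requirement is superfluous in the strictly $q$-local setting: since $|I'|=q=|I|$ and, by diffuseness, $I'$ can meet only one targeted term, we have $I'\not\subset\mathrm{Sup}(\mathcal{I}_j)$, so $I'$ owns a vertex in $\bar V_j$, and the (correctly built) extension matches that vertex outside $I'$; Lemma~\ref{lem:consistent_expectation} then forces ${\rm Tr}(C_{I'}\rho_j)=0$ regardless of how the targeted terms are paired internally. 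This is exactly how the paper argues in Lemma~\ref{lem:diffuse_matching_general}, where the in-term pairing is arbitrary. So your proof is repaired by deleting the in-term requirement entirely and by strengthening the colouring and extension steps as described above.
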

The proof of this theorem is given in Section~\ref{sec:sparse}; its basic idea is explained in Figure~\ref{fig:thm_sparse}.

We note that this proof only holds for Hamiltonians with terms of \emph{exactly} weight $q$. Typical physical Hamiltonians, however, have quadratic (kinetic energy of the electrons) and quartic terms (potential energy due to Coulomb interaction). Fortunately, we can also show that in the $q=4$ case we can include $q=2$ terms. For this we use a trick from \cite{BGKT:manybody} to lift such a $4$-local Hamiltonian to a strictly $4$-local Hamiltonian. This trick makes the Hamiltonian non-sparse. However, we show in Section~\ref{sec:sparse_sub_4} that, in this special case, we can circumvent the non-sparseness of the Hamiltonian and achieve a constant Gaussian approximation ratio. 

\begin{theorem}\label{thm:sparse_sub_4}
Let $H$ be a traceless fermionic Hamiltonian on $2n$ Majorana operators with maximal eigenvalue $\lambda_{\max} (H)$. If $H$ is $k$-sparse with terms of weight $2$ and $4$ and $2n>15k$, a Gaussian state $\rho$ can be efficiently constructed, such that
\begin{equation}
    \frac{{\rm Tr}(H\rho)}{\lambda_{\rm max} (H)} \geq \frac{1}{2Q}
\end{equation}
for $Q=12(k-1)^2+4(k-1)+2$.
\end{theorem}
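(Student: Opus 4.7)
The plan is to reduce to the setting of Theorem~\ref{thm:sparse} via the lifting trick of~\cite{BGKT:manybody}. Split $H = H_4 + H_2$ into its strictly $4$-local and strictly $2$-local parts, both of which are $k$-sparse on the original Majoranas $[2n]$. Introduce two auxiliary Majoranas $c_{2n+1}, c_{2n+2}$, set $D := i c_{2n+1} c_{2n+2}$ (which satisfies $D^2 = \mathbb{I}$), and form the lifted Hamiltonian $\tilde{H} := H_4 + D H_2$ on $2n+2$ Majoranas. By construction $\tilde{H}$ is strictly $4$-local, and since $D$ commutes with everything supported on $[2n]$, its $+1$-eigenspace supports $\tilde{H}$ identically to $H$, giving $\lambda_{\max}(\tilde{H}) \geq \lambda_{\max}(H)$. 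Moreover, a product Gaussian state $\tilde{\rho} = \rho \otimes \pi_+$, with $\rho$ a Gaussian state on $[2n]$ and $\pi_+$ the Gaussian state on the auxiliary pair satisfying ${\rm Tr}(D \pi_+) = +1$, obeys ${\rm Tr}(\tilde{H} \tilde{\rho}) = {\rm Tr}(H \rho)$; such $\tilde{\rho}$ correspond exactly to matchings of $[2n+2]$ that include the edge $(2n+1, 2n+2)$.

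The obstacle is that $\tilde{H}$ is not $k$-sparse: the auxiliary Majoranas appear in every lifted $2$-local term and so have degree up to $\Theta(n)$. I will circumvent this by running the matching/diffuse-set argument of Theorem~\ref{thm:sparse} on $\tilde{H}$ \emph{only} over matchings that include the auxiliary edge $(2n+1, 2n+2)$. Once that edge is fixed, the remainder of each matching lives on the original $2n$ Majoranas, and the conflict-graph analysis only sees the $k$-sparse hypergraph of $H$ itself on $[2n]$. The chromatic-number bound from the proof of Theorem~\ref{thm:sparse} therefore applies with $q = 4$, yielding $Q = 12(k-1)^2 + 4(k-1) + 2$ colors, and the assumption $2n > 15 k = (q^2 - 1) k|_{q=4}$ leaves enough slack to complete each color-$c$ matching on $[2n]$ with off-color edges that avoid covering any non-target term.

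As in the proof of Theorem~\ref{thm:sparse}, by orienting the edges of the color-$c$ matching $M_c$ so that ${\rm Tr}(C_I \tilde{\rho}_c) = \mathrm{sgn}(J_I)$ for every target $I$ in color $c$, one obtains ${\rm Tr}(\tilde{H} \tilde{\rho}_c) = \sum_{I \in \text{color } c} |J_I|$; using $\sum_I |J_I| \geq \lambda_{\max}(H)$ together with ${\rm Tr}(\tilde{H} \tilde{\rho}_c) = {\rm Tr}(H \rho_c)$, the best color then yields a Gaussian state $\rho_c$ on $[2n]$ with ${\rm Tr}(H \rho_c) \geq \lambda_{\max}(H)/Q$. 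The extra factor of $1/2$ in the stated bound comes from handling $2$-local and $4$-local terms separately in the diffuse-set decomposition -- most directly, by colouring them into at most $Q$ classes each, for a total of $2Q$ classes in the averaging argument, which also ensures that spurious contributions of the ``wrong-weight'' terms to the target Gaussian state can be simultaneously suppressed. The main technical point, and the main obstacle I anticipate, is verifying that after identifying $(c_{2n+1}, c_{2n+2})$ as a mandatory matching edge, the combined conflict graph on $H$'s mixed-weight terms still admits a $Q$-colouring (per weight class) and that the off-color matching completion remains feasible under $2n > 15k$; this is where the diffuse-set construction of Section~\ref{sec:sparse} must be adapted rather than quoted verbatim.
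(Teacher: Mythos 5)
Your overall architecture matches the paper's: lift $H$ to a strictly $4$-local $\tilde H$ on $[2n+2]$, split the weight-$2$ and weight-$4$ interaction sets into at most $Q$ diffuse classes each (giving the factor $2Q$), build a matching-based Gaussian state for the best class, and pull back via Lemma~\ref{lem:bravyi}. However, there is a genuine gap in the one place you flag as "the main technical point": your plan to run the argument \emph{only over matchings that include the auxiliary edge $(2n+1,2n+2)$} is correct for the lifted $2$-local target classes but fails for the $4$-local target classes, and the vague claim that spurious wrong-weight contributions "can be simultaneously suppressed" is not substantiated and is in fact false under that constraint. Concretely: when the target is a diffuse set ${\cal I}^{(4)}_\alpha$, the matching $M$ restricted to $[2n]$ must perfectly match each target term $I=\{a,b,c,d\}$, and one of its internal pairs, say $(j_1,j_2)$, may coincide with a term of ${\cal I}^{(2)}$ (if all six pairs inside $I$ lie in ${\cal I}^{(2)}$, every one of the three perfect matchings of $I$ hits two of them, so this cannot be avoided by choosing the internal pairing). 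If the edge $(2n+1,2n+2)$ is also in $M$, the lifted term $\{j_1,j_2,2n+1,2n+2\}\in\tilde{\cal I}^{(2)}$ is then consistent with $M$ and contributes $\pm J_{\{j_1,j_2\}}$ with a sign you cannot control, since the $\lambda$'s on the support of $I$ are already committed to realizing $|J_I|$. Your identity ${\rm Tr}(\tilde H\tilde\rho_c)=\sum_{I\in\text{color }c}|J_I|$ therefore does not hold for these classes.

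The paper's resolution is precisely to \emph{drop} the mandatory edge $(2n+1,2n+2)$ when targeting ${\cal I}^{(4)}_\alpha$: it selects an edge $(i_1,i_2)$ of the $[2n]$-matching with $i_1\notin{\rm Sup}({\cal I}^{(4)}_\alpha)$ (possible by Condition 3 of Definition~\ref{def:diffuse}, and guaranteeing $\{i_1,i_2\}\notin{\cal I}^{(2)}$ because the completion matching never places an edge inside an interaction), deletes it, and adds the two edges $(i_1,2n+1)$ and $(i_2,2n+2)$. Then no lifted term $\{j_1,j_2,2n+1,2n+2\}$ can be consistent with the matching, since consistency would force $\{j_1,j_2\}=\{i_1,i_2\}$. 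Without this (or an equivalent) device, your construction for the $4$-local classes does not deliver the claimed energy, so the proof is incomplete at exactly the step the theorem exists to handle.
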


\subsection{The sparse \texorpdfstring{$q=4$}{q=4} SYK model}\label{subsec:SYK}
In view of Theorem \ref{thm:sparse} it is worth revisiting the lack of a constant Gaussian approximation for the SYK model. The SYK-$q$ model in Definition \ref{def:SYK} is extremely non-sparse, in the sense that every Majorana operator occurs together with all other Majorana operators. This makes the SYK model somewhat unphysical, and several \emph{sparse} versions of the model have been considered \cite{sparseSYK, SYK-sparse}. Such sparse models intend to produce the same (low energy) physics, while being easier to simulate on both quantum and classical computers (see sections III and V in \cite{sparseSYK}). The sparse SYK model is generated by including terms by a Bernoulli trial with a certain probability $p$ tuned such that the expected sparsity is bounded:
\begin{definition}
The sparse \SYKf or SSYK-4 model on $2n$ Majorana operators with expected sparsity $k=O(1)$ is given as
\begin{equation}
    H = \frac{1}{\sqrt{2k n}}\sum_{I\subset [2n], |I|=4}X_I J_I C_I
    \label{eq:sparseSYKdef}
\end{equation}
where the $X_I$ are i.i.d. Bernoulli random variables with $p=\mathrm{Pr}(X_I=1) =\frac{k}{\binom{2n-1}{3}}$ and the $J_I$ are i.i.d. Gaussian random variables with mean $0$ and variance $1$.
\label{def:SSYK}
\end{definition}
Unlike the full SYK model with $\binom{2n}{4}$ terms in $H$, the sparse SYK model has a number of terms $\sim n$ in expectation. Note that the \SSSYKf model is only $k$-sparse in expectation, and with high probability there is a Majorana operator with degree $\Omega\big(\frac{\log(n)}{\log\log(n)}\big)$ (the degree distribution follows that of an Erd\H{o}s-Renyi hypergraph. See Theorem 3.4 in \cite{frieze2016introduction} for a proof of the statement for Erd\H{o}s-Renyi graphs. The hypergraph version follows by the same logic). This means that Theorem \ref{thm:sparse} does not directly apply. However, one can show, through a truncation argument, that almost all instantiations of \SSSYKf can be \emph{sparsified}, giving rise to a constant approximation ratio result that holds with high probability. 

\begin{theorem}\label{thm:sparse_SYK}
Let $H$ be a \SSSYKf Hamiltonian in Eq.~\eqref{eq:sparseSYKdef} with expected degree $k=O(1)$, such that $n>120(k+1)$. With probability at least $1-4\exp\left[-\frac{e^{-16(k+1)}k^3}{64(8k+7)} n\right]$, a Gaussian state $\rho$ can be efficiently constructed such that
\begin{equation}
    \frac{{\rm Tr}(H \rho)}{\lambda_{\max} (H)} \geq \frac{1}{Q},
\end{equation}
where $Q=1236 + 2752 k + 1536 k^2$.
\end{theorem}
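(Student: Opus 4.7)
The plan is a sparsification-and-correction argument. Although the SSYK-4 Hamiltonian fails to be strictly $k$-sparse, only a vanishing fraction of Majorana operators has atypically high degree in the interaction hypergraph, so $H$ is close to a strictly sparse Hamiltonian to which Theorem \ref{thm:sparse} can be applied. Concretely, fix a cutoff $k^\star = 8(k+1)$, let $V_{\rm bad}\subseteq[2n]$ be the set of Majorana indices that appear in more than $k^\star$ interactions of $H$, and split $H = H_s + H_r$ where $H_s$ collects those terms $C_I$ with $I\cap V_{\rm bad}=\emptyset$ (and is therefore strictly $k^\star$-sparse by construction) and $H_r$ is the residual.

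The first step is to control $|V_{\rm bad}|$. The degree of each Majorana is a sum of Bernoulli$(p)$ variables with mean $k$, so a Chernoff bound gives $\Pr[\deg(c_i)>k^\star] \le e^{-\Theta(k)}$. The degrees of different Majoranas are only weakly coupled (each indicator $X_I$ enters the degrees of exactly four Majoranas), so a standard concentration argument with bounded dependencies yields $|V_{\rm bad}|\le\beta n$ with probability $1-e^{-\Omega(n)}$ for some $\beta=O(e^{-\Theta(k)})$. On this event $H_s$ satisfies the hypotheses of Theorem \ref{thm:sparse} with $q=4$: it is traceless, strictly $4$-local, and $n > (q^2-1)k^\star = 120(k+1)$. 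The theorem then produces an efficiently constructible Gaussian state $\rho$ with ${\rm Tr}(H_s\rho)\ge\lambda_{\max}(H_s)/Q'$ where $Q' = 12(k^\star-1)^2+4(k^\star-1)+2$; substituting $k^\star = 8(k+1)$ gives $2Q' = 1536k^2+2752k+1236 = Q$, matching the stated bound after a factor of two is sacrificed to the residual error.

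The second step is to bound the residual. A matrix Bernstein (or Gaussian--Bernoulli) concentration argument, conditioned on the structural event of Step 1, shows that $\|H_r\|$ is small with probability $1-e^{-\Omega(n)}$. In parallel, a lower bound $\lambda_{\max}(H)=\Omega(1)$ is obtained from $\lambda_{\max}(H)^2\ge 2^{-n}{\rm Tr}(H^2)$ together with the concentration of ${\rm Tr}(H^2)$ around its mean $2^n/4$. Combining the spectral-perturbation inequalities $\lambda_{\max}(H_s)\ge\lambda_{\max}(H)-\|H_r\|$ and $|{\rm Tr}(H_r\rho)|\le\|H_r\|$ gives ${\rm Tr}(H\rho)/\lambda_{\max}(H)\ge 1/Q'-(1+1/Q')\|H_r\|/\lambda_{\max}(H)\ge 1/(2Q')=1/Q$ once $\|H_r\|/\lambda_{\max}(H)\le 1/(2(Q'+1))$, which the concentration bounds guarantee for $n$ large compared to $k$. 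The prefactor $4$ in the probability comes from a union bound over the four bad events: $|V_{\rm bad}|$ too large, $\|H_r\|$ too large, ${\rm Tr}(H^2)$ too small, and the Theorem \ref{thm:sparse} construction failing.

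The main obstacle is the quantitative bound on $\|H_r\|$: it must beat the constant-order lower bound on $\lambda_{\max}(H)$, which forces a careful trade-off between the cutoff $k^\star$ and the concentration exponent. The correlation between the Bernoullis $X_I$ and the Gaussians $J_I$ under conditioning on $V_{\rm bad}$ is delicate, and is most cleanly handled by first averaging over the $X_I$'s to fix a random sparse support, then applying a Gaussian matrix concentration on the remaining sum. Tracking the exact dependence of the resulting exponent on $k$ is what produces the $e^{-16(k+1)}k^3/(64(8k+7))$ scaling in the stated probability bound, with the factor $8k+7 = k^\star-1$ being the direct imprint of the truncation threshold.
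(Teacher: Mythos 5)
Your overall strategy (truncate to a $k'$-sparse sub-Hamiltonian with $k'=8(k+1)$, apply Theorem~\ref{thm:sparse} to it, and control the residual) is the same as the paper's, and your arithmetic for $Q$ is correct. However, there is a genuine gap in how you close the argument. You propose to bound the \emph{operator norm} $\|H_r\|$ and compare it to a lower bound $\lambda_{\max}(H)=\Omega(1)$ obtained from $\lambda_{\max}(H)^2\geq 2^{-n}{\rm Tr}(H^2)$. This cannot work quantitatively: with high probability the residual contains $\Theta(e^{-\Theta(k')}n)$ terms, many supported on disjoint sets of Majoranas, so $\|H_r\|$ is extensive in $n$ (already a single family of disjoint, hence commuting, residual terms with the right signs contributes norm proportional to their number). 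Meanwhile the ${\rm Tr}(H^2)$ bound certifies only an $\Omega(1)$ value of $\lambda_{\max}(H)$, which is smaller than the truth by a factor of order $\sqrt{n}$ (in the normalization of Eq.~\eqref{eq:sparseSYKdef}). So the required inequality $\|H_r\|/\lambda_{\max}(H)\leq 1/(2(Q'+1))$ is false for large $n$ under the bounds you actually establish, and the spectral-perturbation step $\lambda_{\max}(H_s)\geq\lambda_{\max}(H)-\|H_r\|$ inherits the same problem.

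The paper avoids operator norms entirely and works with $\ell_1$ norms of the couplings. The construction behind Theorem~\ref{thm:sparse} gives the stronger intermediate guarantee ${\rm Tr}(H^{(k')}\rho)\geq\frac{1}{Q'}\sum_{I\in\mathcal{I}^{(k')}}|J_I|$ (Eq.~\eqref{eq:sparse_ratio}), the residual satisfies $|{\rm Tr}(h^{(k')}\rho)|\leq\sum_{I\in\bar{\mathcal{I}}^{(k')}}|J_I|$, and the denominator is bounded by $\lambda_{\max}(H)\leq\sum_{I\in\mathcal{I}}|J_I|$. The two probabilistic inputs are then Lemma~\ref{lem:H_SSYK}, a \emph{lower} bound $\sum_{I\in\mathcal{I}}|J_I|\geq kn/8$ (used to show the residual is a small \emph{fraction}, not that it is $O(1)$), and Lemma~\ref{lem:h_kappa}, an upper bound $\sum_{I\in\bar{\mathcal{I}}^{(k')}}|J_I|\leq \frac{4k^2}{\sqrt{k'-1}}e^{-k'}n$; the latter is the technically delicate step, since $|\bar{\mathcal{I}}^{(k')}|$ is not a sum of independent variables and requires an exponential Efron--Stein inequality rather than the "standard concentration with bounded dependencies" you invoke. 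If you replace your Step~2 by these $\ell_1$ estimates, your argument becomes essentially the paper's proof; as written, the quantitative core is missing.
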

Thus we arrive at the surprising conclusion that the \SSSYKf model has a constant Gaussian approximation ratio, while the dense \SYKf model does not --- even though \SSSYKf has similar physical properties as SYK-4.

\subsection{Higher-\texorpdfstring{$q$}{q} SYK models}

We investigate what Gaussian approximation ratios can be achieved for the \emph{dense} SYK model of even weight $q>4$, as this was left as an open question in \cite{HO:approxferm}. We establish an upper bound on the largest Gaussian expectation value of SYK-$q$, which behaves rather dramatically for $q>4$. We prove the following Lemma employing a method similar to the one used in \cite{SYKScaffidi}.

\begin{lemma}
Let $H$ be the \emph{dense} SYK-$q$ Hamiltonian (with even $q\geq 4$ and $q=O(1)$). With probability at least $1-\exp \big( -\Omega(n) \big)$ over the draw of SYK-$q$ Hamiltonians, the expectation value of every Gaussian state $\rho$ is bounded, more precisely:
\begin{equation}
    \max_{\rho \: {\rm Gaussian}} {\rm Tr} (H 
    \rho)  = O\big( n^{1-q/4} \big).
\end{equation}
\label{lemma:SYKexpvaluegaussians}
\end{lemma}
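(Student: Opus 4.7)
The plan is to bound the supremum of $\text{Tr}(H\rho)$ over all pure Gaussian $\rho$ by combining Wick's theorem (which converts Gaussian expectations of Majorana monomials into Pfaffians), Gaussian concentration of the $J_I$'s at a single state, and an $\epsilon$-net argument over the manifold of Gaussian states. This is the strategy of \cite{SYKScaffidi} at $q=4$, which I would push through essentially unchanged for general even $q$.

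\textbf{Step 1 (Pfaffian reduction).} Every pure Gaussian state $\rho$ is in bijection with a real antisymmetric covariance matrix $M$ on $2n$ Majorana indices, satisfying $M^{T}M=\mathbb{I}$. With the standard sign conventions absorbed, Wick's theorem gives ${\rm Tr}(\rho C_{I}) = \mathrm{Pf}(M|_{I})$, where $M|_{I}$ is the principal submatrix indexed by $I$. Thus
\begin{equation*}
{\rm Tr}(H\rho) \;=\; \binom{2n}{q}^{-1/2} \sum_{|I|=q} J_{I}\, \mathrm{Pf}(M|_{I}).
\end{equation*}

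\textbf{Step 2 (Variance computation).} For fixed $M$, the right side is a centered Gaussian in the i.i.d.\ couplings $J_{I}$, with variance
\begin{equation*}
\sigma^{2}(M) \;=\; \binom{2n}{q}^{-1} \sum_{|I|=q} \mathrm{Pf}(M|_{I})^{2} \;=\; \binom{2n}{q}^{-1}\, [t^{q}]\det(\mathbb{I}+tM),
\end{equation*}
using $\mathrm{Pf}(M|_{I})^{2}=\det(M|_{I})$ and the Cauchy--Binet/characteristic-polynomial identity. Since $M^{T}M=\mathbb{I}$ the eigenvalues of $M$ are $\pm i$ with multiplicity $n$ each, so $\det(\mathbb{I}+tM) = (1+t^{2})^{n}$, and therefore $\sigma^{2}(M) = \binom{n}{q/2}/\binom{2n}{q} = \Theta(n^{-q/2})$ \emph{uniformly} in $M$. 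This gives the target scale $\sigma(M)=O(n^{-q/4})$ and is the heart of the bound: the $\Theta(n^{-q/2})$ variance is what produces $n^{1-q/4}$ after a union bound over an $n^{2}$-dimensional manifold.

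\textbf{Step 3 ($\epsilon$-net + union bound).} The manifold of pure Gaussian states has real dimension $O(n^{2})$, so in the Frobenius metric on $M$ it admits an $\epsilon$-net $\mathcal{N}_{\epsilon}$ of cardinality $\exp\bigl(O(n^{2}\log(1/\epsilon))\bigr)$. At each $M\in\mathcal{N}_{\epsilon}$ a standard Gaussian tail bound gives $\Pr_{J}[\,|{\rm Tr}(H\rho_{M})|>t\,]\le 2\exp(-t^{2}/2\sigma^{2})$. Choosing $t = C\cdot n\cdot \sigma = O(n^{1-q/4})$ with $C$ large enough, the union bound yields
\begin{equation*}
\max_{M\in\mathcal{N}_{\epsilon}} {\rm Tr}(H\rho_{M}) \;=\; O(n^{1-q/4}),
\end{equation*}
with probability at least $1-\exp(-\Omega(n^{2}))$, which is much stronger than the claimed $1-\exp(-\Omega(n))$.

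\textbf{Step 4 (Lipschitz transfer) and main obstacle.} To extend the bound from the net to the full manifold, I would show that $M\mapsto {\rm Tr}(H\rho_{M})$ has polynomial-in-$n$ Lipschitz constant: $\mathrm{Pf}(M|_{I})$ is a degree-$q/2$ polynomial in the entries of $M$ and $\|M\|\le 1$, so the Lipschitz constant is controlled by $\|J\|_{2}$, and standard $\chi^{2}$-concentration gives $\|J\|_{2}=O(n^{q/2})$ with probability $1-e^{-\Omega(n)}$. Picking $\epsilon = n^{-O(1)}$ makes the discretization error $o(n^{1-q/4})$. The only genuinely delicate point is this Lipschitz / net bookkeeping: one must choose the metric on the Gaussian manifold (e.g.\ via the $O(2n)/U(n)$ coset) so that the covering entropy stays $O(n^{2})$ while the Lipschitz constant stays polynomial. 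Once this is handled the claim follows directly from the variance computation in Step~2.
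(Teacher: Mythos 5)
Your route is genuinely different from the paper's. The paper expands ${\rm Tr}(H\rho)$ via Wick's theorem into $(q-1)!!$ matching contributions $H_M$, treats each as a $q/2$-way tensor contraction of the i.i.d.\ couplings against copies of the vector $\Gamma_{i<j}$, and invokes a black-box spectral-norm bound for random tensors (Lemma \ref{lemma:randomtensors}); it only ever uses $\|\Gamma\|_F\le\sqrt{n}$, discarding the constraint $\Gamma^T\Gamma\le\mathbb{I}$, and the covering argument is hidden inside the cited theorem, which pays only a $\log q$ (not $\log n$) factor. Your Step 2 is in fact sharper in its use of structure: the exact identity $\sum_{|I|=q}\mathrm{Pf}(M|_I)^2=[t^q]\det(\mathbb{I}+tM)=\binom{n}{q/2}$ for $M^TM=\mathbb{I}$ gives the variance $\Theta(n^{-q/2})$ uniformly over pure Gaussian states, which is a cleaner way to see where the $n^{1-q/4}$ scale comes from. (Restricting to pure states is fine, since every mixed Gaussian state is a convex mixture of pure Gaussian states and ${\rm Tr}(H\rho)$ is linear in $\rho$.)

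There is, however, a concrete gap in the net bookkeeping: Steps 3 and 4 are mutually inconsistent about $\epsilon$. Your own Lipschitz estimate forces $\epsilon=n^{-O(1)}$ (for $q=4$ the Lipschitz constant is already $\Theta(n)$ against a target of $O(1)$, so $\epsilon\lesssim n^{-1}$), whence $|\mathcal{N}_\epsilon|=\exp\big(\Theta(n^2\log n)\big)$. The union bound with the pointwise tail $2\exp\big(-t^2/(2\sigma^2)\big)$ then requires $t^2/\sigma^2\gtrsim n^2\log n$, i.e.\ $t=\Omega\big(n\sqrt{\log n}\,\sigma\big)=\Omega\big(n^{1-q/4}\sqrt{\log n}\big)$; with $t=Cn\sigma$ and $C$ a constant the failure probability $\exp\big(c\,n^2\log n-C^2n^2/2\big)$ does not go to zero. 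As written your argument therefore yields $O\big(n^{1-q/4}\sqrt{\log n}\big)$, which is not the stated $O(n^{1-q/4})$ (though it would still suffice for the vanishing approximation ratio in Theorem \ref{theorem:mainSYKqtheorem}). The standard repair is to replace the single-scale net by chaining (Dudley's entropy integral over the canonical metric, whose diameter is $O(\sigma)$, gives $O(n\sigma)$ with no log), or to do what the paper does and reduce to a tensor-norm bound whose internal covering is carried out at the right scale. With that one step fixed, your proof goes through.
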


This Lemma is proved in Section \ref{section:gaussupperbound}. Our second result establishes a lower bound on the largest eigenvalue for SYK-$q$, essentially generalizing what was established in \cite{HO:approxferm} for $q=4$. We prove the following Lemma (its proof can be found in Section \ref{section:gaussupperbound}):

\begin{lemma}
Let $H$ be the \emph{dense} SYK-$q$ Hamiltonian with even $q\geq 4$ (and $q=O(1)$). With probability at least $1-\exp\big( -\Omega(n) \big)$ over the draw of SYK-$q$ Hamiltonians, $\lambda_{\max}(H) = \Omega(\sqrt{n})$.
\label{lemma:maxeiglowbound}
\end{lemma}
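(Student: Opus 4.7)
The plan is to adapt the $q=4$ argument of Hastings and O'Donnell \cite{HO:approxferm} to general even $q$. I would start from the elementary moment inequality $\lambda_{\max}(H)^{2k} \geq \mathrm{Tr}(H^{2k})/2^n$, valid for every positive integer $k$, and aim to lower-bound $\mathbb{E}[\mathrm{Tr}(H^{2k})/2^n]$ for a suitably chosen $k = k(n)$, before upgrading the estimate to high probability via Gaussian concentration.

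The first step is to expand $H^{2k}$ and apply Isserlis' theorem to the Gaussian expectation over the couplings $J_I$. This reduces $\mathbb{E}[\mathrm{Tr}(H^{2k})/2^n]$ to a sum over perfect matchings $\pi$ of $\{1,\dots,2k\}$, each weighted by $\eta^{\mathrm{cr}(\pi)}$, where $\mathrm{cr}(\pi)$ counts the crossings of $\pi$ and $\eta = \binom{2n}{q}^{-1}\sum_{s=0}^{q}(-1)^s\binom{q}{s}\binom{2n-q}{q-s} = 1-O(q^2/n)$ is the disorder-averaged commutation sign of two random $q$-Majorana monomials; this chord-diagram evaluation is standard for SYK-type models. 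Next, one uses the Touchard--Riordan formula (or, equivalently, a saddle-point estimate of the annealed partition function $\mathbb{E}[\mathrm{Tr}(e^{\beta H})]$ combined with the bound $\lambda_{\max}(H) \geq \beta^{-1}(\ln Z(\beta) - n\ln 2)$) to show that, for $k = \Theta(n)$ (respectively $\beta = \Theta(\sqrt n)$), the pairing sum is at least $(cn)^k$ for some $c = c(q) > 0$. Extracting $(2k)$-th roots then yields $\mathbb{E}[\lambda_{\max}] = \Omega(\sqrt{n})$.

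Finally, the map $J \mapsto \mathrm{Tr}(H^{2k})/2^n$ is a polynomial of degree $2k$ in the independent standard Gaussian variables $J_I$; Gaussian hypercontractivity (or a polynomial Bernstein-type inequality) shows that this polynomial concentrates around its mean with probability at least $1 - e^{-\Omega(n)}$ for the chosen $k$, upgrading the expectation bound to the high-probability statement $\lambda_{\max}(H) = \Omega(\sqrt{n})$. The principal obstacle is the combinatorial estimate of $\sum_\pi \eta^{\mathrm{cr}(\pi)}$ at $k = \Theta(n)$: the weight $\eta^{\mathrm{cr}(\pi)}$ becomes exponentially small on the typical matching, so one must locate a sufficiently rich family of few-crossing pairings whose total contribution overcomes this suppression while also beating the non-crossing-only lower bound $C_k \sim 4^k$, which by itself yields only $\lambda_{\max} \geq 2$.
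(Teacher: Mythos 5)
Your route is genuinely different from the paper's, which never touches moments of $H$: the paper proves this lemma by an explicit variational construction on a two-colored SYK model, taking a Gaussian state $\rho_0$ optimizing an auxiliary quadratic Hamiltonian, rotating it to $\rho_\theta = e^{-\theta\zeta}\rho_0 e^{+\theta\zeta}$ with $\zeta=\sum_j \tau_j\sigma_j$, expanding $\mathrm{Tr}(H^{(2)}\rho_\theta)$ to second order in $\theta$ via BCH, showing the first-order term is a chi-squared variable concentrating at $2\sqrt{n_2}$, bounding $\|[\zeta,[\zeta,H^{(2)}]]\|\leq C\sqrt{n}$ with high probability by a long moment computation, and finally transferring from the two-colored to the standard SYK model. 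Your moment-method program, by contrast, contains a direction-of-inequality gap that is fatal as written, on top of the combinatorial obstacle you acknowledge. From $\mathbb{E}[\mathrm{Tr}(H^{2k})/2^n]\geq (cn)^k$ at $k=\Theta(n)$ you claim that ``extracting $(2k)$-th roots yields $\mathbb{E}[\lambda_{\max}]=\Omega(\sqrt{n})$.'' But the moment bound only lower-bounds $\mathbb{E}[\lambda_{\max}^{2k}]$, and Jensen gives $(\mathbb{E}[\lambda_{\max}^{2k}])^{1/2k}\geq \mathbb{E}[\lambda_{\max}]$ --- the wrong way around. Concretely, an ensemble with $\lambda_{\max}=O(1)$ except on an event of probability $e^{-n}$ where $\lambda_{\max}=n$ satisfies $\mathbb{E}[\lambda_{\max}^{2n}]\geq e^{2n\ln n - n}\gg (cn)^n$, so large high moments at $k=\Theta(n)$ say nothing about the typical $\lambda_{\max}$. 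The partition-function variant has the same flaw: $\ln\mathbb{E}[Z(\beta)]\geq \mathbb{E}[\ln Z(\beta)]$, so an annealed saddle-point computation of $\mathbb{E}[\mathrm{Tr}(e^{\beta H})]$ never lower-bounds the typical free energy. Closing this gap needs a second-moment (Paley--Zygmund) estimate --- an upper bound on $\mathbb{E}[(\mathrm{Tr}(H^{2k}))^2]$ or $\mathbb{E}[Z(\beta)^2]$ comparable to the square of the first moment --- which is precisely where low-temperature spin-glass arguments usually break, and which your outline does not address at all.

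Two further problems. First, the concentration step you propose fails: hypercontractive or Bernstein-type tail bounds for Gaussian polynomials of degree $2k=\Theta(n)$ degrade exponentially in the degree and cannot give $1-e^{-\Omega(n)}$ concentration at any useful deviation scale. The tool that works is Borell--TIS applied to $\lambda_{\max}$ itself, which is $1$-Lipschitz in the couplings since $\|H(J)-H(J')\|\leq \binom{2n}{q}^{-1/2}\sum_I |J_I-J_I'|\leq \|J-J'\|_2$ by Cauchy--Schwarz; this gives $\Pr\big[|\lambda_{\max}-\mathbb{E}\lambda_{\max}|\geq \epsilon\sqrt{n}\big]\leq 2e^{-\epsilon^2 n/2}$, and it is also the ingredient that could boost a small-probability Paley--Zygmund event into a bound on $\mathbb{E}[\lambda_{\max}]$, provided the exponential rates can be made to cooperate quantitatively. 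Second, the identity $\mathbb{E}[\mathrm{Tr}(H^{2k})/2^n]=\sum_\pi \eta^{\mathrm{cr}(\pi)}$ is not exact at fixed $q$: the crossing parities $(-1)^{|I_a\cap I_b|}$ for distinct crossings are correlated, and coinciding index sets contribute extra terms; at $k=\Theta(n)$ there are $\Theta(n^2)$ crossings, so these corrections require rigorous control on top of the Touchard--Riordan evaluation. Combined with the core pairing-sum estimate that you yourself flag as unresolved, the proposal is at present a plausible research program rather than a proof, whereas the paper's constructive argument sidesteps all of these issues and, additionally, produces an explicit state achieving energy $\Omega(\sqrt{n})$.
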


As an immediate consequence of the previous results, we see that the Gaussian approximation ratio of the dense SYK-$q$ model can be no better than $O\big(n^{1/2-q/4}\big)$: 
\begin{theorem}
Let $H$ be the \emph{dense} SYK-$q$ Hamiltonian (with even $q\geq 4$ and $q=O(1)$). With probability at least $1-\exp\big( -\Omega(n) \big)$ over the draw of SYK-$q$ Hamiltonians, we have
\begin{equation}
    \max_{\rho \: {\rm Gaussian}} \frac{{\rm Tr}(H\rho)}{\lambda_{\max}(H)} = O\big( n^{1/2-q/4} \big).
\end{equation}
\label{theorem:mainSYKqtheorem}
\end{theorem}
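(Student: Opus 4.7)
The plan is to obtain Theorem \ref{theorem:mainSYKqtheorem} as a direct corollary of Lemma \ref{lemma:SYKexpvaluegaussians} and Lemma \ref{lemma:maxeiglowbound}, by combining them via a union bound and taking a ratio. Since the theorem statement says ``immediate consequence,'' I expect no new ingredients are needed beyond the two lemmas; the only work is to argue that the two high-probability bounds hold on the same event.

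First, I would invoke Lemma \ref{lemma:SYKexpvaluegaussians}, which says that with probability at least $1-\exp(-\Omega(n))$ over the draw of $\{J_I\}$, every Gaussian state $\rho$ satisfies
\begin{equation}
\max_{\rho \, \mathrm{Gaussian}} \mathrm{Tr}(H\rho) \le C_1 \, n^{1-q/4}
\end{equation}
for some constant $C_1=C_1(q)$. Call this event $\mathcal{E}_1$. Next, I would invoke Lemma \ref{lemma:maxeiglowbound}, which says that with probability at least $1-\exp(-\Omega(n))$, one has
\begin{equation}
\lambda_{\max}(H) \ge C_2 \, \sqrt{n}
\end{equation}
for some constant $C_2>0$. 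Call this event $\mathcal{E}_2$.

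Applying the union bound, the event $\mathcal{E}_1 \cap \mathcal{E}_2$ holds with probability at least $1-2\exp(-\Omega(n)) = 1-\exp(-\Omega(n))$, since the sum of two exponentially small probabilities is still exponentially small (absorbing the factor of $2$ into the $\Omega(\cdot)$). On this intersection event, $\lambda_{\max}(H)$ is strictly positive, so dividing the two bounds yields
\begin{equation}
\max_{\rho \, \mathrm{Gaussian}} \frac{\mathrm{Tr}(H\rho)}{\lambda_{\max}(H)} \le \frac{C_1 \, n^{1-q/4}}{C_2 \, \sqrt{n}} = \frac{C_1}{C_2}\, n^{1/2-q/4} = O\bigl( n^{1/2-q/4} \bigr),
\end{equation}
which is the desired bound.

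There is essentially no obstacle here: the only subtlety is making sure the two high-probability events are taken over the same randomness (the draw of the $\{J_I\}$ defining $H$), which they are by construction of Lemmas \ref{lemma:SYKexpvaluegaussians} and \ref{lemma:maxeiglowbound}. All the genuine difficulty lies upstream, in proving those two lemmas in Section \ref{section:gaussupperbound} --- in particular, establishing the upper bound on Gaussian expectation values that generalizes the $q=4$ argument of \cite{SYKScaffidi} to arbitrary even $q$, and extending the $\lambda_{\max}$ lower bound of \cite{HO:approxferm} to $q>4$.
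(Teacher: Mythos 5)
Your proposal matches the paper's proof exactly: the paper also obtains Theorem \ref{theorem:mainSYKqtheorem} by combining Lemma \ref{lemma:SYKexpvaluegaussians} and Lemma \ref{lemma:maxeiglowbound} via a union bound and taking the ratio. Your write-up just spells out the same one-line argument in more detail, and it is correct.
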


\begin{proof}
Theorem \ref{theorem:mainSYKqtheorem} follows from combining Lemma \ref{lemma:SYKexpvaluegaussians} and Lemma \ref{lemma:maxeiglowbound} and applying the union bound. 
\end{proof}

\section{Discussion}

The goal of this section is to place our results in a broader context and mention a few open questions. 

First, let us discuss the relation between this work and the fermion-to-qubit mapping methods. As was shown in \cite{BK:ferm}, one can map a {\em sparse} $O(1)$-local fermionic Hamiltonian onto a sparse $O(1)$-local qubit Hamiltonian (BK-superfast encoding). However, for this mapping one needs to enforce parity checks which are in general nonlocal; therefore, we cannot obtain our Theorem \ref{thm:sparse} in this way. There is also an additional obstacle: using the BK-superfast encoding, an approximating product state for the qubit Hamiltonian
does not necessarily map back to a Gaussian fermionic state.

Ref.~\cite{BK:ferm} also showed that one can map a general local fermionic Hamiltonian (like a SYK model) onto a qubit Hamiltonian with terms which are $O(\log n)$-local. Such qubit Hamiltonian is generally not expected to have a constant approximation ratio by a product state due to its $n$-dependent locality. In fact, one can easily prove that a dense model like the SYK model can only be mapped onto a qubit Hamiltonian which is $\Omega(\log n)$-local. We give the argument in Appendix \ref{app:lemAC}.

These observations suggest that approximation ratios by classical states such as Gaussian states or product states are likely to be affected by sparsity in the case of fermions, which is consistent with our new results.

Another question which is raised by our work and that of \cite{HO:approxferm} and \cite{BH:dense-approx}, is
whether studying fermionic Hamiltonians can lead to new insights into the possibility of a quantum PCP theorem \cite{AV:QPCP}. In this context it is important to mention that, besides the lower bound in Theorem \ref{thm:HO},  Ref.~\cite{HO:approxferm} also determined an upper bound on $\lambda_{\rm max}$ of the SYK-$4$ model showing that with high probability $\lambda_{\rm max}=\Theta(\sqrt{n})$. This shows that the SYK-$4$ model is extremely frustrated: the maximal average expected energy per term, {\em the energy density}, is only $\Theta(n^{-3/2})$. In contrast, our results for the sparse SYK model (see Lemma \ref{lem:H_SSYK}) show that the maximal average expected energy per term is $\Omega(1)$, which is the more `natural' physical scaling. A simple fermionic toy model in which the maximal average energy per term decreases is a model in which an extensive set of Majorana operators is mutually anti-commuting, see Lemma \ref{lem:ACspectrum} in Appendix \ref{app:lemAC}.
 The presence of many such fully-anticommuting sets in the SYK model can be seen as one of the intuitive reasons why the maximal energy density achieved is so low. 
 
 For $k$-local qubit Hamiltonians researchers have looked at the hardness of approximating the maximal energy density with constant error $\epsilon$: showing that this problem is QMA-complete would prove the quantum PCP theorem. For dense (non-sparse) $k$-local qubit Hamiltonians, it was proved in \cite{BH:dense-approx} (Theorem 13) that there is a polynomial-time classical algorithm to approximate the maximal energy density, using product state approximations. Ref.~\cite{BCGW} generalized this result and formulated an efficient classical algorithm which approximately estimates the free energy of a 2-local dense qubit Hamiltonian. 

One can similarly ask the question of approximating the maximal energy density for dense $q$-local fermionic Hamiltonians. Observe that the question is moot if the maximal energy density decreases as a function of $n$ (as in the SYK model), since for large enough $n$ (depending on $\epsilon$) the classical algorithm could always output 0 and make an error less than $\epsilon$. However, other dense $O(1)$-local fermionic Hamiltonians could exist for which this question is nontrivial and not already covered by the dense qubit case.
  
There are further open directions that are more practically-oriented. One of these is achieving finite approximation ratios for at least some classes of \textit{non-sparse} fermionic Hamiltonians (e.g., quantum chemistry or lattice systems with long-range Coulomb interactions). Furthermore, in most applications, one is interested in obtaining approximation ratios as close to 1 as possible. Although for most systems of interest one cannot expect ratios that are $\epsilon$-close to $1$, our theoretical lower bounds could still be vastly improved. For instance, for sparse SYK with $k=10$, the guaranteed ratio is only $\simeq 5\times 10^{-6}$ (cf. Theorem~\ref{thm:sparse_SYK}). This can be contrasted to the Hartree-Fock applications to quantum chemistry systems, which usually achieve approximation ratios of $>0.9$.  Improving our results to derive more realistic lower bounds could be of great value; some possible approaches are as follows.

One option is to extend the interaction subsets targeted by the constructed Gaussian state beyond the diffuse subsets considered here. If the overlapping interactions in the problem Hamiltonian are not prone to frustration, including them in the targeted set may dramatically increase the approximation ratio. The proof of Theorem~\ref{thm:sparse_sub_4} (Section~\ref{sec:sparse_sub_4}) is a special case of this approach, with the constructed Gaussian state targeting multiple overlapping terms at the same time. 

Another option for improvement is to minimize the contribution from frustration terms instead of avoiding frustration altogether. This could both improve the eventual approximation ratio by targeting a larger pool of interactions, as well as allowing to mitigate the issue of non-sparsity. An example of this approach is the proof of Theorem~\ref{thm:sparse_SYK} (Section~\ref{sec:sparse_SYK}), where the contributions from the non-sparse part of the Hamiltonian are shown to be small compared to the energy achieved by the Gaussian state. 

As a third option, one can modify the basis of fermionic modes so that non-sparsity and frustration in the Hamiltonian are minimized. In the simplest case of $q=2$, such a basis rotation can always turn all interactions into a diffuse set (simply by diagonalizing the Hamiltonian). A similar improvement may be possible for some classes of $q$-local Hamiltonians with $q\geq 4$. 

Developing these and other directions for efficient Gaussian ground state approximation are interesting possibilities for future research.

Finally, it would be interesting to provide a non-random family of fermionic Hamiltonians without a constant approximation ratio with respect to Gaussian states.

\section{Background on Gaussian states}
\label{sec:back}

In this section, we first provide some background and definitions that will be used throughout the remainder of this text.

\subsection{Gaussian states}

We define the class of fermionic Gaussian states, which are ground states and thermal states of non-interacting, quadratic ($q=2$), fermionic Hamiltonians, and give some of their useful properties.

We first note that any transformation by a real orthogonal matrix $R\in SO(2n)$, i.e.,
\begin{equation}
    \tilde{c}_{i} = \sum_{j}R_{ij}c_{j},
    \label{eq:Rtrafo}
\end{equation}
preserves the properties of Majorana operators and hence gives rise to a new set of $2n$ Majorana operators $\{\tilde{c}_{j}\}_{j=1}^{2n}$.

\begin{definition}{\textbf{Fermionic Gaussian state.}}
Given $2n$ Majorana operators denoted by $c_{1},\ldots,c_{2n}$. A fermionic Gaussian state is a -- generally mixed -- state of the form
\begin{equation}
    \rho =\frac{1}{2^n} \exp\Big( -i\sum_{i\neq j}\beta_{ij}c_{i}c_{j} \Big),
\end{equation}
where $(\beta_{ij})_{i,j=1}^{2n}$ is a real anti-symmetric matrix and the normalization is such that ${\rm Tr} \,
\rho=1$.
\end{definition}
\noindent
Fermionic Gaussian states have a number of useful properties, which we list here for future use.
\begin{enumerate}
    \item The matrix $\beta$ can be block-diagonalized by a real orthogonal matrix $R\in SO(2n)$ such that
    \begin{equation}
        \beta = R\: \bigoplus_{j=1}^{n} 
        \begin{pmatrix}
        0 & b_{j} \\
        -b_{j} & 0
        \end{pmatrix}
        \:R^T,
    \end{equation}
    with $b_j \geq 0$. Therefore, $\rho$ can be brought to the following standard form
    \begin{equation}
        \rho = \frac{1}{2^{n}}\prod_{j=1}^{n}\Big( \mathbb{I}+i\lambda_{j}\tilde{c}_{2j-1}\tilde{c}_{2j} \Big),
        \label{eq:rho}
     \end{equation}
    where $\tilde{c}_i = \sum_j R_{ij} c_j$ and $\lambda_{j} = \tanh(2b_{j}) \in [-1,+1]$.
    \item Each fermionic Gaussian state can be associated with a $2n\times 2n$ correlation matrix $\Gamma$, with
    \begin{equation}
        \Gamma_{ij} = \frac{i}{2}\text{Tr}\big( \rho [c_{i},c_{j}] \big).
    \end{equation}
    $\Gamma$ is a real anti-symmetric matrix and hence there is a real orthogonal matrix $R\in SO(2n)$ such that 
    \begin{equation}
        \Gamma = R\: \bigoplus_{j=1}^{n} 
        \begin{pmatrix}
        0 & \lambda_{j} \\
        -\lambda_{j} & 0
        \end{pmatrix}
        \:R^T,
    \end{equation}
    where the $\lambda_j$ are in Eq.~\eqref{eq:rho}.
    \item For pure fermionic Gaussian states, $\lambda_{j}\in\{-1,+1\}$ and hence for pure Gaussian states $\Gamma^{T}\Gamma = \mathbb{I}$. For mixed fermionic Gaussian states, $\Gamma^{T}\Gamma\leq\mathbb{I}$.
    \item The Pfaffian of a $2k \times 2k$ anti-symmetric matrix $A$ is defined as 
    \[
    {\rm Pf}(A)=\frac{1}{2^k k!}\sum_{\pi \in S_{2k}}{\rm sign}(\pi)\Pi_{i=1}^k A_{\pi(2i-1),\pi(2i)}.
    \]
    Alternatively, we can see the Pfaffian as a sum over perfect matchings in a graph of $2k$ vertices where an edge $(i < j)$ has weight $A_{ij}$ and each matching contributes the products of these weights to the sum.
    For a Gaussian state with correlation matrix $\Gamma$, one has for even $|I|$:
    \begin{align}
        {\rm Tr}(C_I \rho)={\rm Pf}(\Gamma_I),
        \label{eq:wick}
    \end{align}
    where $\Gamma_I$ is the $|I| \times |I|$ submatrix of $\Gamma$ restricted to rows and columns in the ordered set $I$.
\end{enumerate}

A special class of a pure Gaussian states is given by a perfect matching $M$ of Majorana operators. Such matching $M$ is specified by $n$ disjoint pairs $(m_1, m_2)$ with $m_1 < m_2$. For each pair we have a coefficient $\lambda_{(m_1,m_2)}=\pm 1$, together forming the $n$-dimensional vector $\vec{\lambda}$. The class of states are of the form
\begin{equation}
    \rho(M,\vec{\lambda})=\frac{1}{2^{n}}\Pi_{(m_1,m_2)\in M} (\mathbb{I}+i \lambda_{(m_1,m_2)} c_{m_1} c_{m_2}).
\end{equation}

It is useful to introduce a notion of {\em consistency} between this class of Gaussian states specified by a matching $M$ and an interaction subset $I$. 
\begin{definition}
An (even) interaction subset $I \subseteq [2n]$ and a perfect matching $M$ on $[2n]$ are called consistent if $M$ contains a perfect matching of the elements of $I$. Given a set of interactions $\mathcal{I}$, we say that $M$ is consistent (resp. inconsistent) with $\mathcal{I}$ if $M$ is consistent (resp. inconsistent) with \textbf{each} interaction in $\mathcal{I}$.
\end{definition}

The following Lemma is straightforward

\begin{lemma}
\label{lem:consistent_expectation}
\begin{enumerate} Consider a matching $M$ and an interaction $I=\{i_1,i_2,..i_q\}$.
    \item If $M$ is consistent with interaction $I$, let the perfect matching on the subset $I$ be given by pairs $(i_{\pi(2l-1)},i_{\pi(2l)})$ for $l=1,\ldots,q/2$ and a permutation $\pi \in S_q$ where $i_{\pi(2l-1)} < i_{\pi(2l)}$. Then, the following holds: 
    \begin{equation*}
        {\rm Tr} (C_I \rho(M, \vec{\lambda}))  ={\rm sign}(\pi) \hspace{-1em}\prod_{l\in\{1,\ldots,q/2\}} \hspace{-1em}\lambda_{(i_{\pi(2l-1)}, i_{\pi(2l)})}.
    \end{equation*}
    where ${\rm sign}(\pi)=\pm 1$.  
    \item If $M$ is inconsistent with $I$, then
    \begin{equation}
        {\rm Tr} (C_I \rho(M, \vec{\lambda}))  =0.
             \end{equation}
\end{enumerate}
\end{lemma}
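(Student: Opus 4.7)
My plan is to prove both parts by computing the correlation matrix of $\rho(M, \vec{\lambda})$ and invoking Wick's theorem (Property~4 in Section~\ref{sec:back}). First I would note that for distinct pairs $(m_1, m_2), (m_1', m_2') \in M$, the operators $i c_{m_1} c_{m_2}$ and $i c_{m_1'} c_{m_2'}$ have disjoint Majorana support and hence commute; consequently the factors $\tfrac{1}{2}(\mathbb{I} + i \lambda_{(m_1,m_2)} c_{m_1} c_{m_2})$ in $\rho(M, \vec{\lambda})$ are all mutually commuting. A direct two-point computation using this fact shows that the correlation matrix of $\rho(M, \vec{\lambda})$ has entries $\Gamma_{m_1, m_2} = \lambda_{(m_1, m_2)}$ for $(m_1, m_2) \in M$ with $m_1 < m_2$ (with $\Gamma_{m_2, m_1} = -\lambda_{(m_1, m_2)}$) and all other entries zero, i.e. $\Gamma$ is block-diagonal in the basis ordered by the matching $M$.

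By Wick's theorem, ${\rm Tr}(C_I \rho(M, \vec{\lambda})) = {\rm Pf}(\Gamma_I)$, so both claims reduce to Pfaffian computations on the submatrix $\Gamma_I$. For part~2, if $M$ is inconsistent with $I$ then some $i_k \in I$ has its $M$-partner outside $I$; by the structure of $\Gamma$, the entire row and column of $\Gamma_I$ indexed by $i_k$ are zero, so ${\rm Pf}(\Gamma_I) = 0$. For part~1, when $M$ is consistent with $I$, $\Gamma_I$ is itself block-diagonal (up to reordering) with precisely the $q/2$ nonzero upper-triangular entries $\lambda_{(i_{\pi(2l-1)}, i_{\pi(2l)})}$ corresponding to the pairs of the restricted matching. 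Using the combinatorial characterization of the Pfaffian as a signed sum over perfect matchings of $I$ weighted by products of entries of $\Gamma_I$, only the single matching determined by $\pi$ contributes, giving the claimed $ {\rm sign}(\pi) \prod_{l=1}^{q/2} \lambda_{(i_{\pi(2l-1)}, i_{\pi(2l)})}$.

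The main bookkeeping task is to verify that the sign attached to the lone contributing matching in the Pfaffian sum is precisely ${\rm sign}(\pi)$; this follows immediately from the definition of the Pfaffian, since $\pi$ is by construction the permutation that reorders the sorted index sequence of $I$ into the pair-ordered sequence $(i_{\pi(1)}, \ldots, i_{\pi(q)})$ with $i_{\pi(2l-1)} < i_{\pi(2l)}$. A self-contained alternative, avoiding any appeal to Wick's theorem, would expand $\rho(M, \vec{\lambda})$ as the sum of $2^n$ operator terms indexed by subsets $S \subseteq M$, then observe that ${\rm Tr}\bigl(C_I \prod_{(a,b) \in S} c_a c_b\bigr)$ vanishes unless every Majorana appears an even number of times in the combined product, which uniquely forces $S$ to be the restriction of $M$ to $I$ (and forbids such $S$ in the inconsistent case); the final sign and product of $\lambda$'s then drop out directly from rearranging the Majorana product via the Clifford anticommutation relations.
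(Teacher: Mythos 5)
Your argument is correct, but it takes a different route from the paper's. The paper proves the Lemma by brute-force operator algebra: expand $\rho(M,\vec{\lambda})$ into its $2^n$ monomial terms, observe that ${\rm Tr}(C_{I'})=0$ for any non-empty $I'$ forces the unique surviving term to be the one whose Majorana content exactly duplicates $C_I$ (impossible in the inconsistent case), and then extract the sign by reordering $C_I$ according to $\pi$ and using $(c_ic_j)^2=-\mathbb{I}$, $i^q=(-1)^{q/2}$ and ${\rm Tr}(\mathbb{I})=2^n$ --- this is precisely the ``self-contained alternative'' you sketch in your last sentence. Your primary route instead computes the correlation matrix of $\rho(M,\vec{\lambda})$ (correctly: $\Gamma_{m_1,m_2}=\lambda_{(m_1,m_2)}$ on matched pairs with $m_1<m_2$, zero elsewhere) and then invokes Wick's theorem, Eq.~\eqref{eq:wick}, reducing both claims to Pfaffian identities: a zero row of $\Gamma_I$ kills the Pfaffian in the inconsistent case, and in the consistent case exactly one perfect matching survives in the Pfaffian sum, carrying sign ${\rm sign}(\pi)$ for the representative permutation with $i_{\pi(2l-1)}<i_{\pi(2l)}$ (the $2^{q/2}(q/2)!$-fold overcounting in the formal Pfaffian definition cancels consistently with the antisymmetry of $\Gamma$, so this sign is well defined). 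Your approach buys reuse of machinery already established as Property~4 of Section~\ref{sec:back} and makes the vanishing in part~2 structurally transparent, at the cost of the Pfaffian sign bookkeeping; the paper's direct expansion is more elementary and keeps the Lemma independent of the Wick formula. Both are complete proofs.
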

\begin{proof}
In order for the trace to be nonzero, one needs to exactly match the Majorana operators in $C_I$ with some in the expansion of $\rho(M,\vec{\lambda})$ since 
$\mathrm{Tr}(C_{I'})=0$ for any non-empty subset $I'$. If $M$ is inconsistent, there is no term in the expansion of $\rho$ which precisely matches $C_I$, so the expectation vanishes. If $M$ is consistent, we have
\begin{align}
    &{\rm Tr} (C_I \rho(M, \vec{\lambda}))\notag\\
    &\hspace{3em}=
    \frac{1}{2^n}\mathrm{Tr}\big(C_I \Pi_{\scriptscriptstyle(m_1,m_2)\in M} (\mathbb{I}+i \lambda_{\scriptscriptstyle(m_1,m_2)} c_{m_1} c_{m_2})\big) \notag \\
    &\hspace{3em}=\frac{1}{2^n}\mathrm{Tr}\big(C_I \Pi_{\scriptscriptstyle(m_1,m_2)\in M, (m_1,m_2)\cap I\neq\emptyset)}\notag\\
    &\hspace{13em}\times(i \lambda_{\scriptscriptstyle(m_1,m_2)} c_{m_1} c_{m_2})\big)\notag\\
    &\hspace{3em}={\rm sign}(\pi) \prod_{l\in[1,..,q/2]} \lambda_{(i_{\pi(2l-1)}, i_{\pi(2l)})}.
\end{align}
Here we have used that one can first reorder $C_I$ such that the pairs in the perfect matching are adjacent, i.e. $C_I={\rm sign}(\pi) i^{q/2} c_{i_{\pi(1)}} c_{i_\pi(2)}\ldots c_{i_{\pi(q)}}$, then one can commute through each pair to its matching pair in $\rho$ and use $(c_i c_j)^2=-\mathbb{I}$, $i^q=(-1)^{q/2}$ and $\mathrm{tr}(\mathbb{I})=2^n$. 
\end{proof}

\section{Approximation ratios for sparse fermionic Hamiltonians}\label{sec:sparse}

In this section we prove Theorem \ref{thm:sparse}. We begin by setting up needed definitions and stating several technical Lemmas (which are proved in the Appendices). 

The key auxiliary notion in the proof of Theorem \ref{thm:sparse} is that of a \textit{diffuse} subset of Hamiltonian terms. Intuitively, the terms in a diffuse subset are well separated from each other while covering only a limited part of the system. This idea is formalized as follows:

\begin{definition}
\label{def:diffuse}
Consider a set of $q$-local interactions ${\cal I}$ on $2n$ Majorana operators. A subset of these interactions ${\cal I}'\subset {\cal I}$ is \emph{diffuse} with respect to ${\cal I}$, if the following three conditions apply:
\begin{enumerate}
    \item $\forall I_1, I_2 \in {\cal I}'$, $I_1$ and $I_2$ don't share any Majorana operators, i.e. $I_1\cap I_2 = \emptyset$.
    \item $\forall I_1, I_2 \in {\cal I}'$, there exists no $I_3 \in {\cal I}$ which shares Majorana operators with both $I_1$ and $I_2$ (if $I_3 \cap I_1\neq \emptyset$ then $I_3 \cap I_2=\emptyset$ and vice versa).
    \item The size of support of ${\cal I}'$, i.e. $|\mathrm{Sup}({\cal I}')|$, is smaller than $\frac{2qn}{q+1}$.
\end{enumerate}

\end{definition}

In the setting of Theorem~\ref{thm:sparse}, diffuse sets of terms appear naturally due to the following Lemma.

\begin{lemma}
\label{lem:diffuse_splitting}
Consider a $k$-sparse strictly $q$-local fermionic Hamiltonian $H$ on $2n$ Majoranas. The interaction set ${\cal I}$ of $H$ can be split into $Q$ disjoint subsets ${\cal I}_\alpha$ ($\alpha \in [Q])$ all of which are diffuse with respect to ${\cal I}$ such that
\begin{align}
    {\cal I}=\bigcup^{Q}_{\alpha=1} {\cal I}_\alpha.
\end{align}
The parameter $Q$ is given as $Q=q(q-1)(k-1)^2+q(k-1)+2$ and does not depend on $n$. The construction of this splitting can be done efficiently, in time $\mathrm{poly}(n)$.
\end{lemma}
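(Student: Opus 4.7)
My plan is to encode conditions (1) and (2) of diffuseness in a conflict graph $G$ on $\mathcal{I}$, color it via a degree bound, and then separately enforce condition (3) by a size-capping step. Concretely, define $G$ by connecting $I_1 \sim I_2$ iff they fail (1) or (2), i.e., either $I_1 \cap I_2 \neq \emptyset$, or there exists $I_3 \in \mathcal{I}$ with $I_3 \cap I_1 \neq \emptyset$ and $I_3 \cap I_2 \neq \emptyset$. A proper coloring of $G$ then automatically partitions $\mathcal{I}$ into independent sets each satisfying conditions (1) and (2).

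The crucial ingredient is bounding $\Delta(G)$. For a fixed $I_1$, $k$-sparsity gives at most $q(k-1)$ direct neighbors, since each of the $q$ Majoranas of $I_1$ lies in at most $k-1$ other interactions. For indirect neighbors, iterate over the $\leq q(k-1)$ direct neighbors $I_3$ of $I_1$; for each such $I_3$, only the $\leq q-1$ Majoranas in $I_3 \setminus I_1$ can yield $I_2$'s not already counted as direct (any $I_2$ sharing a Majorana of $I_3 \cap I_1$ with $I_3$ also shares it with $I_1$ and is already direct), each such Majorana lying in $\leq k-1$ further interactions. Summing, $\Delta(G) \leq q(k-1) + q(q-1)(k-1)^2 = Q - 2$. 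First-fit greedy coloring then produces a proper vertex coloring with $\chi(G) \leq Q - 1$ color classes in $\mathrm{poly}(n)$ time, each class diffuse in senses (1) and (2).

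To enforce condition (3), observe that any independent set $C$ in $G$ satisfies $|\mathrm{Sup}(C)| = q|C|$ by (1), so (3) is equivalent to $|C| < 2n/(q+1)$. I would reserve the single remaining color class $C_Q$ as an overflow bin and redistribute excess interactions from oversized greedy classes into $C_Q$. Combining $|\mathcal{I}| \leq 2nk/q$, the trivial bound $|C| \leq 2n/q$ (from condition (1) and $|\mathrm{Sup}(C)| \leq 2n$), and the hypothesis $n > (q^2-1)k$ — which yields $\lceil n/q \rceil < 2n/(q+1)$ — any single oversized class can be split in half to produce two valid sub-classes.

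The principal obstacle I foresee is precisely this last step: showing that a \emph{single} extra color suffices to absorb overflow from \emph{all} simultaneously oversized classes, since naively pooling excess from distinct greedy classes can re-introduce violations of (1) or (2). I expect the remedy to be either (i) an equitable-coloring refinement of the greedy step — e.g., via Hajnal--Szemer\'edi, which has a polynomial-time realization — forcing every greedy class to have size in $\{\lfloor |\mathcal{I}|/\chi \rfloor, \lceil |\mathcal{I}|/\chi \rceil\}$, which under $\chi(G) \gtrsim k(q+1)/q$ (valid whenever $k \geq 2$ because $\chi_0 \geq q(q-1)(k-1)^2$ dominates) puts every class below the threshold; or (ii) in the edge case $k=1$ where $G$ is empty and $\chi_0 = 1$, using the lone extra color to split the unique class into two halves of size $\leq \lceil n/q \rceil$, which fits under (3) precisely by virtue of the hypothesis $n > q^2 - 1$.
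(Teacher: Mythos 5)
Your construction of the conflict graph, the degree bound $\Delta(G)\leq q(k-1)+q(q-1)(k-1)^2=Q-2$, and the greedy $(Q-1)$-coloring coincide with the paper's proof (given there for the more general Lemma~\ref{lem:diffuse_splitting_general}); conditions 1 and 2 of Definition~\ref{def:diffuse} are handled identically. The gap is exactly where you flag it: you never actually establish that the single spare color suffices to repair condition 3, and you leave the resolution as a choice between two candidate remedies rather than proving either one. The missing idea in the paper is a short structural fact: \emph{at most one} color class can violate condition 3. Indeed, suppose a class $A$ has $|\mathrm{Sup}(A)|\geq \frac{2qn}{q+1}$ and consider any other class $B$. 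An interaction $I\in B$ with $I\subseteq \mathrm{Sup}(A)$ would either be contained in a single $I'\in A$ (impossible for disjoint strictly $q$-local classes, since $|I|=|I'|$ forces $I=I'$) or would intersect two distinct members of $A$, making it the witness $I_3$ that contradicts $A$ being an independent set of $G$. Hence every $I\in B$ uses a Majorana outside $\mathrm{Sup}(A)$, and since the members of $B$ are pairwise disjoint, $|B|\leq |[2n]\setminus\mathrm{Sup}(A)|<\frac{2n}{q+1}$, so $|\mathrm{Sup}(B)|=q|B|<\frac{2qn}{q+1}$. With only one oversized class possible, your worry about ``pooling excess from distinct greedy classes'' evaporates: one bisects that single class (its two halves are still independent sets of $G$, and each half has support roughly at most $n<\frac{2qn}{q+1}$), consuming exactly the one extra color and giving $Q=(Q-1)+1$ classes.

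Your proposed remedy (i) via Hajnal--Szemer\'edi equitable coloring would in fact also close the argument for $k\geq 2$ (the arithmetic $Q-1>k(q+1)/q$ you sketch does check out, and polynomial-time equitable-coloring algorithms exist), with remedy (ii) covering $k=1$; so the route is salvageable. But as written it is a conjecture about how to finish, not a proof, and it imports heavier machinery and a case split that the one-oversized-class observation renders unnecessary. A small additional point: you invoke the hypothesis $n>(q^2-1)k$, which is not part of the Lemma's statement (it appears only in Theorem~\ref{thm:sparse}); the capping step needs only a mild lower bound on $n$ of order $q$ for the bisected halves to land strictly below $\frac{2qn}{q+1}$.
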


Lemma \ref{lem:diffuse_splitting} is a special case of Lemma \ref{lem:diffuse_splitting_general}, which is proven in Appendix \ref{sec:diffuse_splitting_general}. The proof relies on a combinatorial argument on a graph that takes Hamiltonian terms as vertices and connects them with an edge if the pair violates conditions 1 or 2 of Definition~\ref{def:diffuse}. By the sparsity assumption, this graph has an efficiently constructable coloring with a bounded number of colors, from which the split ${\cal I}=\bigcup^{Q}_{\alpha=1} {\cal I}_\alpha$ can be constructed. 

The usefulness of diffuse sets comes from Lemma \ref{lem:diffuse_matching_general}, see its proof in Appendix \ref{sec:diffuse_matching_general}. Here we state its corollary, relevant to proving Theorem \ref{thm:sparse}:

\begin{lemma}
\label{lem:diffuse_matching}
Let the interaction set $\cal{I}'$ be diffuse w.r.t. $\cal{I}\supset \cal{I}'$ ($\cal{I}'$ and $\cal{I}$ are strictly $q$-local and $k$-sparse). 
If $n> (q^2-1) k$, one can efficiently construct a matching $M$ of the set $[2n]$ that is consistent with each interaction in $\cal{I}'$ and inconsistent with each interaction in ${\cal I}\backslash{\cal{I}}'$. 
\end{lemma}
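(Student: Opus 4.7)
The plan is to construct $M$ in two phases matching the decomposition $[2n] = S \sqcup S^c$, with $S := \mathrm{Sup}(\mathcal{I}')$ and $S^c := [2n] \setminus S$. In Phase 1 I would use condition 1 of diffuseness (pairwise disjointness of the interactions in $\mathcal{I}'$) to match each $I \in \mathcal{I}'$ internally by an arbitrary perfect matching of its $q$ Majoranas; the union of these is a partial matching $M_1$ of $S$ that is automatically consistent with every $I \in \mathcal{I}'$. The hypothesis $n > (q^2 - 1)k$ combined with diffuse condition 3 gives $|S^c| > \tfrac{2n}{q+1} > 2(q-1)k$, which is the combinatorial slack I would exploit later.

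In Phase 2 I would extend $M_1$ by a perfect matching $M_2$ of $S^c$, aiming for $M = M_1 \cup M_2$ to be inconsistent with every $I_3 \in \mathcal{I} \setminus \mathcal{I}'$. Condition 2 forces such an $I_3$ to intersect at most one $I \in \mathcal{I}'$, so $I_3 \cap S \subseteq I$ for at most one (unique) $I$; since $|I_3| = |I| = q$ and $I_3 \neq I$, one always has $I_3 \cap S^c \neq \emptyset$. The combined matching $M$ is consistent with $I_3$ precisely when $M_1$ pairs $I_3 \cap S$ internally within $I_3 \cap I$ and $M_2$ pairs $I_3 \cap S^c$ internally. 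Parity immediately kills every $I_3$ with $|I_3 \cap S^c|$ odd, so only threats with $|I_3 \cap S^c|$ even (and whose $S$-part happens to have been internally paired by $M_1$) need attention. I would handle those greedily: process each remaining threat $I_3$ and allocate a ``crossing'' pair $(y, z)$ with $y \in I_3 \cap S^c$ and $z \in S^c \setminus I_3$ drawn from the currently unmatched Majoranas of $S^c$, then at the end match whatever is left of $S^c$ arbitrarily.

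The hard part will be arguing that this greedy never gets stuck, i.e. that an unused $y \in I_3 \cap S^c$ and an unused $z \in S^c \setminus I_3$ always exist when $I_3$ is processed. Two ingredients should do it. First, $k$-sparsity bounds the number of threats any single Majorana participates in, so a single well-chosen crossing pair typically breaks several threats at once; hence the number of pairs one actually has to allocate is much smaller than $|\mathcal{I} \setminus \mathcal{I}'|$. Second, the bound $|S^c| > 2(q-1)k$ supplies enough Majoranas to satisfy all required crossings. A careful accounting -- for example, choosing $M_1$ in Phase 1 to preemptively break as many threats as possible (parity already does this for free whenever $|I_3 \cap I|$ is odd), and processing threats with small $|I_3 \cap S^c|$ first, since these are the most fragile -- should push this through. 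If a direct greedy analysis proves too delicate, a clean fallback is to sample $M_2$ uniformly at random among perfect matchings of $S^c$, verify a Lovász Local Lemma condition using $k$-sparsity and the lower bound on $|S^c|$, and derandomise via Moser--Tardos. Either route produces $M$ in $\mathrm{poly}(n)$ time, since every other step is a straightforward enumeration over $\mathcal{I}$, $S$, and $S^c$.
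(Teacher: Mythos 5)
Your Phase~1 and the structural observations leading into Phase~2 match the paper exactly: each $I\in\mathcal{I}'$ is matched internally (condition~1 of diffuseness makes this possible), and condition~2 plus strict $q$-locality guarantee that every $I_3\in\mathcal{I}\setminus\mathcal{I}'$ satisfies $I_3\cap S^c\neq\emptyset$. The gap is in Phase~2, and you flag it yourself: the claim that the greedy allocation of crossing pairs ``never gets stuck'' is essentially the content of the lemma, and it is not established. Worse, the greedy as described can actively fail: when you allocate a crossing pair $(y,z)$ for a threat $I_3'$, the element $z\in S^c\setminus I_3'$ may lie inside a \emph{different} threat $I_3$; if $y$ also lies in $I_3$, this pair is internal to $I_3$ and consumes two of its $S^c$-elements, and when $|I_3\cap S^c|=2$ this makes $I_3$ consistent and unrepairable. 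The ``careful accounting'' and the Lov\'asz Local Lemma fallback are both left as hopes rather than arguments, so the proof is incomplete at its central step.

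The paper avoids threat-by-threat bookkeeping entirely by imposing one global condition on the matching $M_2$ of $S^c$: no edge of $M_2$ may lie inside \emph{any} interaction of $\mathcal{I}$. Then every $y\in I_3\cap S^c$ is automatically paired with a partner outside $I_3$, so every threat is broken simultaneously, with no interaction between the fixes. Such an $M_2$ is a perfect matching of the ``permitted-edge'' graph on $S^c$ in which $(y,z)$ is an edge unless $y$ and $z$ share an interaction; by $k$-sparsity and $q$-locality each vertex is forbidden from at most $(q-1)k$ partners, so the minimum degree is at least $|S^c|-(q-1)k>|S^c|/2$, using exactly your bound $|S^c|>2n/(q+1)>2(q-1)k$. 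Dirac's theorem then yields an efficiently constructible Hamiltonian cycle, and pairing consecutive vertices along it gives $M_2$. Replacing your greedy with this global condition closes the gap; the slack $|S^c|>2(q-1)k$ you computed is precisely what makes the Dirac condition hold.
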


With matchings introduced above, one can construct useful Gaussian states. The tool to do so is given by the following statement:

\begin{lemma}
\label{lem:matching_gaussian}
Let $H=\sum_{I\in {\cal I}} J_I C_I$ be strictly $q$-local and ${\cal I}'$ be a diffuse subset of ${\cal I}$. Let $M$ be a matching of $[2n]$ as guaranteed by Lemma \ref{lem:diffuse_matching}. One can efficiently construct a Gaussian state $\rho_{{\cal I}'}$ with the property:
\begin{equation}
    {\rm Tr}(H \rho_{{\cal I}'})=\sum_{I\in {\cal I}'} |J_I|.
\end{equation}
\end{lemma}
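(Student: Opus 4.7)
The plan is to construct $\rho_{\mathcal{I}'}$ as a pure matching-based Gaussian state $\rho(M,\vec\lambda)$, where $M$ is the matching guaranteed by Lemma~\ref{lem:diffuse_matching} and $\vec\lambda\in\{-1,+1\}^n$ is a sign vector to be chosen. By Lemma~\ref{lem:consistent_expectation}(2), every interaction $I\in\mathcal{I}\setminus\mathcal{I}'$ contributes zero to $\mathrm{Tr}(H\rho(M,\vec\lambda))$ because $M$ is inconsistent with such $I$. Consequently,
\begin{equation*}
    \mathrm{Tr}(H\rho(M,\vec\lambda)) = \sum_{I\in\mathcal{I}'} J_I\,\mathrm{sign}(\pi_I)\prod_{(m_1,m_2)\in M_I}\lambda_{(m_1,m_2)},
\end{equation*}
where $M_I\subset M$ denotes the $q/2$ pairs of $M$ that exactly match $I$ (these exist by consistency), and $\pi_I$ is the permutation appearing in Lemma~\ref{lem:consistent_expectation}(1).

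The key observation is that the diffuseness of $\mathcal{I}'$ — specifically Condition~1 of Definition~\ref{def:diffuse} — ensures that the supports of distinct interactions in $\mathcal{I}'$ are pairwise disjoint. Hence the collections $\{M_I\}_{I\in\mathcal{I}'}$ are pairwise disjoint subsets of $M$, so the values $\{\lambda_{(m_1,m_2)}\}_{(m_1,m_2)\in M_I}$ can be chosen independently for each $I\in\mathcal{I}'$. For each $I\in\mathcal{I}'$ we therefore pick an arbitrary distinguished pair $(m_1^I,m_2^I)\in M_I$ and set
\begin{equation*}
    \lambda_{(m_1^I,m_2^I)} = \mathrm{sign}(J_I)\,\mathrm{sign}(\pi_I),
\end{equation*}
with all other $\lambda$'s in $M_I$ equal to $+1$. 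The remaining coordinates of $\vec\lambda$ (those associated with pairs of $M$ not touched by any $I\in\mathcal{I}'$) are set to $+1$. This makes each summand equal to $J_I\,\mathrm{sign}(J_I) = |J_I|$, yielding the claimed identity.

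Efficiency follows because $M$ is produced efficiently by Lemma~\ref{lem:diffuse_matching}, the sign $\mathrm{sign}(\pi_I)$ is computed from the sorted order of the $q$ elements of $I$, and assigning $\vec\lambda$ requires only a single pass through $\mathcal{I}'$. The main (and essentially only) subtle point is being careful that the sign convention of Lemma~\ref{lem:consistent_expectation}(1) is applied correctly per interaction; the disjointness provided by diffuseness guarantees no conflicts arise between different $I\in\mathcal{I}'$, so no obstructions appear.
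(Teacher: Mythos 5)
Your proposal is correct and follows essentially the same route as the paper: construct the pure matching state $\rho(M,\vec\lambda)$, invoke Lemma~\ref{lem:consistent_expectation} to kill the inconsistent terms, and use Condition~1 of Definition~\ref{def:diffuse} to choose the signs $\lambda_{(m_1,m_2)}$ independently per interaction so that each consistent term contributes $|J_I|$. The only difference is cosmetic — you make the sign assignment fully explicit (one distinguished pair per interaction absorbs $\mathrm{sign}(J_I)\,\mathrm{sign}(\pi_I)$, everything else set to $+1$) where the paper merely asserts that an appropriate choice exists.
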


Lemma \ref{lem:matching_gaussian} is a specific case of a slightly more general Lemma \ref{lem:matching_gaussian_general}, which is stated and proven in Appendix \ref{sec:matching_gaussian_general}. We denote
\begin{equation}
\mathbf{J}\left({\cal{I}'}\right)\equiv\sum_{I\in{\cal I}'} |J_I|.
\end{equation}
As shown below, Theorem \ref{thm:sparse} can be proven by constructing a diffuse ${\cal I}'\subset{\cal I}$ and a corresponding Gaussian state $\rho_{{\cal I}'}$ with large enough ${\rm Tr} (H \rho_{{\cal I}'})=\mathbf{J}\left({\cal{I}'}\right)$.

\begin{theorem*}[Repetition of Theorem \ref{thm:sparse}]
Let $H$ be a traceless fermionic Hamiltonian on $2n$ Majoranas with maximal eigenvalue $\lambda_{\max} (H)$. If $H$ is $k$-sparse and strictly $q$-local and $n>(q^2-1)k$, a Gaussian state $\rho$ can be efficiently constructed, such that
\begin{equation}
    \frac{{\rm Tr}(H\rho)}{\lambda_{\rm max} (H)} \geq \frac{1}{Q},
\end{equation}
for $Q=q(q-1)(k-1)^2+q(k-1)+2$.
\end{theorem*}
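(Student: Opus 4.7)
The plan is to combine the three auxiliary lemmas via a simple pigeonhole argument. First I would invoke Lemma \ref{lem:diffuse_splitting} to partition the interaction set $\mathcal{I}$ of $H$ into $Q$ diffuse subsets $\mathcal{I}_1,\ldots,\mathcal{I}_Q$. For each $\alpha \in [Q]$, Lemmas \ref{lem:diffuse_matching} and \ref{lem:matching_gaussian} then produce, efficiently, a Gaussian state $\rho_{\mathcal{I}_\alpha}$ satisfying
\begin{equation*}
    \mathrm{Tr}(H \rho_{\mathcal{I}_\alpha}) \;=\; \mathbf{J}(\mathcal{I}_\alpha) \;=\; \sum_{I \in \mathcal{I}_\alpha} |J_I|,
\end{equation*}
with all contributions from terms outside $\mathcal{I}_\alpha$ vanishing because the matching underlying $\rho_{\mathcal{I}_\alpha}$ is inconsistent with each such $I$ (Lemma \ref{lem:consistent_expectation}). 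The hypothesis $n > (q^2-1)k$ is needed here so that Lemma \ref{lem:diffuse_matching} applies to each diffuse piece.

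Next, since $\{\mathcal{I}_\alpha\}_{\alpha=1}^{Q}$ partition $\mathcal{I}$, one has $\sum_{\alpha} \mathbf{J}(\mathcal{I}_\alpha) = \sum_{I \in \mathcal{I}} |J_I|$, and pigeonhole furnishes an index $\alpha^* \in [Q]$ with
\begin{equation*}
    \mathbf{J}(\mathcal{I}_{\alpha^*}) \;\geq\; \frac{1}{Q}\sum_{I \in \mathcal{I}} |J_I|.
\end{equation*}
I take $\rho := \rho_{\mathcal{I}_{\alpha^*}}$. Since $Q$ is an $n$-independent constant, computing all $Q$ candidate values $\mathbf{J}(\mathcal{I}_\alpha)$ and selecting the maximum takes time $\mathrm{poly}(n)$, preserving the efficiency claim.

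To convert this into an approximation ratio, I would use the trivial upper bound $\lambda_{\max}(H) \leq \sum_{I \in \mathcal{I}} |J_I|$, which follows from the triangle inequality applied to $H = \sum_I J_I C_I$ together with $C_I^2 = \mathbb{I}$ (so each $C_I$ has operator norm one). Chaining the inequalities then gives
\begin{equation*}
    \frac{\mathrm{Tr}(H\rho)}{\lambda_{\max}(H)} \;\geq\; \frac{\mathbf{J}(\mathcal{I}_{\alpha^*})}{\sum_{I \in \mathcal{I}} |J_I|} \;\geq\; \frac{1}{Q},
\end{equation*}
as claimed.

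The main obstacles in this program are already absorbed into the preceding lemmas: the bounded hypergraph coloring that builds the diffuse partition, and the delicate matching construction that is simultaneously consistent with a chosen diffuse subset and inconsistent with every remaining Hamiltonian term. Once those are in hand, the proof of Theorem \ref{thm:sparse} is essentially bookkeeping — distribute the coefficient $\ell_1$-mass across the $Q$ diffuse pieces, grab the heaviest piece by pigeonhole, and compare against the coarse $\ell_1$ upper bound on $\lambda_{\max}$. The only subtlety worth double-checking is that the Gaussian state constructed for $\mathcal{I}_{\alpha^*}$ indeed annihilates every term of $H$ not in $\mathcal{I}_{\alpha^*}$, but this is exactly what the ``inconsistent with $\mathcal{I}\setminus\mathcal{I}'$'' clause of Lemma \ref{lem:diffuse_matching} guarantees.
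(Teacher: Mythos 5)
Your proposal is correct and follows essentially the same route as the paper's proof: split $\mathcal{I}$ into $Q$ diffuse subsets via Lemma \ref{lem:diffuse_splitting}, use Lemmas \ref{lem:diffuse_matching} and \ref{lem:matching_gaussian} to build a Gaussian state achieving $\mathbf{J}(\mathcal{I}_\alpha)$ for the heaviest subset, and compare against the bound $\lambda_{\max}(H)\leq\sum_{I}|J_I|$ by pigeonhole. The only cosmetic difference is that the paper selects the maximizing $\alpha$ before constructing the single corresponding state, whereas you construct all $Q$ candidates first; this does not affect correctness or efficiency.
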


\begin{proof}
For a Hamiltonian $H=\sum_{I\in {\cal I}} J_I C_I$, we construct the splitting of ${\cal I}$ into diffuse subsets ${\cal I}=\cup_\alpha {\cal I}_\alpha $ as guaranteed by Lemma \ref{lem:diffuse_splitting}. Next, find $\alpha=\mathrm{argmax}_{\alpha'} \mathbf{J}({\cal I}_{\alpha'})$; since $Q$ in Lemma \ref{lem:diffuse_splitting} is constant, $\alpha$ can be found efficiently. Next, use Lemma \ref{lem:diffuse_matching} to construct a matching $M({\cal I}_\alpha)$ (the condition $n>(q^2-1)k$ is satisfied by assumptions of Theorem \ref{thm:sparse}). 
Since ${\cal I}_\alpha$ is diffuse with respect to ${\cal I}$, the Gaussian state $\rho_{{\cal I}_\alpha}$ can be efficiently constructed from $M({\cal I}_\alpha)$ via Lemma \ref{lem:matching_gaussian}. Using ${\rm Tr} (H \rho_{{\cal I}_\alpha})=\mathbf{J}({\cal I}_\alpha)$, the following inequality can be obtained for the resulting approximation ratio:
\begin{align}
\label{eq:sparse_ratio}
\frac{{\rm Tr}(H \rho_{{\cal I}_\alpha})}{\lambda_{\rm max}(H)}\geq \frac{\mathbf{J}({\cal I}_\alpha)}{  \sum_{\alpha'} \mathbf{J}({\cal I}_{\alpha'})} \geq \frac{1}{  Q} .
\end{align} 
For the first inequality, note that $\lambda_{\rm max}(H)\leq \sum_{I\in{\cal I}} |J_I| =\sum_{\alpha} \mathbf{J}({\cal I}_\alpha)$. The second inequality comes from a pigeonhole-type argument: if $\mathbf{J}({\cal I}_\alpha) = \max_{\alpha'} \mathbf{J}({\cal I}_{\alpha'}) $, it directly follows that $\mathbf{J}({\cal I}_\alpha) \geq \frac{1}{Q} \sum_{\alpha'} \mathbf{J}({\cal I}_{\alpha'})$. Inequality \eqref{eq:sparse_ratio} concludes the proof, as it asserts the approximation ratio bound claimed in the Theorem.
\end{proof}

\section{Sparse Hamiltonians with terms of weight \texorpdfstring{$2$}{2} and \texorpdfstring{$4$}{4}}
\label{sec:sparse_sub_4}

In this section we prove Theorem \ref{thm:sparse_sub_4}. We will again need to use the concept of diffuse subsets in Definition \ref{def:diffuse}. The proof of Theorem \ref{thm:sparse_sub_4} is similar in its basic idea to that of Theorem \ref{thm:sparse}. The main obstacle in this case is the presence of terms of different weight, which does not allow one to use Lemmas \ref{lem:diffuse_splitting}-\ref{lem:matching_gaussian} directly. This can be resolved by a slightly more elaborate construction and applying the more general Lemmas \ref{lem:diffuse_splitting_general}-\ref{lem:matching_gaussian_general} which are proved in the Appendices and Lemmas \ref{lem:diffuse_splitting}-\ref{lem:matching_gaussian} directly follow as special cases.

\begin{lemma}[Generalization of Lemma \ref{lem:diffuse_splitting}]
\label{lem:diffuse_splitting_general}
Let ${\cal I}$ be the interaction set of a $k$-sparse $q$-local Hamiltonian on the set of Majorana fermions $[2n]$. The set ${\cal I}$ can be split into $(qQ)/2$ disjoint, strictly $2q'$-local subsets ${\cal I}^{(2q')}_\alpha$ (with $\alpha \in [Q]$ and $q'\in [q/2]$) each of which is diffuse with respect to ${\cal I}$:
\begin{align}
    {\cal I}=\bigcup^{q/2}_{q'=1} \bigcup^{Q}_{\alpha=1} {\cal I}^{(2q')}_\alpha.
\end{align}
The parameter $Q=q(q-1)(k-1)^2+q(k-1)+2$ does not grow with $n$. The construction of this splitting can be done efficiently, in time $\mathrm{poly}(n)$.
\end{lemma}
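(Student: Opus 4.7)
The plan is a two-stage construction. First, I would partition $\mathcal{I}$ into $Q$ subsets that are diffuse with respect to $\mathcal{I}$, allowing terms of varying weight; this extends the graph-coloring argument sketched after Lemma~\ref{lem:diffuse_splitting} to the variable-weight setting. Then, within each such diffuse subset I would further partition by term weight to obtain strictly $2q'$-local pieces. The real content lies in the first stage; the second is immediate.

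For the first stage, I would construct the \emph{conflict graph} $G=(\mathcal{I},E)$, placing an edge $(I_1,I_2)\in E$ iff some $I_3\in\mathcal{I}$ satisfies $I_3\cap I_1\neq\emptyset$ and $I_3\cap I_2\neq\emptyset$. Allowing $I_3\in\{I_1,I_2\}$ subsumes the case $I_1\cap I_2\neq\emptyset$, so edges encode violations of both conditions~1 and 2 of Definition~\ref{def:diffuse}; hence any independent set of $G$ satisfies these two conditions. To bound $\Delta(G)$, fix $I_1\in\mathcal{I}$ with $|I_1|\leq q$: by $k$-sparsity at most $|I_1|(k-1)\leq q(k-1)$ terms share a Majorana with $I_1$, and each such $I_3$ has at most $|I_3|-1\leq q-1$ Majoranas outside $I_1$, each appearing in at most $k-1$ further terms, contributing at most $q(q-1)(k-1)^2$ additional ``second-neighbor'' vertices. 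Hence $\Delta(G)\leq q(q-1)(k-1)^2+q(k-1)$, and a greedy $(\Delta(G)+1)$-coloring, computable in $\mathrm{poly}(n)$ time, yields color classes $\mathcal{I}_\alpha$ satisfying conditions~1 and 2.

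Condition~3 is not automatic after coloring: although condition~1 forces $|\mathrm{Sup}(\mathcal{I}_\alpha)|=\sum_{I\in\mathcal{I}_\alpha}|I|\leq 2n$, one can still have $|\mathrm{Sup}(\mathcal{I}_\alpha)|\geq 2qn/(q+1)$. I would enforce condition~3 by splitting any offending $\mathcal{I}_\alpha$ along its term ordering, cutting once the cumulative support first exceeds $n$; each resulting piece has support $\leq n<2qn/(q+1)$ for $q\geq 1$, and diffuseness is preserved because all three conditions of Definition~\ref{def:diffuse} are monotone under taking subsets of a diffuse set (with $\mathcal{I}$ as the fixed ambient set). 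With careful bookkeeping --- e.g.\ consolidating leftover small pieces across different color classes --- the extra classes introduced by these splits can be absorbed into the stated constant $Q=q(q-1)(k-1)^2+q(k-1)+2$, independent of $n$.

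For the second stage, I would subdivide each $\mathcal{I}_\alpha$ by weight: $\mathcal{I}_\alpha=\bigsqcup_{q'=1}^{q/2}\mathcal{I}^{(2q')}_\alpha$, where $\mathcal{I}^{(2q')}_\alpha$ collects only the weight-$2q'$ terms of $\mathcal{I}_\alpha$. Each $\mathcal{I}^{(2q')}_\alpha$ is strictly $2q'$-local by construction and inherits diffuseness w.r.t.~$\mathcal{I}$ by the same subset-monotonicity, producing $qQ/2$ pieces as claimed, in overall $\mathrm{poly}(n)$ time. The main technical obstacle is the condition~3 bookkeeping: arranging the post-coloring splits so that the total class count stays within the additive $+2$ above the greedy-coloring bound $\Delta(G)+1$, rather than (naively) doubling in the worst case.
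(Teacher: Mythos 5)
Your conflict graph, the degree bound $\Delta(G)\leq q(q-1)(k-1)^2+q(k-1)$, and the greedy $(\Delta(G)+1)$-coloring reproduce the first stage of the paper's argument exactly, and your observation that the three conditions of Definition~\ref{def:diffuse} are inherited by subsets is correct. The genuine gap is the step you yourself flag at the end and do not close: controlling how many extra classes the condition-3 repair costs. A priori every one of the $\Delta(G)+1$ color classes could have support $\geq 2qn/(q+1)$ and need splitting, doubling the count; your only devices for avoiding this are ``careful bookkeeping'' and ``consolidating leftover small pieces across different color classes,'' and the latter does not work in general: merging terms drawn from two different color classes can reintroduce violations of conditions 1 and 2 (two such terms may share a Majorana, or a third interaction in ${\cal I}$ may touch both), so leftover pieces cannot simply be recombined into one diffuse class.

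The paper closes this gap with a structural observation that your ordering of the two stages makes unavailable. It first subdivides each color class by weight into strictly $2q'$-local pieces and only \emph{then} enforces condition 3, proving that for each fixed $q'$ at most one of the sets ${\cal I}^{(2q')}_\alpha$ can have support $\geq 2qn/(q+1)$: if ${\cal I}^{(2q')}_{\alpha_0}$ is that large, a term of any other ${\cal I}^{(2q')}_\beta$ can neither be a strict subset of a term of ${\cal I}^{(2q')}_{\alpha_0}$ (both have weight exactly $2q'$) nor meet two of its terms (condition 2), so it must use a Majorana outside ${\rm Sup}({\cal I}^{(2q')}_{\alpha_0})$, forcing $|{\rm Sup}({\cal I}^{(2q')}_\beta)|\leq q\,|[2n]\backslash{\rm Sup}({\cal I}^{(2q')}_{\alpha_0})|<2qn/(q+1)$. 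Hence a single extra index, reused across all $q'$, suffices, giving $Q=\Delta(G)+2$. In your ordering (condition 3 repaired on the mixed-weight classes before the weight subdivision) the ``equal weight'' step fails --- a low-weight term of ${\cal I}_\beta$ can sit entirely inside a single higher-weight term of ${\cal I}_{\alpha_0}$ without reaching outside its support --- so uniqueness of the violating class is not established and the additive $+2$ is not justified. To repair the proof you should either adopt the paper's ordering or supply a separate argument bounding the number of violating mixed-weight classes by one.
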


\begin{lemma}[Generalization of Lemma \ref{lem:diffuse_matching}]
\label{lem:diffuse_matching_general}
Let a strictly $q'$-local $\cal{I}'$ be diffuse w.r.t. $q$-local $k$-sparse $\cal{I}$ on $[2n]$, such that $n>(q^2-1)k$. One can efficiently construct a matching $M$ of $[2n]$ that is consistent with $\cal{I}'$ and inconsistent with all interactions $I \in \mathcal{I}\backslash \mathcal{I}'$ such that (1) $|I|\geq q'$ or (2) $I\not\subset \mathrm{Sup}(\cal{I}')$. 
\end{lemma}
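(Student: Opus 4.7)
The plan is to build $M = M_0 \cup M_1$ from two independent pieces on the partition $[2n] = \mathrm{Sup}(\mathcal{I}') \sqcup V$, where $V := [2n] \setminus \mathrm{Sup}(\mathcal{I}')$. First I would define $M_0$ on $\mathrm{Sup}(\mathcal{I}')$ by pairing up the $q'$ Majoranas inside each $I' \in \mathcal{I}'$ arbitrarily. Condition 1 of diffuseness (disjointness of the elements of $\mathcal{I}'$) makes this well-defined, and $M_0$ is automatically consistent with every $I' \in \mathcal{I}'$. Before turning to $M_1$, I would reduce the two clauses of the statement to a single geometric condition: clause (2), $I \not\subset \mathrm{Sup}(\mathcal{I}')$, already says $I \cap V \neq \emptyset$; for clause (1), if $I \in \mathcal{I} \setminus \mathcal{I}'$ had $|I| \geq q'$ and $I \subseteq \mathrm{Sup}(\mathcal{I}')$, then conditions 1 and 2 of diffuseness would force $I \subseteq I''$ for a unique $I'' \in \mathcal{I}'$, hence $|I| \leq q'$, so $|I| = q'$ and $I = I''$, contradicting $I \notin \mathcal{I}'$. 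Thus it suffices to build $M_1$ such that $M$ is inconsistent with every $I \in \mathcal{I}\setminus \mathcal{I}'$ with $I \cap V \neq \emptyset$; call these the \emph{target} interactions.

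The crux is producing $M_1$ on $V$. Define the simple graph $G$ on the vertex set $V$ by declaring $\{v,w\}$ an edge precisely when no $I \in \mathcal{I}$ contains both $v$ and $w$. Any perfect matching $M_1$ of $G$ does the job: for each target $I$, pick $v \in I \cap V$ and let $w$ be its $M_1$-partner; by construction $\{v,w\} \not\subseteq I$, so $w \notin I$, and the edge $(v,w) \in M$ crosses $I$, rendering $M$ inconsistent with $I$. To prove that $G$ has a perfect matching I would invoke Dirac's theorem. For any $v \in V$, $k$-sparsity limits the number of interactions through $v$ to $k$, each contributing at most $q-1$ further Majoranas, so the non-neighbors of $v$ number at most $(q-1)k$ and $\delta(G) \geq |V| - 1 - (q-1)k$. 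Diffuseness condition 3 combined with $n > (q^2-1)k$ gives $|V| > 2(q-1)k$, and since $|V| = 2n - q'|\mathcal{I}'|$ is even (as $q'$ is even), in fact $|V| \geq 2(q-1)k + 2$. Plugging in, $\delta(G) \geq (q-1)k + 1 \geq |V|/2$, Dirac's hypothesis is satisfied, and the required matching can be produced in polynomial time via Edmonds' blossom algorithm.

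The main obstacle is the tight slack in the degree inequality: $\delta(G)$ only barely reaches $|V|/2$, and it is the parity observation — that $|V|$ must be even because each $I' \in \mathcal{I}'$ has even weight $q'$ — that closes the bound. This is exactly where the hypothesis $n > (q^2-1)k$ is used fully; a weaker assumption would break the Dirac step by one. A smaller subtlety worth flagging is that the degree bound is governed by the ambient locality $q$ of $\mathcal{I}$, not by the (possibly smaller) $q'$ of the diffuse subset, which matches how diffuseness condition 3 and $k$-sparsity are stated in the paper.
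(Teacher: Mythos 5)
Your construction is essentially the paper's own proof: the same reduction of clause (1) to $I \not\subset \mathrm{Sup}(\mathcal{I}')$ via diffuseness conditions 1 and 2, the same split $M = M_0 \cup M_1$ with $M_0$ pairing each $I' \in \mathcal{I}'$ internally, the same ``permitted-edge'' graph on $V$, and the same appeal to Dirac's theorem (the paper extracts the matching by pairing consecutive vertices of the resulting Hamiltonian cycle). One line should be repaired: the chain $\delta(G) \geq (q-1)k+1 \geq |V|/2$ is false whenever $|V| > 2(q-1)k+2$, which happens for generic instances; the correct assembly of your own ingredients is $\delta(G) \geq |V| - 1 - (q-1)k \geq |V| - 1 - (|V|-2)/2 = |V|/2$, using $|V| \geq 2(q-1)k+2$. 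With that rewording the argument is correct and matches the paper's.
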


\begin{lemma}[Generalization of Lemma \ref{lem:matching_gaussian}]
\label{lem:matching_gaussian_general}
Let $H=\sum_{I\in {\cal I}} J_I C_I$ on $[2n']$ be $q$-local and ${\cal I}'$ be a diffuse subset of ${\cal I}$. Consider a matching $M$ of $[2n']$. If $M$ is consistent with ${\cal I}'$ and inconsistent with ${\cal I}\backslash{\cal I}'$, one can efficiently construct a Gaussian state $\rho_{{\cal I}'}$ with the property:
\begin{equation}
    {\rm Tr} (H \rho_{{\cal I}'})=\sum_{I\in {\cal I}'} |J_I|.
\end{equation}
\end{lemma}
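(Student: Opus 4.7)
The plan is to take $\rho_{{\cal I}'}$ to be a matching-state of the form $\rho(M,\vec\lambda)$ introduced just before Lemma \ref{lem:consistent_expectation}, using the matching $M$ provided by hypothesis and choosing the sign vector $\vec\lambda \in \{-1,+1\}^{n'}$ so that every nonzero contribution to ${\rm Tr}(H\rho)$ equals exactly $|J_I|$. First, I would apply Lemma \ref{lem:consistent_expectation} term-by-term to $H = \sum_{I \in {\cal I}} J_I C_I$. Since $M$ is inconsistent with every $I \in {\cal I}\setminus{\cal I}'$, part~2 of that Lemma immediately kills those contributions for \emph{any} choice of $\vec\lambda$. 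For each $I \in {\cal I}'$, consistency gives ${\rm Tr}(C_I \rho(M,\vec\lambda)) = {\rm sign}(\pi_I)\prod_l \lambda_{(i_{\pi_I(2l-1)},i_{\pi_I(2l)})}$, where $\pi_I$ is the permutation of $I$ that orders the $|I|/2$ pairs of $M$ matching $I$.

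Next I would use the first defining property of a diffuse set (Definition~\ref{def:diffuse}, part~1): distinct $I_1, I_2 \in {\cal I}'$ share no Majorana operators. Consequently the collections of pairs of $M$ used to match different interactions in ${\cal I}'$ are pairwise disjoint, so the sign variables $\lambda_{(m_1,m_2)}$ appearing in the product for one $I$ do not appear in the product for any other $I$. This decouples the sign-fixing across ${\cal I}'$: for each $I \in {\cal I}'$ independently, I would set all but one of the $\lambda_{(m_1,m_2)}$ on its matched pairs to $+1$, and set the remaining one to ${\rm sign}(J_I)\cdot{\rm sign}(\pi_I)$ (and $+1$ on any pairs of $M$ disjoint from $\mathrm{Sup}({\cal I}')$, arbitrarily). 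Then ${\rm sign}(\pi_I)\prod_l \lambda = {\rm sign}(J_I)$, so $J_I\cdot{\rm Tr}(C_I \rho) = |J_I|$. Summing over $I$ yields
\begin{equation*}
    {\rm Tr}(H\rho_{{\cal I}'}) = \sum_{I \in {\cal I}'} J_I\cdot {\rm Tr}(C_I \rho_{{\cal I}'}) = \sum_{I \in {\cal I}'} |J_I|,
\end{equation*}
which is the target identity.

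Efficiency is straightforward: given $M$ and the list of interactions, one can in $\mathrm{poly}(n')$ time compute each $\pi_I$ and its parity, read off ${\rm sign}(J_I)$, and record the single flipped sign per interaction. The resulting pure Gaussian state is specified in closed form by $(M,\vec\lambda)$. I do not anticipate a genuine obstacle in this Lemma: the nontrivial combinatorics is already packaged inside the hypothesis (existence of the matching $M$, delivered by Lemma~\ref{lem:diffuse_matching_general}) and inside Lemma~\ref{lem:consistent_expectation}. The only subtlety worth flagging in the write-up is that the sign-alignment argument in the second paragraph truly requires \emph{pairwise disjoint supports} of the interactions in ${\cal I}'$, i.e. property~1 of Definition~\ref{def:diffuse}; properties~2 and~3 play no role here (they are needed upstream in constructing $M$).
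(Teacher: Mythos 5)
Your proposal is correct and follows essentially the same route as the paper's proof: take the matching-state $\rho(M,\vec\lambda)$, kill the terms of ${\cal I}\backslash{\cal I}'$ via part 2 of Lemma \ref{lem:consistent_expectation}, and use property 1 of Definition \ref{def:diffuse} to fix the signs $\lambda_{(m_1,m_2)}$ independently for each $I\in{\cal I}'$ so that every surviving contribution equals $|J_I|$. Your explicit prescription (all but one $\lambda$ set to $+1$, the last to ${\rm sign}(J_I)\,{\rm sign}(\pi_I)$) is just a concrete instance of the paper's "pick appropriate $\lambda=\pm1$" step, and your observation that only property 1 of diffuseness is used here matches the paper's argument.
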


In Lemma \ref{lem:matching_gaussian_general}, we use $n'$ instead of $n$ to avoid confusion, as it will also be used for $n'\neq n$. The Lemmas above are proven in Appendices \ref{sec:diffuse_splitting_general}-\ref{sec:matching_gaussian_general}. With these in hand, we are ready to proceed with the proof of Theorem \ref{thm:sparse_sub_4}.

\begin{theorem*}[Repetition of Theorem \ref{thm:sparse_sub_4}]
Let $H$ be a traceless fermionic Hamiltonian on $[2n]$ with maximal eigenvalue $\lambda_{\max} (H)$. If $H$ is $k$-sparse with terms of weight $2$ and $4$ and $2n>15k$, a Gaussian state $\rho$ can be efficiently constructed, such that
\begin{equation}
    \frac{{\rm Tr}(H\rho)}{\lambda_{\rm max} (H)} \geq \frac{1}{2Q}
\end{equation}
with $Q=12(k-1)^2+4(k-1)+2$.
\end{theorem*}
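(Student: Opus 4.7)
The plan is to mirror the proof of Theorem~\ref{thm:sparse}, now using the generalized Lemmas~\ref{lem:diffuse_splitting_general}--\ref{lem:matching_gaussian_general} to handle the mixture of weight-$2$ and weight-$4$ interactions. First, I apply Lemma~\ref{lem:diffuse_splitting_general} with $q=4$ to decompose the interaction set ${\cal I}$ of $H$ into $2Q$ disjoint diffuse subsets,
\begin{equation}
    {\cal I} = \bigcup_{q'=1}^{2}\bigcup_{\alpha=1}^{Q} {\cal I}^{(2q')}_\alpha,
\end{equation}
where each ${\cal I}^{(2)}_\alpha$ is strictly weight-$2$ and each ${\cal I}^{(4)}_\alpha$ strictly weight-$4$. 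A pigeonhole argument then singles out an optimal pair $(q^*,\alpha^*)$ such that ${\cal I}^*:={\cal I}^{(2q^*)}_{\alpha^*}$ satisfies $\mathbf{J}({\cal I}^*)\geq \tfrac{1}{2Q}\mathbf{J}({\cal I})\geq \tfrac{1}{2Q}\lambda_{\max}(H)$; since $2Q$ is constant this maximizer is found efficiently. It thus suffices to construct a Gaussian state $\tilde\rho$ with ${\rm Tr}(H\tilde\rho)\geq \mathbf{J}({\cal I}^*)$.

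Next, I invoke Lemma~\ref{lem:diffuse_matching_general} with ${\cal I}'={\cal I}^*$ to build a matching $M$ of $[2n]$ that is consistent with ${\cal I}^*$ and inconsistent with every $I\in {\cal I}\setminus {\cal I}^*$ obeying $|I|\geq 2q^*$ or $I\not\subset \mathrm{Sup}({\cal I}^*)$. The easy case is $q^*=1$: every non-target interaction has $|I|\geq 2=2q^*$, so $M$ is inconsistent with all of ${\cal I}\setminus {\cal I}^*$ and Lemma~\ref{lem:matching_gaussian_general} applied directly to $H$ produces a Gaussian state $\tilde\rho$ with ${\rm Tr}(H\tilde\rho)=\mathbf{J}({\cal I}^*)$. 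The main obstacle is the case $q^*=2$, where weight-$2$ interactions $J\in {\cal I}_2$ with $J\subset \mathrm{Sup}({\cal I}^*)$ can be consistent with $M$ and contribute an uncontrolled signed correction. To handle this I would first apply Lemma~\ref{lem:matching_gaussian_general} not to $H$ but to the reduced Hamiltonian $H^*=\sum_{I\in{\cal I}^*}J_I C_I$: for $H^*$ the subset ${\cal I}^*$ is trivially diffuse w.r.t.~itself and the inconsistency condition is vacuous, so the lemma produces a Gaussian state $\rho=\rho(M,\vec{\lambda})$ of the form from Section~\ref{sec:back} with ${\rm Tr}(H^*\rho)=\mathbf{J}({\cal I}^*)$.

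Evaluating this $\rho$ on the full $H$ and using Lemma~\ref{lem:consistent_expectation}, all weight-$4$ terms outside ${\cal I}^*$ and all weight-$2$ terms not contained in $\mathrm{Sup}({\cal I}^*)$ are annihilated by the inconsistency properties of $M$, leaving
\begin{equation}
    {\rm Tr}(H\rho) = \mathbf{J}({\cal I}^*) + X,\quad X = \sum_{\substack{J\in{\cal I}_2,\; J\subset\mathrm{Sup}({\cal I}^*)\\ J\text{ consistent with }M}} J_J\,{\rm Tr}(C_J\rho).
\end{equation}
The sign of $X$ is uncontrolled, but the key observation is that $\rho':=\rho(M,-\vec{\lambda})$ is another valid Gaussian state for which, by Lemma~\ref{lem:consistent_expectation}, every weight-$4$ expectation (a product of two $\lambda_p$'s) is preserved while every weight-$2$ expectation (a single $\lambda_p$) flips sign. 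Hence ${\rm Tr}(H\rho')=\mathbf{J}({\cal I}^*)-X$, and the better of the two Gaussian states satisfies ${\rm Tr}(H\tilde\rho)\geq \mathbf{J}({\cal I}^*)$. Combined with the $q^*=1$ case, this yields ${\rm Tr}(H\tilde\rho)/\lambda_{\max}(H)\geq 1/(2Q)$, as claimed. The only residual concern is the size condition: Lemma~\ref{lem:diffuse_matching_general} nominally requires $n>(q^2-1)k=15k$, whereas the theorem asserts the weaker $2n>15k$; I expect this gap to close after carefully accounting for the milder matching constraints in the strictly weight-$2$ subcase.
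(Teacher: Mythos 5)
Your proof is correct, but it takes a genuinely different route from the paper's. The paper first lifts $H$ to a strictly $4$-local $\tilde H$ on $[2n+2]$ via the construction of Ref.~\cite{BGKT:manybody} (Eq.~\eqref{eq:tilde_H_decomposed_form}), builds matchings of the enlarged mode set --- for a weight-$2$ target by tensoring a matching state on $[2n]$ with the $+1$ eigenstate of $-ic_{2n+1}c_{2n+2}$, and for a weight-$4$ target by deleting a marked edge $(i_1,i_2)\notin{\cal I}^{(2)}$ and rerouting it through the two auxiliary modes so as to break consistency with every lifted weight-$2$ term --- and finally maps the resulting state back to $[2n]$ with Lemma~\ref{lem:bravyi}. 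You avoid the lifting entirely: the only terms not killed by Lemma~\ref{lem:diffuse_matching_general} when the target ${\cal I}^*$ is weight-$4$ are weight-$2$ interactions contained in $\mathrm{Sup}({\cal I}^*)$, and you neutralize their signed contribution $X$ by comparing $\rho(M,\vec\lambda)$ with $\rho(M,-\vec\lambda)$, using that a consistent weight-$4$ term contributes a product of two $\lambda$'s (invariant under the global flip) while a consistent weight-$2$ term contributes a single $\lambda$ (which flips); the better of the two states then achieves at least $\mathbf{J}({\cal I}^*)$, and both expectations are efficiently computable. This is shorter and more self-contained (no appeal to Lemma~\ref{lem:bravyi}), at the cost of being tied to the parity mismatch between $|I|/2=1$ and $|I|/2=2$ --- for, say, a mixture of weights $2$ and $6$ both contractions have odd parity and the sign-flip trick would fail, whereas the paper's auxiliary-mode surgery adapts. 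The discrepancy you flag between the hypothesis $2n>15k$ and the requirement $n>(q^2-1)k=15k$ of Lemma~\ref{lem:diffuse_matching_general} is not introduced by your argument: the paper's own proof invokes that lemma under the identical hypothesis, so this factor-of-two mismatch is inherited from the paper rather than a gap in your reasoning.
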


\begin{proof}

We make use of the construction in Ref.~\cite{BGKT:manybody} which relates a Hamiltonian with weights $2$ and $4$ on a set of fermionic modes $[2n]$, that is, 
\begin{align}
H=\sum_{I\in{\cal I}^{(2)}} J_I C_I + \sum_{I\in{\cal I}^{(4)}} J_I C_I,
\end{align}
to a strictly $4$-local Hamiltonian $\tilde{H}$ on an extended set of fermions $[2n+2]$:
\begin{align}
\label{eq:tilde_H_decomposed_form}
\tilde{H}=\sum_{I\in{\cal I}^{(2)}} (-i c_{2n+1} c_{2n+2}) J_I C_I + \sum_{I\in{\cal I}^{(4)}} J_I C_I.
\end{align}
Introducing $\tilde{\cal I}^{(2)}\equiv\{(2n+1,2n+2)\cup I\,|\, I\in {\cal I}^{(2)}\}$, $\tilde{H}$ can be also written as: 
\begin{align}
\label{eq:tilde_H_merged_form}
\tilde{H}=-\sum_{I\in\tilde{\cal I}^{(2)}} J_I C_I + \sum_{I\in{\cal I}^{(4)}} J_I C_I.
\end{align}

The relation between $\tilde{H}$ and $H$ is via the following property:

\begin{lemma}[Lemma 6 of \cite{BGKT:manybody}]
\label{lem:bravyi}
For $H$ and $\tilde{H}$ introduced above, $\lambda_{\rm max}(H)=\lambda_{\rm max}(\tilde{H})$. Moreover, for any Gaussian state $\tilde{\rho}$ of ${2n+2}$ Majorana modes, one can efficiently compute a Gaussian state $\rho$ of $2n$ Majorana modes s.t. ${\rm Tr} (H\rho )\geq {\rm Tr} (\tilde{H}\tilde{\rho})$.
\end{lemma}

Although strictly $4$-local, Hamiltonian $\tilde{H}$ is no longer sparse since the operators $c_{2n+1}$ and $c_{2n+2}$ participate in $|{\cal I}^{(2)}|$ terms (which is generally $O(n)$). This prevents a direct application of Lemma \ref{lem:diffuse_splitting} to $\tilde{H}$. We resolve the issue as follows.

Similarly to the proof of Theorem \ref{thm:sparse}, we start by splitting each set of the original interactions ${\cal I}^{(2,4)}$ in $H$ into subsets diffuse w.r.t. ${\cal I}^{(2)}\cup{\cal I}^{(4)}$: ${\cal I}^{(2)}=\cup_\alpha {\cal I}^{(2)}_\alpha$, ${\cal I}^{(4)}=\cup_\alpha {\cal I}^{(4)}_\alpha$. Each of the two splittings exists and can be done efficiently, as guaranteed by Lemma \ref{lem:diffuse_splitting} (since the original $H$ is sparse). Since ${\cal I}^{(2)}\cup{\cal I}^{(4)}$ is $k$-sparse and $4$-local, we can bound $|\{{\cal I}^{(2)}_\alpha\}|<Q$, $|\{{\cal I}^{(4)}_\alpha\}|<Q$ for $Q=12(k-1)^2+4(k-1)+2$. In what follows, we will use the splittings ${\cal I}=\bigcup_{\alpha}{\cal I}^{(2)}_\alpha\cup\bigcup_{\alpha}{\cal I}^{(4)}_\alpha$ to construct two Gaussian states $\tilde{\rho}({\cal I}^{(2)}_\alpha)$ and $\tilde{\rho}({\cal I}^{(4)}_\alpha)$ on $[2n+2]$ with good properties relative to $\tilde{H}$, that is,
\begin{align}
    {\rm Tr}(\tilde{H}\tilde{\rho}({\cal I}^{(2,4)}_\alpha))=\sum_{I\in {\cal I}^{(2,4)}_\alpha} |J_I|\equiv \mathbf{J}({\cal I}^{(2,4)}_\alpha).
\end{align}
With these Gaussian states, we will then show that the Gaussian state $\tilde{\rho}({\cal I}^{(2,4)}_\alpha)$ for $q,\,\alpha=\mathrm{argmax}_{q',\,\alpha'}(\mathbf{J}({\cal I}^{(q')}_{\alpha'}))$ is efficiently constructable and yields the desired approximation ratio for $\tilde{H}$. We will then apply Lemma \ref{lem:bravyi} and extend the statement to the original Hamiltonian $H$, thus finishing the proof.

Following the outline above, we now move to construct the Gaussian state $\tilde{\rho}({\cal I}^{(2)}_\alpha)$. Consider an ansatz of the form $\tilde{\rho}\equiv\rho_{[2n]}\sigma_{\{2n+1,2n+2\}}$, where $\rho_{[2n]}$ is itself a Gaussian state of $[2n]$. To construct $\rho_{[2n]}$, note that each ${\cal I}^{(2)}_\alpha$ is $2-$local and diffuse w.r.t. ${\cal I}^{(2)}\cup{\cal I}^{(4)}$ which is $4$-local. Since $2n>15k$ by assumptions of Theorem \ref{thm:sparse_sub_4}, 
we can apply Lemma \ref{lem:diffuse_matching_general} with $q=4$ to construct a matching $M({\cal I}^{(2)}_\alpha)$ that is consistent with ${\cal I}^{(2)}_\alpha$. Since ${\cal I}^{(2)}_\alpha$ is $2$-local, Lemma \ref{lem:diffuse_matching_general} also implies that the matching $M({\cal I}^{(2)}_\alpha)$ is inconsistent with the entirety of ${\cal I}^{(4)}\cup{\cal I}^{(2)}\backslash {\cal I}^{(2)}_\alpha$. We then use $M({\cal I}^{(2)}_\alpha)$ in Lemma \ref{lem:matching_gaussian_general} (substituting $n'=n$ for $n'$ used in the Lemma)
to construct $\rho_{[2n]}$ in $\tilde{\rho}=\rho_{[2n]}\sigma_{\{2n+1,2n+2\}}$. This implies the following expression (using Eq.~\eqref{eq:tilde_H_decomposed_form} for $\tilde{H}$):
\begin{equation}
    {\rm Tr} (\tilde{H}\tilde{\rho}) =
    {\rm Tr}(\sigma (-i c_{2n+1} c_{2n+2})) \sum_{I\in{\cal I}^{(2)}_\alpha}  |J_I|.
\end{equation}
By choosing $\sigma$ to be the $+1$ eigenstate projector of operator $-i c_{2n+1} c_{2n+2}$, we arrive at the desired outcome:
\begin{equation}
    {\rm Tr}(\tilde{H} \tilde{\rho})=\sum_{I\in{\cal I}^{(2)}_\alpha}  |J_I|=\mathbf{J}({\cal I}^{(2)}_\alpha).
\end{equation}
The constructed Gaussian state $\tilde{\rho}$ we will denote as $\tilde{\rho}({\cal I}^{(2)}_\alpha)$.

\begin{figure}
    \centering
    \includegraphics[width=0.99\linewidth]{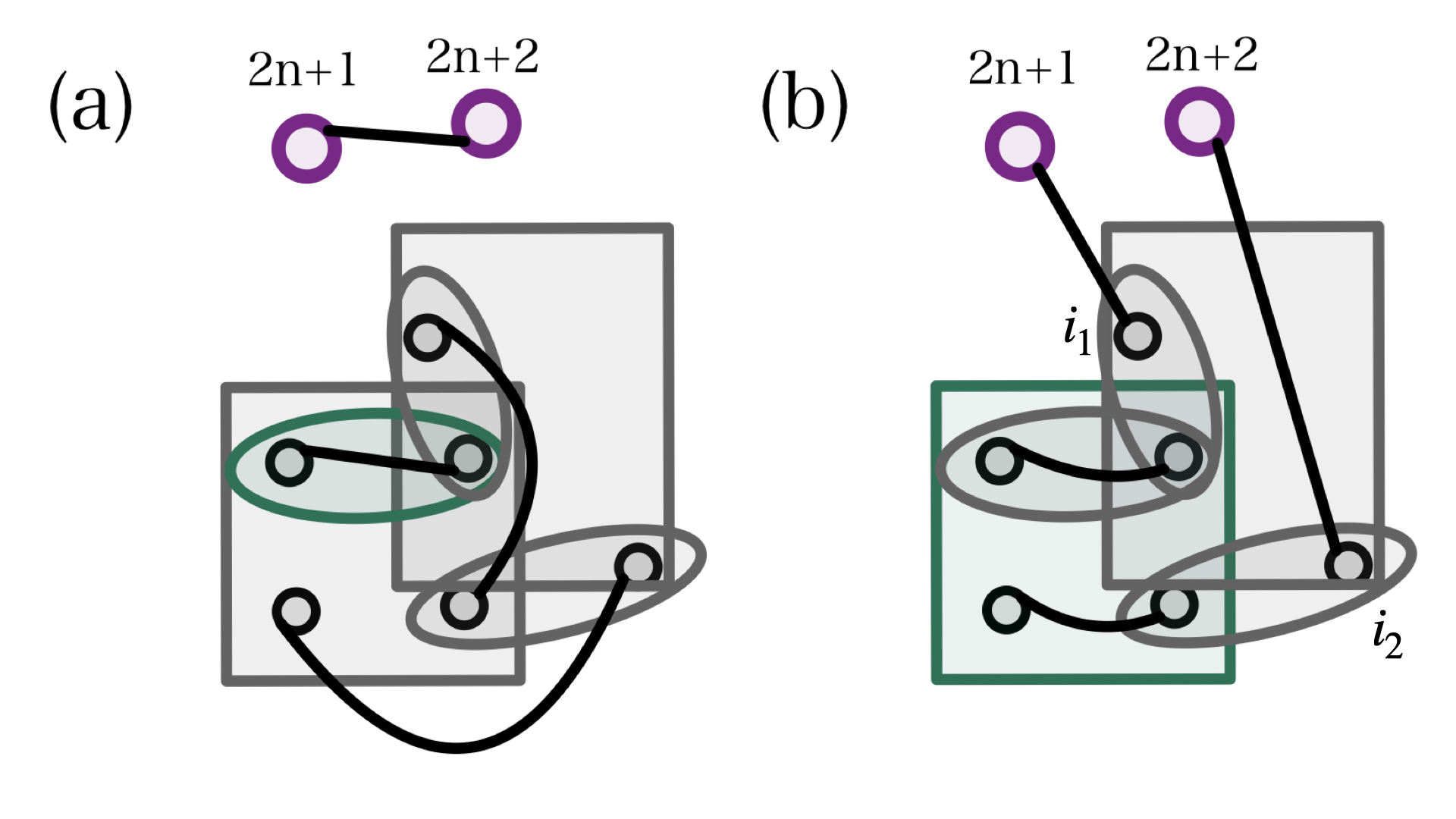}
    \caption{Demonstration of the method in the proof of Theorem~\ref{thm:sparse_sub_4}. (a) Matching $M(\mathcal{I}^{(2)}_\alpha)$ for $\mathcal{I}^{(2)}_\alpha$, here comprised of a single term (shown in green). To ensure consistency with $\mathcal{I}^{(2)}_\alpha$ in $\tilde{H}$, $M(\mathcal{I}^{(2)}_\alpha)$ perfectly matches these terms and the pair $(2n+1, 2n+2)$. The rest of the vertices are matched so that each pair does not belong to the same term in $\mathcal{I}\backslash\mathcal{I}^{(2)}_\alpha$ (grey). (b) Matching $M(\mathcal{I}^{(4)}_\alpha)$ for $\mathcal{I}^{(4)}_\alpha$ shown in green. Vertices $i_1, i_2$ are chosen not to belong to the same term in $\mathcal{I}^{(2)}$, ensuring no accidental consistency with a term in $\tilde{H}$. Note the special status of the term from $\mathcal{I}^{(2)}$ that is a subset of the $\mathcal{I}^{(4)}_\alpha$ term. From the perspective of $\tilde{H}$, it is not consistent with $M(\mathcal{I}^{(4)}_\alpha)$ although it coincides with an edge from $M(\mathcal{I}^{(4)}_\alpha)$. This is due to the intentional absence of the edge $(2n+1, 2n+2)$ in $M(\mathcal{I}^{(4)}_\alpha)$.}
    \label{fig:sub_4}
\end{figure}

For a diffuse ${\cal I}^{(4)}_\alpha\subset {\cal I}^{(2)}\cup{\cal I}^{(4)}$, we construct the Gaussian states $\tilde{\rho}({\cal I}^{(4)}_\alpha)$ in a different way. First we use Lemma \ref{lem:diffuse_matching_general} to construct a matching $M({\cal I}^{(4)}_\alpha)$ of $[2n]$. This matching is guaranteed to be consistent with ${\cal I}^{(4)}_\alpha$. However, since ${\cal I}^{(2)}$ is $2$-local and ${\cal I}^{(4)}$ is 4-local, while in general $\rm{Sup}({\cal I}^{(2)})\cap \rm{Sup} ({\cal I}^{(4)}_\alpha)\neq\emptyset$, Lemma \ref{lem:diffuse_matching_general} implies that $M({\cal I}^{(4)}_\alpha)$ is inconsistent with ${\cal I}^{(4)}\backslash{\cal I}^{(4)}_\alpha$ but may be consistent with some terms in ${\cal I}^{(2)}$ (as those $I$ don't obey the $|I|\geq q'=4$ condition). At the same time, we aim to achieve ${\rm Tr}(\tilde{H} \tilde{\rho}({\cal I}^{(4)}_\alpha))=\mathbf{J}({\cal I}^{(4)}_\alpha)$ which excludes contributions from ${\cal I}^{(2)}$. 
Thus we cannot extend $M({\cal I}^{(4)}_\alpha)$ to the extended set $[2n+2]$ directly, as it was done for ${\cal I}^{(2)}_\alpha$. 
Instead, we will create a matching of $[2n+2]$ using a reduced version of $M({\cal I}^{(4)}_\alpha)$ which inherits its beneficial properties, and then complete the matching by making it inconsistent with $\tilde{\cal I}^{(2)}$ -- eliminating the difficulty described above.

To enable this, we find and mark an edge $(i_1,i_2) \in M({\cal I}^{(4)}_\alpha)$, such that $i_1\not\in \mathrm{Sup}({\cal I}^{(4)}_\alpha)$. This is always possible since ${\cal I}^{(4)}_\alpha$ is diffuse and thus $[2n]\backslash\rm{Sup}({\cal I}^{(4)}_\alpha)$ is non-empty (cf. Condition 3 in Definition \ref{def:diffuse}).
Note that $M({\cal I}^{(4)}_\alpha)$ is constructed via Lemma \ref{lem:diffuse_matching_general} and $\{i_1, i_2\}\not\subset\mathrm{Sup}({\cal I}^{(4)}_\alpha)$. This implies that as a two-fermion interaction, $\{i_1, i_2\}$ is guaranteed not to belong to $\mathcal{I}^{(2)}$. The latter statement is the key property of the marked edge $(i_1,i_2)$ that we will employ momentarily.

We construct a matching $\tilde{M}({\cal I}^{(4)}_\alpha)$ of $[2n+2]$ in two stages. First we construct an intermediate matching $M'({\cal I}^{(4)}_\alpha)$ of $[2n+2]\backslash\{i_1,i_2,2n+1,2n+2\}$ by removing the edge from $M({\cal I}^{(4)}_\alpha)$:
\begin{align}
    M'({\cal I}^{(4)}_\alpha)=M({\cal I}^{(4)}_\alpha)\backslash(i_1,i_2).
\end{align}
Since $\{i_1,i_2\}\not\subset\mathrm{Sup}({\cal I}^{(4)}_\alpha)$, we are guaranteed that ${M}'({\cal I}^{(4)}_\alpha)$ is consistent with ${\cal I}^{(4)}_\alpha$ and inconsistent with ${\cal I}^{(4)}\backslash{\cal I}^{(4)}_\alpha$ (from the construction of $M({\cal I}^{(4)}_\alpha)$). 
In the second stage, we complete $\tilde{M}({\cal I}^{(4)}_\alpha)$ to the entire set of $2n+2$ modes by adding two edges: $(i_1,2n+1)$ and $(i_2,2n+2)$:

\begin{align}
    \tilde{M}({\cal I}^{(4)}_\alpha)=M({\cal I}^{(4)}_\alpha)\cup\{(i_1,2n+1), (i_2,2n+2)\}.
\end{align}

These new edges render $\tilde{M}({\cal I}^{(4)}_\alpha)$ inconsistent with $\tilde{\mathcal{I}}^{(2)}$. To see it, note that all interactions in $\tilde{\mathcal{I}}^{(2)}$ take the form $I=\{j_1, j_2, 2n+1,2n+2\}$ where $\{j_1,j_2\} \in  {\cal I}^{(2)}$. By construction $\{i_1,i_2\}\notin {\cal I}^{(2)}$, thus we have $\{j_1,j_2\}\neq\{i_1, i_2\}$. As a result, matching $\tilde{M}({\cal I}^{(4)}_\alpha)$ of $[2n+2]$ is consistent with ${\cal I}^{(4)}_\alpha$ and inconsistent with $\tilde{\cal I}^{(2)}\cup {\cal I}^{(4)}\backslash{\cal I}^{(4)}_\alpha$. We continue by applying Lemma \ref{lem:matching_gaussian_general} to such $\tilde{M}({\cal I}^{(4)}_\alpha)$ and $\tilde{H}$ (substituting $n'=n+1$ for $n'$ used in the Lemma). This efficiently constructs a Gaussian state $\tilde{\rho} ({\cal I}^{(4)}_\alpha)$ that yields:
\begin{align}
    {\rm Tr}(\tilde{H} \tilde{\rho}({\cal I}^{(4)}_\alpha))=\sum_{I\in {\cal I}^{(4)}_\alpha} |J_I|\equiv \mathbf{J}({\cal I}^{(4)}_\alpha),
\end{align}
as desired.

The Gaussian state claimed in Theorem \ref{thm:sparse_sub_4} is to be chosen among the states $\tilde{\rho} ({\cal I}^{(2,4)}_\alpha)$ whose existence we've proven above. We make the choice by identifying the highest energy in the respective Gaussian state: $(q,\alpha)=\mathrm{argmax}_{(q,\alpha)} \mathbf{J}^{(q)}_\alpha$. As we showed, the respective Gaussian state $\tilde{\rho}({\cal I}^{(q)}_\alpha)$ can be efficiently constructed and the following is guaranteed:
\begin{align}
\frac{{\rm Tr}(\tilde{H} \tilde{\rho} ({\cal I}^{(q)}_\alpha))}{\lambda_{\rm max}(\tilde{H})}\geq \frac{\mathbf{J}^{(q)}_\alpha}{  \sum_{q',\alpha'} \mathbf{J}^{(q')}_{\alpha'}} \geq \frac{1}{2Q} .
\end{align} 
Here we used that $\lambda_{\rm max}(\tilde{H})\leq \sum_{q',\alpha'} \mathbf{J}^{(q')}_{\alpha'}$ and that $\mathbf{J}^{(q)}_\alpha=\rm{max}_{(q',\alpha')} \mathbf{J}^{(q')}_{\alpha'}$. 

With the state $\tilde{\rho}({\cal I}^{(q)}_\alpha)$ on $[2n+2]$ fermions at hand, we finalize the proof by an application of Lemma \ref{lem:bravyi}. This relates $\lambda_{\rm max}(H)$ to $\lambda_{\rm max}(\tilde{H})$ and allows us to efficiently construct the Gaussian state $\rho({\cal I}^{(q)}_\alpha)$ of $[2n]$, with the desired property:

\begin{align}
\frac{{\rm Tr} (H \rho({\cal I}^{(q)}_\alpha))}{\lambda_{\rm max}(H)}\geq\frac{{\rm Tr} (\tilde{H} \tilde{\rho}({\cal I}^{(q)}_\alpha))}{\lambda_{\rm max}(\tilde{H})}\geq \frac{1}{2Q} .
\end{align}

\end{proof}

\section{The sparse SYK-\texorpdfstring{$4$}{4} model}\label{sec:sparse_SYK}

\begin{theorem*}[Repetition of Theorem \ref{thm:sparse_SYK}]
Let $H$ be a \SSSYKf Hamiltonian in Eq.~\eqref{eq:sparseSYKdef} with expected degree $k=O(1)$, such that $n>120(k+1)$. With probability at least $1-4\exp\left[-\frac{e^{-16(k+1)}k^3}{64(8k+7)} n\right]$, a Gaussian state $\rho$ can be efficiently constructed such that
\begin{equation}
    \frac{{\rm Tr}(H \rho)}{\lambda_{\max} (H)} \geq \frac{1}{Q},
\end{equation}
where $Q=1236 + 2752 k + 1536 k^2$.
\end{theorem*}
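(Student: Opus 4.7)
The plan is to reduce the sparse SYK-4 setting to the deterministic sparse case of Theorem~\ref{thm:sparse} by a truncation/sparsification argument. The obstruction is already noted in the paper: while the expected degree of each Majorana mode in $H$ is $k$, the actual degree distribution is Poisson-like and a few modes will exceed $k$ by a $\log n/\log\log n$ factor with high probability, so Theorem~\ref{thm:sparse} cannot be invoked directly on $H$. The remedy is to pick a constant threshold $k' = O(k)$, discard all terms touching a vertex of too-high degree, and apply Theorem~\ref{thm:sparse} to what remains. Matching the hypothesis $n > 120(k+1)$ and the constant in $Q$, I would take $k' = 8(k+1)$: then $(q^2-1)k' = 15k' = 120(k+1)$, so the sparsity hypothesis of Theorem~\ref{thm:sparse} is exactly met, and the sparse-case constant $Q_1 = 12(k'-1)^2 + 4(k'-1) + 2 = 768k^2 + 1376k + 618$ satisfies $2Q_1 = Q$.

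Concretely, let $B \subseteq [2n]$ be the set of Majoranas of actual degree $d_i = \sum_{I \ni i} X_I$ exceeding $k'$, let $H_2$ be the sum of terms whose support meets $B$, and let $H_1 = H - H_2$. By construction $H_1$ is deterministically $k'$-sparse and strictly $4$-local, and Theorem~\ref{thm:sparse} efficiently produces a Gaussian state $\rho$ with ${\rm Tr}(H_1\rho) \geq \mathbf{J}(\mathcal{I}(H_1))/Q_1$. The full Hamiltonian then obeys
\[ {\rm Tr}(H\rho) \;\geq\; \frac{1}{Q_1}\,\mathbf{J}(\mathcal{I}(H_1)) \;-\; \mathbf{J}(\mathcal{I}(H_2)), \qquad \lambda_{\max}(H) \;\leq\; \mathbf{J}(\mathcal{I}(H_1)) + \mathbf{J}(\mathcal{I}(H_2)), \]
so it suffices to argue that $\mathbf{J}(\mathcal{I}(H_2))$ is small compared to $\mathbf{J}(\mathcal{I}(H_1))/Q_1$; the factor $2$ in $Q=2Q_1$ is there precisely to absorb this loss.

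The remainder is concentration. I would establish three tail estimates, each contributing an $e^{-\Omega(n)}$ term and together yielding the factor of $4$ in the stated probability bound. First, $\mathbf{J}(\mathcal{I}(H)) = \sum_I X_I |J_I|/\sqrt{2kn}$ is a sum of $\sim kn$ independent subexponential variables with mean $\Theta(\sqrt{n})$, so a Bernstein/Hoeffding bound gives $\mathbf{J}(\mathcal{I}(H)) = \Theta(\sqrt{n})$ with probability $1 - e^{-\Omega(n)}$. Second, each $d_i$ is a sum of $\binom{2n-1}{3}$ Bernoullis of total mean $k$, and a Chernoff bound at $k' = 8(k+1)$ gives a per-vertex failure probability $\leq e^{-\Omega(k)}$, from which a second concentration step (exploiting the fact that the $X_I$ are either independent or negatively associated across vertex sets) produces $|B| \leq \delta n$ for $\delta = e^{-\Theta(k)}$. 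Third, the contribution $\mathbf{J}(\mathcal{I}(H_2))$ is bounded by $O(|B| \cdot k'/\sqrt{n})$ after a final concentration on the $|J_I|$'s restricted to terms touching $B$. The main obstacle is carrying out the second of these cleanly: one must convert the per-vertex tail $\Pr[d_i > k']$ into a tail on $|B|$ despite the mild dependence among the $X_I$'s, and then quantitatively ensure $\delta \ll 1/Q_1$. This is exactly where the somewhat baroque constant $e^{-16(k+1)}k^3/(64(8k+7))$ in the probability bound originates, and once it is in place the arithmetic $Q = 2Q_1$ closes the argument.
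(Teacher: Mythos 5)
Your proposal follows the same architecture as the paper's proof: truncate $H$ to a deterministically $k'$-sparse part $H_1$ with $k'=8(k+1)$, apply the construction of Theorem~\ref{thm:sparse} to $H_1$, bound the discarded part in trace norm of coefficients, and pay a factor of $2$ in the approximation ratio to absorb the loss; your constants $Q_1=768k^2+1376k+618$ and $2Q_1=Q$ and the threshold $n>120(k+1)$ all reproduce the paper's. The truncation rule differs slightly (you discard every term touching an over-degree vertex; the paper keeps the lexicographically first $k'$ terms at each vertex and discards only the marked excess), but either yields a $k'$-sparse remainder.

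There is, however, a genuine gap in your third concentration step. You bound the discarded weight by $O(|B|\cdot k'/\sqrt{n})$ using only the event $|B|\leq \delta n$. This does not follow: the number of terms meeting $B$ is at most $\sum_{i\in B}d_i$, and the vertices in $B$ are precisely those whose degree \emph{exceeds} $k'$, so no per-vertex bound of $k'$ is available, and a priori $\sum_{i\in B}d_i$ could be much larger than $|B|k'$. What must be controlled with exponential probability is the total excess degree $\sum_i (d_i-k')\mathbb{I}_{d_i>k'}$ (or equivalently $\sum_{i\in B}d_i$), not the cardinality $|B|$. This random variable is a non-trivial function of the independent $X_I$'s — the degree-threshold indicators $\mathbb{I}_{d_i>k'}$ at different vertices are dependent — and a simple Chernoff bound or an appeal to negative association does not apply to it. The paper handles exactly this point in Lemma~\ref{lem:h_kappa}: it computes $\mathbb{E}[Z]$ for $Z=\frac{1}{2n}\sum_i(k_i-k')\mathbb{I}_{k_i>k'}$ via incomplete-beta tail estimates for the binomial degree distribution, and then obtains an exponential tail for $Z$ using the bounded-differences structure (flipping one $X_I$ changes $2nZ$ by at most $4$) through the exponential Efron--Stein inequality of \cite{bouch:concentration}. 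Your sketch correctly flags this step as the main obstacle, but the quantity you propose to concentrate ($|B|$) is the wrong one, and the step as written would fail; replacing it with the excess-degree functional and an Efron--Stein or bounded-differences argument closes the gap and recovers the constant $e^{-16(k+1)}k^3/(64(8k+7))$ in the probability bound.
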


\begin{proof}
In what follows we will omit the normalization $1/\sqrt{2k n}$ in Eq.~\eqref{eq:sparseSYKdef}, of course this normalization is irrelevant for lowerbounding the Gaussian approximation ratio.
We split $H$ as $H=H^{(k')}+h^{(k')}$, s.t. the Hamiltonian $H^{(k')}$ is $k'$-sparse and the residual Hamiltonian $h^{(k')}$ contains the rest of $H$. The term sets are denoted as follows:
\begin{align}
    H^{(k')}=\sum_{I\in \mathcal{I}^{(k')}} J_I C_I,~~
    h^{(k')}=\sum_{I\in \bar{\mathcal{I}}^{(k')}} J_I C_I,
\end{align}
i.e. $\mathcal{I}=\mathcal{I}^{(k')}\cup \bar{\mathcal{I}}^{(k')}$. To define such a split, we use the following deterministic algorithm. For every given Majorana, we list the interactions $I\in\mathcal{I}$ which involve that Majorana using a lexicographical order for the words $I=\{i_1,i_2,i_3,i_4\}$. For each Majorana where such a list is longer than $k'$, we mark all elements except for the first $k'$. All terms of $H$ which were marked this way at least once, we include into $h^{(k')}$. The rest of the terms enter $H^{(k')}$, which by this construction is $k'$-sparse. To continue the proof we need a pair of Lemmas. The first lower bounds the total interaction strength of the \SSSYKf Hamiltonian:

\begin{lemma}
\label{lem:H_SSYK}
With probability at least $1-2e^{-\frac{kn}{32}}$, we have
\begin{equation}
    \sum_{I\in \mathcal{I}} |J_I|\geq kn/8.
\end{equation}
\end{lemma}

This statement is proven in Appendix \ref{app:SSYK_concentration}, by splitting the problem into upper bounding $|\mathcal{I}|$ separately from $\sum_{I} |J_I|$, and then applying the Chernoff bound for both. 

The second lemma shows that the total interaction strength of the residual Hamiltonian $h^{(k')}$ is bounded from above with high probability:

\begin{lemma}
\label{lem:h_kappa}
If $k'\geq e^2 k+1$, we have with probability at least $1- 2\exp\left[-\frac{e^{-2k'}k^3}{64(k'-1)} n\right]$ that
\begin{equation}
\sum_{I\in \bar{\mathcal{I}}^{(k')}} |J_I| \leq \frac{4k^2}{\sqrt{k'-1}}e^{-k'}n.
\end{equation}
\end{lemma}

Lemma \ref{lem:h_kappa} is proven in Appendix \ref{app:SSYK_concentration}. The key technical difficulty is bounding the random variable $\bar{\mathcal{I}}^{(k')}$, which does not reduce to a sum of independent variables and thus a simple Chernoff bound cannot be applied. Instead, we apply an exponential version of Efron-Stein inequality \cite{bouch:concentration}.\\

To build a Gaussian state with finite approximation ratio, we apply the construction of Theorem \ref{thm:sparse} to $H^{(k')}$, which is $k'$-sparse and strictly $4$-local. If $n$ is large enough (i.e. $n>(q^2-1)k'$ for $q=4$), this state $\rho$ is guaranteed to yield energy ${\rm Tr} (H^{(k')}\rho)> \frac{1}{Q'} \sum_{I\in {\mathcal{I}}^{(k')}} |J_I|$ for $Q'=12k'^2 -20k' +10$ (see Eq.~\eqref{eq:sparse_ratio} in the proof of Theorem \ref{thm:sparse}).
At the same time, with high probability $|{\rm Tr} (h^{(k')}\rho)|\leq\sum_{I\in \bar{\mathcal{I}}^{(k')}} |J_I|\leq \frac{4k^2}{\sqrt{k'-1}}e^{-k'}n$ and $\sum_{I\in {\mathcal{I}}} |J_I| \geq \frac{kn}{8}$ (Lemmas \ref{lem:H_SSYK} and \ref{lem:h_kappa}). The resulting approximation ratio is then:
\begin{align}
    \frac{{\rm Tr}(H\rho)}{\lambda_{\mathrm{max}}(H)}&\geq\frac{{\rm Tr} (H^{(k')}\rho)-|{\rm Tr} (h^{(k')}\rho)|}{\sum_{I\in {\mathcal{I}}} |J_I|} \notag \\
    &\geq \frac{\frac{1}{Q'}\sum_{I\in {\mathcal{I}}} |J_I|-(1+\frac{1}{Q'})\sum_{I\in \bar{\mathcal{I}}^{(k')}} |J_I|}{\sum_{I\in {\mathcal{I}}} |J_I|} \notag \\
    &\geq \frac{1}{Q'} - \frac{32(Q'+1)ke^{-k'}}{\sqrt{k'-1} Q'} \label{eq:Q'_SSYK_bound}.
\end{align}
Crucially, the second term decays exponentially with $k'$ and the first term only algebraically (note here the definition of $Q'$). We now fix $k'=8(k+1)$, consistent with the requirement $k'\geq e^2k+1$ of Lemma \ref{lem:h_kappa}. In this case $\frac{32(Q'+1)ke^{-k'}}{\sqrt{k'-1} Q'}$ as a function of $k$ is always smaller than $\frac{1}{2Q'}$. This allows us to bound the right hand side of Eq.~\eqref{eq:Q'_SSYK_bound} as $\frac{1}{2Q'}$, and substituting $k'=8(k+1)$ we obtain the bound claimed in the Theorem:

\begin{align}
    \frac{{\rm Tr}(H\rho)}{\lambda_{\mathrm{max}}(H)}\geq \frac{1}{1236 + 2752 k + 1536 k^2} \label{eq:Q_SSYK_bound}.
\end{align}
The earlier assumed condition $n>(q^2-1)k'$ for $q=4$ and $k'$ translates into $n>120(k+1)$.
Given the conditions of Lemmas \ref{lem:H_SSYK} and \ref{lem:h_kappa}, the bound in Eq.~\eqref{eq:Q_SSYK_bound} holds with the probability:

\begin{align}
    &\left(1- 2\exp\left[-\frac{e^{-16(k+1)}k^3}{64(8k+7)} n\right]\right)\left(1-2e^{-\frac{kn}{32}}\right)\notag\\
    &\geq 1- 4\exp\left[-\frac{e^{-16(k+1)}k^3}{64(8k+7)} n\right].
\end{align}

\end{proof}

\section{Upper bound on Gaussian approximation ratio for SYK-\texorpdfstring{$q$}{q} Hamiltonians}
\label{section:gaussupperbound}

\subsection{Gaussian upper bound for SYK-\texorpdfstring{$q$}{q} models}
\label{sec:gauss-upper}

We consider the expectation value of a SYK-$q$ Hamiltonian $H$ with respect to fermionic Gaussian states and we obtain an upper bound on its expectation value, with high probability over the random couplings $J_I$. 

\begin{lemma*}[Repetition of Lemma \ref{lemma:SYKexpvaluegaussians}]
Let $H$ denote a Hamiltonian drawn from the $q$-local SYK Hamiltonians (with $q\geq 4$ even and $q=O(1)$), i.e. the coupling strengths $J_I$ are drawn according to their distribution. With probability at least $1-\exp(-\Omega(n))$, $H$ has the property that, for any fermionic Gaussian state $\rho$
\begin{align}
    \text{\normalfont{Tr}}(H \rho ) &\leq (q\!-\!1)!!\:2^{1/2-q/4}q^{1/2+q/2}\notag\\
    &\hspace{4em}\times\!\sqrt{\log[q/\log(3/2)]}\:(2n)^{1-q/4}\!.\!
\end{align}
\end{lemma*}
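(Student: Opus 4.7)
My plan is to combine a Gaussian tail estimate for each fixed Gaussian state with a covering argument over the manifold of all Gaussian states. Since $\rho\mapsto\mathrm{Tr}(H\rho)$ is linear and the Gaussian states form a convex set whose extreme points are the pure Gaussian states, it suffices to prove the upper bound uniformly over pure Gaussian states, each parametrized by an orthogonal antisymmetric correlation matrix $\Gamma$. Wick's theorem (Eq.~\eqref{eq:wick}) gives
\begin{equation*}
\mathrm{Tr}(H\rho)=\binom{2n}{q}^{-1/2}\sum_{|I|=q}J_{I}\,\mathrm{Pf}(\Gamma_{I}),
\end{equation*}
which, for fixed $\Gamma$, is a centered Gaussian random variable with variance $\sigma^{2}=\binom{2n}{q}^{-1}\sum_{|I|=q}\mathrm{Pf}(\Gamma_{I})^{2}$.

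Next I would evaluate $\sum_{|I|=q}\mathrm{Pf}(\Gamma_{I})^{2}$. This quantity is $SO(2n)$-invariant under $\Gamma\mapsto R\Gamma R^{T}$ — it is the squared norm of the $(q/2)$-th exterior power of the two-form $\sum_{i<j}\Gamma_{ij}\,e_{i}\wedge e_{j}$ — so one evaluates it in the canonical basis where $\Gamma$ is a direct sum of $2\times 2$ blocks $\bigl(\begin{smallmatrix}0&1\\-1&0\end{smallmatrix}\bigr)$. There $\mathrm{Pf}(\Gamma_{I})=\pm 1$ precisely when $I$ is a union of $q/2$ of the $n$ blocks and zero otherwise, so $\sum_{|I|=q}\mathrm{Pf}(\Gamma_{I})^{2}=\binom{n}{q/2}$. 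Elementary binomial estimates then yield $\sigma^{2}\le(q-1)!!\,(2n)^{-q/2}$, and the Gaussian tail bound gives $\Pr[\mathrm{Tr}(H\rho)>t]\le\exp(-t^{2}/(2\sigma^{2}))$ for each fixed pure Gaussian state.

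To upgrade the pointwise bound to a uniform one, I would parametrize pure Gaussian states by $R\in SO(2n)$ acting on a fixed reference state and construct an $\epsilon$-net $\mathcal{N}$ of this $O(n^{2})$-dimensional manifold in operator norm; standard volume counting gives $|\mathcal{N}|\le(3/\epsilon)^{Cn^{2}}$ for an absolute constant $C$ — the $3$ here being the origin of the $\log(3/2)$ factor in the final bound. A Lipschitz estimate $|\mathrm{Tr}(H\rho_{R})-\mathrm{Tr}(H\rho_{R'})|\le\|H\|_{\infty}\|\rho_{R}-\rho_{R'}\|_{1}$ combined with the standard SYK operator-norm bound $\|H\|_{\infty}=O(\sqrt{n})$ (which itself holds with probability $1-\exp(-\Omega(n))$) allows $\epsilon$ to be chosen $\mathrm{poly}(n)^{-1}$ while contributing only a $\log n$ factor to $\log|\mathcal{N}|$. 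The union bound $\Pr[\max_{\rho}\mathrm{Tr}(H\rho)>t]\le|\mathcal{N}|\exp(-t^{2}/(2\sigma^{2}))$ is then $\exp(-\Omega(n))$ as soon as $t^{2}\gtrsim n^{2}\sigma^{2}$, i.e.\ $t\gtrsim(2n)^{1-q/4}$ up to a $q$-dependent constant, matching the claimed scaling.

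The main technical difficulty is extracting the precise $q$-dependent prefactor — in particular the $\sqrt{\log[q/\log(3/2)]}$ factor, which comes from carefully optimizing the net radius $\epsilon$ as a function of $q$ when balancing the $(3/\epsilon)^{Cn^{2}}$ net entropy against the Gaussian tail exponent. Tracking constants through the basis-invariance evaluation of $\sum_{|I|=q}\mathrm{Pf}(\Gamma_{I})^{2}$ and through the Lipschitz estimate for $R\mapsto\mathrm{Tr}(H\rho_{R})$ is where one must be most careful to avoid losing spurious factors of $n$; the rest of the argument is a direct adaptation of the $q=4$ treatment in \cite{SYKScaffidi}.
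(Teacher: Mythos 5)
Your skeleton --- reduce to pure Gaussian states, compute the variance of the Gaussian random variable $\mathrm{Tr}(H\rho)$ at fixed $\rho$, then uniformize by a net and a union bound --- is the same strategy the paper uses, except that the paper packages the uniformization into a black-box spectral-norm bound for random tensors (Lemma \ref{lemma:randomtensors}). Your variance computation is correct and in fact cleaner than the paper's: $\sum_{|I|=q}\mathrm{Pf}(\Gamma_I)^2=\sum_{|I|=q}\det(\Gamma_I)$ is the $q$-th elementary symmetric function of the eigenvalues of $\Gamma$, hence orthogonally invariant, and equals $\binom{n}{q/2}$ for a pure state; the paper only uses the weaker information $\|\Gamma\|_F^2\le 2n$. (One small slip in the reduction: the set of Gaussian density matrices is \emph{not} convex, so "extreme points" is the wrong justification; what saves you is that every mixed Gaussian state in the normal form of Eq.~\eqref{eq:rho} is manifestly a convex mixture of the $2^n$ pure Gaussian states obtained by setting each $\lambda_j=\pm1$ in the same mode basis, and $\rho\mapsto\mathrm{Tr}(H\rho)$ is linear.)

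The genuine gap is in the netting step. Your Lipschitz control $|\mathrm{Tr}(H\rho_R)-\mathrm{Tr}(H\rho_{R'})|\le\|H\|_\infty\|\rho_R-\rho_{R'}\|_1$ forces the net radius to be polynomially small in $n$: two pure Gaussian states whose rotations differ by $\epsilon$ in each of $n$ modes have overlap roughly $(1-\epsilon^2)^{n/2}$, so their trace distance is $O(1)$ unless $\epsilon\lesssim n^{-1/2}$, and making the error term $\|H\|_\infty\|\rho_R-\rho_{R'}\|_1$ smaller than the target $t\sim(2n)^{1-q/4}$ with $\|H\|_\infty=\Theta(\sqrt{n})$ pushes $\epsilon$ down to $\mathrm{poly}(n)^{-1}$. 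Then $\log|\mathcal{N}|\sim n^2\log n$, and balancing against the tail $\exp(-t^2/2\sigma^2)$ with $\sigma^2\sim(q-1)!!\,(2n)^{-q/2}$ yields $t\sim\sqrt{\log n}\,(2n)^{1-q/4}$ --- off from the stated Lemma by a factor $\sqrt{\log n}$ that no choice of $\epsilon$ as a function of $q$ alone can remove. The $\log[q/\log(3/2)]$ in the statement is $n$-independent precisely because the correct argument uses a \emph{constant}-radius net: one must exploit that, after splitting over the $(q-1)!!$ matchings as in Eq.~\eqref{eq:wicks}, each piece is a multilinear form $A(w_1,\dots,w_{q/2})$ evaluated at $w_1=\dots=w_{q/2}=\Gamma$, for which the supremum over the sphere is controlled by the supremum over an $O(1)$-radius net up to a constant factor (the standard multilinear-form trick underlying Lemma \ref{lemma:randomtensors}). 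Without this multilinearization your proposal proves only the weaker bound $O\big(\sqrt{\log n}\,(2n)^{1-q/4}\big)$; it would still suffice for the qualitative conclusion of Theorem \ref{theorem:mainSYKqtheorem} but not for the Lemma as stated. A secondary loose end: the operator-norm bound $\|H\|_\infty=O(\sqrt{n})$ for SYK-$q$ with $q>4$ is asserted as standard but itself requires a (known, nontrivial) concentration argument.
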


\begin{proof}
We first use Wick's theorem on the expectation of a product of Majorana operators w.r.t. a fermionic Gaussian state $\rho$ characterized by a correlation matrix $\Gamma$, see Eq.~\eqref{eq:wick}. Note that the correlation matrix $\Gamma_{i<j}$ can be viewed as a real $d:=(2n^{2}-n)$-dimensional vector. We note that $\sum_{i < j}\Gamma_{ij}^2=\frac{1}{2}{\rm Tr}(\Gamma^T \Gamma)\leq \frac{1}{2}{\rm Tr}(\mathbb{I})=n$ so that $\|\Gamma\|\leq n^{1/2}$.

Let $M(I)$ be a perfect matching of the indices in $I$ ($|I|$ even), there are $(q-1)!!$ such matchings.
We have 
\begin{align}
    \text{Tr}(C_I \rho ) &=   i^{q/2}\sum_{M(I)}\text{sign}\big(M(I)\big)\:\text{Tr}\big( c_{i_{1}(M)}c_{i_{2}(M)}\rho)\notag\\
    &\hspace{2em}\times\text{Tr}(c_{i_{3}(M)}c_{i_{4}(M)}\rho) \ldots \text{Tr}( c_{i_{q\!-\!1}(M)}c_{i_{q}(M)}\rho).
\label{eq:wicks}
\end{align}
Here we have assumed that for each matching $M(I)$; $i_{1}(M)<i_{2}(M)$, $i_{3}(M)<i_{4}(M)$, $\ldots$, $i_{q-1}(M)<i_{q}(M)$, i.e. any sign arising from getting the expression to this form is absorbed in $\text{sign}\big(M(I)\big)$).

The expectation of $H$ in Eq.~\eqref{SYK_H} w.r.t. fermionic Gaussian states $\rho$ can be written as:
\begin{center}
\begin{align}
    \text{Tr}(H\rho) &=\: \binom{2n}{q}^{-1/2}  i^{q/2}\!\!\!\!\!\sum_{I\subseteq [2n],\,|I|=q}\!\!\!\!\!J_{I}\: \Big[ \sum_{M(I)}\text{sign}\big(M(I)\big)\notag\\
    &\hspace{7em}\times \prod_{t=1}^{q/2}\text{Tr}\big( c_{i_{2t-1}(M)}c_{i_{2t}(M)}\rho\big) \Big] \nonumber \\ 
    &=\: \binom{2n}{q}^{-1/2}   \!\!\!\!\!\sum_{I\subseteq [2n],\,|I|=q}\!\!\!\!\!J_I \sum_{M(I)} \text{sign}\big(M(I)\big) \notag\\
    &\hspace{7em}\times\prod_{t=1}^{q/2}\: \Gamma_{i_{2t-1}(M),i_{2t}(M)} . 
\label{eq:expupperbound}
\end{align}
\end{center}
We note that we can view ${\rm Tr}(H\rho)$ as a sum of $(q-1)!!$ terms, one for each matching $M$ of some subset of indices $I$, i.e. ${\rm Tr}(H \rho)=\sum_{M} {\rm Tr}(H_M \rho)$ where $H_M=\sum_I \tilde{J}(M,I) \prod_{t=1}^{q/2}\: \Gamma_{i_{2t-1}(M),i_{2t}(M)}$. We have defined the $q/2$-way, $d\times d\times\ldots\times d$, tensor $\tilde{J}(M,I)$, whose entries are equal to either zero (when the indices coincide or are not ordered properly) or to a standard Gaussian random variable. Each $J_{I}$ appears only once in $\text{Tr}(H_{M}\rho)$ and therefore all entries of $\tilde{J}(M,I)$ are statistically independent. We note that ${\rm sign}(M)$ does not depend on which (ordered) subset $I$ one chooses. To bound each term ${\rm Tr}(H_M \rho)$, with high probability, we invoke the following Lemma:

\begin{lemma}{(Theorem 1 in \cite{spectralnormrandomtensors}.)}
Let $A$  be a random $K$-way tensor $\in \mathbb{R}^{d_{1}\times d_{2} \times \ldots \times d_{K}}$ and $w_{i}$ be vectors $\in \mathbb{R}^{d_{i}}$ and 
\begin{equation*}
    A(w_{1},w_{2},\ldots,w_{K}) :=\!\!\!\! \sum_{k_{1},\ldots,k_{K}}\!\!\!\! A_{k_{1},\ldots,k_{K}}(w_{1})_{k_{1}}\ldots (w_{q/2})_{k_{K}}.
\end{equation*}
If we have for each fixed unit vector $w_{i}/\|w_{i}\|$ ($i \in \{1,\ldots,K\}$):
\begin{align}
    &\text{\normalfont{Pr}}\Big( |A(w_{1}/\|w_{1}\|,\ldots,w_{K}/\|w_{K}\|)|\geq t \Big)\notag\\ 
    &\hspace{10em}\leq 2\exp\Big(-t^{2}/(2\sigma^{2})\Big),
\end{align}
then the spectral norm $\|\!|A\|\!|:=\max_{w_{1},\ldots,w_{K}}A\:\big(w_{1}/\|w_{1}\|,\ldots,w_{K}/\|w_{K}\|\big)$ (with $w_{i}\in \mathbb{R}^{d_{i}}$) can be bounded as follows:
\begin{equation*}
    \|\!|A\|\!| \!\leq\!\! \bigg[ 8\sigma^{2}\Big[ \Big( \sum_{i=1}^{K} \!d_{i}\! \Big) \log\!\big[2K/\log(3/2)\big] + \log\big(\frac{2}{\delta}\big) \Big] \bigg]^{\frac{1}{2}},
\end{equation*}
with probability at least $1-\delta$.
\label{lemma:randomtensors}
\end{lemma}
To apply the Lemma, note that the vectors $w_i$ correspond to $\Gamma_{i< j}$ viewing $i < j$ as a single index and we can use their norm $\|\Gamma\|\leq n^{1/2}$.
In addition, for each entry in the tensor we have $\mathbb{E}\big[\exp \big(t\tilde{J}(M,I)_{k_{1},\ldots,k_{q/2}}\big)\big] \leq \exp\big( t^{2}/2 \big)$ (for $t\geq 0$) as the entry is zero or a Gaussian variable with variance 1 and mean zero. Using Chernoff's bound and the fact that all entries of $\tilde{J}(M,I)$ are statistically independent, we conclude that for any set of real vectors $w_1,\ldots,w_{q/2}$ one has
\begin{align}
    &\text{Pr}\Big[ \Bigl\lvert\sum_{k_1,\ldots, k_{q/2}}\!\!\!\!\!\! \tilde{J}(M,I)_{k_1,\ldots,k_{q/2}} \!\frac{(w_{1})_{k_1}}{\|w_{1}\|} \!\ldots\! \frac{(w_{q/2})_{k_{q/2}}}{\|w_{q/2}\|} \Bigr\rvert \!\geq\! t \Big]\notag\\
    &\hspace{13em}\leq 2\exp(-t^{2}/2).
\end{align}
Therefore, for each term $H_M$ we can apply Lemma \ref{lemma:randomtensors} and, using $K=q/2$ and $\sigma=1$, obtain
\begin{align}
    &\bigl\lvert\!\bigl\lvert\!\bigl\lvert \tilde{J}(M,I) \bigr\rvert\!\bigr\rvert\!\bigr\rvert \leq \bigg[ 4q(2n^{2}\!-\!n) \log\big[q/\log(3/2)\big]\notag\\ 
    &\hspace{11em}+ 8\log\big(2\delta^{-1}\big) \bigg]^{1/2}\!\!\!\!,
\label{eq:specnormupperbound}
\end{align}
with probability at least $1-\delta$. Then we can first bound
\begin{center}
\begin{align}
    \max_{\rho \text{ Gaussian}}\text{Tr}(H\rho) \leq&\: \binom{2n}{q}^{-1/2}\|\Gamma\|^{q/2}\sum_{M}\bigl\lvert\!\bigl\lvert\!\bigl\lvert \tilde{J}(M,I) \bigr\rvert\!\bigr\rvert\!\bigr\rvert \nonumber \\
    \leq&\: \big(q/\sqrt{2}\big)^{q/2}(2n)^{-q/4}\:\sum_M \bigl\lvert\!\bigl\lvert\!\bigl\lvert \tilde{J}(M,I) \bigr\rvert\!\bigr\rvert\!\bigr\rvert,
\label{eq:upperboundFGS}
\end{align}
\end{center}
where we have used that $\binom{2n}{q}\geq (2n/q)^{q}$. We can now combine the upper bound in Eq.~\eqref{eq:specnormupperbound} and Eq.~ \eqref{eq:upperboundFGS}. Applying the union bound, we have with probability at least $1-(q-1)!!\:\delta$, that
\begin{align}
    &\max_{\rho \text{ Gaussian}} \text{Tr}(H\rho )\leq (q-1)!!\:\bigg[ 2^{1-q/2}q^{q+1}\notag\\
    &\hspace{4em}\times\!\big[(2n)^{2-q/2}-(2n)^{1-q/2}\big]\log\big[q/\log(3/2)\big] \notag\\
    &\hspace{6em}+ 2^{3-q/2}q^{q}(2n)^{-q/2}\log\big(2\delta^{-1}\big) \bigg]^{1/2}\!\!\!\!.\!\!
\end{align}
Therefore, we can take $\delta = \exp\big(-\Omega(n)\big)$ such that, asymptotically, we have (assuming $q=O(1)$):
\begin{align}
    &\max_{\rho \text{ Gaussian}} \text{Tr}(H\rho) \leq (q-1)!!\:2^{1/2-q/4}q^{1/2+q/2}\notag\\
    &\hspace{6em}\times\sqrt{\log[q/\log(3/2)]}\:(2n)^{1-q/4},
\end{align}
with probability at least $1-\delta$. Note that in deriving this upper bound we only use the norm of the correlation matrix $\Gamma$, hence this upper bound is not necessarily achievable by a Gaussian state as the constraint $\Gamma^T \Gamma \leq \mathbb{I}$ imposes more conditions on $\Gamma$ than just an upper bound on its norm.
\end{proof}

\subsection{Maximum eigenvalue lower bound for \texorpdfstring{$q$}{q}-local SYK Hamiltonians}
\label{sec:maxbound}

To show that fermionic Gaussian states cannot achieve a constant approximation ratio for $q\geq 4$ SYK models, we derive a lower bound on the maximum eigenvalue of the Hamiltonians $H$ in Eq.~\eqref{SYK_H}:
\begin{lemma*}[Repetition of Lemma \ref{lemma:maxeiglowbound}]
For the class of $q$-local SYK Hamiltonians (with even $q\geq4$) in Eq.~\eqref{SYK_H}, $\lambda_{\max}(H) = \Omega(\sqrt{n})$ with probability at least $1-\exp\big( -\Omega(n) \big)$ over the draw of Hamiltonians.
\end{lemma*}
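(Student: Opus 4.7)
The plan is to generalize the lower-bound argument of \cite{HO:approxferm} for SYK-$4$ (quoted as Theorem \ref{thm:HO}) to arbitrary even $q\geq 4$. The strategy is to exhibit an efficiently-specifiable quantum state $\rho^{*}$ (generally non-Gaussian) whose expectation $\mathrm{Tr}(H\rho^{*})$ is $\Omega(\sqrt{n})$ with high probability over the couplings, from which $\lambda_{\max}(H)\geq \mathrm{Tr}(H\rho^{*})$ immediately yields the claim.

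First I would construct $\rho^{*}$ following HO's recipe: a non-Gaussian state produced by a polynomial-time quantum algorithm that aligns with the ``bulk'' of $H$. The reason the state cannot be Gaussian is precisely Lemma \ref{lemma:SYKexpvaluegaussians}, which already caps Gaussian expectations at $O(n^{1-q/4})$ --- i.e.\ $O(1)$ for $q=4$ and vanishing for $q>4$. The appropriate non-Gaussian construction exploits the commutation structure of Majorana monomials (the sign of $C_I C_{I'}-(-1)^{|I\cap I'|}C_{I'}C_I$ depends on the parity of $|I\cap I'|$) to pick up contributions from pairs $\{I,I'\}$ of non-commuting terms that Gaussian states are forced to miss.

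The expectation $\mathbb{E}_J[\mathrm{Tr}(H\rho^{*})]$ is then evaluated by a Wick-type expansion over matchings of Majorana indices, in the same spirit as the proof of Lemma \ref{lemma:SYKexpvaluegaussians} but now applied to a $\rho^{*}$ for which the dominant planar / non-crossing contractions are not lost to normalization. Concentration is the easy step: $\mathrm{Tr}(H\rho^{*})=\binom{2n}{q}^{-1/2}\sum_I J_I\,\mathrm{Tr}(C_I\rho^{*})$ is a Gaussian linear combination of the $J_I$'s with variance $\binom{2n}{q}^{-1}\sum_I \mathrm{Tr}(C_I\rho^{*})^{2}\leq 1$ (since $\mathrm{Tr}(\rho^{*2})\leq 1$ and the $C_I$ form an orthogonal family), so a standard subGaussian tail bound gives $|\mathrm{Tr}(H\rho^{*})-\mathbb{E}\mathrm{Tr}(H\rho^{*})|=O(\sqrt{n})$ with probability $1-e^{-\Omega(n)}$. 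Combined with the $\Omega(\sqrt{n})$ lower bound on the mean, this produces $\mathrm{Tr}(H\rho^{*})=\Omega(\sqrt{n})$ at the desired probability, and hence $\lambda_{\max}(H)=\Omega(\sqrt{n})$.

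The main obstacle is the first step: producing $\rho^{*}$ and verifying, via the Wick-type calculation, that $\mathbb{E}_J[\mathrm{Tr}(H\rho^{*})]=\Omega(\sqrt{n})$ uniformly in $q$. For $q=4$ this is the content of \cite{HO:approxferm}; for general $q=O(1)$ one must re-examine the planar contractions and check that no combinatorial factor degenerates with growing $q$, ensuring that the leading $\sqrt{n}$ scaling survives and that the $q$-dependent prefactors are bounded away from zero. A possible simplification, alluded to by the ``essentially'' in the paper's phrasing, is that HO's construction is largely $q$-agnostic --- the relevant anti-commuting structure of Majorana monomials exists for every even $q$ --- so the generalization should reduce to a bookkeeping exercise on the combinatorics of Majorana intersections.
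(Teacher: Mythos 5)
There is a genuine gap: the entire mathematical content of the lemma is deferred rather than supplied. Your plan is to ``construct $\rho^{*}$ following HO's recipe'' and then verify by a Wick-type calculation that $\mathbb{E}_J[\mathrm{Tr}(H\rho^{*})]=\Omega(\sqrt{n})$, but you never say what $\rho^{*}$ is or where the $\sqrt{n}$ comes from --- and you acknowledge this is ``the main obstacle.'' That obstacle \emph{is} the proof. In the paper the state is $\rho_{\theta}=e^{-\theta\zeta}\rho_{0}e^{+\theta\zeta}$, where $\rho_{0}$ is the Gaussian optimizer of an auxiliary quadratic Hamiltonian on a $2$-colored version of the model and $\zeta=\sum_{j}\tau_{j}\sigma_{j}$ is built \emph{from the couplings themselves}; a second-order BCH expansion gives $\mathrm{Tr}(H^{(2)}\rho_{\theta})\geq\theta\,\mathrm{Tr}([\zeta,H^{(2)}]\rho_{0})-\theta^{2}\|[\zeta,[\zeta,H^{(2)}]]\|$. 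The first-order term is a (normalized) chi-squared variable $\propto\sum_{S,j}J_{S,j}^{2}$ with mean $2\sqrt{n_{2}}$ --- this, not a planar/non-crossing contraction count, is the source of the $\Omega(\sqrt{n})$ --- and the hard work is the moment-method bound $\|[\zeta,[\zeta,H^{(2)}]]\|=O(\sqrt{n})$ (Appendix \ref{app:momentbound}), which occupies most of the argument and has no counterpart in your sketch.

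The one concrete technical step you do provide, the concentration bound, is incorrect as stated. You treat $\mathrm{Tr}(H\rho^{*})=\binom{2n}{q}^{-1/2}\sum_{I}J_{I}\,\mathrm{Tr}(C_{I}\rho^{*})$ as a Gaussian linear form in the $J_{I}$ with variance at most $1$. But any state achieving $\mathbb{E}_J[\mathrm{Tr}(H\rho^{*})]=\Omega(\sqrt{n})$ must itself depend on the couplings (a coupling-independent state has mean zero), so the coefficients $\mathrm{Tr}(C_{I}\rho^{*})$ are correlated with the $J_{I}$, the sum is not Gaussian, and the sub-Gaussian tail bound does not apply. The paper avoids this by splitting $H=H_{\mathcal{T}}+H_{\mathcal{T}'}$: concentration for the part $H_{\mathcal{T}}$ on which $\rho_{\theta}$ depends is obtained from chi-squared tail bounds and the spectral-norm moment bound, while the linear-Gaussian argument you describe is applied only to $H_{\mathcal{T}'}$, whose couplings $\rho_{\theta}$ genuinely does not see (Lemma \ref{lem:2colouredtostandard}). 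To make your proposal into a proof you would need to supply the state construction, the first- and second-order estimates, and a concentration argument that respects the coupling-dependence of the state.
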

The remainder of this section will be devoted to proving this Lemma. The techniques used are similar to those used in Section 6 of Ref.~\cite{HO:approxferm}. We note that throughout this section, we shall use $C$ to denote a quantity that is constant in $n$ or is bounded from above and below by a constant in $n$, and it will generally differ from appearance to appearance (for the sake of clarity). Importantly, $C$ can contain factors of $q$ (note that $q=O(1)$).

We start by obtaining a lower bound on the maximum eigenvalue of a so-called $2$-colored SYK model and will use this to prove Lemma \ref{lemma:maxeiglowbound}. The Hamiltonian of such a $2$-colored SYK model is slightly different from the standard SYK model Hamiltonian in Eq.~\eqref{SYK_H}. We divide the $2n$ Majorana operators into two subsets, with sizes $n_{1}$ and $n_{2}$ ($n_2\leq n_1$), and denote the operators in the first set by $\phi_{1},\ldots,\phi_{n_1}$ and the ones in the second set by $\chi_{1},\ldots,\chi_{n_2}$. The Hamiltonian is now given by\footnote{We denote Hamiltonians from the class of $2$-colored SYK Hamiltonians by $H^{(2)}$.}: 
\begin{equation}
    H^{(2)} = \frac{i}{\sqrt{n_{2}}}\sum_{j=1}^{n_{2}}\tau_{j}\chi_{j},
\label{SYK_2}
\end{equation}
where
\begin{equation}
    \tau_{j} = \binom{n_{1}}{q-1}^{-1/2}i^{q/2-1}\sum_{\substack{S\subseteq [n_{1}]\\ |S| = q-1}}J_{S,j}\: \phi^{S}.
\end{equation}
Here $\phi^{S}$ the product of $q-1$ of the $\phi$ Majorana operators in subset $S$, and $J_{S,j}$ are independent Gaussian random variables. The subset $S$ labels an ordered subset of $q-1$ Majorana operators (note that these are different from the subsets $I$ defined before that correspond to ordered subsets of $q$ Majorana operators). We note that the (Hermitian) $\tau_{j}$ operators do not necessarily obey $\{\tau_{j},\tau_{k}\} = 2\delta_{jk}\mathbb{I}$, but instead satisfy $\mathbb{E}(\{\tau_{j},\tau_{k}\}) = -i^{q-2}\delta_{jk}\mathbb{I}$. 

\begin{lemma}
Let $\{\phi_{i}\}_{i=1}^{n_{1}}$ and $\{\chi_{i}\}_{i=1}^{n_{2}}$ be $n_{1}+n_{2}$ Majorana operators. For the class of $q$-local $2$-colored SYK Hamiltonians (with even $q\geq4$) in Eq. \eqref{SYK_2} defined in terms of these Majorana operators, the maximum eigenvalue of the Hamiltonian $\lambda_{\max}(H)$ is lower bounded by $C\sqrt{n}$ (with $C$ a constant) with probability at least $1-\exp\big( -\Omega(n) \big)$ over the draw of Hamiltonians.
\label{lemma:maxeiglowbound2color}
\end{lemma}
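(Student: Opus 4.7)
I would adapt the moment-method argument from Section 6 of \cite{HO:approxferm} to the $q$-local 2-colored setting. The strategy is to lower-bound $\lambda_{\max}(H^{(2)})$ by a suitably high trace moment of $H^{(2)}$, whose expectation is controlled via a Wick-theorem / chord-diagram expansion, and then transfer the bound from expectation to high probability by Gaussian concentration.

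First, I would expand $\text{Tr}((H^{(2)})^{2k})$ for $k$ of order $n$, and take expectation over the Gaussian couplings $J_{S,j}$. By Wick's theorem, the result is a sum over perfect pairings of the $2k$ factors of $\tau$, weighted by traces of products of Majorana monomials. The $\chi_j$'s force an index-matching constraint on the pairings, and the $\phi^S$ factors contribute $\pm 1$ to the trace via $(\phi^S)^2=\pm\mathbb{I}$. Accounting for the normalization $n_2^{-k}\binom{n_1}{q-1}^{-k}$, a combinatorial count (analogous to the planar chord diagrams familiar in the dense SYK literature) should yield a bound of the form $\mathbb{E}[\text{Tr}((H^{(2)})^{2k})/d]\geq (C_q\, n)^{k}$, where $C_q>0$ depends only on $q$. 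Intuitively, this reflects the fact that $\mathbb{E}(\tau_j^{2})\propto \mathbb{I}$ while $\|\tau_j\|=\Theta(\sqrt{n})$, so the top of the spectrum of $H^{(2)}$ has width $\Theta(\sqrt{n})$ with many nearby eigenvalues.

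Second, I would apply $\lambda_{\max}(H^{(2)})^{2k}\geq \text{Tr}((H^{(2)})^{2k})/d$ and take the $(2k)$-th root, obtaining $\mathbb{E}[\lambda_{\max}(H^{(2)})]\geq C\sqrt{n}$ after handling the order of expectation and root (e.g.\ via Jensen or by first proving the moment bound concentrates). Finally, I would observe that $\lambda_{\max}(H^{(2)})$ is a Lipschitz function of the Gaussian vector $(J_{S,j})$, with Lipschitz constant controlled by the Frobenius norm of the coefficient tensor in $H^{(2)}$, which is of order $1$ after normalization. Gaussian concentration (Cirelson–Ibragimov–Sudakov) then upgrades the expectation bound to $\lambda_{\max}(H^{(2)})\geq C'\sqrt{n}$ with probability at least $1-\exp(-\Omega(n))$.

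The main obstacle is the combinatorial analysis producing the $(C_q n)^{k}$ moment lower bound: one must identify a family of pairings contributing coherently (no cancellation from reordering Majorana signs) while respecting both the $\chi$-index matching and the $S$-index matching inside each $\tau_j$, and then rule out or absorb subleading contributions. I expect the 2-colored structure to make this cleaner than the dense case because the $\chi_j$'s act as \emph{bookmarks} that significantly restrict which pairings survive, so that a planar/non-crossing subset already saturates the moment growth. Once the combinatorial estimate is in place, the Lipschitz-concentration step and the reduction to $\lambda_{\max}$ are routine.
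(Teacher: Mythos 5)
Your proposal takes a genuinely different route from the paper, which does not use the moment method at all. The paper constructs an explicit state $\rho_{\theta}=e^{-\theta\zeta}\rho_{0}e^{+\theta\zeta}$, where $\rho_{0}$ optimizes an auxiliary quadratic Hamiltonian $\frac{i}{\sqrt{n_{2}}}\sum_{j}\sigma_{j}\chi_{j}$ built from fresh Majoranas $\sigma_{j}$ and $\zeta=\sum_{j}\tau_{j}\sigma_{j}$; it expands $\mathrm{Tr}(H^{(2)}\rho_{\theta})$ to second order in $\theta$, shows the first-order term is a rescaled chi-squared variable concentrating at $2\sqrt{n_{2}}(-1)^{q/2}$, and bounds the second-order term by $\theta^{2}\|[\zeta,[\zeta,H^{(2)}]]\|=O(\sqrt{n})$ with high probability via a $(2n)$-th moment bound on the double commutator. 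Your outline, by contrast, hinges on the moment lower bound $\mathbb{E}[\mathrm{Tr}((H^{(2)})^{2k})/d]\geq(C_{q}n)^{k}$ for $k=\Theta(n)$, and this is where there is a genuine gap.

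Concretely, your claim that "a planar/non-crossing subset already saturates the moment growth" is false. After normalization, each individual Wick pairing of the $2k$ factors contributes at most $O(1)^{k}$ to $\mathbb{E}[\mathrm{Tr}((H^{(2)})^{2k})/d]$: the $k$ free index pairs are each summed over $n_{2}\binom{n_{1}}{q-1}$ values, exactly cancelling the prefactor $\big(n_{2}\binom{n_{1}}{q-1}\big)^{-k}$, and each matched index assignment gives a normalized trace in $\{0,\pm1\}$. Since there are only $\mathrm{Cat}(k)\leq 4^{k}$ non-crossing pairings, the coherent planar family yields at most $O(1)^{k}$ --- nowhere near $(C_{q}n)^{k}$. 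Growth of order $(Cn)^{k}$ can only come from the full set of $(2k-1)!!\approx(2k/e)^{k}$ pairings, most of which are crossing and carry signs from reordering the Majorana monomials; controlling those cancellations at order $k=\Theta(n)$ is precisely the hard part and is left entirely open in your plan. Two further steps are shakier than you suggest: (i) Jensen's inequality runs in the wrong direction for converting $\mathbb{E}[\lambda_{\max}^{2k}]\geq(Cn)^{k}$ into $\mathbb{E}[\lambda_{\max}]\geq C'\sqrt{n}$ --- you would need to combine the moment bound with the Lipschitz concentration and verify that the fluctuation scale $\sqrt{2k}=\Theta(\sqrt{n})$ of the $2k$-th moment does not swallow the $\sqrt{Cn}$ you are extracting; and (ii) $\mathrm{Tr}(H^{2k})/d$ lower-bounds the spectral radius $\max_{i}|\lambda_{i}|$, not $\lambda_{\max}$, so an additional symmetry or concentration argument is needed. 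The Lipschitz-concentration step itself is fine, but note also that the paper's variational proof produces an explicit state $\rho_{\theta}$ independent of the couplings outside $\mathcal{T}$, which is what the subsequent reduction from the $2$-colored to the standard SYK model (Lemma \ref{lem:2colouredtostandard}) relies on; a pure moment argument would not supply that.
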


\begin{proof}
We introduce a new set of Majorana operators (again of size $n_{2}$) $\sigma_{1},\ldots,\sigma_{n_{2}}$ (which do obey $\{\sigma_{j},\sigma_{k}\} = 2\delta_{jk}\mathbb{I}$) and we define the \textit{quadratic} Hamiltonian $H'$:
\begin{equation}
    H' = \frac{i}{\sqrt{n_{2}}}\sum_{j=1}^{n_{2}}\sigma_{j}\chi_{j}.
\end{equation}
This quadratic Hamiltonian $H'$ is optimized by the fermionic Gaussian state $\rho_{0} = \frac{1}{2^{n_{2}+n_{1}/2}}\prod_{j=1}^{n_{2}}\big(\mathbb{I} + i\sigma_{j}\chi_{j}\big)$, which achieves $\text{Tr}(H'\rho_{0}) = \sqrt{n_{2}}$. The idea is now to construct a new state $\rho_{\theta}$ obtained from $\rho_{0}$ by applying a unitary transformation to $\rho_{0}$, and to find a lower bound for the expectation value of $H^{(2)}$ w.r.t. $\rho_{\theta}$. 
\begin{equation}
    \rho_{\theta} := e^{-\theta \zeta}\rho_{0}e^{+\theta \zeta}\:,\quad \text{where }\zeta := \sum_{j=1}^{n_{2}}\tau_{j}\sigma_{j}\text{ and }\theta \in \mathbb{R}.
\label{eq:defrhotheta}
\end{equation}
The expectation value of $H^{(2)}$ w.r.t. $\rho_{\theta}$ is:
\begin{multline}
    \text{Tr}(H^{(2)}\rho_{\theta}) = \text{Tr}(H^{(2)}_{\theta}\rho_{0})\:,\\ \quad \text{where }H^{(2)}_{\theta} := e^{+\theta \zeta}H^{(2)}e^{-\theta \zeta}.
\end{multline}
Using the BCH expansion of $H_{\theta}$ and $\text{Tr}(H^{(2)}\rho_{0})=0$ , we obtain:

\begin{align}
    \text{Tr}(H^{(2)\rho_{\theta}}) \!&= \theta \: \text{Tr}([\zeta,H^{(2)}]\rho_{0})\notag\\
    &\hspace{2em}+\theta^{2}\int_{0}^{1}(1-s)\text{Tr}([\zeta,[\zeta,H^{(2)}]]\rho_{s\theta})ds \nonumber \\
    &=\: \theta \: \text{Tr}([\zeta,H^{(2)}]\rho_{0}) \notag\\
    &\hspace{2em}+ \theta^{2}\!\!\!\mathop{\mathbb{E}}\limits_{s\sim[0,1]}\Big[ (1\!-\!s)\text{Tr}([\zeta,[\zeta,H^{(2)}]]\rho_{s\theta})\Big] \nonumber \\
    &\geq \!\theta\, \text{Tr}([\zeta,\!H^{(2)}]\rho_{0}) \!-\! \theta^{2}\|[\zeta,\![\zeta,H^{(2)}]]\|,\!\!
\label{eq:BCH}
\end{align}
where we have used the triangle inequality and $\|\cdot\|$ denotes the spectral norm. To lower bound $\text{Tr}(H^{(2)}\rho_{\theta})$, one now has to (i) lower bound $\theta \: \text{Tr}([\zeta,H^{(2)}]\rho_{0})$ and (ii) upper bound $\theta^{2}\:\|\:[\zeta,[\zeta,H^{(2)}]]\:\|$. This proof technique is similar in spirit to the proof in \cite{ImprovedApproxAlg}, although their proof is for qubit Hamiltonians with bounded-degree interactions.

First, we find a lower bound for $\theta \: \text{Tr}([\zeta,H^{(2)}]\rho_{0})$ which holds with high probability:
\begin{center}
\begin{align}
    \text{Tr}([\zeta, H^{(2)}]\rho_0) =&\: \frac{i}{\sqrt{n_{2}}}\sum_{j,k=1}^{n_{2}}\text{Tr}([\tau_{j}\sigma_{j},\tau_{k}\chi_{k}]\rho_0) \nonumber \\
    =&\: \frac{i}{\sqrt{n_{2}}}\sum_{j=1}^{n_{2}}\text{Tr}([\tau_{j}\sigma_{j},\tau_{j}\chi_{j}]\rho_0) \nonumber \\
    =&\: \frac{2i}{\sqrt{n_{2}}}\sum_{j=1}^{n_{2}}\text{Tr}(\sigma_{j}\chi_{j}\:\tau_{j}^{2}\rho_0) \nonumber \\
    =&\: \frac{2}{\sqrt{n_{2}}}2^{-(n_{2}+n_{1}/2)}\sum_{j=1}^{n_{2}}\text{Tr}(\mathbb{I}_{n_{2}}\:\tau_{j}^{2}) \nonumber \\
    =&\: \frac{2(-1)^{q/2}}{\sqrt{n_{2}}\binom{n_{1}}{q-1}} \sum_{j=1}^{n_{2}}\sum_{\substack{S\subseteq [n_{1}]\\ |S| = q-1}}\big(J_{S,j}\big)^{2},
\end{align}
\end{center}
where we have used that $\text{Tr}([\tau_{j}\sigma_{j},\tau_{k}\chi_{k}]\rho_0)$ is non-zero only for $j=k$, and the definition of $\tau_{j}$. The quantity
$\text{Tr}([\zeta, H^{(2)}]\rho_0)$ is thus a chi-squared random variable (up to normalization factors and potentially a sign) with $n_{2}\binom{n_{1}}{q-1}$ degrees of freedom and its expectation value is given by:
\begin{center}
\begin{align}
    \mathbb{E}\big[\text{Tr}([\zeta, H^{(2)}]\rho_0)\big] =&\: \frac{2(-1)^{q/2}}{\sqrt{n_{2}}\binom{n_{1}}{q-1}} \sum_{j=1}^{n_{2}}\sum_{\substack{S\subseteq [n_{1}]\\ |S| = q-1}}\!\!\mathbb{E}\big[\big(J_{S,j}\big)^{2}\big] \nonumber \\
    =&\: 2\sqrt{n_{2}}\:(-1)^{q/2},
\label{eq:firstorderexp}
\end{align}
\end{center}
where we have used that $\mathbb{E}\big[\big(J_{S,j}\big)^{2}\big] = 1$. We note that in order to obtain a positive first-order contribution to $\text{Tr}(H^{(2)}\rho_{\theta})$, one should take $\theta$ positive for $q/2$ even, and one should take $\theta$ negative for $q/2$ odd. Since $\text{Tr}([\zeta, H^{(2)}]\rho_0)$ is a chi-squared random variable with $n_{2}\binom{n_{1}}{q-1}$ degrees of freedom, the following tail bounds can be obtained \cite{chisquaredtailbounds}:
\begin{equation}
    \text{Pr}\big[ \text{Tr}([\zeta,H^{(2)}]\rho_0)\!\!\leq\! \sqrt{n_{2}} \big] \leq \!\exp\!\Big(\!\! -\!\Omega \big(n_{2}\:n_{1}^{q-1}\big)\! \Big),
\label{eq:firstorderI}
\end{equation}
    for $q/2$ even, and
\begin{equation}
    \text{Pr}\big[ \text{Tr}([\zeta,H^{(2)}]\rho_0)\!\!\geq\! -\sqrt{n_{2}} \big] \leq \!\exp\!\Big(\!\! -\!\Omega \big(n_{2}\:n_{1}^{q-1}\big)\! \Big),
\label{eq:firstorderII}
\end{equation}
for $q/2$ odd.
The random variable $\text{Tr}([\zeta,H^{(2)}]\rho_0)$ is thus equal to $2\sqrt{n_{2}}(-1)^{q/2}$ in expectation and the probability that -- for any even $q\geq 4$ -- its norm is smaller than half the norm of this expectation is at most exponentially small in the system size.

In order to upper bound $\theta^{2}\:\|\:[\zeta,[\zeta,H^{(2)}]]\:\|$, we first evaluate $[\zeta,[\zeta,H^{(2)}]]$:
\begin{center}
\begin{align}
    &[\zeta,[\zeta,H^{(2)}]]\notag\\ &\hspace{1em}=\frac{i^{q/2}}{\sqrt{n_{2}\binom{n_{1}}{q-1}}}\sum_{j=1}^{n_{2}}\sum_{\substack{S\subseteq [n_{1}]\\ |S| = q-1}}J_{S,j}[\zeta,[\zeta,\phi^{S}\chi_{j}]] \nonumber \\
    &\hspace{1em}=\: \frac{i^{q/2}}{\sqrt{n_{2}\binom{n_{1}}{q-1}}}\sum_{j,k,l=1}^{n_{2}}\sum_{\substack{S\subseteq [n_{1}]\\ |S| = q-1}}J_{S,j}[\tau_{k}\sigma_{k},[\tau_{l}\sigma_{l},\phi^{S}\chi_{j}]] \nonumber \\
    &\hspace{1em}=\: \frac{i^{3q/2-2}}{\sqrt{n_{2}\binom{n_{1}}{q-1}^3}}\sum_{j,k,l=1}^{n_{2}}
    \sum_{S,S',S''}
    J_{S,j}J_{S',k}J_{S'',l}\notag\\
    &\hspace{9.5em}\times[\phi^{S'}\sigma_{k},[\phi^{S''}\sigma_{l},\phi^{S}\chi_{j}]],
\end{align}
\end{center}
where the final sum over $S,S',S''$ is over all $S,S',S''\subseteq [n_{1}]$ with $|S|=|S'|=|S''|=q-1$ (all sums over $S,S',S''$ will implicitly have this constraint from now on). The nested commutator in this expression simplifies as follows (note that the product of $i^{3q/2-2}$ and the nested commutator is Hermitian):

\begin{align}
    &i^{3q/2-2}[\phi^{S'}\sigma_{k},[\phi^{S''}\sigma_{l},\phi^{S}\chi_{j}]] \vphantom{\sum_{S}}=\notag\\
    &\begin{cases}
      C(\phi^{K}\sigma_{k}\sigma_{l}\chi_{j})_{H}, & \text{if}\hspace{1em} (|S''\cap S| \text{ is odd}) \\
      & \hspace{1em}\wedge\:(|S'\cap (S''\triangle S)|+\delta_{k,l} \text{ is odd})\\
      & \hspace{2em}\wedge\:(|S| = |S'| = |S''| = q-1), \\
      &\\
      0, & \text{otherwise,}
    \end{cases}  
\end{align}
where $(\phi^{K}\sigma_{k}\sigma_{l}\chi_{j})_{H}$ denotes a Hermitian version of $\phi^{K}\sigma_{k}\sigma_{l}\chi_{j}$ (i.e., $\phi^{K}\sigma_{k}\sigma_{l}\chi_{j}$ up to potential integer powers of $i$) and $K := (S\triangle S' \triangle S'')\cup (S\cap S'\cap S'')$ (note that $|K|$ is odd). We therefore have:

\begin{align}
    [\zeta,[&\zeta,H^{(2)}]] =\notag\\
    &C\frac{1}{\sqrt{n_{2}}}\binom{n_{1}}{q-1}^{-3/2}\!\!\!\!\sum_{j,k,l=1}^{n_{2}}\sum_{S,S',S''}\!\!\!J_{S,j}J_{S',k}J_{S'',l}\notag\\
    &\hspace{3em}\times\!\!\Big((\phi^{K}\sigma_{k}\sigma_{l}\chi_{j})_{H}f(S,S',S''\!,j,k,l)\Big),
\label{eq:A}
\end{align}
where we have defined
\begin{align}
    &f(S,S',S'',j,k,l)  
\vphantom{\sum_{S}}:=\notag\\
    &\begin{cases}
      1, & \text{if}\hspace{1em} (|S''\cap S| \text{ is odd}) \\
      & \hspace{1em}\wedge\:(|S'\cap (S''\triangle S)|+\delta_{k,l} \text{ is odd})\\
      & \hspace{2em}\wedge\:(|S| = |S'| = |S''| = q-1), \\
      &\\
      0, & \text{otherwise.}
      \end{cases}
\label{eq:indexfunction}
\end{align}

We now wish to find an upper bound on the expected value of the spectral norm of $[\zeta,[\zeta,H^{(2)}]]$. And in addition, we would like to show that the spectral norm exceeds twice the value of this upper bound with probability that is at most exponentially small in the system size. To establish this, we will have to show the following:
\begin{equation}
    \mathbb{E}\Big( \|\:[\zeta,[\zeta,H^{(2)}]]\:\|^{k} \Big) \leq \alpha^{k},
\label{expeqref}
\end{equation}
for \textit{even} $k$ proportional to the system size and for some $\alpha$. Eq. \eqref{expeqref} implies two things: First, since $\mathbb{E}\big( \|\:[\zeta,[\zeta,H^{(2)}]]\:\| \big)^{k} \leq \mathbb{E}\big( \|\:[\zeta,[\zeta,H^{(2)}]]\:\|^{k} \big)$ (using Jensen's inequality), it implies $\mathbb{E}\big( \|\:[\zeta,[\zeta,H^{(2)}]]\:\| \big) \leq \alpha$ (i.e., $\alpha$ is the upper bound on the expected value of the spectral norm). Second, applying Markov's inequality to the random variable $\|\:[\zeta,[\zeta,H^{(2)}]]\:\|$ and using Eq. \eqref{expeqref} yields
\begin{align}
    &\text{Pr}\Big[ \|\:[\zeta,[\zeta,H^{(2)}]]\:\| \geq \alpha' \Big] \!=\! \text{Pr}\Big[ \|\:[\zeta,[\zeta,H^{(2)}]]\:\|^{k}\notag\\
    &\hspace{19em}\geq (\alpha')^{k} \Big]\notag\\
    &\hspace{10em}\leq\vphantom{\sum^S} \big(\alpha/\alpha')^{k},
\end{align}
with $\alpha'\geq \alpha$.
So taking $\alpha' = 2\alpha$ and $k$ equal to the system size $2n$ ($=2n_{2}+n_{1}$) yields the desired result of the probability of the spectral norm exceeding twice the value of the upper bound being at most exponentially small in the system size.

For convenience, we define $A:=[\zeta,[\zeta,H^{(2)}]]$. Since $A$ is Hermitian (by direct calculation), the spectrum of $A^{2}$ is non-negative and therefore we have $\|A\|^{k}=\lambda_{\max}(A^{2})^{k/2}\leq \text{Tr}(A^{k})$ (for even $k$). Using Eq. \eqref{eq:A}, we express $A$ as $C\sum_{\tilde{S}\subseteq[2n_{2}+n_{1}]}Q_{\tilde{S}}C_{\tilde{S}}$ for convenience, where $C$ is a non-negative constant, $Q_{\tilde{S}}$ are real random variables, and $C_{\tilde{S}}$ denotes a Hermitian (even) Majorana monomial. In addition, we define the random variable (which is obtained by replacing Majorana monomials in $A$ with $1$)
\begin{align}
    &A(1) := C\sum_{\tilde{S}\subseteq[2n_{2}+n_{1}]}Q_{\tilde{S}}\notag\\
    &=C \frac{1}{\sqrt{n_2}}\binom{n_1}{q-1}^{-3/2}\sum_{j,k,l=1}^{n_{2}}\sum_{S,S',S''}J_{S,j}J_{S',k}J_{S'',l}\notag\\
    &\hspace{9em}\times f(S,S',S'',j,k,l).
\label{eq:A(1)}
\end{align}
If we now \textit{assume} that

\begin{align}
    \mathbb{E}\big( Q_{\tilde{S}_{1}}\ldots Q_{\tilde{S}_{k}} \big)&\geq 0 \quad \text{and} \nonumber \\
    \mathbb{E}\Big( A(1)^{k} \Big)/\alpha^{k}&\leq 1/2^{n_{2}+n_{1}/2},
\label{eq:conditions}
\end{align}
both hold for some even $k$ and some constant $\alpha$ (note that the first condition will automatically be satisfied since $\{J_{S,j}\}$ is a collection of independent standard Gaussian random variables), then for even $k$ we can establish
\begin{center}
\begin{align}
    \mathbb{E}\big( \|A\| \big)^{k} &\leq \mathbb{E}\big( \|A\|^{k} \big) \nonumber \\
    &\leq \mathbb{E}\big(\text{Tr}(A^{k})\big) \nonumber \\ 
    &=\: C^{k}\hspace{-1.0em}\sum_{\substack{\tilde{S}_{1},\ldots,\tilde{S}_{k}\\ \subseteq[2n_{2}+n_{1}]}}\hspace{-1.0em}\mathbb{E}\big(Q_{\tilde{S}_{1}}... Q_{\tilde{S}_{k}}\big) \text{Re}\big[\text{Tr}\big( C_{\tilde{S}_{1}}... C_{\tilde{S}_{k}} \big)\big] \nonumber \\
    &\leq\: 2^{n_{2}+n_{1}/2}\:C^{k}\hspace{-1.0em}\sum_{\substack{\tilde{S}_{1},\ldots,\tilde{S}_{k}\\ \subseteq[2n_{2}+n_{1}]}}\hspace{-1.0em}\mathbb{E}\big(Q_{\tilde{S}_{1}}... Q_{\tilde{S}_{k}}\big) \nonumber \\
    &=\: 2^{n_{2}+n_{1}/2}\:\mathbb{E}\big( A(1)^{k} \big) \leq \alpha^{k},
\end{align}
\end{center}
where the first inequality is again Jensen's inequality and we have also used that $\mathbb{E}\big(\text{Tr}(A^{k})\big)$ is real (since $A$ is Hermitian) and that $\text{Re}\big[\text{Tr}\big( C_{\tilde{S}_{1}}... C_{\tilde{S}_{k}} \big)\big]$ is always at most $2^{n_{2}+n_{1}/2}$ (note that $\text{Tr}\big( C_{\tilde{S}_{1}}... C_{\tilde{S}_{k}} \big)$ equals $2^{n_{2}+n_{1}/2}$ up to integer powers of $i$, but imaginary contributions vanish in the sum). This establishes Eq. \eqref{expeqref}, and thereby the desired result. Therefore, what is left is to show that the second condition in Eq. \eqref{eq:conditions} is satisfied.

From this point onward, we shall take $n_{1}$ and $n_{2}$ proportional to $n$, where $2n = 2n_{2}+n_{1}$ denotes the total number of Majorana operators. We now show that the second condition in Eq. \eqref{eq:conditions} is satisfied for $k = 2n$ and $\alpha = C\sqrt{n}$. In order to do so, we show that $\mathbb{E}\big( A(1)^{2n} \big)\leq (C\sqrt{n})^{2n}$ (where the factor of $2^{n_{2}+n_{1}/2}$ is absorbed in $C^{2n}$). To that end, we thus need to find an upper bound on the ($2n$)th moment of the random variable $A(1)$ in Eq. \eqref{eq:A(1)}. 

In Appendix \ref{app:momentbound}, we derive this upper bound and indeed show that $\mathbb{E}\big( A(1)^{2n} \big)\leq (C\sqrt{n})^{2n}$. Therefore,
\begin{multline}
\mathbb{E}\big( \|\: [\zeta,[\zeta,H^{(2)}]] \:\| \big) \leq C\sqrt{n} \quad \text{and}\\ \text{Pr}\Big[ \|\: [\zeta,[\zeta,H^{(2)}]] \:\| \geq 2C\sqrt{n} \Big] \leq \exp\big(\!-\!\Omega(n)\big),
\label{eq:higherorder}
\end{multline}
which is the desired result.

Combining Eq.~\eqref{eq:BCH}, Eqs.~ \eqref{eq:firstorderI},\eqref{eq:firstorderII} and Eq. \eqref{eq:higherorder}, we conclude that there exists a $\theta = O(1)$ such that
\begin{equation}
    \text{Tr}(H^{(2)}\rho_{\theta}) \geq C\sqrt{n},
\end{equation}
with probability at least $1-\exp\big(\!-\!\Omega(n)\big)$.
\end{proof}

What is left is to show that this result also holds for the standard SYK Hamiltonian. This translation from $2$-colored SYK Hamiltonian to standard SYK Hamiltonian is given in Lemma \ref{lem:2colouredtostandard} below, and its proof is given in Appendix \ref{app:2coltostandard}.

\begin{lemma}
For the class of $q$-local SYK Hamiltonians (with even $q\geq 4$) in Eq.~\eqref{SYK_H}, $\rho_{\theta}$ (defined in Eq.~\eqref{eq:defrhotheta}) achieves $\text{\normalfont{Tr}}(H\rho_{\theta})\geq C\sqrt{n}$ with probability at least $1-\exp\big(-\Omega(n)\big)$ over the draw of Hamiltonians, provided that $\rho_{\theta}$ achieves $\text{\normalfont{Tr}}(H^{(2)}\rho_{\theta})\geq C\sqrt{n}$ (with $H^{(2)}$ the $2$-coloured SYK Hamiltonian defined in Eq.~\eqref{SYK_2}) with probability at least $1-\exp\big(-\Omega(n)\big)$ over the draw of $2$-coloured Hamiltonians.
\label{lem:2colouredtostandard}
\end{lemma}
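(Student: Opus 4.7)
My plan is to decompose the standard SYK Hamiltonian $H$ on $2n$ Majoranas into a ``matching'' piece proportional to the 2-colored Hamiltonian $H^{(2)}$ and a ``residual'' piece whose contribution is negligible. First I will label the $2n$ Majoranas as a disjoint union $\{\phi_1,\ldots,\phi_{n_1}\}\sqcup\{\chi_1,\ldots,\chi_{n_2}\}\sqcup\{\sigma_1,\ldots,\sigma_{n_2}\}$ with $n_1+2n_2=2n$ and both $n_1,n_2$ chosen proportional to $n$. This labeling matches the mode structure on which $\rho_\theta$ is defined (recall $\rho_\theta$ lives on $n_1+2n_2$ Majoranas). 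Then $H$ splits as $H=\sum_{a+b+c=q}H_{a,b,c}$, where $H_{a,b,c}$ collects those terms $J_I C_I$ whose index set $I$ contains exactly $a$ indices from $\{\phi\}$, $b$ from $\{\chi\}$, and $c$ from $\{\sigma\}$.

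Under the natural identification of $J_I$ (for $I=S\cup\{j\}$ with $S$ a $(q-1)$-subset of the $\phi$-indices and $j$ a $\chi$-index) with the coupling $J_{S,j}$ of $H^{(2)}$, a direct computation gives $H_{q-1,1,0}=\sqrt{N_1/\binom{2n}{q}}\,H^{(2)}$ with $N_1=n_2\binom{n_1}{q-1}$. For $n_1,n_2\propto n$ and $q=O(1)$ the ratio $N_1/\binom{2n}{q}$ is bounded below by an absolute positive constant, so the hypothesis ${\rm Tr}(H^{(2)}\rho_\theta)\geq C\sqrt{n}$ immediately yields ${\rm Tr}(H_{q-1,1,0}\rho_\theta)\geq C'\sqrt{n}$ on the same high-probability event.

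Next I will control $|{\rm Tr}(H_{a,b,c}\rho_\theta)|$ for every other color class $(a,b,c)\neq(q-1,1,0)$. The key observation is that $\rho_\theta$ depends only on the couplings $\{J_{S,j}\}$ entering $H^{(2)}$, which are \emph{independent} of the couplings appearing in $H_{a,b,c}$ for any other color class. Conditional on $\rho_\theta$, the quantity ${\rm Tr}(H_{a,b,c}\rho_\theta)=\binom{2n}{q}^{-1/2}\sum_I J_I\,{\rm Tr}(C_I\rho_\theta)$ is a centered Gaussian with variance $\binom{2n}{q}^{-1}\sum_I {\rm Tr}(C_I\rho_\theta)^2\leq 1$, using $|{\rm Tr}(C_I\rho_\theta)|\leq 1$ since $C_I$ is unitary. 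Standard Gaussian tail bounds then give $|{\rm Tr}(H_{a,b,c}\rho_\theta)|\leq C''\sqrt{n}$ with conditional probability at least $1-\exp(-\Omega(n))$, and this bound survives after integrating out $\rho_\theta$. A union bound over the $O(1)$ color classes combined with the bound on ${\rm Tr}(H_{q-1,1,0}\rho_\theta)$ yields ${\rm Tr}(H\rho_\theta)\geq C\sqrt{n}$ for a sufficiently small constant $C$, with probability $1-\exp(-\Omega(n))$.

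The main obstacle is the Gaussian concentration step for the residual pieces: one has to argue that a deviation of $\Omega(\sqrt{n})$ in a Gaussian with $O(1)$ variance occurs only with probability $\exp(-\Omega(n))$, which is routine but crucial --- if the variance grew with $n$, no union bound over color classes could salvage the argument. A minor bookkeeping step is to pick $n_1,n_2$ so that the prerequisites of Lemma \ref{lemma:maxeiglowbound2color} are met while the ratio $N_1/\binom{2n}{q}$ remains bounded below; a choice such as $n_1\approx n$, $n_2\approx n/2$ accomplishes both.
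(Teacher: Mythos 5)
Your proposal is correct and follows essentially the same route as the paper's Appendix G: split $H$ into the part whose couplings define $H^{(2)}$ (and hence $\rho_\theta$) plus a residual whose couplings are independent of $\rho_\theta$, identify the former with $H^{(2)}$ up to a constant normalization factor, and kill the residual by Gaussian concentration since conditionally it is a centered Gaussian of variance at most one. The only differences are cosmetic: the paper treats the entire residual $H_{\mathcal{T}'}$ as a single Gaussian rather than union-bounding over color classes, and your bookkeeping of the $\sigma$ modes inside the $2n$ Majoranas is if anything more careful than the paper's.
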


This also concludes the proof of Lemma \ref{lemma:maxeiglowbound}, i.e., that $\lambda_{\max}(H) = \Omega(\sqrt{n})$ with probability at least $1-\exp\big(\!-\Omega(n)\big)$ over the draw of standard SYK Hamiltonians.

\section*{Acknowledgements}
J.H. and Y.H acknowledge support from the QSC Zwaartekracht grant (NWO). Y.H., M.S. and B.M.T acknowledge support by QuTech NWO funding 2020-2024 – Part I “Fundamental Research”, project number 601.QT.001-1, financed by the Dutch Research Council (NWO). We thank V. Cheianov, R. O'Donnell, M. Hastings, J. Klassen, J. Liebert and T.E. O’Brien for discussions, as well as N. Ju and J. Jiang for spotted typos.

\newpage

\appendix
\onecolumn

\section{Extensive sets of all anti-commuting terms}
\label{app:lemAC}

One can easily prove that when one maps a dense, non-sparse, fermionic model such as the SYK model onto a qubit Hamiltonian, the locality of the resulting Hamiltonian has to grow as some function of $n$, due to the following Lemma:

\begin{lemma}
Any set of all-mutually anti-commuting Pauli strings $\{Q_i\}_{i=0}^{m-1}$, each of weight at most $k$, on $n$ qubits has cardinality $m$ bounded as
\begin{equation}
    m \leq 3 \times 2^{k (3k-1)},
    \label{eq:card}
\end{equation}
assuming that $k(k-1) < n$.
\label{lem:AC}
\end{lemma}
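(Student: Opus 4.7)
The plan is to prove the bound by induction on $k$, using a two-level pigeonhole argument that peels off one qubit from the common support at each step.

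First, I would record the key geometric observation: if two Pauli strings have disjoint supports, they commute. Hence pairwise anti-commutation forces all supports to pairwise intersect, and in particular $S_i\cap S_0\neq\emptyset$ for every $i\geq 1$, where $S_j:=\mathrm{supp}(Q_j)$ and $|S_0|\leq k$.

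Next, I would assign to each $i\geq 1$ some qubit $q_i\in S_i\cap S_0$ and apply pigeonhole on $|S_0|\leq k$ to extract a qubit $q^*\in S_0$ that lies in at least $(m-1)/k$ of the supports $S_i$. A second pigeonhole over the three non-identity Paulis on $q^*$ then yields a subfamily $B$ of size at least $(m-1)/(3k)$ whose members all act as the same single-qubit Pauli $P$ on $q^*$. Writing $Q_i = P_{q^*}\otimes\tilde{Q}_i$ for $i\in B$ and using $P_{q^*}^{2}=\mathbb{I}$, the restricted operators $\tilde{Q}_i$ live on $n-1$ qubits, have weight at most $k-1$, and inherit mutual anti-commutation from the $Q_i$. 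Letting $f(n,k)$ denote the maximum cardinality of such a family, this delivers the recursion
\begin{equation*}
f(n,k)\leq 3k\, f(n-1,k-1)+1.
\end{equation*}

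I would then close the argument by checking that the hypothesis $k(k-1)<n$ is preserved by the reduction --- since $k(k-1)<n$ implies $(k-1)(k-2)<n-1$ for $k\geq 2$ --- so the induction descends cleanly to the base case $f(n,0)=1$, the identity being the only weight-zero Pauli and being incompatible with any family of size $\geq 2$. A direct comparison of exponents verifies that $3\cdot 2^{k(3k-1)}$ absorbs the recursion: advancing $k-1\mapsto k$, the stated bound gains a factor $2^{6k-4}$, which easily dominates the factor $3k$ picked up by the recursion (already for $k=1$ one has $12\geq 9k+1$).

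The main obstacle is purely bookkeeping: confirming that the pigeonhole always has enough room (this is exactly the role of the $k(k-1)<n$ hypothesis, which propagates under the reduction) and checking that the rather loose target bound dominates the recursion term by term. Beyond that, the structural content of the argument is just the two-step pigeonhole isolating a common single-qubit Pauli on a shared site.
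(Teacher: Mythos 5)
Your proof is correct, and it takes a genuinely different route from the paper's. The paper also peels off support and reduces the weight by one per round, but it does so by bucketing the remaining Paulis according to their \emph{entire} restriction (as a $2k$-bit symplectic vector) to the support of a reference element $Q_0$: there are at most $2^{2k-1}$ admissible restrictions, so the largest bucket retains a fraction $2^{-2k}$ of the family, acts identically on ${\rm supp}(Q_0)$, hence must mutually anticommute outside it with weight at most $k-1$; iterating $k-1$ times yields the factor $2^{k(3k-1)}$ and requires $k(k-1)<n$ so that qubits remain at the last step. You instead isolate a \emph{single} shared qubit $q^*$ via pigeonhole on $|S_0|\leq k$ and then on the three non-identity Paulis at $q^*$, losing only a factor $3k$ per round; the key step — that fixing the same single-qubit Pauli $P$ at $q^*$ transfers anticommutation to the restrictions $\tilde Q_i$ on the remaining $n-1$ qubits — is sound, as is the weight reduction (since $q^*\in S_i$) and the closing of the recursion $f(n,k)\leq 3k\,f(n-1,k-1)+1$ against the stated bound. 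Two remarks: your argument actually proves the quantitatively much stronger bound $m\leq \prod_{j=1}^{k}(3j)+O(3^k k!) = O(3^k k!)$, of which Eq.~\eqref{eq:card} is a loose corollary; and it does not really need the hypothesis $k(k-1)<n$ at all (the single-qubit peeling works for any $n\geq 1$), whereas the paper's multi-qubit peeling genuinely uses it. One small point worth making explicit in a final write-up: when $m\geq 2$ no element can be the identity (it would commute with the others), so $S_0\neq\emptyset$ and the first pigeonhole is well posed, and likewise the base case $f(n,0)=1$ is consistent.
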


\begin{proof}
Take $Q_0$ of weight at most k and let $m-1$ Paulis $Q_i$ anticommute with it. We can represent each Pauli string as a $2n$-bit string $y$, say $Q_0=y_x y_z$ where the Hamming weight $|y_x|\leq k, |y_z| \leq k$. Any other $Q_i$ in the set has to anti-commute with $Q_0$ on the support of the string $y$. First, note that the set of strings of length at most $2k$ which have symplectic inner product equal to 1 (so anti-commute) to a given string of length $2k$ is at most $2^{2k-1}$. 
Now we pick the largest subset ${\cal M}_1$ of the set of elements $Q_1, \ldots Q_{m-1}$ such that all elements in the subset act identically on the support of $Q_0$, i.e. are represented by the same string of length at most $2k$ while differing beyond the support of $Q_0$.
Let the cardinality of this set be $|{\cal M}_1|=m_1 \leq m-1$ and $m_1 \geq \frac{m-1}{2^{2k-1}}\geq \frac{m}{2^{2k}}$ as the largest set should at least be a fraction $1/2^{2k-1}$ of the total.
So now we consider this set ${\cal M}_1$ and their action on the remaining $n-k$ qubits (outside the support of $Q_0$), where these elements all have to anti-commute. In addition, each element has Pauli weight at most $k-1$ (as we had to overlap with at least one Pauli with $Q_0$).
We then reapply this argument on this set, leading to a new set ${\cal M}_2$ with $|{\cal M}_2|=m_2\geq \frac{m_1-1}{2^{2(k-1)-1}}$ acting on $n-2k$ qubits and having weight $k-2$ etc. We can reiterate this process $l$ times so that the remaining weight of the set of Pauli strings ${\cal M}_l$ has $k-l=1$. This implies that ${\cal M}_l$ can contain at most 3 elements since they all need to anti-commute on a single qubit (assuming that $n- kl > 0$ or $n-k (k-1) > 0$). So we have 
\begin{equation}
3 \geq |{\cal M}_{l=k-1}|=m_{k-1} \geq \frac{m}{4^{k+(k-1)+\ldots +l}}=
\frac{m}{2^{k(3k-1)}}.
\end{equation}
\end{proof}

The SYK-$4$ model contains large (of size $n$) sets of mutually anti-commuting terms. An example is the set of all terms which only overlap on one fixed Majorana. Lemma \ref{lem:AC} then shows that any fermion-to-qubit mapping (an encoding possibly using more qubits) will require the weight of some of the resulting Pauli terms to grow as a function of $n$. Note that the actual mapping by Bravyi and Kitaev \cite{BK:ferm} with $k=O(\log n)$ shows that the upper bound in Eq.~\eqref{eq:card} is not completely tight. 

Another straightforward observation on the energy scaling of a model where all terms anti-commute is that $\lambda_{\rm max}$ does not necessarily scale with the number of terms, as captured by the following Lemma

\begin{lemma}
Let $H=\sum_{i\in \mathcal{I}} J_I C_I$ where the $\{C_I\}$ are a set of all-mutually anti-commuting Majorana operators on $[2n]$ (each $C_I$ has even support). Then \begin{equation}
    \lambda_{\rm max}(H)=\sqrt{\sum_I J_I^2}.
\end{equation}
\label{lem:ACspectrum}
\end{lemma}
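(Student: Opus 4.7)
The plan is to compute $H^2$ directly and exploit the anticommutation relations. Since each $C_I$ has even support and satisfies $C_I^2=\mathbb{I}$, and since all distinct pairs anticommute, the product structure collapses: expanding
\[
H^2 = \sum_{I,I'\in\mathcal{I}} J_I J_{I'} \, C_I C_{I'},
\]
I split the sum into diagonal terms $I=I'$ and off-diagonal pairs. The diagonal contribution is $\sum_I J_I^2 \, \mathbb{I}$. For the off-diagonal part, the pair $(I,I')$ and $(I',I)$ contribute $J_I J_{I'}(C_I C_{I'} + C_{I'} C_I) = 0$, using $\{C_I, C_{I'}\}=0$. Hence
\[
H^2 = \Big(\sum_I J_I^2\Big)\mathbb{I}.
\]

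From $H^2 = c\,\mathbb{I}$ with $c:=\sum_I J_I^2$, the eigenvalues of the Hermitian operator $H$ are contained in $\{+\sqrt{c},-\sqrt{c}\}$, so $\lambda_{\max}(H)\leq \sqrt{c}$. To get equality, I invoke tracelessness: each $C_I$ is a nontrivial even Majorana monomial, so $\mathrm{Tr}(C_I)=0$, and therefore $\mathrm{Tr}(H)=0$. Since the spectrum of $H$ consists only of $\pm\sqrt{c}$, tracelessness forces equal multiplicities of the two eigenvalues (in particular both must occur whenever $c>0$), so $\lambda_{\max}(H)=\sqrt{c}=\sqrt{\sum_I J_I^2}$, as claimed. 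The degenerate case $c=0$ is trivial since then all $J_I=0$ and $H=0$.

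There is no real obstacle to this argument; the only subtlety is ensuring that $+\sqrt{c}$ is actually attained rather than just being an upper bound, which is handled by tracelessness. Note that this observation illustrates the phenomenon mentioned just before the lemma: if one had $m$ all-anticommuting terms of comparable magnitude $|J_I|\sim J$, the maximal eigenvalue scales like $J\sqrt{m}$ rather than $Jm$, so the energy density $\lambda_{\max}/m$ decays as $1/\sqrt{m}$, which is consistent with the frustrated scaling discussed for the SYK-$4$ model.
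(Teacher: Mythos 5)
Your proof is correct, and it takes a genuinely different route from the paper's. You compute $H^2=\bigl(\sum_I J_I^2\bigr)\mathbb{I}$ directly from the anticommutation relations, which pins the spectrum to $\{\pm\sqrt{c}\}$, and then use tracelessness of the Majorana monomials to force both signs to occur with equal multiplicity, so the upper bound is attained. The paper instead normalizes $H=\sqrt{\sum_I J_I^2}\,\sum_I\beta_I C_I$ with $\sum_I\beta_I^2=1$, exhibits the explicit state $\rho=\tfrac{1}{2^n}(\mathbb{I}+\sum_I\beta_I C_I)$ achieving ${\rm Tr}(H\rho)=\sqrt{\sum_I J_I^2}$, and argues optimality by noting that the $\{C_I\}$ generate the same algebra as single Majorana operators, so $\sum_I\beta_I c_{i(I)}$ is itself a single (rotated) Majorana with spectrum $\pm1$. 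The two arguments rest on the same algebraic fact (an all-anticommuting family of involutions behaves like a Clifford algebra), but yours is more self-contained for establishing the eigenvalue — the trace argument replaces both the explicit witness state and the algebra-isomorphism step — while the paper's version has the side benefit of producing a concrete state that saturates the bound, which fits the constructive spirit of the rest of the paper. Your closing remark about the $\sqrt{m}$ scaling of the maximal energy matches exactly the point the paper makes after the lemma.
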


\begin{proof}
We have $H=\sum_I J_I C_I=\sqrt{\sum_I J_I^2} \sum_I \beta_I C_I$ with $\sum_I \beta_I^2=1$. Take the state $\rho=\frac{1}{2^{n}}(\mathbb{I}+\sum_I \beta_I C_I)$ and thus ${\rm Tr}(H \rho)=\sqrt{\sum_I J_I^2} \sum_I \beta_I^2=\sqrt{\sum_I J_I^2}$. This is the maximal eigenvalue that can be reached since one can map each $c_I$ onto a single Majorana operator $c_{i(I)}$ as these sets form identical algebras. Then we can use the normalization of $\beta_I$ to view $\sum_i \beta_I c_{i(I)}=\tilde{c}_1$ with single Majorana operator $\tilde{c}_1$ (this is an example of the transformation in Eq.~\eqref{eq:Rtrafo}).  A single Majorana $\tilde{c}_1$ has spectrum $\pm 1$ and hence the (hugely degenerate) spectrum of $H$ is simply $\pm \sqrt{\sum_I J_I^2}$.\end{proof}

Thus, if all $J_I$ are of similar strength, we observe that the overall maximal energy scales as $\sqrt{|\mathcal{I}|}$ rather than $|\mathcal{I}|$.

\section{Splitting sparse Hamiltonians into diffuse interaction sets}
\label{sec:diffuse_splitting_general}

\begin{lemma*}[Repetition of Lemma \ref{lem:diffuse_splitting_general}]

Let ${\cal I}$ be the interaction set of a $k$-sparse $q$-local Hamiltonian on the set of fermions $[2n]$. The set ${\cal I}$ can be split into $(qQ)/2$ disjoint, strictly $2q'$-local subsets ${\cal I}^{(2q')}_\alpha$ (with $\alpha \in [Q]$ and $q'\in [q/2]$) each of which is diffuse with respect to ${\cal I}$:
\begin{align}
    {\cal I}=\bigcup^{q/2}_{q'=1} \bigcup^{Q}_{\alpha=1} {\cal I}^{(2q')}_\alpha.
\end{align}
The parameter $Q=q(q-1)(k-1)^2+q(k-1)+2$ does not grow with $n$. The construction of this splitting can be done efficiently, in time $\mathrm{poly}(n)$.
\end{lemma*}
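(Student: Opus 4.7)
The plan is to proceed in three stages: partition $\mathcal{I}$ by interaction weight, color a conflict graph on each weight class, and then enforce the support bound of condition~3.

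First, I would write $\mathcal{I} = \bigsqcup_{q'=1}^{q/2}\mathcal{I}^{(2q')}$ with $\mathcal{I}^{(2q')} := \{I \in \mathcal{I} : |I| = 2q'\}$. This makes every eventual subset strictly $2q'$-local by construction, reducing the task to producing, for each fixed $q'$, a partition of $\mathcal{I}^{(2q')}$ into at most $Q$ subsets that are diffuse with respect to the full $\mathcal{I}$.

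Second, for each weight class I would introduce a \emph{conflict graph} $G_{q'}$ whose vertices are the interactions in $\mathcal{I}^{(2q')}$ and whose edges connect any pair $(I_1,I_2)$ failing condition~1 or condition~2 of Definition~\ref{def:diffuse}. The central combinatorial content is that $G_{q'}$ has bounded maximum degree. For a fixed $I_1$, a condition-1 neighbor must share one of the at most $q$ Majoranas in $I_1$, and each such Majorana lies in at most $k-1$ other interactions, contributing $\leq q(k-1)$ neighbors. A condition-2 neighbor $I_2$ that is not already a condition-1 neighbor arises via a length-two walk $I_1 \to I_3 \to I_2$ in which $I_3 \in \mathcal{I}$ shares a Majorana $c_i$ with $I_1$ and a distinct Majorana $c_j \neq c_i$ with $I_2$; enumerating the choices $(c_i,I_3,c_j,I_2)$ gives at most $q(k-1)(q-1)(k-1) = q(q-1)(k-1)^2$ such $I_2$. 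Hence $\Delta(G_{q'}) \leq q(q-1)(k-1)^2 + q(k-1)$, and a standard greedy coloring (in time polynomial in $|\mathcal{I}^{(2q')}|$, since $G_{q'}$ can itself be built in polynomial time from the sparsity structure) produces a proper coloring with at most $\Delta(G_{q'})+1 \leq Q-1$ colors. Every color class is an independent set of $G_{q'}$ and therefore already satisfies conditions~1 and~2 of diffuseness.

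Third, I would enforce condition~3 by a post-processing step: for any color class $\mathcal{C}$ with $|\mathrm{Sup}(\mathcal{C})| \geq \frac{2qn}{q+1}$, split it into two subclasses. Because the interactions in $\mathcal{C}$ are pairwise Majorana-disjoint (condition~1 already holds for $\mathcal{C}$), supports add over any bipartition; since $|\mathrm{Sup}(\mathcal{C})| \leq 2n < 2\cdot\frac{2qn}{q+1}$ for $q\geq 2$, a single balanced split brings each half strictly below the threshold. Conditions~1 and~2 are trivially preserved under splitting, so each resulting piece is diffuse. The final count is then claimed to be at most $Q = q(q-1)(k-1)^2 + q(k-1) + 2$ per weight class, totalling $qQ/2$ subsets.

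The main obstacle I expect is the tight accounting in this third step: showing that the splitting only costs \emph{one} additional color beyond the $\Delta(G_{q'})+1$ coming from the greedy coloring, rather than a number growing with $k$. I would attack this by combining the global bound $\sum_\alpha |\mathrm{Sup}(\mathcal{C}_\alpha)| \leq 2nk$ (which restricts how many classes can simultaneously sit near the support cap) with a careful ordering in the greedy procedure that keeps excess concentrated in at most one class. The remaining ingredients---weight partitioning, the degree bound, and greedy coloring---are routine, and the whole construction runs in $\mathrm{poly}(n)$ time since all constants depend only on $q$ and $k$.
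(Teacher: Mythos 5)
Your construction follows the same skeleton as the paper's proof (conflict graph encoding conditions 1 and 2, degree bound $q(q-1)(k-1)^2+q(k-1)$, greedy coloring, then a post-processing split to enforce the support cap), and the first two stages are correct. But there is a genuine gap exactly where you flag it: you assume that at most one color class per weight level needs to be split, and the repair you sketch does not deliver this. The global bound $\sum_\alpha |\mathrm{Sup}(\mathcal{C}_\alpha)|\leq 2nk$ only limits the number of classes with support $\geq \frac{2qn}{q+1}$ to roughly $k(q+1)/q$, so splitting each of them could cost $\Omega(k)$ extra classes, not one; and there is no evident greedy ordering that concentrates the excess support in a single class, since the coloring is driven by the conflict structure, not by support sizes. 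Without closing this, your count is $Q-1+\Omega(k)$ rather than $Q$ per weight level.

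The missing ingredient is a structural (not counting) argument: \emph{for each fixed $q'$, at most one color class can have support $\geq \frac{2qn}{q+1}$.} Suppose $\mathcal{C}$ and $\mathcal{C}'$ are two color classes in the weight-$2q'$ level, each satisfying conditions 1 and 2 with respect to the full $\mathcal{I}$, and suppose $|\mathrm{Sup}(\mathcal{C})|\geq \frac{2qn}{q+1}$. Take any $I\in\mathcal{C}'$. It cannot lie inside a single term of $\mathcal{C}$, because both have cardinality exactly $2q'$ and are distinct; and it cannot intersect two distinct terms of $\mathcal{C}$, because that would violate condition 2 for $\mathcal{C}$ (with $I$ playing the role of $I_3$). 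Hence every $I\in\mathcal{C}'$ contains at least one Majorana from $[2n]\setminus\mathrm{Sup}(\mathcal{C})$, which has fewer than $\frac{2n}{q+1}$ elements. Since the interactions of $\mathcal{C}'$ are pairwise disjoint, $|\mathcal{C}'|<\frac{2n}{q+1}$ and therefore $|\mathrm{Sup}(\mathcal{C}')|=2q'|\mathcal{C}'|< \frac{2q'\cdot 2n}{q+1}\leq \frac{2qn}{q+1}$, so $\mathcal{C}'$ already satisfies condition 3. With this in hand, your split costs exactly one extra class per weight level and your count $\Delta+1+1\leq Q$ goes through; everything else in your write-up (weight partitioning, the degree bound, greedy coloring, and the observation that disjointness makes supports additive under the split) is sound and matches the paper's argument.
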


\begin{proof}

Consider a graph ${\cal G}$ with vertices corresponding to interaction sets $I\in{\cal I}$, where two interaction sets $I_1, I_2$ are connected with an edge if either 1. they share at least one Majorana operator or 2. $I_1$ and $I_2$ both share Majorana operators with another set $I'\neq I_1, I_2$. For a $q$-local $k$-sparse Hamiltonian, ${\cal G}$ has maximal degree $Q'$ with $Q'=q(q-1)(k-1)^2+q(k-1)$. Here $q(k-1)$ is the maximal number of interactions $I_2$ directly sharing a Majorana fermion with any given interaction $I_1$, and $q(q-1)(k-1)^2$ is the maximal number of interactions satisfying condition 2. Since a $Q'$-sparse graph is vertex-colorable by at most $(Q'+1)$ colors \cite{Bollobas}, we can split ${\cal I}$ into $(Q'+1)$ subsets ${\cal I}_\alpha$, s.t. any two interactions $I_1,I_2$ from a set ${\cal I}_\alpha$ are not connected by an edge in ${\cal G}$. By definition of ${\cal G}$, this amounts to sets ${\cal I}_\alpha$ satisfying the first two conditions of Definition \ref{def:diffuse}. A greedy algorithm can be used to assign the vertices ${\cal G}$ with $(Q'+1)$ colors, so ${\cal I}_\alpha$ can be constructed efficiently. 

Each interaction set ${\cal I}_\alpha$ can contain terms of different weight.  For each value of $\alpha$ we define strictly $2q'$-local sets ${\cal I}^{(2q')}_\alpha$ (for $q'=1,..,q/2$) by restricting to the  strictly $2q'$-local part of ${\cal I}_\alpha$. This gives a splitting of ${\cal I}$ into efficiently constructable subsets  ${\cal I}^{(2q')}_\alpha$:
\begin{align}
    {\cal I}=\bigcup^{q/2}_{q'=1} \bigcup^{Q'+1}_{\alpha=1} {\cal I}^{(2q')}_\alpha,
\end{align}
where all sets ${\cal I}^{(2q')}_\alpha$ satisfy conditions $1$ and $2$ in Definition \ref{def:diffuse}.

The rest of the proof is concerned with the third condition of a diffuse set in Definition \ref{def:diffuse}, for all sets ${\cal I}^{(2q')}_\alpha$. This means ensuring that for all values of $\alpha$ and $q'$, the support size $|\mathrm{Sup}({\cal I}^{(2q')}_\alpha)|$ is smaller than $2n\frac{q}{q+1}$ . Fix $q'$ and consider sets ${\cal I}^{(2q')}_\alpha$ for  $\forall \alpha \in [Q'+1]$. 

Consider the case where $|\mathrm{Sup}({\cal I}^{(2q')}_\alpha)|< 2n\frac{q}{q+1}$ does not hold for at least one value of $\alpha$, which we set to be $\alpha=Q'+1$ without loss of generality.

Let us prove that the violation $|\mathrm{Sup}({\cal I}_\beta^{(2q')})|\geq 2n\frac{q}{q+1}$ cannot hold for any $\beta\neq Q'+1$. 
Firstly, no interaction $I$ from ${\cal I}^{(2q')}_\beta$ can be a strict subset of an interaction in ${\cal I}^{(2q')}_{Q'+1}$ or share Majoranas with two terms in ${\cal I}^{(2q')}_{Q'+1}$ simultaneously. The first scenario is excluded since ${\cal I}^{(2q')}_{Q'+1}$ and ${\cal I}^{(2q')}_{\beta}$ are both strictly $2q'$-local and the second scenario is excluded because ${\cal I}^{(2q')}_{Q'+1}$ satisfies condition 2 of Definition \ref{def:diffuse}. 
From these two facts it follows that each interaction in ${\cal I}^{(2q')}_\beta$ must involve at least one Majorana from $[2n] \backslash \mathrm{Sup}({\cal I}^{(2q')}_{Q'+1})$. This implies
\begin{align}
|\mathrm{Sup}({\cal I}^{(2q')}_\beta)|\leq2q'|[2n]\backslash\mathrm{Sup}({\cal I}^{(2q')}_{Q'+1})|,
\end{align}
This can be further bounded as $|\mathrm{Sup}({\cal I}^{(2q')}_\beta)|\leq q |[2n]\backslash\mathrm{Sup}({\cal I}^{(2q')}_{Q'+1})|$, because $2q'\leq q$. Since we assumed $|\mathrm{Sup}({\cal I}^{(2q')}_{Q'+1})|\geq 2n q/(q+1)$ and thus $|[2n]\backslash\mathrm{Sup}({\cal I}^{(2q')}_{Q'+1})|<2n /(q+1)$, it follows that $|\mathrm{Sup}({\cal I}^{(2q')}_\beta)|\leq q|[2n]\backslash\mathrm{Sup}({\cal I}^{(2q')}_{Q'+1})|<2n q /(q+1)$. Thus we have shown that for a given $q'$, the condition 3 of Definition \ref{def:diffuse} -- indeed cannot be violated by more than one ${\cal I}^{(2q')}_{\alpha}$.

Consider all $q'$ for which there exists a violation $|{\rm Sup}({\cal I}^{(2q')}_{Q'+1})|\geq 2n\frac{q}{q+1}$. Since $\frac{q}{q+1}>\frac{1}{2}$ for any $q$, this violation can be fixed by splitting ${\cal I}^{(2q')}_{Q'+1}$ in half. Introduce non-overlapping sets ${\tilde{\cal I}}^{(2q')}_{Q'+1}$ and $\tilde{\tilde{\mathcal{I}}}^{(2q')}_{Q'+1}$ of sizes $\lfloor|{\cal I}^{(2q')}_{Q'+1}|/2\rfloor$ and $\lceil|{\cal I}^{(2q')}_{Q'+1}|/2\rceil$: ${\cal I}^{(2q')}_{Q'+1}={\tilde{\cal I}}^{(2q')}_{Q'+1}\cup \tilde{\tilde{\mathcal{I}}}^{(2q')}_{Q'+1}$. By implication, $|\mathrm{Sup}(\tilde{\tilde{\mathcal{I}}}^{(2q')}_{Q'+1})| \leq  2n/2\leq 2n q/(q+1)$ and similarly $|\mathrm{Sup}({\tilde{\mathcal{I}}}^{(2q')}_{Q'+1})| \leq 2nq/(q+1)$. We conclude the construction by modifying the set ${\cal I}^{(2q')}_{\alpha}$ for the considered $q'$: we redefine ${\cal I}^{(2q')}_{Q'+1}\equiv{\tilde{\mathcal{I}}}^{(2q')}_{Q'+1}$, and introduce one extra interaction set ${\cal I}^{(2q')}_{Q'+2}\equiv\tilde{\tilde{\mathcal{I}}}^{(2q')}_{Q'+1}$. 

The proof can now be finalized. Performing the above procedure for all $q'$ where a violation was present, and completing the $\{{\cal I}^{(2q')}_{\alpha}\}$ without such violations with ${\cal I}^{(2q')}_{\alpha=Q'+2}=\emptyset$, we arrive at the splitting
\begin{align}
    {\cal I}=\bigcup^{q/2}_{q'=1} \bigcup^{Q}_{\alpha=1} {\cal I}^{(2q')}_\alpha,
\end{align}
where $Q=Q'+2=q(q-1)(k-1)^2+q(k-1)+2$. Interaction sets ${\cal I}^{(2q')}_\alpha$ are diffuse (satisfying all three conditions of Definition \ref{def:diffuse}) with respect to ${\cal I}$ for all $q'$ and $\alpha$. The construction of ${\cal I}^{(2q')}_\alpha$ is efficient, because each step can be implemented in time $\rm{poly}(n)$.

\end{proof}

\begin{figure}[t]
    \centering
    \includegraphics[width=\linewidth]{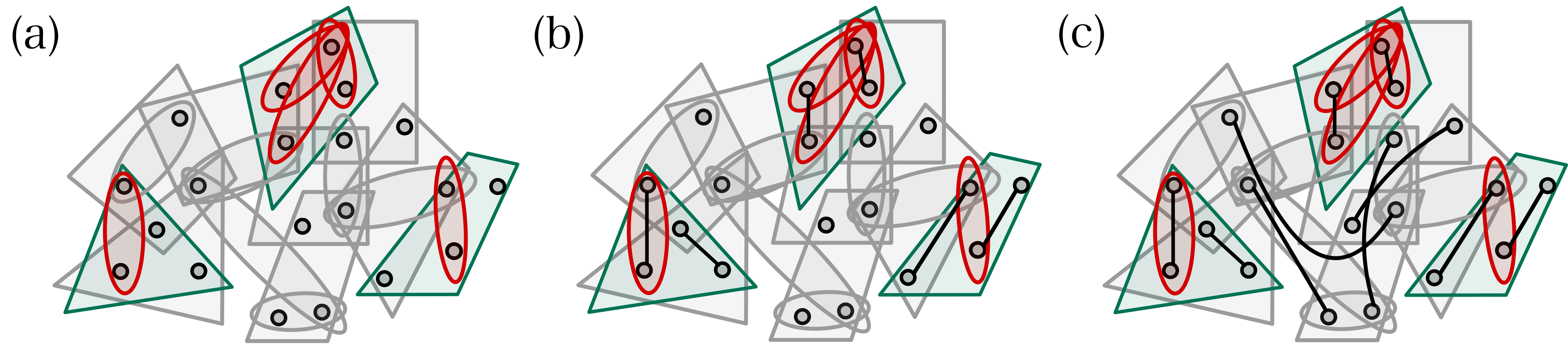}
    \caption{Example of the construction from the proof of Lemma \ref{lem:diffuse_matching_general}. (a) The $q$-local set of interactions $\mathcal{I}$ ($q=4$). Highlighted in green is diffuse and strictly $q'$-local $\mathcal{I}'$ ($q'=4$), in red are the interactions in $\rm{Sup(\mathcal{I}')}$ with weight less than $q'$, in grey are the rest of interactions in $\mathcal{I}$. The goal is to create a matching $M$ consistent with green-colored terms, inconsistent with grey-colored terms, and with no guaranteed relation to the red-colored terms. (b) Matching $M'$ on $\rm{Sup(\mathcal{I}')}$, consistent with $\mathcal{I}'$ by construction. Ensuring inconsistency with all red-colored terms is in general impossible. For example, consider the three overlapping red-colored terms at the top center. (c) Completing $M=M'\cup M''$ with a matching $M''$ on $[2n]\backslash\rm{Sup(\mathcal{I}')}$, ensuring inconsistency with all grey-colored terms. For this, the vertices are matched only if they belong to different interactions.}
    \label{fig:illustration_matching}
\end{figure}

\section{Majorana matchings from diffuse interaction sets}
\label{sec:diffuse_matching_general}

\begin{lemma*}[Repetition of Lemma \ref{lem:diffuse_matching_general}]
Let a strictly $q'$-local $\cal{I}'$ be diffuse w.r.t. $q$-local $k$-sparse $\cal{I}$ on $[2n]$, such that $n>(q^2-1)k$. One can efficiently construct a matching $M$ of $[2n]$ that is consistent with $\cal{I}'$ and inconsistent with all interactions $I \in \mathcal{I}\backslash \mathcal{I}'$ such that (1) $|I|\geq q'$ or (2) $I\not\subset \mathrm{Sup}(\cal{I}')$. 
\end{lemma*}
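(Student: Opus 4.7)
The plan is to construct $M$ as the disjoint union $M = M' \cup M''$, where $M'$ matches the Majoranas inside $\mathrm{Sup}(\mathcal{I}')$ and $M''$ matches the remaining $N := 2n - |\mathrm{Sup}(\mathcal{I}')|$ Majoranas. Condition 1 of Definition \ref{def:diffuse} says that distinct interactions in $\mathcal{I}'$ have disjoint supports, so I can build $M'$ by independently breaking each $I \in \mathcal{I}'$ (of even weight $q'$) into an arbitrary set of $q'/2$ pairs. By construction $M'$ is consistent with $\mathcal{I}'$, and $\mathrm{Sup}(\mathcal{I}')$ has even cardinality so $N$ is even as well.

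For $M''$ the goal is to ensure that no edge lies inside any interaction of $\mathcal{I}$. I would achieve this by finding a perfect matching in the auxiliary graph $G$ on vertex set $[2n]\setminus \mathrm{Sup}(\mathcal{I}')$, whose edges $(i,j)$ are precisely those pairs not simultaneously contained in any $I\in\mathcal{I}$. Since $\mathcal{I}$ is $k$-sparse and $q$-local, each vertex appears in at most $k$ interactions, each of which contains at most $q-1$ further Majoranas; hence the complement of $G$ has maximum degree at most $k(q-1)$, and the minimum degree of $G$ is at least $N - 1 - k(q-1)$. Combining Condition 3 of diffuseness, $N > 2n/(q+1)$, with the hypothesis $n > (q^2-1)k$, I would verify that $N - 1 - k(q-1) \geq N/2$, so that Dirac's theorem \cite{dirac1952some} yields a Hamiltonian cycle in $G$, and therefore the desired perfect matching $M''$. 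Standard blossom-type algorithms produce $M''$ in polynomial time.

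It remains to check that $M = M'\cup M''$ is inconsistent with every $I \in \mathcal{I}\setminus\mathcal{I}'$ satisfying $|I|\geq q'$ or $I\not\subset \mathrm{Sup}(\mathcal{I}')$. For interactions $I\not\subset \mathrm{Sup}(\mathcal{I}')$, pick $i\in I\setminus\mathrm{Sup}(\mathcal{I}')$; its partner $j$ in $M''$ satisfies $j\notin I$ by construction of $G$, so $I$ cannot be perfectly matched by $M$. The remaining case $I\subset \mathrm{Sup}(\mathcal{I}')$, $|I|\geq q'$ can be dispatched using Condition 2 of diffuseness: if $I$ shared Majoranas with two distinct elements of $\mathcal{I}'$, Condition 2 would be violated; hence $I$ must sit inside the support of a single $I_0\in\mathcal{I}'$, and strict $q'$-locality of $\mathcal{I}'$ forces $I=I_0$, contradicting $I\notin\mathcal{I}'$. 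So this case is vacuous, and the verification is complete.

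The main obstacle is the min-degree step: one must tune the diffuseness support bound against the sparsity-induced degree bound so that the hypothesis $n>(q^2-1)k$ is exactly strong enough to make $G$ Dirac. Everything else is essentially bookkeeping on the two disjoint halves of the matching.
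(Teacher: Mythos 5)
Your proposal is correct and follows essentially the same route as the paper: the split $M=M'\cup M''$ with $M'$ pairing each interaction of $\mathcal{I}'$ internally, the ``permitted-edge'' graph on $[2n]\setminus\mathrm{Sup}(\mathcal{I}')$ whose minimum degree is forced above half its order by combining Condition 3 of diffuseness with $n>(q^2-1)k$, Dirac's theorem to extract the perfect matching $M''$, and the observation via Condition 2 and strict $q'$-locality that the case $|I|\geq q'$, $I\subset\mathrm{Sup}(\mathcal{I}')$ is vacuous. The only differences are cosmetic (you handle that reduction at the end rather than up front, and you are slightly more careful with the $-1$ in the degree bound).
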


\begin{proof}

We first note that for $I\in \mathcal{I}\backslash{{\cal I}'}$ the condition $|I|\geq q'$ implies $ I\not \subset\rm{Sup}({\cal I}')$. Indeed, there are two possible options for $I\in \mathcal{I}\backslash{{\cal I}'}$ such that $I\subset\rm{Sup}({\cal I}')$.
The first option is that $I$ is a strict subset of a single interaction from ${\cal I}'$. However, this is not possible given $|I|\geq q'$, because ${\cal I}'$ is $q'$-local.
The second option is for $I$ to share Majorana modes with two or more interactions in ${\cal I}'$. This is ruled out because ${\cal I}'$ is diffuse with respect to $\mathcal{I}$ (cf. Condition 2 in Definition \ref{def:diffuse}).
The above implies that it is sufficient to construct the matching $M$ that is consistent with ${\cal I}'$ and inconsistent with $\{I\in \mathcal{I}\backslash{{\cal I}'}|I\not\subset\rm{Sup}({\cal I}')\}$.

We construct $M$ in two steps. First we construct a matching ${M'}$ of $\rm{Sup}({\cal I}')$ (note $|\rm{Sup}({\cal I}')|$ is always even). Next, we construct a matching ${M''}$ of the remaining Majorana modes $[2n]\backslash \rm{Sup}({\cal I}')$. The desired matching of $[2n]$ is the union $M=M' \cup M''$.

To construct $M'$, we match vertices of each $I\in{\cal I}'$ in an arbitrary way: for every such $I=\{i_1,..i_{q'}\}$, $\{i_{2l-1}, i_{2l}\}\in M'$ for $l\in[1,..q'/2]$. This matching is always possible, since ${\cal I}'$ is diffuse and thus different interactions from ${\cal I}'$ do not overlap. 
Thus constructed $M'$ (and therefore also $M=M'\cup M''$) is explicitly consistent with all $I\in{\cal I}'$ .

To construct a matching $M''$ of $[2n]\backslash\rm{Sup}({\cal I}')$, we aim to ensure that no $(m_1,m_2)\in M''$ is a subset of any interaction in $\mathcal{I}$. For this, consider a `permitted edge’ graph ${\cal P}$ with vertices $[2n]\backslash\rm{Sup}({\cal I}')$, and edges inserted between every pair $(i_1,i_2)$ \textit{unless} they belong to the same interaction in $\mathcal{I}$. We aim to construct $M''$ as a perfect matching of ${\cal P}$. Note that since $\mathcal{I}$ is $q$-local and $k$-sparse, the graph ${\cal P}$ has degree bounded from below as $|[2n]\backslash{\rm Sup}({\cal I}')|-(q-1)k$. At the same time, since ${\cal I}'$ is diffuse, we're guaranteed by Condition 3 in Definition~\ref{def:diffuse} that $|[2n]\backslash{\rm Sup}({\cal I}')|\geq \frac{2n}{q+1}$. Therefore, since $n>(q^2-1)k$ by assumption, the degree of the vertices in ${\cal P}$ is lower bounded as $|[2n]\backslash{\rm Sup}({\cal I}')|-(q-1)k\geq |[2n]\backslash{\rm Sup}({\cal I}')|/2 +\frac{n}{q+1}-(q-1)k>|[2n]\backslash{\rm Sup}({\cal I}')|/2$. Given this lower bound, we apply Dirac's theorem~\cite{dirac1952some}, which yields an efficiently constructable Hamiltonian cycle in the graph ${\cal P}$. Matching $M''$ is then obtained by pairing the sequential vertices in this cycle, making it a perfect matching of ${\cal P}$. By definition of ${\cal P}$, $M''$ is guaranteed to contain at least one outgoing edge from every interaction in $\{I\in \mathcal{I}\backslash{{\cal I}'}|I\not\subset\rm{Sup}({\cal I}')\}$. This makes $M=M'\cup M''$ inconsistent with $\{I\in \mathcal{I}\backslash{{\cal I}'}|I\not\subset\rm{Sup}({\cal I}')\}$, as desired.

\end{proof}

Lemma~\ref{lem:diffuse_matching}, which is used in the proof of Theorem \ref{thm:sparse}, is a special case of Lemma~\ref{lem:diffuse_matching_general}. To obtain Lemma~\ref{lem:diffuse_matching}, one sets $q'=q$ and considers strictly $q$-local $\mathcal{I}$ instead of simply $q$-local. In this case all terms in $\mathcal{I}\backslash \mathcal{I}'$ satisfy the first condition of the Lemma, and therefore the constructed $M$ is inconsistent with the entirety of $\mathcal{I}\backslash \mathcal{I}'$.

\section{Matchings and Gaussian states}
\label{sec:matching_gaussian_general}

\begin{lemma*}[Repetition of Lemma \ref{lem:matching_gaussian_general}]
Let $H=\sum_{I\in {\cal I}} J_I C_I$ on $[2n']$ be $q$-local and ${\cal I}'$ be a diffuse subset of ${\cal I}$. Consider a matching $M$ of $[2n']$. If $M$ is consistent with ${\cal I}'$ and inconsistent with ${\cal I}\backslash{\cal I}'$, one can efficiently construct a Gaussian state $\rho_{{\cal I}'}$ with the property:
\begin{equation}
    {\rm Tr} (H \rho({\cal I}'))=\sum_{I\in {\cal I}'} |J_I|.
    \label{eq:targ}
\end{equation}
\end{lemma*}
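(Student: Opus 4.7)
The plan is to take $\rho(\mathcal{I}')$ from the special one-parameter family of Gaussian states $\rho(M, \vec{\lambda})$ built directly from the given matching $M$, and to choose the $n'$ signs $\lambda_{(m_1,m_2)} \in \{\pm 1\}$ so that each term $J_I C_I$ with $I \in \mathcal{I}'$ contributes exactly $|J_I|$ to the expectation, while all other terms contribute zero. Since $\mathrm{Tr}(H \rho) = \sum_{I \in \mathcal{I}} J_I \, \mathrm{Tr}(C_I \rho)$ by linearity, it suffices to analyze the individual expectations $\mathrm{Tr}(C_I \rho(M, \vec{\lambda}))$.

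First I would invoke Lemma~\ref{lem:consistent_expectation}. Part~2 of that lemma immediately eliminates every term with $I \in \mathcal{I} \setminus \mathcal{I}'$: by hypothesis $M$ is inconsistent with such $I$, so $\mathrm{Tr}(C_I \rho(M, \vec{\lambda})) = 0$ irrespective of $\vec{\lambda}$. For each $I \in \mathcal{I}'$, $M$ is consistent with $I$, and Part~1 of the lemma supplies a permutation $\pi_I$ together with the identity $\mathrm{Tr}(C_I \rho(M, \vec{\lambda})) = \mathrm{sign}(\pi_I)\, \prod_{l=1}^{|I|/2} \lambda_{(i_{\pi_I(2l-1)}, i_{\pi_I(2l)})}$. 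Thus the $I$-th expectation depends only on $\mathrm{sign}(\pi_I)$ and on the subset $E_I \subseteq M$ of matching edges lying inside $I$.

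Next I would use diffuseness: Condition~1 of Definition~\ref{def:diffuse} states that distinct $I_1, I_2 \in \mathcal{I}'$ have disjoint supports, which immediately implies $E_{I_1} \cap E_{I_2} = \emptyset$. This disjointness decouples the sign-choosing problem across different $I \in \mathcal{I}'$: for each such $I$ I may independently pick any one edge $e_I \in E_I$ and set $\lambda_{e_I} = \mathrm{sign}(J_I) \cdot \mathrm{sign}(\pi_I)$, while putting the remaining $\lambda$'s on $E_I$ equal to $+1$. Any $\lambda$ on a matching edge that does not belong to any $E_I$ can be set arbitrarily (say $+1$), since it does not enter the computation. By construction $J_I \, \mathrm{Tr}(C_I \rho(M, \vec{\lambda})) = |J_I|$ for each $I \in \mathcal{I}'$, and summing over $I$ yields the target identity Eq.~\eqref{eq:targ}.

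The construction is efficient: given $M$, computing each permutation $\pi_I$ and its sign is $O(|I|)$ work, and the assignment of $\vec{\lambda}$ is a single pass over $\mathcal{I}'$. I do not expect a hard step here, as Lemma~\ref{lem:consistent_expectation} does the algebraic lifting and Condition~1 of diffuseness alone is what makes the sign assignments for different interactions independent. Conditions~2 and~3 of Definition~\ref{def:diffuse} play no role in this lemma; they enter elsewhere to guarantee the existence of a matching $M$ with the required consistency/inconsistency properties (Lemma~\ref{lem:diffuse_matching_general}) and the existence of the diffuse splitting (Lemma~\ref{lem:diffuse_splitting_general}).
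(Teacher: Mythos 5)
Your proposal is correct and follows essentially the same route as the paper's proof: construct $\rho(M,\vec{\lambda})$ from the given matching, invoke Lemma \ref{lem:consistent_expectation} to kill the inconsistent terms and reduce each consistent term to a product of $\lambda$'s, and use Condition 1 of diffuseness to decouple the sign choices across interactions. Your explicit recipe of flipping a single designated edge per interaction is just a concrete instantiation of the paper's ``pick appropriate $\lambda_{(m_1,m_2)}=\pm 1$'' step, and your remark that only Condition 1 is needed here matches the paper.
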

\begin{proof}
For the given matching $M$, consider its associated Gaussian state pure $\rho (M, \vec{\lambda})$ of the form:
\begin{align}
\rho (M, \vec{\lambda})=\frac{1}{2^{n}}\Pi_{\scriptscriptstyle\{m_1,m_2\}\in M} (\mathbb{I}+i \lambda_{(m_1,m_2)} c_{m_1} c_{m_2}).
\end{align}
Lemma \ref{lem:consistent_expectation} implies that the contribution to ${\rm Tr}(H\rho (M, \vec{\lambda}))$ from inconsistent interactions ${\cal I}\backslash{\cal I}'$ vanishes and contributions from ${\cal I}'$ yield:
\begin{equation}
\label{eq:H_expectation_lambda}
    {\rm Tr}(H\rho(M, \vec{\lambda}))= \sum_{I\in {\cal I}'} J_I {\rm sign}(\pi) \prod_{l\in \{1,..,|I|/2\}} \lambda_{(i_{\pi(2l-1)}, i_{\pi(2l)})}.
\end{equation}
The proof is completed by choosing an appropriate value for $\vec{\lambda}$. Since ${\cal I}'$ is diffuse, by Condition 1 of Definition~\ref{def:diffuse}, distinct interactions from ${\cal I}'$ do not share Majorana fermions. This means that the values $\lambda_{(m_1,m_2)}$ for different $I$ in Eq.~\eqref{eq:H_expectation_lambda} can be chosen independently. In particular, by picking appropriate $\lambda_{(m_1,m_2)}=\pm 1$, one can eliminate the sign of $J_I{\rm sign}(\pi)$ and achieve a contribution $|J_I|$ for each $I \in {\mathcal I}'$. Note that this procedure can be done efficiently, as it is simply a matter of choosing at most $n$ $\pm 1$ values 
by checking the sign of most $|\mathcal{I}'|$ terms. Denoting the thus chosen $\rho(M, \vec{\lambda})$ as $\rho({\cal I}')$, this yields Eq.~\eqref{eq:targ}.
\end{proof}

A special case of Lemma \ref{lem:matching_gaussian_general} is Lemma \ref{lem:matching_gaussian} used in the proof of Theorem \ref{thm:sparse}.

\section{Concentration bounds for sparse SYK-\texorpdfstring{$4$}{4}}
\label{app:SSYK_concentration}

Here we derive the concentration bounds for the \SSSYKf Hamiltonian that were used in the proof of Theorem~\ref{thm:sparse_SYK} (Section \ref{sec:sparse_SYK}). We first prove an auxiliary Lemma that will be used later in this Section, allowing to separate the statistics of interaction selection and interaction strength:

\begin{lemma}
\label{lem:bern_gaus_split}
For $a\in[D]$, let $X_a$ be i.i.d. Bernoulli random variables $X_a\sim {\rm Bern}(p)$ and $J_a$ i.i.d. Gaussian random variables $J_a\sim {\rm N}(0,1)$. Then for any integer $d\in [D]$

\begin{align}
    \mathbb{P}\left[\sum^D_{a=1} X_a |J_a|< y \right]&\leq  \mathbb{P}\left[\sum^D_{a=1} X_a < d \right]+\mathbb{P}\left[\sum^d_{a=1} |J_a| < y \right]\label{eq:bern_gaus_split_1},\\
    \mathbb{P}\left[\sum^D_{a=1} X_a |J_a|> y \right]& \leq \mathbb{P}\left[\sum^D_{a=1} X_a > d \right]+\mathbb{P}\left[\sum^d_{a=1} |J_a| > y \right].\label{eq:bern_gaus_split_2}
\end{align}
\end{lemma}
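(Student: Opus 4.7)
The plan is to reduce both inequalities to a conditioning argument on $N := \sum_{a=1}^D X_a$, exploiting the independence of the $X_a$'s from the $J_a$'s and the i.i.d.\ structure of the $J_a$'s.

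First I would argue an exchangeability identity: conditionally on $\{X_a\}_{a=1}^D$, the set $S = \{a : X_a = 1\}$ is determined, and the random variable $\sum_a X_a |J_a| = \sum_{a \in S} |J_a|$ is a sum of $|S|$ many i.i.d.\ copies of $|J|$ with $J \sim \mathrm{N}(0,1)$. Consequently, conditional on $N = n$, the variable $\sum_a X_a |J_a|$ has the same distribution as $\sum_{a=1}^n |J_a|$. This step is essentially bookkeeping and should be the only place where the independence assumptions enter.

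Next I would decompose the left-hand side of \eqref{eq:bern_gaus_split_1} by conditioning on $N$ and splitting the sum at the threshold $d$:
\begin{equation*}
\mathbb{P}\!\left[\textstyle\sum_a X_a |J_a| < y\right] = \sum_{n < d} \mathbb{P}[N = n]\,\mathbb{P}\!\left[\textstyle\sum_{a=1}^n |J_a| < y\right] + \sum_{n \geq d} \mathbb{P}[N = n]\,\mathbb{P}\!\left[\textstyle\sum_{a=1}^n |J_a| < y\right].
\end{equation*}
On the $n<d$ piece, I bound the conditional probability trivially by $1$, giving at most $\mathbb{P}[N < d]$. On the $n \geq d$ piece, I use the pointwise monotonicity $\sum_{a=1}^n |J_a| \geq \sum_{a=1}^d |J_a|$ (since the added terms $|J_{d+1}|, \ldots, |J_n|$ are nonnegative), which yields the inclusion of events $\{\sum_{a=1}^n |J_a| < y\} \subseteq \{\sum_{a=1}^d |J_a| < y\}$, and therefore each conditional probability is bounded by $\mathbb{P}[\sum_{a=1}^d |J_a| < y]$. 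Summing over $n \geq d$ gives at most $\mathbb{P}[\sum_{a=1}^d |J_a| < y]$, and adding the two contributions yields the first inequality.

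For the second inequality I would run the symmetric argument, splitting at $N > d$ versus $N \leq d$. On $N \leq d$, monotonicity now goes the other way, giving $\{\sum_{a=1}^n |J_a| > y\} \subseteq \{\sum_{a=1}^d |J_a| > y\}$, and on $N > d$ I use the trivial bound $1$. No part of this argument looks genuinely hard: the only thing to be careful about is ensuring that the exchangeability identification in distribution is applied correctly, so that the conditional probability of the Gaussian event really only depends on the cardinality $N$ and not on which indices happen to be selected; this is immediate from the i.i.d.\ assumption on the $J_a$'s but is the one point at which a sloppy treatment could introduce an error.
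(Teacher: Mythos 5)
Your proof is correct and rests on the same two ingredients as the paper's: conditioning on $N=\sum_a X_a$ (so that the conditional law of $\sum_a X_a|J_a|$ given $N=n$ is that of $\sum_{a=1}^n|J_a|$) and the pointwise monotonicity of $\sum_{a=1}^n|J_a|$ in $n$. The only difference is cosmetic: you split the sum over $n$ at the threshold $d$ and bound each piece directly, whereas the paper first lower-bounds the complementary probability by a product and then uses $1-(1-A)(1-B)\leq A+B$; your route is marginally more direct.
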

\begin{proof}

To prove Eq.~\eqref{eq:bern_gaus_split_1}, first show

\begin{align}
    \mathbb{P}\left[\sum^D_{a=1} X_a |J_a|\geq y \right]&=\sum^D_{d'=1}\left(\mathbb{P}\left[\sum^D_{a=1} X_a = d' \right]\mathbb{P}\left[\sum^{d'}_{a=1} |J_a| \geq y \right]\right)\notag \\
    &{\geq } \sum^D_{d'=d}\left(\mathbb{P}\left[\sum^D_{a=1} X_a = d' \right]\mathbb{P}\left[\sum^{d'}_{a=1} |J_a| \geq y \right]\right)\notag\\
    &{\geq } \left(\sum^D_{d'=d}\mathbb{P}\left[\sum^D_{a=1} X_a = d' \right]\right)\mathbb{P}\left[\sum^{d}_{a=1} |J_a| \geq y \right]\notag\\
    &= \mathbb{P}\left[\sum^D_{a=1} X_a \geq d \right]\mathbb{P}\left[\sum^d_{a=1} |J_a| \geq y \right] \label{eq:bern_gaus_presplit_1}.
\end{align}
It follows that
\begin{align}
    \mathbb{P}\left[\sum^D_{a=1} X_a |J_a|< y \right]&=1-\mathbb{P}\left[\sum^D_{a=1} X_a |J_a|\geq y \right]\notag\\
    & {\leq } 1-\mathbb{P}\left[\sum^D_{a=1} X_a \geq d \right]\mathbb{P}\left[\sum^d_{a=1} |J_a| \geq y \right]\notag\\
    &=1-\left(1-\mathbb{P}\left[\sum^D_{a=1} X_a < d \right]\right)\left(1-\mathbb{P}\left[\sum^d_{a=1} |J_a| < y \right]\right)\notag\\
    & {\leq } \mathbb{P}\left[\sum^D_{a=1} X_a < d \right]+\mathbb{P}\left[\sum^d_{a=1} |J_a| < y \right]. \label{eq:bern_gaus_split_1_derived}
\end{align}

This ends the proof of Eq.~\eqref{eq:bern_gaus_split_1}. In the same vein, one derives Eq.~\eqref{eq:bern_gaus_split_2}. Namely, we first have (cf. Eq.~\eqref{eq:bern_gaus_presplit_1}):
\begin{align}
    \mathbb{P}\left[\sum^D_{a=1} X_a |J_a|\leq y \right]&\geq \sum^d_{d'=1}\left(\mathbb{P}\left[\sum^D_{a=1} X_a = d' \right]\mathbb{P}\left[\sum^{d'}_{a=1} |J_a| \leq y \right]\right)\notag\\
    &\geq
    \mathbb{P}\left[\sum^D_{a=1} X_a \leq d \right]\mathbb{P}\left[\sum^d_{a=1} |J_a| \leq y \right] \label{eq:bern_gaus_presplit_2}.
\end{align}

Similarly to  Eq.~\eqref{eq:bern_gaus_split_1_derived}, one obtains Eq.~\eqref{eq:bern_gaus_split_2} from Eq.~\eqref{eq:bern_gaus_presplit_2}:

\begin{align}
    \mathbb{P}\left[\sum^D_{a=1} X_a |J_a|> y \right]&=1-\mathbb{P}\left[\sum^D_{a=1} X_a |J_a|\geq y \right]\notag\\
    &\leq 1-\left(1-\mathbb{P}\left[\sum^D_{a=1} X_a > d \right]\right)\left(1-\mathbb{P}\left[\sum^d_{a=1} |J_a| > y \right]\right)\notag\\
    &\leq \mathbb{P}\left[\sum^D_{a=1} X_a > d \right]+\mathbb{P}\left[\sum^d_{a=1} |J_a| > y \right]. \label{eq:bern_gaus_split_2_derived}
\end{align}

\end{proof}

We proceed with the proof of Lemmas \ref{lem:H_SSYK} and \ref{lem:h_kappa}, which were used in Section \ref{sec:sparse_SYK} to prove Theorem \ref{thm:sparse_SYK}.

\begin{lemma*}
[Repetition of Lemma \ref{lem:H_SSYK}]
Let interactions $\mathcal{I}$ and interaction strengths $\{J_I\}$ be those of the \SSSYKf model with average degree $k$. 
With probability at least $1-2e^{-\frac{kn}{32}}$ we have
\begin{equation}
    \sum_{I\in \mathcal{I}} |J_I|\geq kn/8.
\end{equation}
\end{lemma*}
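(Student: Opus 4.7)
The plan is to apply Lemma~\ref{lem:bern_gaus_split} to decouple the two sources of randomness in $\sum_{I\in\mathcal{I}} |J_I|$ --- the Bernoulli selection of which interactions appear, and the Gaussian coupling strengths --- and then bound each resulting tail by a Chernoff-type estimate. In the notation of Lemma~\ref{lem:bern_gaus_split}, set $D=\binom{2n}{4}$, let the $X_a$ be the inclusion indicators $X_I$ (so $p = k/\binom{2n-1}{3}$), and let the $J_a$ be the Gaussian couplings $J_I$, so that $\sum_{I\in\mathcal{I}} |J_I| = \sum_{a=1}^D X_a |J_a|$ and $\mathbb{E}[\sum_a X_a] = Dp = kn/2$.

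Apply inequality~\eqref{eq:bern_gaus_split_1} with $y = kn/8$ and $d = kn/4$:
\begin{equation*}
\mathbb{P}\!\left[\sum_{I\in\mathcal{I}} |J_I| < \tfrac{kn}{8}\right] \leq \mathbb{P}\!\left[\sum_{a=1}^D X_a < \tfrac{kn}{4}\right] + \mathbb{P}\!\left[\sum_{a=1}^{kn/4}|J_a| < \tfrac{kn}{8}\right].
\end{equation*}
For the first (Bernoulli) term, the multiplicative Chernoff bound applied at relative deviation $\delta = 1/2$ below the mean $kn/2$ gives a bound that sits comfortably under $e^{-kn/32}$. For the second (half-normal) term, note that $|J_a|$ has mean $\mu = \sqrt{2/\pi} > 1/2$ and second moment $1$, so $kn/8 = d/2 < d\mu$ lies strictly below the expectation of $\sum_{a=1}^{d}|J_a|$. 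A standard sub-gaussian Chernoff argument --- for instance using the elementary inequality $e^{-\lambda z} \leq 1 - \lambda z + \lambda^2 z^2/2$ valid for $z,\lambda \geq 0$, combined with $\mathbb{E}[|J_a|^2] = 1$, and optimizing over $\lambda > 0$ --- yields a lower-tail bound of the form $\exp\!\big(-(d\mu - kn/8)^2/(2d)\big)$, which is exponentially small in $kn$.

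The principal (mild) obstacle is arranging the constants so that both tails fit under $e^{-kn/32}$, since $\mu = \sqrt{2/\pi} \approx 0.798$ is only moderately larger than the target density $1/2$; a cruder 1-sub-gaussian moment generating function for $|J_a|$ loses a small factor here. This can be absorbed by using the tighter sub-gaussian MGF of $|J_a|-\mu$ --- equivalently, the 1-Lipschitz Gaussian concentration of $|J|$ around its mean, which effectively provides a sub-gaussian proxy comparable to the variance $1 - 2/\pi$ rather than $1$ --- or by a small retuning of $d$. A union bound over the two tails then delivers the claimed probability $1 - 2e^{-kn/32}$.
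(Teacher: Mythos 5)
Your proposal follows essentially the same route as the paper: the identical application of Lemma~\ref{lem:bern_gaus_split} with $y=kn/8$ and $d=\lceil kn/4\rceil$, a Chernoff bound for the Bernoulli tail, and a Chernoff-type lower-tail bound for the half-normal sum. The constant-tightness issue you flag for the half-normal tail is real (the crude variance-proxy-$1$ bound gives only $e^{-\Theta(0.011\,kn)}$), and the paper resolves it exactly as you suggest one could, by using the exact moment generating function of $|J|$ (via the error function, Lemma~\ref{lem:abs_gauss}) evaluated at $\lambda=1$, which yields $\mathbb{P}[\sum_{a=1}^{A}|J_a|\leq A/2]\leq e^{-A/8}$ and hence the claimed $e^{-kn/32}$.
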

\begin{proof}
The random variable $\sum_{I\in \mathcal{I}} |J_I|$ is a function of two sets of random variables. The first set is $X_I\in\{0,1\}$ for all possible 4-Majorana interactions $I\subset [2n],\, |I|=4$, indicating the presence of $I$ in $\mathcal{I}$. Denoting
\begin{align}
D\equiv\begin{pmatrix}2n-1\\ 3\end{pmatrix},
\end{align} 
$X_I$ is drawn from a Bernoulli distribution with probability $p=k D^{-1}$, i.e.
$X_I\sim {\rm Bern}\left(k D^{-1}\right)$. 
The second set is $J_I$ for all $I\in \mathcal{I}$, distributed normally $J_I\sim {\rm N}(0,1)$.
We introduce auxiliary variables $J_a$ for $a\in[\lceil kn/4\rceil ]$ and $J_a\sim {\rm N}(0,1)$. Then by Lemma \ref{lem:bern_gaus_split}:
\begin{align}
\label{eq:total_two_parts}
    \mathbb{P}\left[\sum_{I\in \mathcal{I}} |J_I|<kn/8\right]\leq\mathbb{P}\left[\sum_{I\subset [2n],\, |I|=4} X_I < \lceil\frac{kn}{4}\rceil\right]+\mathbb{P}\left[ \sum_{a\in[\lceil kn/4\rceil ]} |J_a|<kn/8\right].
\end{align}
We can bound the first term using the Chernoff bound for sums of Bernoulli random variables. Substituting $\binom{2n}{4}=\binom{2n-1}{3}\frac{n}{2}$, we get
\begin{align}
\label{eq:total_degree}
    \mathbb{P}\left[\sum_{I\subset [2n],\, |I|=4} X_I < \lceil\frac{kn}{4}\rceil\right]=\mathbb{P}\left[\sum_{I\subset [2n],\, |I|=4} X_I < \frac{kn}{4}\right] \leq \exp\left( -\frac{kn}{4} (1-\log 2) \right).
\end{align}
On the other hand, standard concentration properties of Gaussian random variables imply, see Lemma~\ref{lem:abs_gauss} at the end of this Appendix,
\begin{align}
\label{eq:total_weight}
    \mathbb{P}\left[\sum^{\lceil kn/4\rceil}_{a=1} |J_a| < \frac{kn}{8}\right] < e^{-kn/32}.
\end{align}
Since $\exp\left( -\frac{kn}{4} (1-\log 2) \right)\leq e^{-kn/32}$, the bound in Eq.~\eqref{eq:total_two_parts} yields
\begin{align}
    \mathbb{P}\left[\sum_{I\in \mathcal{I}} |J_I|<kn/8\right]<2e^{-kn/32},
\end{align}
as desired.

\end{proof}

\begin{lemma*}[Repetition of Lemma \ref{lem:h_kappa}]
If $k'\geq e^2 k+1$, we have with probability at least $1- 2\exp\left[-\frac{e^{-2k'}k^3}{64(k'-1)} n\right]$ that
\begin{equation}
\sum_{I\in \bar{\mathcal{I}}^{(k')}} |J_I| \leq \frac{4k^2}{\sqrt{k'-1}}e^{-k'}n.
\end{equation}
\end{lemma*}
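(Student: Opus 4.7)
The plan is to follow the same decoupling strategy as in the proof of Lemma \ref{lem:H_SSYK}. Conditional on the Bernoulli draws $\{X_I\}$, the sum $\sum_{I \in \bar{\mathcal{I}}^{(k')}} |J_I|$ has the same distribution as a sum of $N := |\bar{\mathcal{I}}^{(k')}|$ i.i.d.\ half-normal variables, since the lexicographic rule defining $\bar{\mathcal{I}}^{(k')}$ is a deterministic function of $\{X_I\}$ alone and the $J_I$ are independent of the $X_I$. Mimicking the argument in the proof of Lemma \ref{lem:bern_gaus_split}, I would obtain, for any integer $d^{\star}$,
\begin{equation}
\mathbb{P}\!\left[\sum_{I \in \bar{\mathcal{I}}^{(k')}} |J_I| > y\right] \;\leq\; \mathbb{P}\!\left[N > d^{\star}\right] \;+\; \mathbb{P}\!\left[\sum_{a=1}^{d^{\star}} |J_a| > y\right],
\end{equation}
using the monotonicity of half-normal sums in the number of terms. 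The Gaussian tail on the right is handled by the same subgaussian estimate used in the proof of Lemma \ref{lem:H_SSYK} (cf.\ Lemma \ref{lem:abs_gauss}), provided $y$ exceeds $d^{\star}\sqrt{2/\pi}$ by a constant factor; all the real work is in the first term.

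To bound $\mathbb{P}[N > d^{\star}]$ I would first compute the expectation. Let $d_i$ denote the degree of Majorana $i$ in the random hypergraph $\{I: X_I = 1\}$. The lexicographic truncation rule gives the inclusion $N \leq \sum_{i=1}^{2n}(d_i - k')_+$, so by linearity it suffices to estimate $\mathbb{E}[(d_i-k')_+]$. Since $d_i$ is binomial with mean $k$, the Chernoff tail $\mathbb{P}[d_i \geq j] \leq (ek/j)^{j} e^{-k}$ combined with summation and Stirling yields $\mathbb{E}[(d_i-k')_+] = O\!\big(e^{-k'}/\sqrt{k'-1}\big)$ whenever $k' \geq e^2 k + 1$ (the factor $(ek/k')^{k'} \leq e^{-k'}$ captures the geometric decay, and the $1/\sqrt{k'-1}$ is a Stirling prefactor). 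Hence $\mathbb{E}[N] = O\!\big(n\, e^{-k'}/\sqrt{k'-1}\big)$, and the target choice $y = \tfrac{4k^{2}}{\sqrt{k'-1}}e^{-k'} n$ from the statement leaves enough room to pick $d^{\star}$ as a constant multiple of $k\cdot \mathbb{E}[N]$ while still keeping the Gaussian tail subgaussian.

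The principal obstacle, which the paper explicitly flags, is that $N$ is not a sum of independent Bernoullis: the lexicographic rule couples every group of interactions sharing a Majorana once the incident degree exceeds $k'$. To upgrade the expectation to a high-probability bound I would invoke the exponential Efron--Stein inequality of \cite{bouch:concentration}. Two structural ingredients drive the argument: $N$ has the bounded-difference property $|N-N_I'| \leq c = O(q) = O(1)$ in each Bernoulli $X_I$ (resampling $X_I$ can at most toggle membership of $I$ itself and shift by one position the lex-ordered list at each of $I$'s $q$ Majoranas), and the Efron--Stein sum
\begin{equation}
\sum_{\substack{I \subset [2n]\\|I|=4}} \mathbb{E}\!\left[(N - N_I')^{2}\right] \;\leq\; c^{2}\, p(1-p)\binom{2n}{4} \;=\; O(kn),
\end{equation}
since the variance of a single flip is at most $c^{2} p$. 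Feeding this variance proxy $\sigma^{2} = O(kn)$ into the entropy method of \cite{bouch:concentration} gives a Bernstein-type tail $\mathbb{P}[N > \mathbb{E}[N] + t] \leq \exp(-c\, t^{2}/\sigma^{2})$ for moderate $t$. Choosing $d^{\star}$ a constant multiple of $k\cdot \mathbb{E}[N]$ (so that $t = \Theta(k^{2} n e^{-k'}/\sqrt{k'-1})$), the exponent evaluates to $\Omega(k^{3} e^{-2k'} n /(k'-1))$, which has exactly the form appearing in the lemma. A final union bound with the Gaussian tail completes the argument; tracking the explicit numerical constants ($4$ in $y$ and $64$ in the exponent) will be routine bookkeeping arising from balancing the two tails against the explicit version of the Efron--Stein inequality used, rather than requiring any further ideas.
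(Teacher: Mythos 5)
Your proposal follows essentially the same route as the paper: the same Bernoulli/Gaussian splitting via Lemma \ref{lem:bern_gaus_split}, the same upper bound of $|\bar{\mathcal{I}}^{(k')}|$ by the summed excess degree $\sum_i (d_i-k')_+$ with a binomial-tail estimate of its mean, and the same exponential Efron--Stein inequality from \cite{bouch:concentration} exploiting that resampling one $X_I$ changes the excess degree by $O(1)$, yielding the $\Omega(k^3 e^{-2k'}n/(k'-1))$ exponent. The only differences are cosmetic (the paper works with the normalized variable $Z=\frac{1}{2n}\sum_i(k_i-k')\mathbb{I}_{k_i>k'}$ and evaluates the binomial tail via the regularized incomplete beta function rather than a Chernoff bound), so the proposal is correct.
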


\begin{proof}
The random variable $\sum_{I\in \bar{\mathcal{I}}^{(k')}} |J_I|$ is a function of random variables $X_I\sim {\rm Bern}\left(k D^{-1}\right)$ for $I\subset [2n],\, |I|=4$ and $J_I\sim {\rm N}(0,1)$ for all $I\in \mathcal{I}$.
We introduce auxiliary random variables $J'_a\sim {\rm N}(0,1)$ for $a\in[K]$ where 
\begin{align}
K\equiv \lfloor\frac{4k^2}{\sqrt{k'-1}}e^{-k'}n\rfloor.
\end{align}
By Lemma \ref{lem:bern_gaus_split}, one can upperbound 
\begin{align}
\label{eq:tail_two_parts}
    \mathbb{P}\Big[\sum_{I\in \bar{\mathcal{I}}^{(k')}} |J_I|> \frac{4k^2}{\sqrt{k'-1}}e^{-k'}n\Big]\;\leq\; \mathbb{P}\Big[|\bar{\mathcal{I}}^{(k')}|>\frac{4k^2}{\sqrt{k'-1}}e^{-k'}n\Big]+\mathbb{P}\Big[\sum_{a\in[K]} |J'_a|> \frac{4k^2}{\sqrt{k'-1}}e^{-k'}n\Big].
\end{align}
We now proceed with upper bounding $\mathbb{P}\big[|\bar{\mathcal{I}}^{(k')}|>\frac{4k^2}{\sqrt{k'-1}}e^{-k'}n\big]$ and then $\mathbb{P}\big[\sum_{a\in[K]} |J'_a|> \frac{4k^2}{\sqrt{k'-1}}e^{-k'}n\big]$.

To bound $\mathbb{P}[|\bar{\mathcal{I}}^{(k')}|>\frac{4k^2}{\sqrt{k'-1}}e^{-k'}n]$, we introduce the Majorana degree function $k_i=k_i(\{X_I\})\in[D]$, which is a random variable that counts the number of interactions in $\mathcal{I}$ involving a given Majorana $c_i$. Since $X_I\sim {\rm Bern}\left(k D^{-1}\right)$, $k_i$ follows the binomial distribution ${\rm Bin}(D, k D^{-1})$ (note however that different $k_i$ and $k_j$ are not necessarily independent). Given the construction of $h^{(k')}$, it is clear that $|\bar{\mathcal{I}}^{(k')}|$ can be bounded by the `excess degree' summed over all Majoranas. Concretely, using the Majorana degree function $k_i$ we define a random variable
\begin{align}
Z\equiv Z(\{X_I\})\equiv \frac{1}{2n} \sum^{2n}_{i=1} (k_i-k') \,\mathbb{I}_{k_i>k'},
\end{align}
which has the immediate property
\begin{align}
|\bar{\mathcal{I}}^{(k')}|\leq 2n Z. \label{eq:excess_degree_bound}
\end{align}
Here we used the indicator function $\mathbb{I}_{k_i>k'}=1$ when $k_i > k'$ and $0$ otherwise.
Given Eq.~\eqref{eq:excess_degree_bound}, $\mathbb{P}[Z>\frac{2k^2}{\sqrt{k'-1}}e^{-k'}]\geq\mathbb{P}[|\bar{\mathcal{I}}^{(k')}|> \frac{4k^2}{\sqrt{k'-1}}e^{-k'} n]$ and thus it suffices to bound the former. We begin by calculating its mean:
\begin{align}
\label{eq:EZ_via_tail}
\mathbb{E}[Z] = \frac{1}{2n} \sum_{i=1}^{2n} \mathbb{E} [ (k_i-k') \mathbb{I}_{k_i>k'}] = \mathbb{E} [ (k_1-k') \mathbb{I}_{k_1>k'} ],
\end{align}
where we used linearity of $\mathbb{E}(.)$ and the permutation symmetry of the SSYK ensemble. Hence we now need to calculate $\mathbb{E} [ (k_1-k') \mathbb{I}_{k_1>k'} ]$ for a single Majorana (w.l.o.g. $c_1$). Since the associated degree $k_1\sim {\rm Bin}(D, k D^{-1})$, we calculate directly (denoting $p=k D^{-1}$):
\begin{align}
\label{eq:binom_tail}
    \mathbb{E} [ (k_1-k') \mathbb{I}_{k_1>k'} ]&\leq\mathbb{E} [ k_1 \mathbb{I}_{k_1>k'} ]=\sum^{D}_{k_1=k'+1} p^{k_1}(1-p)^{D-k_1}\frac{D!}{(D-k_1)! k_1!} k_1\notag\\
    &=Dp\sum^{D-1}_{k_1=k'} p^{k_1}(1-p)^{D-1-k_1}\frac{(D-1)!}{(D-1-k_1)! k_1!}.
\end{align}
The following identity holds \cite{SA:book}:
\begin{align}
\label{eq:binom_cdf}
    \sum^{z-1}_{x=y} w^{x}(1-w)^{z-1-x}\frac{(z-1)!}{(z-1-x)! x!}=\beta_{w}(y, z-y),
\end{align}
where $\beta_{w}(y, z-y)$ is the regularized incomplete beta function. For integer $y, z>y$ it is defined as \[\beta_{w}(y, z-y)=\frac{(z-1)!}{(y-1)!(z-y-1)!}\int^w_{t=0} t^{y-1} (1-t)^{z-y-1} dt. \] Using the Stirling bound $x!\geq \sqrt{2\pi} x^{x+1/2} e^{-x}, x\in\mathbb{N}$, one bounds $\beta_{w}(y, z-y)$ as:
\begin{align}
\label{eq:beta_bound}
    \beta_{w}(y, z-y) < \frac{w(z-1)}{\sqrt{2\pi (y-1)}}\left(e\frac{w(z-1)}{(y-1)}\right)^{y-1}.
\end{align}
Substituting $p=k D^{-1}$ and using Eqs.~\eqref{eq:binom_cdf}, \eqref{eq:beta_bound} in Eq.~\eqref{eq:binom_tail} for $k'>e^2k+1$ we obtain
\begin{align}
    \mathbb{E} [ (k_1-k') \mathbb{I}_{k_i>k'} ]&< D p \frac{p (D-1)}{\sqrt{2\pi (k'-1)}}\left(e\frac{p (D-1)}{k'-1}\right)^{k'-1} \notag\\
    &<\sqrt{\frac{ e^2 k^4}{2\pi (k'-1)}}e^{-k'} \notag\\
    \Rightarrow \mathbb{E} [Z]&<\sqrt{\frac{ e^2 k^4}{2\pi (k'-1)}}e^{-k'}. \label{eq:E_Z_bound}
\end{align}
We now aim to apply the Efron-Stein inequality \cite{bouch:concentration} to bound deviations from the mean $\mathbb{E}(Z)$. For this, we introduce an additional set of independent random variables $\{X'_I\}$ such that $X'_I\sim {\rm Bern}\left(k D^{-1}\right)$. This allows to define auxiliary functions $Z'_I$
\begin{align}
   Z'_I \equiv Z\big|_{X_I\rightarrow X'_I} 
\end{align}
where for a single interaction $I$ only, the variable $X_I$ is replaced by $X'_I$. Using the indicator function $\mathbb{I}_{Z>Z'_I}$, a further auxiliary function $V=V(\{X_I\})$ can be defined:
\begin{align}
\label{eq:efron_V}
    V\equiv \mathbb{E}_{\{X'_I\}} \left[\sum_{I\subset [2n],\; |I|=4} (Z-Z'_I)^2 \mathbb{I}_{Z>Z'_I}\right],
\end{align}
where the averaging is performed over the additional random variables $\{X'_I\}$ alone.
An exponential version of the Efron-Stein inequality (Theorem 2 of \cite{bouch:concentration}) states for all $\theta>0$ and $\lambda\in(0,\theta^{-1})$:
\begin{align}
\label{eq:exp_efron}
    \log \mathbb{E}[\exp(\lambda (Z-\mathbb{E}[Z]))]\leq 
    \frac{\lambda\theta}{1-\lambda\theta} \log \mathbb{E}\left[\exp \left(\frac{\lambda V}{\theta}\right)\right].
\end{align}
To employ Eq.~\eqref{eq:exp_efron}, we have to bound $\mathbb{E}\left[\exp \left(\frac{\lambda V}{\theta}\right)\right]$. First we upper bound $V(\{X_I\})$ as a function. For all interactions $I$ we claim, independent of $\{X_I\}$ and $\{X_I'\}$:
\begin{align}
\label{eq:efron_term_bound}
    (Z-Z'_I)^2 \mathbb{I}_{Z>Z'_I} \leq \frac{4}{n^2} ~\mathbb{I}_{X_I=1}.
\end{align}
To show this, we will go through four possible cases: $(X_I, X'_I)=(0,0)$, $(1,0)$, $(0,1)$, or $(1,1)$. If $X_I=X_I'$, the left hand side of Eq.~\eqref{eq:efron_term_bound} vanishes, reproducing Eq.~\eqref{eq:efron_term_bound} for the cases $(X_I, X'_I)=(0,0)$ and $(1,1)$. For $(X_I, X'_I)=(0,1)$, $Z$ is smaller than $Z'_I$, because replacing $X_I=0$ by $X'_I=1$ cannot decrease the excess degree for any Majorana (cf. definition of $k_i$ and $2n Z=\sum^{2n}_{i=1} (k_i-k') \mathbb{I}_{k_i>k'}$). Due to the factor $\mathbb{I}_{Z>Z'_I}$, $(Z-Z'_I)^2 \mathbb{I}_{Z>Z'_I}$ in this case is zero, in agreement with Eq.~\eqref{eq:efron_term_bound}. The last case is $(X_I, X'_I)=(1,0)$. As any interaction $I$ only involves $4$ fermions, the reduction of total excess degree $2n(Z\big|_{X_I=1}-Z\big|_{X_I=0})$ is at most equal to $4$, independent of the rest of the variables $\{X_I\}$. Therefore $(Z-Z'_I)^2 \mathbb{I}_{Z>Z'_I}$ for $(X_I, X'_I)=(1,0)$ is at most equal to $\frac{4}{n^2}$, proving Eq. \eqref{eq:efron_term_bound}. From Eq.~\eqref{eq:efron_term_bound} it follows that $\mathbb{E}_{\{X'_I\}} [(Z-Z'_I)^2 \mathbb{I}_{Z>Z'_I}] \leq \frac{4}{n^2} ~\mathbb{I}_{X_I=1}$, which we can use to bound $V(\{X_I\})$. From the definition stated in Eq.~\eqref{eq:efron_V} we get:
\begin{align}
    V(\{X_I\})\leq \frac{4}{n^2} \sum_{I\subset [2n],\;|I|=4} X_I.
\end{align}
Since $X_I\sim {\rm Bin}(1, k D^{-1})$, we have
\begin{align}
    \mathbb{E}\left[\exp \left(\frac{\lambda V}{\theta}\right)\right]&\leq \mathbb{E}\left[\exp \left(\frac{4 \lambda}{\theta n^2} \sum_{I\subset [2n],\;|I|=4} X_I\right)\right] \notag\\
    &=\left(\mathbb{E}\left[\exp \left(\frac{4 \lambda}{\theta n^2} X_1\right)\right]\right)^{\binom{2n}{4}} \notag \\
    &=\left((1-k D^{-1})+k D^{-1} \exp \left(\frac{4 \lambda}{\theta n^2}\right)\right)^{\binom{2n}{4}} \notag \\
    &\leq \exp\left(k {\binom{2n}{4}} D^{-1}\left(\exp \left(\frac{4 \lambda}{\theta n^2}\right)-1\right)\right) \notag\\
    &=\exp\left(\frac{k n}{2} \left(\exp \left(\frac{4 \lambda}{\theta n^2}\right)-1\right)\right).
\end{align}
We further assume a constraint $\lambda<\frac{n^2\theta}{4}$, which implies the inequality $\exp(\frac{4\lambda}{\theta n^2})-1<\frac{8\lambda}{\theta n^2}$. This allows to further bound $\mathbb{E}\left[\exp \left(\frac{\lambda V}{\theta}\right)\right]$:
\begin{align}
\label{eq:efron_expV_bound}
    \mathbb{E}\left[\exp \left(\frac{\lambda V}{\theta}\right)\right] \leq \exp \left[ \frac{4\lambda k}{\theta n} \right].
\end{align}
We now assume an additional constraint $\lambda<\frac{1}{2\theta}$, which strengthens the condition $\lambda < \theta^{-1}$ of Eq.~\eqref{eq:exp_efron}. With this constraint, using Eq.~\eqref{eq:efron_expV_bound} in Eq.~\eqref{eq:exp_efron}, we obtain:
\begin{align}
\label{eq:exp_efron_solved}
    \log \mathbb{E}[\exp(\lambda (Z-\mathbb{E}[Z]))] &\leq \frac{4\lambda^2}{1-\lambda\theta}\frac{k}{n} \notag \\
    &\leq \frac{8\lambda^2 k}{n}.
\end{align}
This inequality is true regardless of $\theta$ and $\lambda$, insofar both numbers are positive and satisfy the constraints we introduced: 
\begin{align}
    \frac{4\lambda}{n^2}<\theta<\frac{1}{2\lambda}.
\end{align}
For a valid $\theta$ to exist, it's necessary and sufficient that $\lambda$ belongs to the interval $(0, \frac{n}{2\sqrt{2}})$. For such $\lambda$, Eq.~\eqref{eq:exp_efron_solved} holds, and combined with a Markov inequality it implies for any $t>0$:
\begin{align}
    \mathbb{P}[Z>\mathbb{E}[Z]+t] <\exp\left[\frac{8\lambda^2 k}{n}-\lambda t\right].
\end{align}
We next choose the value of $\lambda\in (0, \frac{n}{2\sqrt{2}})$ that optimizes the right hand side. If $\frac{t}{2\sqrt{2} k}<1$, this is achieved with $\lambda = \frac{t n}{16 k}$. This yields the result
\begin{align}
    \mathbb{P}[Z>\mathbb{E}[Z]+t] <\exp\left[-\frac{n t^2}{32 k}\right].
\end{align}
We choose $t=\sqrt{\frac{k^4}{2 (k'-1)}}e^{-k'}$, which automatically ensures the desired condition $\frac{t}{2\sqrt{2} k}<1$ because of the constraint $k'>e^2 k+1$ that we assumed in the Lemma statement. We obtain: 
\begin{align}
    \mathbb{P}\left[Z>\mathbb{E}[Z]+\sqrt{\frac{k^4}{2 (k'-1)}}e^{-k'} \right] <\exp\left[-\frac{e^{-2k'}k^3}{64(k'-1)} n\right].
\end{align}
Since $\mathbb{E}[Z]<\sqrt{\frac{e^2 k^4}{2\pi (k'-1)}}e^{-k'}$ (Eq.~\eqref{eq:E_Z_bound}) and $\sqrt{\frac{e^2}{2\pi}}+\sqrt{\frac{1}{2}}<2$, we arrive at an upper bound for the probability $\mathbb{P}\left[|\bar{\mathcal{I}}^{(k')}| > \frac{4k^2}{\sqrt{k'-1}}e^{-k'}n\right]$:
\begin{align}
\label{eq:tail_degree}
    &\mathbb{P}\left[Z > \frac{2k^2}{\sqrt{k'-1}}e^{-k'}\right] <\exp\left[-\frac{e^{-2k'}k^3}{64(k'-1)} n\right] \notag \\
    &\Rightarrow \mathbb{P}\left[|\bar{\mathcal{I}}^{(k')}| > \frac{4k^2}{\sqrt{k'-1}}e^{-k'}n\right] <\exp\left[-\frac{e^{-2k'}k^3}{64(k'-1)} n\right].
\end{align}

To bound $\mathbb{P}\left[\sum_{a\in[K]} |J'_a|> \frac{4k^2}{\sqrt{k'-1}}e^{-k'}n\right]$, we use the concentration properties of Gaussian random variables (see Lemma \ref{lem:abs_gauss} at the end of this Appendix). Using $K=\lfloor\frac{4k^2}{\sqrt{k'-1}}e^{-k'}n\rfloor$ in Lemma \ref{lem:abs_gauss}.1:
\begin{align}
\label{eq:tail_weight}
    \mathbb{P}\left[\sum_{a\in[K]} |J'_a|> \frac{4k^2}{\sqrt{k'-1}}e^{-k'}n\right]\leq \mathbb{P}\left[\sum_{a\in[K]} |J'_a| > K \right] < e^{-K/20}.
\end{align}
Note that our bound for $\mathbb{P}\left[|\bar{\mathcal{I}}^{(k')}| > \frac{4k^2}{\sqrt{k'-1}}e^{-k'}n\right]$ in Eq.~\eqref{eq:tail_degree} is always greater than our bound for $\mathbb{P}\left[\sum_{a\in[K]} |J'_a|> \frac{4k^2}{\sqrt{k'-1}}e^{-k'}n\right]$ in Eq.~\eqref{eq:tail_weight}. This allows us to conclude the proof of the Lemma, as Eqs.~\eqref{eq:tail_two_parts} and ~\eqref{eq:tail_degree} imply:
\begin{align}
    \mathbb{P}\left[\sum_{I\in \bar{\mathcal{I}}^{(k')}} |J_I|> \frac{4k^2}{\sqrt{k'-1}}e^{-k'}n\right]
    &\leq 2\exp\left[-\frac{e^{-2k'}k^3}{64(k'-1)} n\right].
\end{align}
\end{proof}

\begin{lemma}
\label{lem:abs_gauss}
Let $J_a\sim {\rm N}(0,1)$ for $a\in [A]$, $A\in \mathbb{N}$. Then
\begin{enumerate}
    \item $\mathbb{P}\left[\sum^A_{a=1} |J_a| \geq A\right] \leq e^{-A/20}.$
    \item $\mathbb{P}\left[\sum^A_{a=1} |J_a| \leq \frac{A}{2}\right] \leq e^{-A/8}.$
\end{enumerate}
\end{lemma}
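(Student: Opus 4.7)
Both parts follow from Chernoff's method applied to the sum $\sum_{a=1}^A |J_a|$. My plan is to first compute the moment generating function of $|J|$ for $J \sim N(0,1)$ by completing the square in the Gaussian integral, which yields
\begin{equation*}
\mathbb{E}[e^{t|J|}] = 2 e^{t^2/2}\,\Phi(t), \qquad t \in \mathbb{R},
\end{equation*}
where $\Phi$ denotes the standard normal CDF. I will then invoke Markov's inequality at two specific values of $t$ to obtain the two required tail bounds.

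For part 1 (upper tail), I would apply Markov to $\exp(t \sum_a |J_a|)$ with $t > 0$ to obtain
\begin{equation*}
\mathbb{P}\!\left[\sum_{a=1}^A |J_a| \geq A\right] \leq \left(2\Phi(t)\,e^{t^2/2 - t}\right)^A,
\end{equation*}
and then set $t = 1/2$. It remains to verify numerically that $2\Phi(1/2)\,e^{1/8-1/2} \leq e^{-1/20}$. Using $\Phi(1/2) \approx 0.6915$, the base evaluates to $\approx 0.9505$, whose logarithm $\approx -0.0507$ sits just below $-1/20 = -0.05$, giving the bound.

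For part 2 (lower tail), I would apply Markov to $\exp(-t \sum_a |J_a|)$ with $t > 0$, producing
\begin{equation*}
\mathbb{P}\!\left[\sum_{a=1}^A |J_a| \leq A/2\right] \leq \left(2\Phi(-t)\,e^{t^2/2 + t/2}\right)^A,
\end{equation*}
and set $t = 1$, so the base becomes $2e\,\Phi(-1)$. With $\Phi(-1) = 1 - \Phi(1) \approx 0.1587$ this is $\approx 0.8628$, whose logarithm $\approx -0.148$ lies comfortably below $-1/8 = -0.125$.

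There is no real obstacle: the MGF computation is a one-line Gaussian integral, and the rest is a choice of Chernoff parameter plus a numerical comparison. The only delicate point will be part 1, since the Chernoff bound at the near-optimal $t = 1/2$ beats $e^{-A/20}$ only by a small margin; I would therefore use a sufficiently accurate tabulated value of $\Phi(1/2)$ to close the numerical gap. Part 2 has plenty of slack, so the same strategy works there without any difficulty.
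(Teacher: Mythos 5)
Your proposal is correct and follows essentially the same route as the paper: the paper computes $\mathbb{E}[e^{\lambda|J|}]=e^{\lambda^2/2}\bigl(1+\mathrm{Erf}(\lambda/\sqrt{2})\bigr)$, which is identical to your $2e^{t^2/2}\Phi(t)$, and then applies Chernoff at $\lambda=1/2$ for the upper tail and $\lambda=1$ for the lower tail, exactly as you do. Your numerical verification (including the observation that part 1 closes only by a small margin at $t=1/2$) matches the paper's "basic inequalities for the resulting constants."
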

\begin{proof}
For $J\sim {\rm N}(0,1)$, we have
\begin{align}
    \mathbb{E}\left[\exp\left(\lambda |J|\right)\right]=e^{\frac{\lambda^2}{2}}
    \left(1+\rm{Erf}\left(\frac{\lambda}{\sqrt{2}}\right)\right).
\end{align}
The Chernoff bound then implies:
\begin{align}
\mathbb{P}\left[\sum^A_{a=1} |J_a| \geq A \right] \leq \left( e^{\frac{\lambda^2}{2}-\lambda}
    \left(1+\rm{Erf}\left(\frac{\lambda}{\sqrt{2}}\right)\right)\right)^A,  \notag  \\
\mathbb{P}\left[\sum^A_{a=1} |J_a| \leq A/2 \right] \leq \left( e^{\frac{\lambda^2}{2}+\frac{\lambda}{2}}
    \left(1-\rm{Erf}\left(\frac{\lambda}{\sqrt{2}}\right)\right)\right)^A.    
\end{align}
Evaluating the two expressions at $\lambda=\frac{1}{2}$ and $\lambda=1$ respectively and using basic inequalities for the resulting constants, we obtain the two bounds claimed in the Lemma.
\end{proof}

\section{Moment bound for dense SYK-\texorpdfstring{$q$}{q}}
\label{app:momentbound}
In this Appendix, we establish the moment bound $\mathbb{E}\big( A(1)^{2n} \big) \leq \big( C\sqrt{n} \big)^{2n}$, where $A(1)$ is defined as (in Eq. \eqref{eq:A(1)}):
\begin{equation}
    A(1) = C\frac{1}{\sqrt{n}}\binom{n}{q-1}^{-3/2}\sum_{j,k,l=1}^{n}\sum_{S,S',S''}J_{S,j}J_{S',k}J_{S'',l}\: f(S,S',S'',j,k,l).
\label{eq:A(1)2}
\end{equation}
The function $f$ in this expression is defined as (in Eq. \eqref{eq:indexfunction}):
\begin{equation}
    f(S,S',S'',j,k,l)  
\vphantom{\sum_{S}}=
    \begin{cases}
      1, & \text{if}\hspace{1em} (|S''\cap S| \text{ is odd}) \\
      & \hspace{1em}\wedge\:(|S'\cap (S''\triangle S)|+\delta_{k,l} \text{ is odd})\\
      & \hspace{2em}\wedge\:(|S| = |S'| = |S''| = q-1), \\
      &\\
      0, & \text{otherwise.}
      \end{cases}
\label{eq:indexfunction2}
\end{equation}

We classify the terms in the sum in Eq. \eqref{eq:A(1)2} into five classes whose total contributions to the sum are denoted by $D_{0}$, $D_{1}$, $D_{2}$, $D_{3}$ and $D_{4}$. $D_{0}$ comprises of all terms for which the three $J$'s are distinct. We shall therefore call the call the $D_{0}$ contribution the \textit{diagonal-free} contribution. $D_{1}$ comprises of all terms for which the three $J$'s are equal. $D_{2}$, $D_{3}$ and $D_{4}$ comprise of all terms for which exactly two out of three $J$'s are equal. 

Taking $f$ into account, and thereby the terms that actually appear in $A(1)$, we conclude that the terms appearing in each class $D_{0}$, $D_{1}$, $D_{2}$, $D_{3}$ and $D_{4}$ correspond to the index sets given in Table \ref{table:contributions_to_A(1)}. An illustration of \textit{examples} of the index sets $(S,j)$, $(S',k)$ and $(S'',l)$ associated with these different classes of contributions to $A(1)$ is given in Figure \ref{fig:SetsDiag}.

\begin{figure*}
    \centering
    \includegraphics[width=1.0\linewidth]{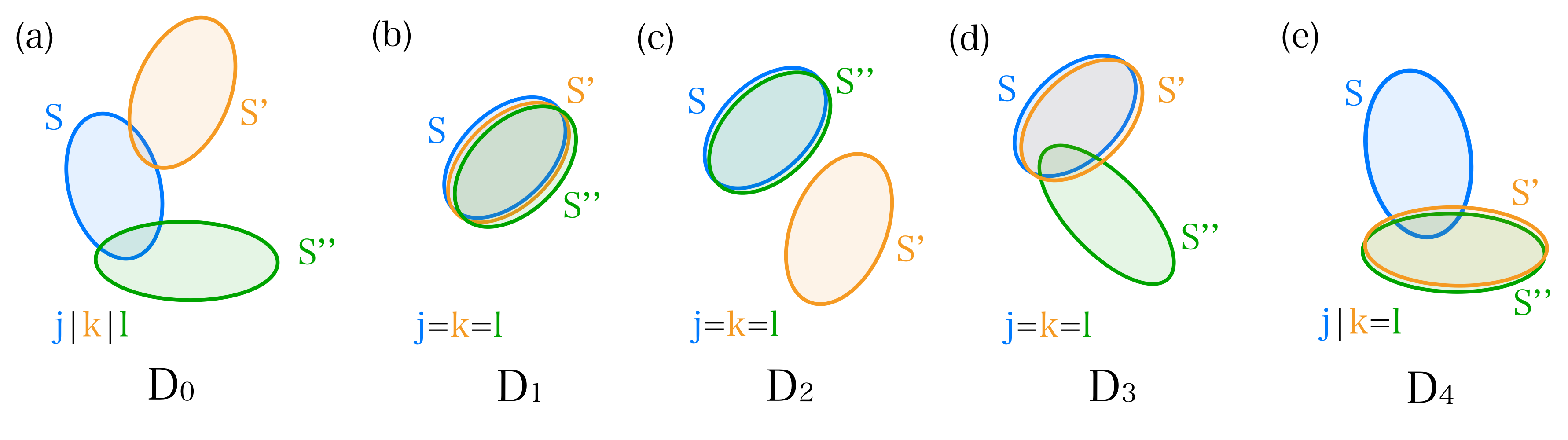}
    \caption{Illustration of \textit{examples} of index sets $(S,j)$, $(S',k)$ and $(S'',l)$ (corresponding to non-zero values of $f$ in Eq. \eqref{eq:indexfunction2}) associated with the different classes of contributions to $A(1)$ in Eq. \eqref{eq:A(1)2}. The $D_{0}$ contribution in (a) is the diagonal-free contribution (i.e., $(S,j)$, $(S',k)$ and $(S'',l)$ are unequal). The $D_{1}$, $D_{2}$, $D_{3}$ and $D_{4}$ contributions in resp. (b), (c), (d) and (e) are the diagonal contributions (i.e., at least two of $(S,j)$, $(S',k)$ and $(S'',l)$ are equal).}
    \label{fig:SetsDiag}
\end{figure*}

\begin{table}[h]
\centering
\begin{tabular}{||c c c||} 
 \hline
 class & associated index sets of terms & associated index sets of terms in $A(1)$ \\ [0.5ex] 
 \hline\hline
 $D_0$ & $(S,j)\neq (S',k) \neq (S'',l)\neq (S,j)$ & $(S,j)\neq (S',k) \neq (S'',l)\neq (S,j)$ \\
 $D_1$ & $(S,j) = (S',k) = (S'',l)$ & $(S,j) = (S',k) = (S'',l)$ \\ 
 $D_2$ & $(S,j)=(S'',l)\neq (S',k)$ & $S = S'' \neq S'$, $j=l=k$ \\ 
 $D_3$ & $(S,j)=(S',k)\neq (S'',l)$ & $S=S'\neq S''$, $j=l=k$ \\ 
 $D_4$ & $(S',k) = (S'',l) \neq (S,j)$ & $(S',k) = (S'',l) \neq (S,j)$ \\ [1ex]
 \hline
\end{tabular}
\caption{The index sets associated with each class of terms, and the index sets associated with each class of terms that appear in the expression for $A(1)$ (i.e., taking $f$ into account).}
\label{table:contributions_to_A(1)}
\end{table}

To upper bound the $(2n)$th moment of $A(1)_{\min}$, we upper bound the $r$th moments (for even $r\leq 16\cdot 2n$) of $D_{0}$, $D_{1}$, $D_{2}$, $D_{3}$, $D_{4}$ separately. In particular, if $\mathbb{E}\big( (D_{i})^{r} \big) \leq \big( C\sqrt{n} \big)^{r}$ for $i=0,1,...,4$ and all even $r\leq 16\cdot 2n$, then $\mathbb{E}\big( A(1)^{2n} \big)\leq \big( C\sqrt{n} \big)^{2n}$. Note that through the multinomial expansion and successive application of Cauchy-Schwarz inequality these former bounds indeed give an upper bound on the $(2n)$th moment of $A(1)$:
\begin{center}
\begin{align}
    \mathbb{E}\big( A(1)^{2n} \big) =&\: \mathbb{E}\Big( (D_{0}+D_{1}+D_{2}+D_{3}+D_{4})^{2n} \Big) = \sum_{k_{0}+...+k_{4}=2n}\frac{2n!}{k_{0}!...k_{4}!}\mathbb{E}\Big( D_{0}^{k_{0}}D_{1}^{k_{1}}...D_{4}^{k_{4}} \Big) \nonumber \\
    \leq&\: \sum_{k_{0}+...+k_{4}=2n}C^{2n}\:\mathbb{E}\big( D_{0}^{2k_{0}} \big)^{1/2} \mathbb{E}\big( D_{1}^{4k_{1}} \big)^{1/4} \mathbb{E}\big( D_{2}^{8k_{2}} \big)^{1/8} \mathbb{E}\big( D_{3}^{16k_{3}} \big)^{1/16} \mathbb{E}\big( D_{4}^{16k_{4}} \big)^{1/16} \nonumber \\
    \leq&\: \sum_{k_{0}+...+k_{4}=2n}C^{2n}\: \big( C\sqrt{n} \big)^{k_{0}+...+k_{4}} 
    \leq \big( C\sqrt{n} \big)^{2n},
\label{eq:momentboundsplit}
\end{align}
\end{center}
where we have used that the multinomial coefficient can be upper bounded by $C^{2n}$ and that the number of $5$-tuples of non-negative integers whose sum equals $2n$ is upper bounded by $Cn^{4}$ (which is smaller than $C^{2n}$ for some constant $C$). Although clearly the $r$th moments of e.g. $D_{0}$ have to only be bounded for even $r\leq 2\cdot 2n$, we bound -- for the sake of clarity -- the $r$th moments for even $r\leq 16\cdot 2n$ for \textit{all} $D_{i}$'s. We first deal with the case of $D_{0}$, since the fact that this contribution is diagonal-free allows one to employ a decoupling technique. Afterwards, we will consider the $D_{1}$, $D_{2}$, $D_{3}$ and $D_{4}$ contributions. 

First, we state the following lemma, which will be useful throughout this appendix.
\begin{lemma}
Let $P$ and $P'$ be two polynomials of centered Gaussian random variables (i.e., the monomials are formed by products of elements from a sequence of independent centered Gaussian random variables, and each variable is allowed to appear in a monomial multiple times) with non-negative coefficients. Then, for any even $r$, $\mathbb{E}\big( |P+P'|^r \big)\geq \mathbb{E}\big( |P|^r \big)$.
\label{lemma:momentsofgaussianpolynomials}
\end{lemma}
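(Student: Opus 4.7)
My plan is to exploit two elementary facts: (i) for even $r$ one has $|x|^r = x^r$, so that the moment inequality becomes a statement about honest polynomials in the Gaussian variables; and (ii) for a polynomial $Q$ in centered independent Gaussian variables $\{g_i\}$ with non-negative coefficients, one has $\mathbb{E}(Q)\geq 0$. The latter is immediate upon expanding $Q=\sum_\gamma c_\gamma \prod_i g_i^{e_{i,\gamma}}$ with $c_\gamma\geq 0$: by independence, each monomial's expectation factorizes as $\prod_i \mathbb{E}(g_i^{e_{i,\gamma}})$, which is either zero (whenever some $e_{i,\gamma}$ is odd) or a product of non-negative even Gaussian moments $(2m-1)!!\,\sigma_i^{2m}$.

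Given these facts, I would proceed in three short steps. First, since $r$ is even, write
\begin{equation}
    \mathbb{E}\big(|P+P'|^r\big) \;=\; \mathbb{E}\big((P+P')^r\big) \;=\; \sum_{k=0}^{r}\binom{r}{k}\,\mathbb{E}\big(P^{k}\,(P')^{r-k}\big).
\end{equation}
Second, observe that the class of polynomials with non-negative coefficients in the $g_i$ is closed under addition and multiplication. Hence $P^{k}(P')^{r-k}$ is, for every $k\in\{0,\ldots,r\}$, a polynomial with non-negative coefficients in the underlying Gaussian sequence. By the observation (ii) above, each term satisfies $\mathbb{E}\big(P^{k}(P')^{r-k}\big)\geq 0$. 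Third, retaining only the $k=r$ summand on the right-hand side and discarding the non-negative remainder yields
\begin{equation}
    \mathbb{E}\big(|P+P'|^r\big) \;\geq\; \binom{r}{r}\,\mathbb{E}\big(P^{r}\big) \;=\; \mathbb{E}\big(|P|^{r}\big),
\end{equation}
which is the claim.

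There is no real obstacle here; the only point that warrants care is verifying that the non-negativity-of-coefficients property is preserved under the binomial expansion, and checking that the evenness of $r$ (used to drop the absolute values) is the only feature of $r$ the argument relies on. In particular, the lemma does not require $P$ and $P'$ to be polynomials in the same Gaussian variables, nor does it restrict the degrees of the monomials; both extensions would be useful when applying the lemma to $D_0+D_1+\ldots+D_4$ in the proof of the moment bound on $A(1)^{2n}$, where individual $D_i$ expressions are polynomials with non-negative coefficients in the Gaussian couplings $\{J_{S,j}\}$. The lemma then supplies the inequality $\mathbb{E}(D_i^{r})\leq \mathbb{E}(A(1)^{r})$ style bounds only after grouping the diagonal and diagonal-free contributions appropriately, which explains why it is invoked at this point in the argument.
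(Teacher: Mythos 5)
Your proof is correct and takes essentially the same route as the paper: a binomial expansion of $(P+P')^r$ followed by the observation that each cross term $\mathbb{E}\big(P^{k}(P')^{r-k}\big)$ is non-negative because products of non-negative-coefficient polynomials in centered Gaussians have non-negative expectation. You simply spell out the factorization of the monomial moments in slightly more detail than the paper does.
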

\begin{proof}
We have $\mathbb{E}\big( |P+P'|^r \big) = \mathbb{E}\big( |P|^r \big) + \sum_{k=1}^{r}\binom{r}{k}\mathbb{E}\big( P^{r-k}(P')^{k} \big)$, and $\mathbb{E}\big( P^{r-k}(P')^{k} \big)$ is non-negative (for any integers $r,k$) since $P$ and $P'$ have non-negative coefficients and all moments of centered Gaussian random variables are non-negative.
\end{proof}

\subsection{Upper bound for moments of \texorpdfstring{$D_{0}$}{D0} (diagonal-free contribution)}
We start by noting that the function $f$ takes on values $0$ or $1$, dependent on the index sets $S,S',S'',j,k,l$ labeling the Majorana operators. We consider replacing $f$ in each term of $D_0$ (Eq.~\eqref{eq:A(1)2}) with $\delta_{a,b}\delta_{c,d}$, where either 
\begin{align}
 a\in(S'\cup k),\: b \neq c\in (S''\cup l),\;  d\in (S\cup j) \tag{option 1},
 \end{align}
 or 
 \begin{align}
a\in (S'\cup k),\: b \neq c\in (S\cup j),\; d\in  (S''\cup l) \tag{ option 2 }.
\end{align}
We denote this modified sum as $D_{0, \delta\delta}$.
By inspection, the index sets for which $f$ is non-zero all correspond to a non-zero contribution for $\delta_{a,b}\delta_{c,d}$. Note that those index sets for which $\delta_{a,b}\delta_{c,d}$ is non-zero also include index sets for which $f$ is zero.  Hence, the terms associated with non-zero $\delta_{a,b}\delta_{c,d}$ (for the two options listed above) are a superset of the terms that correspond to non-zero values of $f$. Therefore, by Lemma \ref{lemma:momentsofgaussianpolynomials}, the upper bounds on even moments of $D_{0}$ can be obtained by upper bounding the even moments of $D_{0, \delta\delta}$.

We will denote the part of the sum $D_{0, \delta\delta}$ corresponding to option $1$ as $D_{0,\min}$:
\begin{equation}
    D_{0,\min} := C\frac{1}{\sqrt{n}}\binom{n}{q-1}^{-3/2}\hspace{-3.5em}\sum_{\substack{j,k,l,\\S,S',S'',\\ \\ \text{s.t. }|(S'\cup k)\cap (S''\cup l)|\geq 1 \\ \text{and } |(S''\cup l)\cap (S\cup j)|\geq 1}}\hspace{-3em}J_{S,j}J_{S',k}J_{S'',l},\!
\label{eq:Amin}
\end{equation}
where the sum is over indices such that $(S,j)\neq (S',k)\neq (S'',l)\neq (S,j)$ (by definition of $D_{0}$) and such that $(S'\cup k)\cap (S''\cup l)$ and $(S''\cup l)\cap (S\cup j)$ differ by at least one element. Any bound for all even moments of $D_{0,\min}$ also holds for $D_{0, \delta\delta}-D_{0,\min}$ which corresponds to option 2, due to the symmetry $(S,j)\leftrightarrow (S'',l)$ between the two options.
An upper bound on all even moments of $D_{0,\delta\delta}$ (and, by implication, $D_{0}$) then follows from binomial expansion and application of the Cauchy-Schwarz inequality, similarly to Eq.~\eqref{eq:momentboundsplit}. Thus it only remains to prove $\mathbb{E}\big( |D_{0,\min}|^{r} \big)<(C\sqrt{n})^r$ for all even $r$.

To upper bound the even moments of $D_{0,\min}$, we are going to employ a decoupling technique. To that end, we will study the even moments of a related decoupled quantity. This decoupled quantity is defined as $D_{0,\min}$ but with the standard Gaussian random variables $J_{S,j}$, $J_{S',k}$ and $J_{S'',l}$ (selected from a single sequence of standard Gaussian random variables) being replaced by their \textit{decoupled} versions $J_{S,j}^{(1)}$, $J_{S',k}^{(2)}$ and $J_{S'',l}^{(3)}$ (selected from \textit{three} independent sequences of standard Gaussian random variables). The related decoupled quantity is given by (where the sum is again over indices $(S,j)\neq (S',k)\neq (S'',l)\neq (S,j)$ and again such that $(S'\cup k)\cap (S''\cup l)$ and $(S''\cup l)\cap (S\cup j)$ differ by at least one element):
\begin{equation}
    C\:\frac{1}{n^{3q/2-1}}\hspace{-2.5em}\sum_{\substack{j,k,l,\\S,S',S'',\\ \\ \text{s.t. }|(S'\cup k)\cap (S''\cup l)|\geq 1 \\ \text{and } |(S''\cup l)\cap (S\cup j)|\geq 1 }}\hspace{-2.8em}J_{S,j}^{(1)}J_{S',k}^{(2)}J_{S'',l}^{(3)},
\label{eq:amindecoupled}
\end{equation}
where we have additionally used that $(k/l)^{l}\leq\binom{k}{l}\leq (e\:k/l)^{l}$.

To upper bound the even moments of this decoupled quantity, we will make use of Lemma \ref{lemma:latala} below from \cite{GaussianChaoses}. The even moments of this decoupled sum are upper bounded by upper bounding the even moments of a decoupled sum whose terms are a superset of the terms in the sum in Eq. \ref{eq:amindecoupled}. Through Lemma \ref{lemma:momentsofgaussianpolynomials}, the even moments of the latter sum are larger than those of the former sum. For each $J^{(i)}_{S,j}$, we introduce $q!-1$ additional independent standard Gaussian random variables associated with the $q!$ permutations of the indices in the subsets of size $q$. Furthermore, we introduce additional independent standard Gaussian random variables for which some or all of the $q$ indices that label them are equal. We consider a sum over lists of $3q$ indices (which label the independent standard Gaussian random variables) $i_{1}^{(1)},\ldots,i_{q}^{(1)}$, $i_{1}^{(2)},\ldots,i_{q}^{(2)}$ and $i_{1}^{(3)},\ldots,i_{q}^{(3)}$ (with each index in $[2n]$), instead of the sum over subsets of $[2n]$ in Eq. \eqref{eq:amindecoupled}. Note that the sum over lists, by definition, can contain terms for which two (or three) of the Gaussian random variables have equal index sets. 

The index lists $i_{1}^{(1)},\ldots,i_{q}^{(1)}$ and $i_{1}^{(2)},\ldots,i_{q}^{(2)}$ that are summed over each have \textit{any} one index (denoted by resp. $x$ and $y$) that is equal to an index in $i_{1}^{(3)},\ldots,i_{q}^{(3)}$. If we additionally sum over all `positions' of the $x$ and $y$ indices (where $p_1$, $p_2$, $p_3$ and $p_4$ $\in\{1,\ldots,q\}$ label these positions), we obtain the sum (see Eq. \eqref{eq:amindecoupled3} below) whose terms are a superset of those in the sum in Eq. \eqref{eq:amindecoupled}. Note that this sum in Eq. \eqref{eq:amindecoupled3} contains all the contributions from a sum over lists of indices, and contains some terms multiple times that would occur only once in a sum over lists of indices: For example, in the hypothetical case $q=2$, one could have a contribution $J^{(1)}_{3,3}J^{(2)}_{6,7}J^{(3)}_{3,7}$ that would appear once in the sum over lists of indices but appears twice in the sum in Eq. \eqref{eq:amindecoupled3} (once for $p_1 = 1$ and once for $p_1 = 2$). Through Lemma \ref{lemma:momentsofgaussianpolynomials}, the even moments of the sum in Eq. \eqref{eq:amindecoupled3} will therefore be larger than those of the sum over lists of indices (and therefore larger than those of the sum in Eq. \eqref{eq:amindecoupled}), and it will thus suffice to upper bound the even moments of the sum in Eq. \eqref{eq:amindecoupled3}.
\begin{multline}
    D_{0,\min}^{\text{decoupled}} := C\:\frac{1}{n^{3q/2-1}}\sum_{\substack{p_1,\ldots,p_4=1\\\text{s.t. }p_3\neq p_4}}^{q}\Bigg[\\ \times
    \sum_{\substack{i^{(1)}_{1},..,i^{(1)}_{p_1-1},x,\\i^{(1)}_{p_1+1},..,i^{(1)}_{q}=1}}^{2n} \hspace{+1.0em}
    \sum_{\substack{i^{(2)}_{1},..,i^{(2)}_{p_2-1},y,\\i^{(2)}_{p_2+1},..,i^{(2)}_{q}=1}}^{2n} \hspace{+1.0em}
    \sum_{\substack{i^{(3)}_{1},..,i^{(3)}_{p_3-1},x,i^{(3)}_{p_3+1},..,\\i^{(3)}_{p_4-1},y,i^{(3)}_{p_4+1},..,i^{(3)}_{q}=1}}^{2n} \hspace{+0.5em}
    \bigg(J^{(1)}_{\substack{i^{(1)}_{1},..,i^{(1)}_{p_1-1},x,\\i^{(1)}_{p_1+1},..,i^{(1)}_{q}}} J^{(2)}_{\substack{i^{(2)}_{1},..,i^{(2)}_{p_2-1},y,\\i^{(2)}_{p_2+1},..,i^{(2)}_{q}}} J^{(3)}_{\substack{i^{(3)}_{1},..,i^{(3)}_{p_3-1},x,i^{(3)}_{p_3+1},..,\\i^{(3)}_{p_4-1},y,i^{(3)}_{p_4+1},...,i^{(3)}_{q}}}\bigg)
    \Bigg].
\label{eq:amindecoupled3}
\end{multline}

The free indices ($q-1$ indices of $i_{1}^{(1)},\ldots,i_{q}^{(1)}$ and $i_{1}^{(2)},\ldots,i_{q}^{(2)}$, and $q-2$ indices of $i_{1}^{(3)},\ldots,i_{q}^{(3)}$) can be summed over to obtain new independent standard Gaussian random variables denoted by $K^{(1)}_{x,p_1}$, $K^{(2)}_{y,p_2}$ and $K^{(3)}_{x,p_3;y,p_4}$:
\begin{subequations}
\begin{equation}
    K^{(1)}_{x,p_1} := 1/\sqrt{(2n)^{q-1}}\hspace{-2.5em}\sum_{i^{(1)}_{1},..,i^{(1)}_{p_1-1},i^{(1)}_{p_1+1},..,i^{(1)}_{q}=1}^{2n}\hspace{-1.5em}J^{(1)}_{i^{(1)}_{1},..,i^{(1)}_{p_1-1},x,i^{(1)}_{p_1+1},..,i^{(1)}_{q}},
\end{equation}
\begin{equation}
    K^{(2)}_{y,p_2} := 1/\sqrt{(2n)^{q-1}}\hspace{-2.5em}\sum_{i^{(2)}_{1},..,i^{(2)}_{p_2-1},i^{(2)}_{p_2+1},..,i^{(2)}_{q}=1}^{2n}\hspace{-1.5em}J^{(2)}_{i^{(2)}_{1},..,i^{(2)}_{p_2-1},y,i^{(2)}_{p_2+1},..,i^{(2)}_{q}},
\end{equation}
\begin{equation}
    K^{(3)}_{x,p_3;y,p_4} := 1/\sqrt{(2n)^{q-2}}\hspace{-2.5em}\sum_{\substack{i^{(3)}_{1},..,i^{(3)}_{p_3-1},i^{(3)}_{p_3+1},..,\\i^{(3)}_{p_4-1},i^{(3)}_{p_4+1},..,i^{(3)}_{q}=1}}^{2n}\hspace{-1.5em}J^{(3)}_{i^{(3)}_{1},..,i^{(3)}_{p_3-1},x,i^{(3)}_{p_3+1},..,i^{(3)}_{p_4-1},y,i^{(3)}_{p_4+1},...,i^{(3)}_{q}},
\end{equation}
\end{subequations}
where we have used that the normalized sum $1/\sqrt{m}\sum_{i=1}^{m}J_{i}$ of a sequence of standard Gaussian random variables $J_{1},\ldots,J_{m}$ is again a standard Gaussian random variable. We now obtain the following expression for $D_{0,\min}^{\text{decoupled}}$:
\begin{equation}
    D_{0,\min}^{\text{decoupled}} := C\sum_{\substack{p_1,\ldots,p_4=1\\\text{s.t. }p_3\neq p_4}}^{q}\bigg[\frac{1}{n}\sum_{x,y=1}^{2n}K_{x,p_1}^{(1)}K_{y,p_2}^{(2)}K_{x,p_3;y,p_4}^{(3)}\bigg].
\label{eq:amindecoupled2}
\end{equation}
The sum over all free indices gives an extra total factor of $n^{3q/2-2}$, which partially cancels against $n^{3q/2-1}$ in Eq. \eqref{eq:amindecoupled3}. Importantly, we note that now the random variables $K^{(1)}_{x,p_1}$ and $K^{(1)}_{x',p_1}$ are independent for $x\neq x'$ (and equivalently for $K_{y,p_2}^{(2)}$ and $K_{x,p_3;y,p_4}^{(3)}$). We will apply Lemma \ref{lemma:latala} from \cite{GaussianChaoses} separately to each contribution to $D_{0,\min}^{\text{decoupled}}$ in Eq. \eqref{eq:amindecoupled2} (with a contribution corresponding to one combination of $p_i$'s).

\begin{lemma}[Theorem 1 in \cite{GaussianChaoses}]
Let $Y\in \mathbb{R}^{N\times \ldots \times N}$ be a $d$-dimensional matrix and define:
\begin{equation}
    F\Big(\{K_{1}^{(j)},\ldots,K_{N}^{(j)}\}_{j=1}^{d}\Big) := \sum_{i_{1},\ldots,i_{d}=1}^{N}Y_{i_{1},\ldots,i_{d}}\prod_{j=1}^{d}K_{i_{j}}^{(j)},
\label{eq:Fgaussianchaos}
\end{equation}
where $\{K_{1}^{(j)},\ldots,K_{N}^{(j)}\}_{j=1}^{d}$ are $d$ independent sequences of $N$ standard Gaussian random variables. Then for any integer $k\geq 2$:
\begin{align}
    \mathbb{E}\Big( \bigl\lvert F\big(\{K_{1}^{(j)},\ldots,K_{N}^{(j)}\}_{j=1}^{d}\big)\bigr\rvert^{k}  \Big)\leq& \: \big( C\: \sum_{\mathcal{P}}k^{|\mathcal{P}|/2}\|Y\|_{\mathcal{P}} \big)^{k} \nonumber \\
    \leq&\: \big( C\: \max_{\mathcal{P}}\big[ k^{|\mathcal{P}|/2}\|Y\|_{\mathcal{P}} \big] \big)^{k},\notag
\end{align}
where $\mathcal{P}$ are partitions of $[d]$ into non-empty parts $(P_{1},\ldots,P_{s})$. The second inequality holds because the number of partitions of $[d]$ into non-empty parts is constant in $n$ (since $d$ is constant in $n$). The quantity $\|Y\|_{\mathcal{P}}$ is defined as:
\begin{align}
    \|Y\|_{\mathcal{P}} &= \|Y\|_{(P_{1},\ldots,P_{s})}:= \max\Bigg\{ \sum_{ i_{1},\ldots,i_{d} = 1}^{N} Y_{i_{1},\ldots,i_{d}}\:x_{i_{P_1}}^{(1)}\ldots x_{i_{P_s}}^{(s)} \: 
    :\! \sum_{i_{P_1}}\big(x_{i_{P_1}}^{(1)}\big)^{2}\leq 1,\ldots,\sum_{i_{P_k}}\big(x_{i_{P_k}}^{(k)}\big)^{2}\leq 1 \Bigg\},
\label{eq:ynorm}
\end{align}
with each $x\in \mathbb{R}$.
\label{lemma:latala}
\end{lemma}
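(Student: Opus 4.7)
\medskip

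\noindent\textbf{Proof sketch proposal.} My plan would be to induct on the chaos order $d$. The base case $d=1$ is immediate: $F = \sum_{i} Y_i K_i^{(1)}$ is a centered Gaussian random variable with variance $\|Y\|_2^2 = \|Y\|_{\{\{1\}\}}^2$, and the standard Gaussian moment bound $\mathbb{E}(|F|^k) \le (C\sqrt{k}\,\|Y\|_2)^k$ matches the claimed inequality (there is only one partition $\mathcal{P}=\{\{1\}\}$, contributing the required $k^{1/2}$ factor).

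For the inductive step, I would fix $d \ge 2$ and freeze the first $d-1$ Gaussian sequences, viewing $F$ as a linear form $F = \sum_{i_d} G_{i_d} K_{i_d}^{(d)}$ whose coefficients
\begin{equation*}
G_{i_d} \;=\; \sum_{i_1,\ldots,i_{d-1}} Y_{i_1,\ldots,i_d} \, K_{i_1}^{(1)} \cdots K_{i_{d-1}}^{(d-1)}
\end{equation*}
are themselves $(d-1)$-chaoses. Conditionally on $K^{(1)},\ldots,K^{(d-1)}$, $F$ is Gaussian, so $\mathbb{E}\bigl(|F|^k \mid K^{(1)},\ldots,K^{(d-1)}\bigr) \le (C\sqrt{k})^k \bigl(\sum_{i_d} G_{i_d}^2\bigr)^{k/2}$. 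Taking the outer expectation reduces everything to controlling $\mathbb{E}\bigl[\bigl(\sum_{i_d} G_{i_d}^2\bigr)^{k/2}\bigr]$, i.e.\ moments of the Euclidean norm of a vector-valued $(d-1)$-chaos indexed by $i_d$.

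The next step is to translate this vector-moment bound back into partition norms of the tensor $Y$. To bound $\mathbb{E}\bigl[\|G\|_2^k\bigr]$, I would apply the inductive hypothesis \emph{uniformly} in $i_d$: writing $\|G\|_2 = \sup_{\|x\|_2 \le 1} \sum_{i_d} x_{i_d} G_{i_d}$, one can view $x_{i_d} G_{i_d}$ as a $(d-1)$-chaos whose tensor is $Y$ contracted against $x$ along coordinate $d$. Applying the inductive hypothesis and then taking the supremum over $x$ turns each partition $\mathcal{P}'$ of $[d-1]$ into one of two partitions of $[d]$: either $d$ becomes its own new part (contributing the extra $\sqrt{k}$ from the outer Gaussian step), or $d$ is absorbed into one of the existing parts of $\mathcal{P}'$ (no new factor of $\sqrt{k}$, but a stronger operator-style norm). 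Each partition of $[d]$ arises exactly once through this bookkeeping, with the factor $k^{|\mathcal{P}|/2}$ emerging from counting how many ``Gaussian concentration'' steps have been applied.

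The hard part, I expect, is the uniform supremum-over-$x$ step: the inductive hypothesis only gives moment bounds for each fixed $x$, whereas we need a moment bound for the supremum. Rigorously carrying this out requires either a chaining/net argument or the use of Borell's Gaussian concentration inequality (or Gaussian hypercontractivity) to upgrade pointwise moment bounds to supremum moment bounds without losing the correct partition structure. Latała's original argument uses a delicate dyadic decomposition of the unit ball that simultaneously tracks all partition norms; reproducing this carefully, while showing that no additional partitions (or worse, extra logarithmic factors) appear, is the main technical difficulty. Once this step is established, the induction closes and the stated bound follows.
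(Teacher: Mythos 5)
The paper does not prove this Lemma at all: it is imported verbatim as Theorem 1 of Latała \cite{GaussianChaoses}, so there is no in-paper argument to compare yours against. Judged on its own terms, your inductive skeleton has the right shape. Conditioning on the first $d-1$ Gaussian sequences does reduce the problem to moments of $\bigl(\sum_{i_d} G_{i_d}^2\bigr)^{1/2}$ for the vector-valued $(d-1)$-chaos $G$, and the bookkeeping by which every partition of $[d]$ arises from a partition of $[d-1]$ either by adjoining $\{d\}$ as a new singleton block (picking up the factor $\sqrt{k}$ from the conditional Gaussian moment bound) or by inserting $d$ into an existing block is exactly how the family of norms $\|Y\|_{\mathcal{P}}$ is generated. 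The base case $d=1$ is also correct.

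However, the step you flag as ``the hard part'' is not a technical loose end to be filled in routinely; it is essentially the entire content of the theorem. The inductive hypothesis controls $\mathbb{E}\bigl[\bigl\lvert\sum_{i_d} x_{i_d} G_{i_d}\bigr\rvert^k\bigr]$ for each \emph{fixed} $x$ in the unit ball, whereas closing the induction requires controlling $\mathbb{E}\bigl[\sup_{\|x\|\le 1}\bigl\lvert\sum_{i_d} x_{i_d}G_{i_d}\bigr\rvert^k\bigr]$. Interchanging the supremum and the expectation is illegitimate, and the obvious repairs fail quantitatively: a union bound or $\varepsilon$-net over the unit ball of $\mathbb{R}^N$ has cardinality $e^{cN}$, which converts into parasitic factors growing with $N$ (not with $k$) in the moment bound and destroys the claimed partition structure. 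Latała's actual proof replaces this by a delicate entropy and chaining construction adapted simultaneously to all the norms $\|Y\|_{\mathcal{P}}$, and that construction occupies the bulk of his paper; neither hypercontractivity nor Borell's inequality alone delivers the sharp $k^{|\mathcal{P}|/2}$ dependence. So your proposal is an accurate map of the proof architecture together with a correct diagnosis of where the difficulty sits, but the decisive step is left as a black box and the sketch does not constitute a proof. For the purposes of this paper that is immaterial, since the Lemma is used purely as an external citation.
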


\begin{remark}
If $F$ in Eq. \eqref{eq:Fgaussianchaos} is diagonal-free (i.e., $Y_{i_1,\ldots,i_d} = 0$ if $i_j=i_k$ for any $j\neq k$) then the moments of the `\textit{decoupled}' $F$ in Eq. \eqref{eq:Fgaussianchaos} are (up to constants only depending on $d$) an upper bound for the moments of its `\textit{coupled}' counterpart $F'\big(K_{1},\ldots,K_{N}\big) := \sum_{i_{1},\ldots,i_{d}=1}^{N}Y_{i_{1},\ldots,i_{d}}\prod_{j=1}^{d}K_{i_{j}}$ (i.e., where the random variables are all taken from the \textit{same} sequence of $N$ standard Gaussian random variables):
\begin{equation*}
    \mathbb{E}\big( (F')^{r} \big) \leq C \mathbb{E}\big( (F)^{r} \big).
\end{equation*}
See e.g. Theorem 2.1 in \cite{DecouplingInequalities}.
\end{remark}
The fact that this decoupling inequality only holds for diagonal-free polynomials is exactly the reason for differentiating between the diagonal-free contribution $D_{0}$ and the diagonal contributions $D_{1},D_{2},D_{3},D_{4}$ to $A(1)$.

For $\sum_{x,y=1}^{2n}K_{x,p_1}^{(1)}K_{y,p_2}^{(2)}K_{x,p_3;y,p_4}^{(3)}$ in Eq. \eqref{eq:amindecoupled2}, we see that $d=3$ and hence the possible partitions $\mathcal{P}$ are $\{1,2,3\}$, $\{1\}\{2,3\}$, $\{2\}\{1,3\}$, $\{1,2\}\{3\}$, $\{1\}\{2\}\{3\}$. The associated $\|Y\|_{\mathcal{P}}$ values can be (straightforwardly) calculated and are given in Table \ref{table:partitions}. Using Table \ref{table:partitions} and Lemma \ref{lemma:latala}, we find the following upper bound on $\mathbb{E}\Big(\Big[C/n\sum_{x,y=1}^{2n}K_{x,p_1}^{(1)}K_{y,p_2}^{(2)}K_{x,p_3;y,p_4}^{(3)}\Big]^{r}\Big)$ (for all even $r$):
\begin{equation}
    \Big( C/n\:\max\big(\sqrt{r}n,r\sqrt{n},r\sqrt{n},r,r^{3/2}\big) \Big)^{r} \leq \big(C \sqrt{n}\big)^{r}.
\end{equation}

\begin{table}[ht]
\centering
\begin{tabular}{||c c c||} 
 \hline
 $\mathcal{P}$ & $|\mathcal{P}|$ & $\|Y\|_{\mathcal{P}}$ \\ [0.5ex] 
 \hline\hline
 $\{1,2,3\}$ & $1$ & $2n$ \\
 $\{1\}\{2,3\}$ & $2$ & $\sqrt{2n}$ \\ 
 $\{2\}\{1,3\}$ & $2$ & $\sqrt{2n}$ \\ 
 $\{1,2\}\{3\}$ & $2$ & $1$ \\ 
 $\{1\}\{2\}\{3\}$ & $3$ & $1$ \\ [1ex] 
 \hline
\end{tabular}
\caption{The different partitions $\mathcal{P}$ of $[3]$ into non-empty parts, with the associated number of parts $|\mathcal{P}|$, and the associated $\|Y\|_{\mathcal{P}}$ for $\sum_{x,y=1}^{2n}K_{x,p_1}^{(1)}K_{y,p_2}^{(2)}K_{x,p_3;y,p_4}^{(3)}$ in Eq. \eqref{eq:amindecoupled2}. $\|Y\|_{\mathcal{P}}$ for the first four partitions can be straightforwardly evaluated by applying Eq. \eqref{eq:ynorm} to Eq. \eqref{eq:amindecoupled}, and the fifth $\|Y\|_{\mathcal{P}}$ can be evaluated by additional application of the Cauchy-Schwarz inequality.}
\label{table:partitions}
\end{table}

Note that $D_{0,\min}^{\text{decoupled}}$ in Eq. \ref{eq:amindecoupled2} consists of $q^4$ (with $q=O(1)$) contributions, each corresponding to a given combination of $p_i$'s. We can again use the multinomial expansion and successive application of the Cauchy-Schwarz inequality (together with the fact that the multinomial coefficients can be upper bounded by $C^{r}$ and that the number of $q^{4}$-tuples of non-negative integers whose sum equals $r$ is upper bounded by $C^{r}$ for some constant $C$) to conclude that the upper bounds of $(C\sqrt{n})^{r}$ for $r$th moments (for all even $r$) of these contributions imply an upper bound of $(C\sqrt{n})^{r}$ for $r$th moments (for all even $r$) of $D_{0,\min}^{\text{decoupled}}$.

We now employ the decoupling inequality from the above remark to obtain
\begin{equation*}
    \mathbb{E}\big( |D_{0,\min}|^{r} \big) \!\leq\! C\mathbb{E}\big( \bigl\lvert D_{0,\min}^{\text{decoupled}}\bigr\rvert^{r} \big) \!\leq\! \big(C \sqrt{n}\big)^{r}.
\end{equation*}
From the arguments given previously, this implies the desired bound $\mathbb{E}\big( |D_{0}|^{r} \big)\leq \big(C \sqrt{n}\big)^{r}$ for all even $r$, in particular for $r\leq 16 \cdot 2n$.

\subsection{Upper bound for moments of \texorpdfstring{$D_{1}$}{D1}, \texorpdfstring{$D_{2}$}{D2}, \texorpdfstring{$D_{3}$}{D3} and \texorpdfstring{$D_{4}$}{D4}}
In the previous section we used a decoupling inequality to upper bound the $r$th moments (for even $r\leq 16\cdot 2n$) of $D_{0}$. These decoupling inequalities hold for (Gaussian) polynomials for which each Gaussian monomial is a product of \textit{distinct} Gaussian random variables, i.e., \textit{diagonal-free} polynomials. This holds indeed -- by definition -- for the $D_{0}$ contribution to $A(1)$, but not for contributions $D_{1}$, $D_{2}$, $D_{3}$ and $D_{4}$. For that reason, we cannot make use of the same decoupling inequality for the $D_{1}$, $D_{2}$, $D_{3}$ and $D_{4}$ contributions. Therefore, we have to resort to other methods to bound their $r$th moments (for even $r\leq 16\cdot 2n$).

\begin{itemize}
\item 
The $D_{1}$ contribution can be written as:
\begin{equation}
    D_{1} := C\frac{1}{\sqrt{n}}\binom{n}{q-1}^{-3/2}\sum_{j,S}\big(J_{S,j}\big)^{3}.
\end{equation}
The $r$th moment (with $r$ even) of $D_{1}$ can be upper bounded as follows:
\begin{center}
\begin{align}
    \mathbb{E}\Big( |D_{1}|^{r} \Big) \leq&\: C^{r}\Bigg[ \frac{1}{\sqrt{n}}\binom{n}{q-1}^{-3/2}\Bigg]^{r} \mathbb{E}\bigg[\Big( \sum_{S,j} \big(J_{S,j}\big)^{3} \Big)^{r}\bigg] \nonumber \\ \leq&\: C^{r}\Bigg[ \frac{1}{\sqrt{n}}\binom{n}{q-1}^{-3/2}\Bigg]^{r} \bigg(\binom{n}{q-1}n\:r^{3/2}\bigg)^{r} \nonumber \\
    \leq&\: C^{r} n^{(5/2-q/2)r},
\end{align}
\end{center}
where we have used that $\mathbb{E}\Big[ \Big(\sum_{i=1}^{m}K_{i}\Big)^{r} \Big] =\sum_{k_1 + \ldots + k_m = r} \frac{r!}{k_1!\ldots k_m!} \mathbb{E}\big( \big(K_1)^{k_1}\big)\ldots \mathbb{E}\big((K_m)^{k_m} \big)$ (for $K_1,\ldots,K_m$ independent random variables), the fact that $(S,j)$ can take on $2n\binom{2n}{q-1}$ values and the fact that the $p$th moment of a standard Gaussian random variable is equal to $(p-1)!!$ ($\leq p^{p/2}$). For even $r$, we therefore conclude that $\mathbb{E}\Big( |D_{1}|^{r} \Big) \leq \big( C\sqrt{n} \big)^{r}$. 

\item
The $D_{2}$ and $D_{3}$ contributions are equivalent and can be written as:
\begin{equation}
    D_{2} = D_{3} := C\frac{1}{\sqrt{n}}\binom{n}{q-1}^{-3/2}\sum_{j,S,S'\text{ s.t. }S\neq S'}\big(J_{S,j}\big)^{2}J_{S',j}.
\end{equation}
The $r$th moment (with $r$ even) can be written as follows:
\begin{equation}
    \mathbb{E}\Big( |D_{2}|^{r} \Big), \mathbb{E}\Big( |D_{3}|^{r} \Big) = C^{r}\Bigg[ \frac{1}{\sqrt{n}}\binom{n}{q-1}^{-3/2}\Bigg]^{r} \mathbb{E}\Big(\Big( \sum_{j,S,S'\text{ s.t. }S\neq S'} \big(J_{S,j}\big)^{2}J_{S',j} \Big)^{r} \Big).
\label{eq:D2D3bound}
\end{equation}
We define 
\begin{equation}
g:=\sum_{j,S,S'\text{ s.t. }S\neq S'} \big(J_{S,j}\big)^{2}J_{S',j},
\label{eq:g}
\end{equation}
for which $\mathbb{E}(g) = 0$. We note that $g$ is a homogeneous polynomial in standard Gaussian random variables of degree $3$. To upper bound the moments of $g$, and thereby the moments of $D_{2}$ and $D_{3}$, we use the following result from \cite{upperbound_moments_diagonal_gaussianpolynomials}. This result is an extension of Lemma \ref{lemma:latala} from \cite{GaussianChaoses} to the setting where diagonal terms are allowed to appear in the polynomial. The extension also includes inhomogeneous polynomials, although in the current setting we are considering only homogeneous polynomials.

\begin{lemma}[Theorem 1.3 in \cite{upperbound_moments_diagonal_gaussianpolynomials}]
Let $\mathbf{K}:=K_{1}\ldots,K_{N}$ denote a sequence of $N$ independent standard Gaussian random variables and $g:\mathbb{R}^{N}\to \mathbb{R}$ a polynomial of degree $D$. Then, for all $r\geq 2$:
\begin{equation}
    \mathbb{E}\bigg(\Big[g(\mathbf{K}) - \mathbb{E}\big(g(\mathbf{K})\big)\Big]^{r}\bigg) \leq C^{r}\bigg[\sum_{1\leq d\leq 3}\sum_{\mathcal{P}([d])}r^{|\mathcal{P}|/2}\bigl\|\mathbb{E}\big(\mathbf{D}^{d}g(\mathbf{K})\big)\bigr\|_{\mathcal{P}}\bigg]^{r},
\label{eq:diagupperbound}
\end{equation}
where $\mathcal{P}$ are partitions of $[d]$ into non-empty parts, and $\|Y\|_{\mathcal{P}}$ (with $Y$ a $d$-way tensor) is defined in Eq. \eqref{eq:ynorm}. $\mathbf{D}^{d}g(\mathbf{K})$ denotes the $d$th derivative of $g(\mathbf{K})$, which corresponds to a $d$-way tensor with entries equal to $\big[\mathbf{D}^{d}g(\mathbf{K})\big]_{i_{1},\ldots,i_{d}} = \frac{\partial}{\partial K_{i_1}}\ldots \frac{\partial}{\partial K_{i_d}}g(\mathbf{K})$. For $d = D$, $\mathbf{D}^{d}g(\mathbf{K})$ is constant.
\label{lem:latalaextended}
\end{lemma}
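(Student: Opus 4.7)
The plan is to reduce the general (``diagonal'') case to the diagonal-free setting already handled by Lemma \ref{lemma:latala}. Since $g(\mathbf{K})$ is a polynomial of degree $D$ in independent standard Gaussians, I would first expand $g(\mathbf{K}) - \mathbb{E}(g(\mathbf{K}))$ in the Hermite (Wiener-chaos) basis,
\[
g(\mathbf{K}) - \mathbb{E}(g(\mathbf{K})) = \sum_{d=1}^{D} \frac{1}{d!} \sum_{i_1,\ldots,i_d} b^{(d)}_{i_1,\ldots,i_d}\, \mathbf{H}_d(K_{i_1},\ldots,K_{i_d}),
\]
where $\mathbf{H}_d$ denotes an appropriate product of Hermite polynomials and the symmetric coefficient tensor $b^{(d)}$ can be identified with $\mathbb{E}(\mathbf{D}^{d} g(\mathbf{K}))$ (up to fixed combinatorial factors) via Stein's formula / Gaussian integration by parts. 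Expansion terminates at $d = D$ because $g$ is a polynomial of that degree.

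Next I would apply Minkowski's inequality in $L^r$ to reduce the moment bound to one on each chaos order separately, so that it suffices to establish, for every fixed $d \in \{1,\ldots,D\}$, an estimate of the form
\[
\Big\| \sum_{i_1,\ldots,i_d} b^{(d)}_{i_1,\ldots,i_d}\, \mathbf{H}_d(K_{i_1},\ldots,K_{i_d}) \Big\|_{L^{r}} \leq C^{d} \sum_{\mathcal{P}([d])} r^{|\mathcal{P}|/2}\, \|b^{(d)}\|_{\mathcal{P}}.
\]
For each fixed $d$, I would decouple the Hermite chaos into a diagonal-free multilinear form in $d$ independent copies $\mathbf{K}^{(1)},\ldots,\mathbf{K}^{(d)}$ of the original Gaussian sequence, using a standard decoupling inequality for Gaussian $U$-statistics such as that of de la Pe\~na, Montgomery-Smith and Szulga \cite{DecouplingInequalities}. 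Once the chaos is written as a diagonal-free multilinear form in independent Gaussians, Lemma \ref{lemma:latala} applies verbatim and produces the partition-norm bound with the correct exponents $r^{|\mathcal{P}|/2}$.

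The principal obstacle is the careful bookkeeping of the lower-order corrections that arise from repeated indices. Hermite orthogonality eliminates many of the cross-terms, but expanding a monomial $K_{i_1}\cdots K_{i_d}$ whose index list contains coincidences in the Hermite basis still generates Hermite polynomials of strictly smaller degree $d' < d$, whose coefficient tensors are index-contractions of the original $b^{(d)}$. I would need to check that each such contracted coefficient tensor has partition-norms dominated by the corresponding partition-norms of $\mathbb{E}(\mathbf{D}^{d'} g(\mathbf{K}))$ already present on the right-hand side of the target inequality, so that these leftover pieces get absorbed into the sum over $1 \leq d \leq 3$ rather than inflating it. The cleanest way to organize the argument is an induction on the polynomial degree $D$, with the trivial base case $D = 1$ (where $g(\mathbf{K}) - \mathbb{E}(g(\mathbf{K}))$ is a centered Gaussian with variance equal to $\|b^{(1)}\|_{\{1\}}^{2}$) and with Latala's diagonal-free theorem providing the multilinear estimate at each inductive step.
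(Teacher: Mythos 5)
This lemma is not proved in the paper at all: it is imported verbatim as Theorem~1.3 of \cite{upperbound_moments_diagonal_gaussianpolynomials} and used as a black box, so there is no internal proof to compare your argument against. Your sketch is, however, a faithful reconstruction of how the Gaussian case of that theorem is actually established in the literature: decompose $g-\mathbb{E}(g)$ into Wiener--Hermite chaoses, identify the order-$d$ kernel with $\tfrac{1}{d!}\mathbb{E}\big(\mathbf{D}^{d}g(\mathbf{K})\big)$ via Stroock's formula (Gaussian integration by parts), split the $L^{r}$ norm chaos-by-chaos with the triangle inequality, decouple, and invoke Lata\l{}a's moment estimates (the paper's Lemma~\ref{lemma:latala}) on each decoupled multilinear form. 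Two remarks on your outline. First, the ``principal obstacle'' you describe --- tracking the lower-order Hermite kernels produced by repeated indices --- dissolves once you commit to Stroock's formula: the order-$d'$ kernel of $g$ \emph{is} $\mathbb{E}(\mathbf{D}^{d'}g)/d'!$ by definition of the chaos decomposition, so every contraction of $b^{(d)}$ arising from coincident indices is automatically absorbed into the $d'$-th term already present on the right-hand side; no separate domination argument is needed. Second, the decoupling step deserves more care than the sketch suggests: the de la Pe\~na--Montgomery-Smith inequalities apply to tetrahedral (diagonal-free) forms, whereas an order-$d$ Hermite chaos with a general symmetric kernel is not tetrahedral in the underlying $K_{i}$. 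The standard fix is to replace each $K_{i}$ by $m^{-1/2}\sum_{l=1}^{m}K_{i,l}$ and let $m\to\infty$, which renders the diagonal contributions negligible and reduces to the tetrahedral case, after which Lemma~\ref{lemma:latala} applies; alternatively one quotes the moment equivalence between a multiple Wiener--It\^o integral and its decoupled version directly. With those two points made precise, your plan yields the stated bound (with $\sum_{1\leq d\leq 3}$ read as $\sum_{1\leq d\leq D}$, as the paper specializes to $D=3$), so the proposal is correct in approach, modulo the usual work of turning a sketch into a proof.
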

For $g$ in Eq. \eqref{eq:g}, we have that $N=n\binom{n}{q-1}$, since the sequence of Gaussian random variables corresponds to $\{J_{S,j}\}$. To find an upper bound for the $r$th moment of $g$ using Eq. \eqref{eq:diagupperbound}, we first calculate $\mathbf{D}^{d}g$ for $d=1,2,3$. Then, for each $d$, we upper bound $\bigl\|\mathbb{E}\big(\mathbf{D}^{d}g\big)\bigr\|_{\mathcal{P}}$ for all partitions $\mathcal{P}$ of $[d]$. We will show that for all $d$ and associated partitions $\mathcal{P}([d])$, $\bigl\|\mathbb{E}\big(\mathbf{D}^{d}g\big)\bigr\|_{\mathcal{P}}$ can be upper bounded in such a way that $\mathbb{E}\Big( |D_{2}|^{r} \Big), \mathbb{E}\Big( |D_{3}|^{r} \Big) \leq \big( C\sqrt{n} \big)^{r}$ for all even $2\leq r\leq 16\cdot 2n$. Finally, the $0$th moment also (trivially) satisfies this upper bound, hence it holds for all even $r\leq 16\cdot 2n$.

The derivatives of $g$ are equal to:
\begin{equation}
    D\,g = \big( \sum_{S'\,:\, S'\neq S}J_{S',j}^{2}+2J_{S,j}\sum_{S'\,:\, S'\neq S}J_{S',j} \big)_{(S,j)} \Longrightarrow \mathbb{E}\big(D\,g\big) = \binom{n}{q-1}_{(S,j)},
\end{equation}
\begin{equation}
    D^{2}\,g =    \begin{pmatrix}
                2\sum_{S'\,:\, S'\neq S}J_{S',j}, & \text{if }(S,j)=(T,k) \\
                2(J_{T,j}+J_{S,j}), & \text{if } S\neq T\text{ and }j=k \\
                0, & \text{if }j\neq k
                \end{pmatrix}_{(S,j),(T,k)} \Longrightarrow \mathbb{E}\big(D^{2}g\big) = (0)_{(S,j),(T,k)},
\end{equation}
\begin{equation}
    D^{3}\,g =    \begin{pmatrix}
               2, & \text{if }(S=T\neq U \text{ or } S\neq T=U\text{ or }S=U\neq T)\text{ and }j=k=l \\
               0, & \text{if }(S=T=U\text{ or }S\neq T\neq U)\text{ and }j=k=l \\
               0, & \text{if }j,k,l\text{ are not all equal} \\ \end{pmatrix}_{(S,j),(T,k),(U,l)} \Longrightarrow \mathbb{E}\big(D^{3}\,g\big) = D^{3}\,g.
\end{equation}
In Table \ref{table:partitionsD2D3}, we give the values of $\bigl\|\mathbb{E}\big(\mathbf{D}^{d}g\big)\bigr\|_{\mathcal{P}}$ for all partitions $\mathcal{P}([d])$ for $d=1,2,3$. $\bigl\|\mathbb{E}\big(\mathbf{D}^{d}g\big)\bigr\|_{\mathcal{P}}$ for $d=1$ can be straightforwardly evaluated using Eq. \eqref{eq:ynorm} and for $d=2$ can be trivially evaluated by using $\mathbb{E}\big(\mathbf{D}^{2}g\big)=0$. For $d=3$, $\bigl\|\mathbb{E}\big(\mathbf{D}^{d}g\big)\bigr\|_{\mathcal{P}}$ can be upper bounded using Eq. \eqref{eq:ynorm}, and the triangle and Cauchy-Schwarz inequalities (for illustration purposes, we provide an example of the derivation of this upper bound for $\mathcal{P}=\{1,2\}\{3\}$ below).

\begin{table}[ht]
\centering
\captionsetup{width=.9\linewidth}
\begin{tabular}{||c c c||} 
 \hline
 $\mathcal{P}$ & $|\mathcal{P}|$ & $\bigl\|\mathbb{E}\big(\mathbf{D}^{d}g\big)\bigr\|_{\mathcal{P}}$ \\ [0.5ex] 
 \hline\hline
 $\{1\}$ & $1$ & $\binom{n}{q-1}^{3/2}$ \\ [0.5ex]
 \hline
 $\{1,2\}$ & $1$ & 0 \\
 $\{1\}\{2\}$ & $2$ & 0 \\ [0.5ex]
 \hline
 $\{1,2,3\}$ & $1$ & $C\sqrt{n}\binom{n}{q-1}$ \\
 $\{1\}\{2,3\}$ & $2$ & $Cn\binom{n}{q-1}$ \\ 
 $\{2\}\{1,3\}$ & $2$ & $Cn\binom{n}{q-1}$ \\ 
 $\{1,2\}\{3\}$ & $2$ & $Cn\binom{n}{q-1}$ \\ 
 $\{1\}\{2\}\{3\}$ & $3$ & $C\binom{n}{q-1}$ \\ [1ex] 
 \hline
\end{tabular}
\caption{The different partitions $\mathcal{P}$ of $[3]$ into non-empty parts, with the associated number of parts $|\mathcal{P}|$, and (the upper bounds for) the associated $\bigl\|\mathbb{E}\big(\mathbf{D}^{d}g\big)\bigr\|_{\mathcal{P}}$ for $g$ in Eq. \eqref{eq:g}.}
\label{table:partitionsD2D3}
\end{table}

Combining the upper bounds for $\bigl\|\mathbb{E}\big(\mathbf{D}^{d}g\big)\bigr\|_{\mathcal{P}}$ in Table \ref{table:partitionsD2D3} with the factor of $r^{|\mathcal{P}|/2}$($\leq Cn^{|\mathcal{P}|/2}$) in Eq. \eqref{eq:diagupperbound} and the normalization factor in Eq. \eqref{eq:D2D3bound}, we find -- using $\mathbb{E}(g) = 0$ -- that indeed $\mathbb{E}\Big( |D_{2}|^{r} \Big), \mathbb{E}\Big( |D_{3}|^{r} \Big) \leq \big( C\sqrt{n} \big)^{r}$ for all even $r\leq 16\cdot 2n$.

\textit{Example:} For illustration purposes, we give an explicit evaluation of $\bigl\|\mathbb{E}\big(\mathbf{D}^{d}g\big)\bigr\|_{\mathcal{P}}$ for $\mathcal{P}=\{1,2\}\{3\}$ (the evaluations for other $\mathcal{P}$'s follow using similar methods). By definition (Eq. \eqref{eq:ynorm}), we have:
\begin{multline}
    \bigl\|\mathbb{E}\big(\mathbf{D}^{3}g\big)\bigr\|_{\{1,2\}\{3\}} = \text{sup}\Bigl\{ \sum_{(S,j),(T,k),(U,l)} \mathbb{E}\big(\mathbf{D}^{3}g\big)_{(S,j),(T,k),(U,l)} x_{(S,j),(T,k)}y_{(U,l)}\\: \sum_{(S,j),(T,k)}x_{(S,j),(T,k)}^{2}\leq 1,\sum_{(U,l)}y_{(U,l)}^{2}\leq 1 \Bigr\}.
\end{multline}
Using the expression obtained for $\mathbb{E}\big(\mathbf{D}^{3}g\big)_{(S,j),(T,k),(U,l)}$, we obtain:
\begin{center}
\begin{align}
    \bigl\|\mathbb{E}\big(\mathbf{D}^{3}g\big)\bigr\|_{\{1,2\}\{3\}} =&\: \text{sup}\Bigl\{ \sum_{j,S,T} 2\: x_{(S,j),(T,j)}y_{(T,j)} + \sum_{j,S,T} 2\: x_{(S,j),(T,j)}y_{(S,j)} + \sum_{j,S,U} 2\: x_{(S,j),(S,j)}y_{(U,j)} \nonumber \\&\hspace{0.4\linewidth}: \sum_{(S,j),(T,k)}x_{(S,j),(T,k)}^{2}\leq 1,\sum_{(U,l)}y_{(U,l)}^{2}\leq 1 \Bigr\} \nonumber \\
    \leq&\: \text{sup}\Bigl\{ 2\sum_{S}\Bigl\lvert\sum_{j,T}  x_{(S,j),(T,j)}y_{(T,j)}\Bigr\rvert +2 \sum_{T}\Bigl\lvert\sum_{j,S} x_{(S,j),(T,j)}y_{(S,j)}\Bigr\rvert \nonumber \\&\hspace{0.1\linewidth}+ 2\sum_{j} \Bigl\lvert \sum_{S}x_{(S,j),(S,j)}\Bigr\rvert \: \Bigl\lvert \sum_{U}y_{(U,j)}\Bigr\rvert: \sum_{(S,j),(T,k)}x_{(S,j),(T,k)}^{2}\leq 1,\sum_{(U,l)}y_{(U,l)}^{2}\leq 1 \Bigr\} \nonumber \\
    \leq&\: \text{sup}\Bigl\{ 2\sum_{S}\|x\|\:\|y\| +2 \sum_{T}\|x\|\:\|y\| \nonumber \\&\hspace{0.05\linewidth}+ 2\sum_{j} \|x\|\sqrt{\binom{n}{q-1}}\:\|y\|\sqrt{\binom{n}{q-1}} : \sum_{(S,j),(T,k)}x_{(S,j),(T,k)}^{2}\leq 1,\sum_{(U,l)}y_{(U,l)}^{2}\leq 1 \Bigr\} \nonumber \\
    \leq&\: Cn\binom{n}{q-1},
\end{align}
\end{center}
where we have used the triangle inequality in the first inequality, and the Cauchy-Schwarz inequality for the second inequality (and we note that e.g. $\sum_{U}y_{(U,j)}$ is simply equal to the inner product of $y_{(j)}:=(y_{(U_1,j)},y_{(U_2,j)},\ldots)$ with the all-ones vector).

\item
The $D_{4}$ contribution can be written as:
\begin{equation}
    D_{4} := C\frac{1}{\sqrt{n}}\binom{n}{q-1}^{-3/2}\sum_{\substack{j,k,S,S' \\ \text{ s.t. }0<|S\cap S'|<q-1 \\\text{is odd}}} J_{S,j}\big(J_{S',k}\big)^{2}.
\end{equation}
We note that the main difference with the $D_{2}$ and $D_{3}$ contributions is that, for $D_{4}$, the sum is over the double index $j,k$ (instead of over the single index $j$), and over a restricted sum over sets $S,S'$ (instead of over a free sum over sets $S,S'$). To bound the moments of $D_{4}$, we will employ a similar method as for $D_{2}$ and $D_{3}$. The $r$th moment (with $r$ even) can be upper bounded as follows (where we drop the `$|S\cap S'|$ is odd' constraint using Lemma \ref{lemma:momentsofgaussianpolynomials} and denote the collection of subsets $S'$ such that $0<|S\cap S'|<q-1$ by $\sigma(S)$):
\begin{equation}
    \mathbb{E}\Big( |D_{4}|^{r} \Big) \leq C^{r}\Bigg[ \frac{1}{\sqrt{n}}\binom{n}{q-1}^{-3/2}\Bigg]^{r} \mathbb{E}\Big(\Big( \sum_{\substack{j,k,S,\\ S'\in \sigma(S)}} J_{S,j}\big(J_{S',k}\big)^{2} \Big)^{r} \Big).
\label{eq:D4bound}
\end{equation}
We note that $|\sigma(S)|$ can be upper bounded and lower bounded by $Cn^{q-2}$ (for some constants $C$). We define
\begin{equation}
    h:=\sum_{\substack{j,k,S,\\S'\in \sigma(S)}} J_{S,j}\big(J_{S',k}\big)^{2},
\label{eq:h}
\end{equation}
for which $\mathbb{E}(h) = 0$. We note that $h$ is a homogeneous polynomial in standard Gaussian random variables of degree $3$. To upper bound the moments of $g$, and thus the moments of $D_4$, we again use Lemma \ref{lemma:momentsofgaussianpolynomials} from \cite{upperbound_moments_diagonal_gaussianpolynomials}. We use Eq. \eqref{eq:diagupperbound} to find an upper bound for the $r$th moment of $h$. We first calculate $\mathbf{D}^{d}h$ for $d=1,2,3$. Then, for each $d$, we upper bound $\bigl\|\mathbb{E}\big(\mathbf{D}^{d}h\big)\bigr\|_{\mathcal{P}}$ for all partitions $\mathcal{P}$ of $[d]$. Thereby, we show that for all $d$ and associated partitions $\mathcal{P}([d])$, $\bigl\|\mathbb{E}\big(\mathbf{D}^{d}h\big)\bigr\|_{\mathcal{P}}$ can be upper bounded such that $\mathbb{E}\Big( |D_{4}|^{r} \Big)\leq \big(C\sqrt{n}\big)^{r}$ for all even $2\leq r\leq 16\cdot 2n$. The $0$th moment trivially satisfies this bound, and therefore it holds for all even $r\leq 16\cdot 2n$.

The derivatives of $h$ are equal to:
\begin{equation}
    D\,h = \big( \sum_{S'\in \sigma(S),p}J_{S',p}^{2}+2J_{S,j}\sum_{S'\in\sigma(S),p}J_{S',p} \big)_{(S,j)} \Longrightarrow \mathbb{E}\big(Dh\big) \leq \big(Cn^{q-1}\big)_{(S,j)},
\end{equation}
where the sum over $p$ runs from $0$ to $n$ and we have used the bounds on $|\sigma(S)|$. Note that this is a pointwise upper bound on the entries of the vector $\mathbb{E}\big(Dh\big)$, which will be enough to bound the corresponding norm. 
\begin{equation}
    D^{2}\,h =    \begin{pmatrix}
                2J_{T,k}+2J_{S,j}, & \text{if }T\in\sigma(S)\text{ and }\forall k \\
              2\sum_{S'\in\sigma(S),p}J_{S',p}, & \text{if } (T,k)=(S,j) \\
                0, & \text{otherwise}
                \end{pmatrix}_{(S,j),(T,k)} \Longrightarrow \mathbb{E}\big(D^{2}h\big) = (0)_{(S,j),(T,k)},
\end{equation}

\begin{equation}
    D^{3}\,h =    \begin{pmatrix}
               2, & \text{if }((U,l)=(T,k)\:(U,T\in \sigma(S)))\text{ or}\\
               & ((U,l)=(S,j)\text{ and }T\in \sigma(S)\text{ and }\forall k) \text{ or}\\
               & ((T,k)=(S,j)\text{ and }U\in \sigma(S)\text{ and }\forall l)\\
               0, & \text{otherwise} \end{pmatrix}_{(S,j),(T,k),(U,l)} \Longrightarrow \mathbb{E}\big(D^{3}h\big) = D^{3}h.
\end{equation}

In Table \ref{table:partitionsD4}, we give the values of $\bigl\|\mathbb{E}\big(\mathbf{D}^{d}h\big)\bigr\|_{\mathcal{P}}$ for all partitions $\mathcal{P}([d])$ for $d=1,2,3$. $\bigl\|\mathbb{E}\big(\mathbf{D}^{d}h\big)\bigr\|_{\mathcal{P}}$ for $d=1$ can be straightforwardly evaluated using Eq. \eqref{eq:ynorm} and for $d=2$ can be trivially evaluated by using $\mathbb{E}\big(\mathbf{D}^{2}h\big) = 0$. For $d=3$, $\bigl\|\mathbb{E}\big(\mathbf{D}^{d}h\big)\bigr\|_{\mathcal{P}}$ can be upper bounded using Eq. \eqref{eq:ynorm}, and the triangle and Cauchy-Schwarz inequalities. To obtain these upper bounds, we have again used the bounds on $|\sigma(S)|$.
\begin{table}[ht]
\centering
\captionsetup{width=.9\linewidth}
\begin{tabular}{||c c c||} 
 \hline
 $\mathcal{P}$ & $|\mathcal{P}|$ & $\bigl\|\mathbb{E}\big(\mathbf{D}^{d}h\big)\bigr\|_{\mathcal{P}}$ \\ [0.5ex] 
 \hline\hline
 $\{1\}$ & $1$ & $C\big( n^{q-1} \big)^{3/2}$ \\ [0.5ex]
 \hline
 $\{1,2\}$ & $1$ & 0 \\
 $\{1\}\{2\}$ & $2$ & 0 \\ [0.5ex]
 \hline
 $\{1,2,3\}$ & $1$ & $Cn^{q-1/2}$ \\
 $\{1\}\{2,3\}$ & $2$ & $Cn^{q}$ \\ 
 $\{2\}\{1,3\}$ & $2$ & $Cn^{q}$ \\ 
 $\{1,2\}\{3\}$ & $2$ & $Cn^{q}$ \\ 
 $\{1\}\{2\}\{3\}$ & $3$ & $Cn^{q/2}$ \\ [1ex] 
 \hline
\end{tabular}
\caption{The different partitions $\mathcal{P}$ of $[3]$ into non-empty parts, with the associated number of parts $|\mathcal{P}|$, and (the upper bounds for) the associated $\bigl\|\mathbb{E}\big(\mathbf{D}^{d}h\big)\bigr\|_{\mathcal{P}}$ for $h$ in Eq. \eqref{eq:h}.}
\label{table:partitionsD4}
\end{table}

Combining the upper bounds for $\bigl\|\mathbb{E}\big(\mathbf{D}^{d}h\big)\bigr\|_{\mathcal{P}}$ in Table \ref{table:partitionsD4} with the factor of $r^{|\mathcal{P}|/2}$($\leq Cn^{|\mathcal{P}|/2}$) in Eq. \eqref{eq:diagupperbound} and the normalization factor in Eq. \eqref{eq:D4bound}, we find -- using $\mathbb{E}(h) = 0$ -- that indeed $\mathbb{E}\Big( |D_{4}|^{r} \Big) \leq \big( C\sqrt{n} \big)^{r}$ for all even $r\leq 16\cdot 2n$.

\end{itemize}

In conclusion, we have shown that the $r$th moments (for even $r\leq 16\cdot 2n$) of $D_{0}$, $D_{1}$, $D_{2}$, $D_{3}$ and $D_{4}$ can be upper bounded by $\big(C\sqrt{n}\big)^{r}$, and hence, by Eq.~\eqref{eq:momentboundsplit}, the $(2n)$th moment of $A(1)$ can be upper bounded by $\big(C\sqrt{n}\big)^{2n}$. Thereby, we have also established that the second condition in Eq.~\eqref{eq:conditions} is satisfied.

\section{Two-colored SYK to standard SYK}
\label{app:2coltostandard}
In this Appendix, we give the proof of Lemma \ref{lem:2colouredtostandard}.
\begin{proof}
To establish Lemma \ref{lem:2colouredtostandard}, we now show that the state $\rho$ that achieves $\text{Tr}(H^{(2)}\rho) \geq C\sqrt{n}$ (with $H^{(2)}$ defined in Eq. \eqref{SYK_2}), with probability at least $1-\exp\big(-\Omega(n)\big)$ also achieves $\text{Tr}(H\rho)\geq C\sqrt{n}$ for the standard SYK Hamiltonian. To that end, we consider a standard SYK model with $2n$ Majorana operators and partition these Majorana operators into a subset of size $\frac{2n(q-1)}{q}$ and a complementary subset of size $\frac{2n}{q}$. The standard SYK model Hamiltonian $H$, see Eq.~\eqref{SYK_H}, consists of $\binom{2n}{q}$ terms. These terms are labeled by all ordered subsets $\{j_{1}<\ldots<j_{q}\}$, and $\mathcal{I}$ denotes the collection of these subsets. The terms in $H$ for which $q-1$ Majorana operators are in the first subset, and the other Majorana operator is in the complementary subset, are labelled by ordered subsets $\{j_{1}<\ldots<j_{q}:j_{1}<\ldots<j_{q-1}<\frac{q-1}{q}\leq j_{q}\}$. We denote the collection of these subsets by $\mathcal{T}$. The collection of other subsets is denoted by $\mathcal{T}' = \mathcal{I}\backslash \mathcal{T}$. $\mathcal{T}$ and $\mathcal{T}'$ thus correspond to collections of terms in the Hamiltonian. We denote the Hamiltonian consisting of the collection $\mathcal{T}$ by $H_{\mathcal{T}}$ and the Hamiltonian consisting of terms $\mathcal{T}'$ by $H_{\mathcal{T}'}$, hence $H=H_{\mathcal{T}}+H_{\mathcal{T}'}$. $H_{\mathcal{T}}$ corresponds exactly to the $2$-colored Hamiltonian in Eq. \eqref{SYK_2} when multiplied by 
\begin{equation}
    e^{-(q-1)/2} \leq \frac{\sqrt{\binom{2n}{q}}}{\sqrt{\frac{2n}{q}}\sqrt{\binom{\frac{q-1}{q}2n}{q-1}}} \leq e^{q/2},
\label{eq:const2col_to_standard}
\end{equation}
which, importantly, is lower bounded and upper bounded by a constant in $n$. Note that the sizes of the two subsets into which the Majorana operators are partitioned can in fact be \textit{any} $c2n$ and $(1-c)2n$ for $c=O(1)$ (instead of just $\frac{2n(q-1)}{q}$ and $\frac{2n}{q}$). The factor in Eq.~\eqref{eq:const2col_to_standard} is lower bounded and upper bounded by a constant in $n$ as well for all of these other partitions. Hence $n$ is not constrained to be a multiple of $q$.

For any state $\rho$, $\mathbb{E}\big(\text{Tr}(H\rho)\big) = 0$, where the expectation value is w.r.t. the couplings in $H$ since the couplings are random variables with zero mean. The state $\rho_{\theta}$ defined in Eq.~ \eqref{eq:defrhotheta} is able to achieve $\text{Tr}(H^{(2)}\rho_{\theta} )\geq C\sqrt{n}$ (with high probability) since it is constructed using a circuit that itself \textit{depends} on the random couplings $J_{I}$ ($I\in \mathcal{T}$) appearing in $H^{(2)}$. Since $\rho_{\theta}$ does \textit{not} depend on the couplings $J_{I}$ with $I\in \mathcal{T}'$, we have $\text{Tr}(H_{\mathcal{T}'}\rho_{\theta} ) = 0$. Since: (i) $\lvert \text{Tr}(C_I \rho) \rvert \leq 1$ (for any $\rho$) for $I\in \mathcal{T}'$, (ii) that each $J_{I}$ is a standard Gaussian random variable, and (iii) that $|\mathcal{T}'|\leq \binom{2n}{q}$, the quantity 
\begin{equation}
\text{Tr}\big(H_{\mathcal{T}'}\rho \big) = \binom{2n}{q}^{-1/2}\sum_{I\in \mathcal{T}'}J_{I}\text{Tr}(C_I \rho)
\end{equation}
is a Gaussian random variable with zero mean and variance at most one, for any $\rho$. Then, $\mathbb{E}\big[ \exp(t\:\text{Tr}(H_{\mathcal{T}'}\rho)) \big] \leq \exp(\frac{1}{2}t^{2})$ for all $t\geq 0$. Applying Chernoff's bound to $\text{Tr}(H_{\mathcal{T}'}\rho)$, and choosing $t = C\sqrt{n}$, we obtain:
\begin{equation}
    \text{Pr}\Big[ \lvert \text{Tr}(H_{\mathcal{T}'}
    \rho) \rvert \geq C\sqrt{n} \Big] \leq 2\exp\big(-\Omega(n)\big),
\label{eq:boundT'}
\end{equation}
for any constant $C$.

Using Eq.~\eqref{eq:boundT'} and $\text{Tr}(H\rho) = \text{Tr}(H_{\mathcal{T}}\rho) + \text{Tr}( H_{\mathcal{T}'}\rho)$, we conclude that the state $\rho_{\theta}$ which achieves $\text{Tr}( H^{(2)}\rho_{\theta})\geq C\sqrt{n}$ (i.e., for the $2$-colored SYK Hamiltonian) with probability at least $1-\exp\big( -\Omega(n) \big)$, also achieves $\text{Tr}(H\rho_{\theta})\geq C\sqrt{n}$ (where $H$ is the standard SYK Hamiltonian in Eq.~\eqref{SYK_H}) with probability at least $1-\exp\big( -\Omega(n) \big)$. Therefore, $\lambda_{\max}(H)\geq C\sqrt{n}$ with probability at least $1-\exp\big( -\Omega(n) \big)$.
\end{proof}


\begin{thebibliography}{34}
\providecommand{\natexlab}[1]{#1}
\providecommand{\url}[1]{\texttt{#1}}
\expandafter\ifx\csname urlstyle\endcsname\relax
  \providecommand{\doi}[1]{doi: #1}\else
  \providecommand{\doi}{doi: \begingroup \urlstyle{rm}\Url}\fi

\bibitem[Haldar et~al.(2021)Haldar, Tavakol, and Scaffidi]{SYKScaffidi}
Arijit Haldar, Omid Tavakol, and Thomas Scaffidi.
\newblock Variational wave functions for {Sachdev-Ye-Kitaev} models.
\newblock \emph{Phys. Rev. Research}, 3:\penalty0 023020, 2021.
\newblock URL \url{https://doi.org/10.1103/PhysRevResearch.3.023020}.

\bibitem[Hastings and O'Donnell(2022)]{HO:approxferm}
Matthew~B. Hastings and Ryan O'Donnell.
\newblock Optimizing strongly interacting fermionic {H}amiltonians.
\newblock In \emph{Proceedings of the 54th Annual ACM SIGACT Symposium on
  Theory of Computing}, STOC 2022, page 776–789, New York, NY, USA, 2022.
  ACM.
\newblock URL \url{https://doi.org/10.1145/3519935.3519960}.

\bibitem[Gharibian et~al.(2015)Gharibian, Huang, Landau, and
  Shin]{Gharibian2015}
Sevag Gharibian, Yichen Huang, Zeph Landau, and Seung~Woo Shin.
\newblock {Quantum Hamiltonian Complexity}.
\newblock \emph{Foundations and Trends in Theoretical Computer Science},
  10\penalty0 (3):\penalty0 159--282, 2015.
\newblock URL \url{https://doi.org/10.1561\%2F0400000066}.

\bibitem[Bravyi and Kitaev(2002)]{BK:ferm}
Sergey~B. Bravyi and Alexei~Yu. Kitaev.
\newblock Fermionic quantum computation.
\newblock \emph{Annals of Physics}, 298\penalty0 (1):\penalty0 210--226, 2002.
\newblock URL \url{https://doi.org/10.1006/aphy.2002.6254}.

\bibitem[Khanna et~al.(2001)Khanna, Sudan, Trevisan, and
  Williamson]{khanna2001approximability}
Sanjeev Khanna, Madhu Sudan, Luca Trevisan, and David~P Williamson.
\newblock The approximability of constraint satisfaction problems.
\newblock \emph{SIAM Journal on Computing}, 30\penalty0 (6):\penalty0
  1863--1920, 2001.
\newblock URL \url{https://doi.org/10.1137/S0097539799349948}.

\bibitem[Goemans and Williamson(1995)]{GW:semidef}
Michel~X. Goemans and David~P. Williamson.
\newblock Improved approximation algorithms for maximum cut and satisfiability
  problems using semidefinite programming.
\newblock \emph{J. ACM}, 42\penalty0 (6):\penalty0 1115–1145, 1995.
\newblock URL \url{https://doi.org/10.1145/227683.227684}.

\bibitem[Kraus and Cirac(2010)]{KC:ferm}
Christina~V Kraus and J~Ignacio Cirac.
\newblock Generalized {H}artree–{F}ock theory for interacting fermions in
  lattices: numerical methods.
\newblock \emph{New Journal of Physics}, 12\penalty0 (11):\penalty0 113004,
  2010.
\newblock URL \url{https://doi.org/10.1088/1367-2630/12/11/113004}.

\bibitem[Bravyi and Gosset(2017)]{BG:impurity}
Sergey Bravyi and David Gosset.
\newblock Complexity of quantum impurity problems.
\newblock \emph{Communications in Mathematical Physics}, 356\penalty0
  (2):\penalty0 451–500, 2017.
\newblock URL \url{https://doi.org/10.1007/s00220-017-2976-9}.

\bibitem[Bravyi and Koenig(2012)]{BK:simul}
Sergey Bravyi and Robert Koenig.
\newblock Classical simulation of dissipative fermionic linear optics.
\newblock \emph{Quant. Inf. Comp.}, 12\penalty0 (11--12):\penalty0 925--943,
  2012.
\newblock URL \url{https://doi.org/10.48550/arXiv.1112.2184}.

\bibitem[de~Melo et~al.(2013)de~Melo, Ćwikliński, and Terhal]{MCT:gaussian}
Fernando de~Melo, Piotr Ćwikliński, and Barbara~M Terhal.
\newblock The power of noisy fermionic quantum computation.
\newblock \emph{New Journal of Physics}, 15\penalty0 (1):\penalty0 013015,
  2013.
\newblock URL \url{https://doi.org/10.1088/1367-2630/15/1/013015}.

\bibitem[Lieb(1973)]{lieb}
Elliott~H. Lieb.
\newblock {The classical limit of quantum spin systems}.
\newblock \emph{Communications in Mathematical Physics}, 31\penalty0
  (4):\penalty0 327 -- 340, 1973.
\newblock URL \url{https://doi.org/10.1007/BF01646493}.

\bibitem[Bravyi et~al.(2019)Bravyi, Gosset, K{\"o}nig, and
  Temme]{BGKT:manybody}
Sergey Bravyi, David Gosset, Robert K{\"o}nig, and Kristan Temme.
\newblock Approximation algorithms for quantum many-body problems.
\newblock \emph{Journal of Mathematical Physics}, 60\penalty0 (3):\penalty0
  032203, 2019.
\newblock URL \url{https://doi.org/10.1063/1.5085428}.

\bibitem[Bansal et~al.(2009)Bansal, Bravyi, and Terhal]{BBT:approx}
Nikhil Bansal, Sergey Bravyi, and Barbara~M Terhal.
\newblock Classical approximation schemes for the ground-state energy of
  quantum and classical {Ising spin Hamiltonians} on planar graphs.
\newblock \emph{Quantum Inf. Comp.}, 9\penalty0 (7-8):\penalty0 701--720, 2009.
\newblock URL \url{https://doi.org/10.48550/arXiv.0705.1115}.

\bibitem[Gharibian and Kempe(2012)]{GK:dense}
Sevag Gharibian and Julia Kempe.
\newblock Approximation algorithms for {QMA}-complete problems.
\newblock \emph{SIAM Journal on Computing}, 41\penalty0 (4):\penalty0
  1028--1050, 2012.
\newblock URL \url{https://doi.org/10.1137/110842272}.

\bibitem[Brandao and Harrow(2013)]{BH:dense-approx}
Fernando~G.S.L. Brandao and Aram~W. Harrow.
\newblock Product-state approximations to quantum ground states.
\newblock In \emph{Proceedings of the Forty-Fifth Annual ACM Symposium on
  Theory of Computing}, STOC '13, page 871–880, New York, NY, USA, 2013.
  Association for Computing Machinery.
\newblock ISBN 9781450320290.
\newblock URL \url{https://doi.org/10.1145/2488608.2488719}.

\bibitem[Harrow and Montanaro(2017)]{HM:extremal}
Aram~W. Harrow and Ashley Montanaro.
\newblock Extremal eigenvalues of local {H}amiltonians.
\newblock \emph{{Quantum}}, 1:\penalty0 6, April 2017.
\newblock URL \url{https://doi.org/10.22331/q-2017-04-25-6}.

\bibitem[Bergamaschi(2022)]{berga:improved}
Thiago Bergamaschi.
\newblock Improved product-state approximation algorithms for quantum local
  {H}amiltonians, 2022.
\newblock URL \url{https://doi.org/10.48550/arXiv.2210.08680}.

\bibitem[Gharibian and Parekh(2019)]{Parekh19}
Sevag Gharibian and Ojas Parekh.
\newblock Almost optimal classical approximation algorithms for a quantum
  generalization of max-cut.
\newblock 2019.
\newblock URL \url{https://doi.org/10.4230/LIPICS.APPROX-RANDOM.2019.31}.

\bibitem[Alon et~al.(2006)Alon, Makarychev, Makarychev, and
  Naor]{alon2006quadratic}
Noga Alon, Konstantin Makarychev, Yury Makarychev, and Assaf Naor.
\newblock Quadratic forms on graphs.
\newblock \emph{Inventiones mathematicae}, 163\penalty0 (3):\penalty0 499--522,
  2006.
\newblock URL \url{https://doi.org/10.1007/s00222-005-0465-9}.

\bibitem[Xu et~al.(2020)Xu, Susskind, Su, and Swingle]{sparseSYK}
Shenglong Xu, Leonard Susskind, Yuan Su, and Brian Swingle.
\newblock A sparse model of quantum holography, 2020.
\newblock URL \url{https://doi.org/10.48550/arXiv.2008.02303}.

\bibitem[Garc\'{\i}a-Garc\'{\i}a et~al.(2021)Garc\'{\i}a-Garc\'{\i}a, Jia,
  Rosa, and Verbaarschot]{SYK-sparse}
Antonio~M. Garc\'{\i}a-Garc\'{\i}a, Yiyang Jia, Dario Rosa, and Jacobus J.~M.
  Verbaarschot.
\newblock Sparse {Sachdev-Ye-Kitaev} model, quantum chaos, and gravity duals.
\newblock \emph{Phys. Rev. D}, 103:\penalty0 106002, 2021.
\newblock URL \url{https://doi.org/10.1103/PhysRevD.103.106002}.

\bibitem[Frieze and Karo{\'n}ski(2016)]{frieze2016introduction}
Alan Frieze and Micha{\l} Karo{\'n}ski.
\newblock \emph{Introduction to random graphs}.
\newblock Cambridge University Press, 2016.
\newblock URL \url{https://doi.org/10.1017/CBO9781316339831}.

\bibitem[Aharonov et~al.(2013)Aharonov, Arad, and Vidick]{AV:QPCP}
Dorit Aharonov, Itai Arad, and Thomas Vidick.
\newblock The quantum {PCP} conjecture.
\newblock \emph{ACM SIGACT News}, 44:\penalty0 47--79, 2013.
\newblock URL \url{https://doi.org/10.48550/arXiv.1309.7495}.

\bibitem[Bravyi et~al.(2021)Bravyi, Chowdhury, Gosset, and Wocjan]{BCGW}
Sergey Bravyi, Anirban Chowdhury, David Gosset, and Pawel Wocjan.
\newblock On the complexity of quantum partition functions, 2021.
\newblock URL \url{https://doi.org/10.48550/arXiv.2110.15466}.

\bibitem[Boucheron et~al.(2003)Boucheron, Lugosi, and
  Massart]{bouch:concentration}
Stéphane Boucheron, Gábor Lugosi, and Pascal Massart.
\newblock Concentration inequalities using the entropy method.
\newblock \emph{The Annals of Probability}, 31\penalty0 (3):\penalty0
  1583--1614, 2003.
\newblock URL \url{https://doi.org/10.1214/aop/1055425791}.

\bibitem[Tomioka and Suzuki(2014)]{spectralnormrandomtensors}
Ryota Tomioka and Taiji Suzuki.
\newblock Spectral norm of random tensors.
\newblock \emph{arXiv}, 2014.
\newblock URL \url{https://doi.org/10.48550/arXiv.1407.1870}.

\bibitem[Anshu et~al.(2021)Anshu, Gosset, Morenz~Korol, and
  Soleimanifar]{ImprovedApproxAlg}
Anurag Anshu, David Gosset, Karen~J. Morenz~Korol, and Mehdi Soleimanifar.
\newblock Improved approximation algorithms for bounded-degree local
  {H}amiltonians.
\newblock \emph{Phys. Rev. Lett.}, 127:\penalty0 250502, 2021.
\newblock URL \url{https://doi.org/10.1103/PhysRevLett.127.250502}.

\bibitem[Laurent and Massart(2000)]{chisquaredtailbounds}
B.~Laurent and P.~Massart.
\newblock {Adaptive estimation of a quadratic functional by model selection}.
\newblock \emph{The Annals of Statistics}, 28\penalty0 (5):\penalty0 1302 --
  1338, 2000.
\newblock URL \url{https://doi.org/10.1214/aos/1015957395}.

\bibitem[Bollobás(1998)]{Bollobas}
Béla Bollobás.
\newblock \emph{Modern Graph Theory}.
\newblock Graduate Texts in Mathematics 184. Springer-Verlag New York, 1998.
\newblock URL \url{https://doi.org/10.1007/978-1-4612-0619-4}.

\bibitem[Dirac(1952)]{dirac1952some}
Gabriel~Andrew Dirac.
\newblock Some theorems on abstract graphs.
\newblock \emph{Proceedings of the London Mathematical Society}, 3\penalty0
  (1):\penalty0 69--81, 1952.
\newblock URL \url{https://doi.org/10.1112/plms/s3-2.1.69}.

\bibitem[Abramowitz and Stegun(1964)]{SA:book}
Milton Abramowitz and Irene~A. Stegun.
\newblock \emph{Handbook of Mathematical Functions with Formulas, Graphs, and
  Mathematical Tables}.
\newblock Dover, New York, 1964.
\newblock ISBN 0-486-61272-4.

\bibitem[Latala(2006)]{GaussianChaoses}
Rafal Latala.
\newblock Estimates of moments and tails of {G}aussian chaoses.
\newblock \emph{The Annals of Probability}, 34\penalty0 (6), nov 2006.
\newblock URL \url{https://doi.org/10.1214\%2F009117906000000421}.

\bibitem[de~la Pena et~al.(1994)de~la Pena, Montgomery-Smith, and
  Szulga]{DecouplingInequalities}
V.~H. de~la Pena, S.~J. Montgomery-Smith, and Jerzy Szulga.
\newblock Contraction and decoupling inequalities for multilinear forms and
  u-statistics.
\newblock \emph{The Annals of Probability}, 22\penalty0 (4):\penalty0
  1745--1765, 1994.
\newblock URL \url{https://doi.org/10.48550/arXiv.math/9406214}.

\bibitem[Adamczak and
  Wolff(2015)]{upperbound_moments_diagonal_gaussianpolynomials}
Rados{\l}aw Adamczak and Pawe{\l} Wolff.
\newblock Concentration inequalities for non-lipschitz functions with bounded
  derivatives of higher order.
\newblock \emph{Probability Theory and Related Fields}, 162\penalty0
  (3):\penalty0 531--586, 2015.
\newblock URL \url{https://doi.org/10.1007/s00440-014-0579-3}.

\end{thebibliography}
\end{document}